\documentclass[11pt]{article}

\usepackage[all,defaultlines=3]{nowidow}

\usepackage{subcaption}

\usepackage[T1]{fontenc}
\usepackage[utf8]{inputenc}
\usepackage{lmodern}
\AtBeginDocument{\DeclareFontShape{T1}{lmr}{m}{scit}{<->ssub*lmr/m/scsl}{}}
\usepackage[letterpaper,margin=1in]{geometry}
\usepackage{microtype}

\usepackage{tikz}
\usetikzlibrary{arrows.meta,decorations.markings,positioning}
\usepackage{placeins}
\definecolor{potfive0}{HTML}{B82536}
\definecolor{potfive1}{HTML}{E98182}
\definecolor{potfive2}{HTML}{ECE7EE}
\definecolor{potfive3}{HTML}{88B9D9}
\definecolor{potfive4}{HTML}{24528A}
\definecolor{potfour0}{HTML}{B82536}
\definecolor{potfour1}{HTML}{E6A3A0}
\definecolor{potfour2}{HTML}{9AC6DD}
\definecolor{potfour3}{HTML}{24528A}
\definecolor{potink}{HTML}{273449}
\definecolor{potcycle}{HTML}{23847D}
\definecolor{potbad}{HTML}{BE7200}
\tikzset{potvertex/.style={circle,draw=white,line width=.8pt,minimum size=5.2mm,inner sep=0pt,font=\sffamily\bfseries\scriptsize},
 potedge/.style={draw=black!27,line width=.5pt},
 potpath/.style={draw=potink,line width=1.7pt,-{Stealth[length=1.8mm,width=1.5mm]},shorten >=2.7mm,shorten <=2.7mm},
 potfail/.style={draw=potbad,line width=2pt,densely dashed},
 pothalo/.style={draw=potcycle,line width=4pt,opacity=.38},
 potlabel/.style={font=\small,fill=white,inner sep=1pt,text=potink}}
\newcommand{\PotentialLegendFive}{%
\begin{tikzpicture}[baseline=-.6ex]
\foreach \j/\fg in {0/white,1/black,2/black,3/black,4/white}{
\node[potvertex,fill=potfive\j,text=\fg] at (.65*\j,0) {\j};}
\node[anchor=west,font=\small] at (3,0) {$\pmod 5$,\quad $4+1\equiv0$};
\end{tikzpicture}}

\usepackage{amsmath,amssymb,amsthm,mathtools}
\usepackage{thmtools,thm-restate}
\usepackage{enumitem}
\usepackage{xcolor}

\usepackage[
  colorlinks=true,
  linkcolor=blue!55!black,
  citecolor=green!40!black,
  urlcolor=blue!60!black,
  pdfauthor={Dibyayan Chakraborty, Sebastian Ordyniak, Sebastian Siebertz},
  pdftitle={Model Checking for Metric
Extensions of Monadic Second-Order Logic},
  pdfsubject={Parameterized algorithms and metric extensions of monadic second-order logic},
  pdfkeywords={monadic second-order logic, shortest paths, distance thresholds, bag-connected treewidth, geodesic cycles}
]{hyperref}
\usepackage[nameinlink,noabbrev]{cleveref}

\allowdisplaybreaks
\setlist[itemize]{leftmargin=2em,itemsep=0.25em,topsep=0.4em}
\setlist[enumerate]{leftmargin=2.2em,itemsep=0.25em,topsep=0.4em}

\newcommand{\CMSOGC}{\CMSO[\EqGC]}

\newcommand{\CMSO}{\mathrm{CMSO}_1}
\newcommand{\CMSOtwo}{\mathrm{CMSO}_2}
\newcommand{\CMSOCTW}{\CMSOtwo[\SP,\EqGC]}
\newcommand{\CMSOOCTW}{\CMSO[\SP,\EqGC]}
\newcommand{\CMSOSP}{\CMSO[\SP]}
\newcommand{\CMSOtwoSP}{\CMSOtwo[\SP]}

\declaretheorem[name=Theorem,numberwithin=section]{theorem}
\declaretheorem[name=Lemma,sibling=theorem]{lemma}
\declaretheorem[name=Proposition,sibling=theorem]{proposition}
\declaretheorem[name=Corollary,sibling=theorem]{corollary}

\declaretheorem[name=Definition,sibling=theorem,style=definition]{definition}

\renewcommand{\phi}{\varphi}
\newcommand{\FO}{\mathrm{FO}}
\newcommand{\MSO}{\mathrm{MSO}_1}

\newcommand{\MSOtwo}{\mathrm{MSO}_2}

\newcommand{\EqGC}{\mathsf{EqGC}}

\newcommand{\EqDist}{\mathsf{EqDist}}
\newcommand{\Dist}{\mathsf{Dist}}
\newcommand{\Sep}{\mathsf{Sep}}
\newcommand{\Cover}{\mathsf{Cover}}

\newcommand{\MSOGC}{\mathrm{MSO}_2[\mathsf{EqGC}]}

\newcommand{\MSOCTW}{\mathrm{MSO}_2[\mathsf{SP},\mathsf{EqGC}]}
\newcommand{\SP}{\mathsf{SP}}
\newcommand{\Close}{\mathsf{Close}}
\newcommand{\NearLab}{\mathsf{NearLab}}
\newcommand{\Cert}{\mathsf{CertSP}}
\newcommand{\Path}{\mathsf{Path}}

\newcommand{\Part}{\mathsf{Part}}
\newcommand{\Adm}{\mathsf{Adm}}
\newcommand{\Orient}{\mathsf{Orient}}

\newcommand{\Inc}{\mathsf{Inc}}

\newcommand{\tw}{\operatorname{tw}}
\newcommand{\cw}{\operatorname{cw}}

\newcommand{\ctw}{\operatorname{ctw}}
\newcommand{\gcw}{\operatorname{gcw}}
\newcommand{\cbl}{\operatorname{cbl}}
\newcommand{\gcl}{\operatorname{gcl}}

\newcommand{\ZZ}{\mathbb{Z}}

\newcommand{\calZ}{\mathcal{Z}}
\newcommand{\calB}{\mathcal{B}}

\newcommand{\symdiff}{\mathbin{\triangle}}
\newcommand{\set}[1]{\left\{#1\right\}}
\newcommand{\abs}[1]{\left|#1\right|}
\newcommand{\Oof}{\mathcal{O}}
\usepackage{mathrsfs}

\hypersetup{pageanchor=false}

\title{\bfseries Model Checking for Metric Extensions of \\Monadic Second-Order Logic}
\author{%
  \normalsize Dibyayan Chakraborty$^{1}$, Sebastian Ordyniak$^{1}$, and Sebastian Siebertz$^{2}$\\[0.5ex]
  \normalsize $^{1}$University of Leeds, United Kingdom\\
  \normalsize $^{2}$University of Bremen, Germany\\[0.5ex]
  \normalsize \href{mailto:D.chakraborty@leeds.ac.uk}{D.chakraborty@leeds.ac.uk},
  \href{mailto:S.Ordyniak@leeds.ac.uk}{S.Ordyniak@leeds.ac.uk}\\
  \normalsize \href{mailto:siebertz@uni-bremen.de}{siebertz@uni-bremen.de}}
\date{}
\begin{document}

\maketitle
\begingroup
\renewcommand{\small}{\normalsize}
\begin{abstract}
Monadic second-order logic yields powerful algorithmic meta-theorems, but it cannot express that a vertex set induces a shortest path.
We study metric extensions of (counting) monadic second-order logic and identify structural parameters supporting efficient model checking, optimization, and counting.

We show that a shortest-path predicate $\SP$ and a predicate $\EqGC$ comparing distances on a geodesic cycle are expressible in plain $\mathrm{MSO}_1$ on graph classes of bounded cycle-basis length.
Consequently, $\CMSOCTW$ admits fixed-parameter tractable model checking parameterized by formula length plus bag-connected treewidth.
For $\CMSOOCTW$, we obtain the analogous result for formula length plus \emph{geodesic cliquewidth}, the sum of cliquewidth and cycle-basis length.
The translations preserve free variables, yielding additive optimization and counting.
Applications include FPT algorithms for \textsc{Minimum Geodetic Set}, \textsc{Strong Metric Dimension}, and geodesic convexity and visibility problems parameterized by geodesic cliquewidth, and for \textsc{Minimum Isometric Path Partition} parameterized by bag-connected treewidth.

To handle numerical distance thresholds outside the parameter, we introduce restricted packing-covering templates combining a $\CMSOCTW$ formula with positive separation and covering requirements.
For every fixed template~$\Phi$, feasibility, additive vertex-weight optimization, and counting take time $f_\Phi(\ctw(G))N^{O_\Phi(1)}$, where $N$ is the total binary input length.
This gives FPT algorithms for \textsc{Distance-$r$ Dominating Set}, \textsc{$d$-Scattered Set}, and \textsc{Minimum Eccentricity Shortest Path} parameterized by bag-connected treewidth, and for \textsc{Minimum Consistent Subset} when the number of colors is also a parameter.

We characterize bounded bag-connected treewidth by monadic $\MSOGC$-dependence and bounded geodesic cliquewidth by monadic $\CMSOGC$-dependence.
For unguarded distance comparisons, the corresponding characterizations instead require bounded component diameter together with bounded treewidth or cliquewidth.
Our encodings yield fixed-sentence NP-hardness under suitable efficiency hypotheses.
For either unguarded predicate $R\in\{\EqDist,\Close\}$, $\FO[R]$ model checking is $\mathrm{AW}[*]$-complete already on uncolored caterpillars, and a fixed existential monadic sentence has NP-complete model checking on this class.
\end{abstract}

\endgroup

\clearpage

\medskip
\noindent\textbf{Keywords:} monadic second-order logic, shortest paths, distance profiles, bag-connected treewidth, geodesic cycles, monadic dependence, model checking, parameterized algorithms.

\tableofcontents

\clearpage
\pagenumbering{arabic} %
\hypersetup{pageanchor=true}

\section{Introduction}

\textcolor{black}{
\paragraph{Courcelle's theorem: algorithms, logic, and structure.}
Courcelle's theorem is a cornerstone of parameterized algorithmics.
It states that every fixed graph property expressible in monadic second-order logic with vertex- and edge-set quantification can be decided in linear time on graphs of bounded treewidth~\cite{Courcelle1990}.
This gives a common explanation for the tractability of a wide range of otherwise difficult problems, including \textsc{3-Colorability} and \textsc{Hamiltonian Cycle}, and makes logical definability a standard tool for proving fixed-parameter tractability.
Expressing a decision problem by a fixed formula in this language already establishes fixed-parameter tractability in treewidth, without requiring a separate dynamic program and correctness proof.
The theorem therefore both unifies existing algorithmic results and provides a systematic way to obtain new ones, including variants formed by combining constraints in the same language.}

\textcolor{black}{
More precisely, monadic second-order logic (MSO) comes in two standard forms for graphs: its vertex-set form $\mathrm{MSO}_1$ quantifies over vertices and vertex sets, while its incidence form $\mathrm{MSO}_2$ additionally quantifies over edges and edge sets.
The corresponding counting extensions $\mathrm{CMSO}_1$ and $\mathrm{CMSO}_2$ also allow predicates of the form $|X|\equiv a\pmod m$, where $a$ and $m$ are constants specified in the formula.
Courcelle's theorem applies to $\mathrm{CMSO}_2$ and gives a model-checking algorithm with running time $f(|\varphi|+\tw(G))\,|V(G)|$, for a computable function~$f$.
Thus the dependence on both the specification of the problem and the structural complexity of the graph is confined to the parameter-dependent factor.
Additive optimization and evaluation extensions further yield algorithms for finding optimum solutions and counting solutions to logically specified problems~\cite{ArnborgLagergrenSeese1991,CourcelleMosbah1993}.
The corresponding algorithmic meta-theorem for $\mathrm{CMSO}_1$ and cliquewidth is due to Courcelle, Makowsky, and Rotics~\cite{CourcelleMakowskyRotics2000}.}

\textcolor{black}{
These algorithmic results have a striking structural counterpart: treewidth and cliquewidth also admit exact characterizations through the expressive power of the corresponding logics.
Informally, a graph class is \emph{monadically dependent} for a logic if no fixed transduction in that logic can produce all finite graphs from the class, even after adding a fixed finite collection of unary predicates.
Classes of bounded treewidth are precisely the monadically $\mathrm{(C)MSO}_2$-dependent classes, and classes of bounded cliquewidth are precisely the monadically $\mathrm{CMSO}_1$-dependent classes.
The Grid Minor Theorem of Robertson and Seymour, together with an interpretation argument underlying Seese's theorem, implies that every graph class of unbounded treewidth admits a fixed $\mathrm{MSO}_2$-interpretation producing all finite graphs from monadic lifts of its incidence structures~\cite{RobertsonSeymour1986,Seese1991}.
For cliquewidth, the corresponding characterization follows from a theorem of Courcelle and Oum~\cite{CourcelleOum2007}.
The same structural restrictions that support the algorithmic meta-theorems therefore also mark the boundary beyond which the logics can encode arbitrary graphs.
We define these logical notions formally below and refer to Pilipczuk's survey~\cite{Pilipczuk2025} for further background.}

\textcolor{black}{
Our results extend both aspects of this connection to metric graph problems.
We develop meta-theorems for extensions of MSO that can express shortest paths and other metric constraints.
For the logics of our first meta-theorem, we also obtain matching monadic-dependence characterizations of bounded bag-connected treewidth and bounded geodesic cliquewidth.
\Cref{fig:logics-capturing-classes} places these characterizations at the side of the well-established correspondences between logic and graph structure.}

\begin{figure}[htbp]
\centering
\begin{subfigure}[t]{.485\textwidth}
\centering
\resizebox{\linewidth}{!}{%
\begin{tikzpicture}[
  x=1cm,y=1cm,
  every node/.style={font=\fontsize{9}{10.6}\selectfont},
  caplogic/.style={draw=black!22,fill=black!2,rounded corners=1.4pt,
    text width=2.75cm,minimum height=.90cm,align=center,inner sep=3pt},
  capclass/.style={draw=blue!35!black,fill=blue!3,rounded corners=1.4pt,
    text width=3.55cm,minimum height=.90cm,align=center,inner sep=3pt},
  capnew/.style={draw=red!55!black,fill=red!5,line width=.8pt},
  caplink/.style={Latex-Latex,line width=.55pt,draw=black!65},
  capincl/.style={-Latex,line width=.45pt,draw=black!45}
]
  \path[use as bounding box] (-1.52,-5.40) rectangle (6.12,1.03);
  \node[font=\small\bfseries] at (0,.76) {Logic};
  \node[font=\small\bfseries] at (4.20,.76) {Graph classes};
  \node[caplogic,capnew] (sl1) at (0,-.24)
    {$\mathrm{MSO}_2[\mathsf{EqGC}]$\\[-.2ex]
     {\scriptsize\color{red!55!black}this work}};
  \node[capclass,capnew] (sc1) at (4.20,-.24)
    {Bounded bag-connected\\treewidth};
  \node[caplogic] (sl2) at (0,-1.37) {$\mathrm{MSO}_2$};
  \node[capclass] (sc2) at (4.20,-1.37) {Bounded treewidth};
  \node[caplogic] (sl3) at (0,-2.50)
    {Subconnectivity logic\\[-.2ex]$\mathrm{FO}+\mathsf{subconn}$};
  \node[capclass] (sc3) at (4.20,-2.50) {Exclude a fixed minor};
  \node[caplogic] (sl4) at (0,-3.63)
    {Separator logic\\[-.2ex]$\mathrm{FO}+\mathsf{conn}$};
  \node[capclass] (sc4) at (4.20,-3.63)
    {Exclude a fixed\\topological minor};
  \node[caplogic] (sl5) at (0,-4.76) {$\mathrm{FO}$};
  \node[capclass] (sc5) at (4.20,-4.76) {Nowhere dense};
  \foreach \i in {1,...,5} {\draw[caplink] (sl\i.east) -- (sc\i.west);}
  \foreach \i/\j in {1/2,2/3,3/4,4/5} {
    \draw[capincl] (sc\i.south) -- (sc\j.north);
  }
\end{tikzpicture}%
}
\caption{The sparse world, shown for subgraph-closed classes.}
\label{fig:capturing-sparse}
\end{subfigure}\hfill
\begin{subfigure}[t]{.485\textwidth}
\centering
\resizebox{\linewidth}{!}{%
\begin{tikzpicture}[
  x=1cm,y=1cm,
  every node/.style={font=\fontsize{9}{10.6}\selectfont},
  caplogic/.style={draw=black!22,fill=black!2,rounded corners=1.4pt,
    text width=2.75cm,minimum height=.90cm,align=center,inner sep=3pt},
  capclass/.style={draw=blue!35!black,fill=blue!3,rounded corners=1.4pt,
    text width=3.55cm,minimum height=.90cm,align=center,inner sep=3pt},
  capnew/.style={draw=red!55!black,fill=red!5,line width=.8pt},
  caplink/.style={Latex-Latex,line width=.55pt,draw=black!65},
  capincl/.style={-Latex,line width=.45pt,draw=black!45}
]
  \path[use as bounding box] (-1.52,-5.40) rectangle (6.12,1.03);
  \node[font=\small\bfseries] at (0,.76) {Logic};
  \node[font=\small\bfseries] at (4.20,.76) {Graph classes};
  \node[caplogic,capnew] (dl1) at (0,-.30)
    {$\mathrm{CMSO}_1[\mathsf{EqGC}]$\\[-.2ex]
     {\scriptsize\color{red!55!black}this work}};
  \node[capclass,capnew] (dc1) at (4.20,-.30)
    {Bounded geodesic\\cliquewidth};
  \node[caplogic] (dl2) at (0,-1.77) {$\mathrm{CMSO}_1$};
  \node[capclass] (dc2) at (4.20,-1.77) {Bounded cliquewidth};
  \node[caplogic] (dl3) at (0,-3.24) {Low-rank MSO};
  \node[capclass,dashed,draw=black!55,fill=black!2] (dc3) at (4.20,-3.24)
    {Low-rank MSO dependence\\[-.1ex]
     {\fontsize{7}{8}\selectfont\itshape\mbox{Structural description open}}};
  \node[caplogic] (dl4) at (0,-4.71) {$\mathrm{FO}$};
  \node[capclass] (dc4) at (4.20,-4.71) {Monadic dependence};
  \foreach \i in {1,...,4} {\draw[caplink] (dl\i.east) -- (dc\i.west);}
  \foreach \i/\j in {1/2,2/3,3/4} {
    \draw[capincl] (dc\i.south) -- (dc\j.north);
  }
\end{tikzpicture}%
}
\caption{The dense world is much less explored. One of the major open questions is whether counting can be avoided in the capturing results.}
\label{fig:capturing-dense}
\end{subfigure}
\caption{Logics and the graph classes they capture through monadic dependence. Horizontal arrows indicate the capturing correspondences, and downward arrows indicate inclusion. }
\label{fig:logics-capturing-classes}
\end{figure}

\paragraph{Metric extensions.} A shortcoming of $\mathrm{(C)MSO}$ is that, in general, it cannot express that a set of vertices induces a shortest path with a fixed formula.
Indeed, a standard Ehrenfeucht-Fra\"iss\'e argument already shows this on cycles, where no fixed $\mathrm{(C)MSO}$ formula can determine whether a designated arc is no longer than the complementary arc.
This motivates us to study extensions of monadic second-order logic and first-order logic that address this shortcoming. 

We consider the following atoms.
\begin{itemize}
	\item $\SP(P,s,t)$ expresses that the vertex set $P$ induces a shortest path from $s$ to $t$.
	\item $\EqGC(C,x,y,u,v)$ expresses that $C$ induces a geodesic cycle containing $x,y,u,v$ and that $d_{G}(x,y)=d_{G}(u,v)$. A geodesic cycle is a cycle in which the distance between any two vertices, measured along the cycle, equals their distance in the whole graph.
	\item $\Dist(u,v,d)$ expresses that $d_G(u,v)\geq d$, where $d$ is a numerical threshold.
    \item The two restricted forms $\Sep_d(X,Y)$ and $\Cover_r(U,Y)$ of $\Dist(u,v,d)$ assert, respectively, that every pair of distinct vertices $x\in X$ and $y\in Y$ has distance at least $d$, and that every vertex of $U$ lies at distance at most $r$ from some vertex of $Y$.
    \item 
    For the nearest-label atoms, we assume that the graph carries unary input predicates $L_1,\ldots,L_q$ that form a partition of $V(G)$.
We write $\NearLab_i(v;S)$ if at least one vertex of $S$
closest to $v$ has label $i$, that is,
\[
  \NearLab_i(v;S)
  \quad\Longleftrightarrow\quad
  \exists u\in S\cap L_i\colon d_G(v,u)=d_G(v,S).
\]
	\item $\EqDist(x,y,u,v)$ expresses that $d_G(x,y)=d_G(u,v)<\infty$.
	\item $\Close(x,u,v)$ expresses that $d_G(x,u)<d_G(x,v)$.
\end{itemize}

For a logic $\mathcal L$ and a collection $\mathcal F$ of these predicates, we write $\mathcal L[\mathcal F]$, for example $\mathcal L[\SP,\EqGC]$, for the corresponding extension of $\mathcal{L}$.
\medskip

Model checking provides a bridge from logical descriptions to algorithms.
Given a graph~$G$ and a sentence~$\varphi$, the question is whether $\phi$ is true in $G$, in symbols $G\models\varphi$.
For a formula $\varphi(X_1,\ldots,X_t)$ with free monadic variables, and a vertex-weighted graph $G$, the corresponding optimization problem asks for a satisfying assignment that minimizes or maximizes a given additive weight function.
Evaluation and optimization forms of Courcelle's theorem were proved by Arnborg, Lagergren, and Seese and by Courcelle and Mosbah~\cite{ArnborgLagergrenSeese1991,CourcelleMosbah1993}.
For $\mathrm{CMSO}_1$, they extend to bounded cliquewidth~\cite{CourcelleMakowskyRotics2000,CourcelleMakowskyRotics2001}.
These methods also yield counting of satisfying assignments to the free monadic variables.
Our aim in this paper is to extend these meta-theorems to our more general logics and derive new parameterized algorithms for problems expressible in the respective logics. 
Below, we give many examples of problems expressible in these extensions of $\mathrm{(C)MSO}$ that are $\mathrm W[1]$-hard when parameterized by treewidth alone, so we look for additional parameters that allow us to obtain fixed-parameter algorithms.

\paragraph{Width parameters and our first new meta-theorem.}

The \emph{bag-connected treewidth} of a graph~$G$, denoted by $\ctw(G)$ and introduced by Diestel and M\"uller~\cite{DiestelMuller2018}, is the minimum width of a tree decomposition of $G$ in which every bag induces a connected subgraph\footnote{We note that other notions of connectivity for tree decompositions have also
	Other notions of connectivity for tree decompositions have also
	been studied, for example in the work of Fraigniaud and Nisse~\cite{FraigniaudNisse2006}.
	We use the term \emph{bag-connected} to make explicit that we require each individual bag to induce a connected subgraph.}. 
Clearly, we have $\tw(G)\leq\ctw(G)$, since we impose an additional requirement on the bags, but already cycles show that the two parameters can differ substantially: while $\tw(C_n)=2$, we have $\ctw(C_n)=\lceil n/2\rceil$.
Thus, even on graphs of very small treewidth, requiring the bags to be connected may force us to use much larger bags.

In fact, large treewidth and long geodesic cycles are the only
obstructions to bounded bag-connected treewidth:
Diestel and M\"uller~\cite{DiestelMuller2018} proved that a graph class has bounded bag-connected treewidth if and only if it has bounded treewidth and there is a bound on the lengths of its geodesic cycles.
Using the refinement of Hamann and Wei{\ss}auer~\cite{HamannWeissauer2016}, we can equivalently replace the latter condition by bounded \emph{cycle-basis length}.
The cycle-basis length $\cbl(G)$ of a graph~$G$ is the minimum, over all bases of the binary cycle space of $G$, of the length of the longest cycle in the basis, with $\cbl(G)=0$ when $G$ is a forest.
We obtain our most precise algorithmic statements for the combined parameter $\tw(G)+\cbl(G)$, but, for brevity, we will mostly refer to bag-connected treewidth as the parameter in what follows.

For dense graphs we follow the same idea and combine cliquewidth with cycle-basis length, defining the
\emph{geodesic cliquewidth} of $G$ by \[\gcw(G):=\cw(G)+\cbl(G).\]

It is easy to observe (as we do below) that a graph class has bounded geodesic cliquewidth if and only if it has bounded cliquewidth and no arbitrarily long geodesic cycles, which explains our choice of terminology.

The algorithmic usefulness of bag-connected decomposition was recently demonstrated by Chakraborty et al.~\cite{ChakrabortyOrdyniakPersonalCommunication} who gave a double exponential FPT-algorithm for \textsc{Geodetic Set} parameterized by bag-connected treewidth. Given a graph $G$ and an integer $k$, the objective of \textsc{Geodetic Set} is to decide if $G$ has a set $S$ of vertices of cardinality at most $k$ and any vertex of $G$ lies in some shortest path between some pair of vertices in $S$. 
Even though Chakraborty et al.~\cite{ChakrabortyOrdyniakPersonalCommunication} proved that computing
bag-connected treewidth exactly is W[1]-hard when parameterized
by the optimal width, there is an FPT approximation algorithm is
available whose approximation ratio depends only on that width.
The hardness result contrasts with the fixed-parameter tractability of computing ordinary treewidth, but the approximation result still allows us to construct a bag-connected decomposition whose width is bounded by a function of the optimum, which is sufficient for many parameterized algorithms.

\textcolor{black}{
A strong feature of our work is that the meta-theorems arrive at an early stage in the algorithmic study of bag-connected treewidth, when only a few dedicated algorithms for this parameter are available.
The general logical framework thus precedes a broad repertoire of problem-specific algorithms and provides new fixed-parameter algorithms across several areas of metric graph theory.
For several problems, this extends the strongest previously known tractability results in terms of width and diameter.
This makes the meta-theorems a starting point for developing the algorithmic potential of bag-connected treewidth, with weighted optimization and exact counting available alongside the decision results.}

\medskip

Our first main result is an algorithmic meta-theorem for the two parameters bag-connected treewidth and geodesic cliquewidth.
For additive optimization, we assign signed binary weights to the vertices for each free vertex-set variable, and we minimize or maximize the sum of the weights over the corresponding selected sets.
When we also have free edge-set variables, we allow weights on the edges in the same way.
We therefore state the running times in terms of the total binary input length~$N$, including the weights.

\begin{restatable}{theorem}{strongtheorem}
	\label{thm:strong-intro}
	\label{thm:guarded-model-checking}\label{thm:strong}
	\label{cor:guarded-ctw-model-checking}\label{thm:main}
	\label{thm:optimization}\label{cor:mso2sp-optimization}\label{cor:guarded-optimization}
	\label{cor:guarded-counting}\label{thm:cw-metric}
	For each of the following logics, we can perform model checking, decide whether a satisfying assignment to the free monadic variables exists, compute an assignment minimizing or maximizing an additive weight objective, and count the satisfying assignments or the optimum assignments.
	For a computable function~$f$, we obtain the following running times.
	\begin{enumerate}[label=(\roman*)]
		\item For $\varphi\in\CMSOCTW$, we can solve these problems in time $f(\abs{\varphi}+\ctw(G))N^{O(1)}$.
		Thus these problems are fixed-parameter tractable parameterized by $\abs{\varphi}+\ctw(G)$, or equivalently by $\abs{\varphi}+\tw(G)+\cbl(G)$.
		\item For $\varphi\in\CMSOOCTW$, we can solve these problems in time $f(\abs{\varphi}+\gcw(G))N^{O(1)}$.
		Thus these problems are fixed-parameter tractable parameterized by $\abs{\varphi}+\gcw(G)=\abs{\varphi}+\cw(G)+\cbl(G)$.
	\end{enumerate}
\end{restatable}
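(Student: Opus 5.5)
The plan is to reduce both parts to the classical meta-theorems by eliminating the metric atoms. As the abstract announces, on graphs of cycle-basis length at most $\ell$ both $\SP(P,s,t)$ and $\EqGC(C,x,y,u,v)$ are expressible by plain $\MSO$ formulas $\sigma_\ell(P,s,t)$ and $\varepsilon_\ell(C,x,y,u,v)$ that depend only on~$\ell$. Substituting these formulas for every occurrence of the atoms turns $\varphi\in\CMSOCTW$ into a $\CMSOtwo$ formula $\varphi'$, and $\varphi\in\CMSOOCTW$ into a $\CMSO$ formula. The new formula has length $g(\abs{\varphi}+\ell)$, keeps the same free variables, and is equivalent to $\varphi$ on every graph with $\cbl(G)\le\ell$. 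Because the translation preserves free variables, the satisfying assignments of $\varphi$ and $\varphi'$ coincide. Hence decision, additive optimization and counting for $\varphi$ reduce to the same tasks for $\varphi'$. For part~(i) these are solved by Courcelle's theorem and its optimization and counting versions~\cite{ArnborgLagergrenSeese1991,CourcelleMosbah1993}, parameterized by treewidth. For part~(ii) they are solved by the Courcelle--Makowsky--Rotics theorem~\cite{CourcelleMakowskyRotics2000,CourcelleMakowskyRotics2001}, parameterized by cliquewidth. The parameters line up: $\tw\le\ctw$, and by Diestel--M\"uller together with Hamann--Wei{\ss}auer, $\cbl(G)$ is bounded by a function of $\ctw(G)$; in~(ii), $\cbl\le\gcw$ by definition. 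The algorithms do not need $\ell$ exactly. We try $\ell=0,1,2,\dots$ in increasing order and stop as soon as a cycle basis with all cycles of length at most~$\ell$ exists. This can be checked in polynomial time by Horton's greedy minimum-cycle-basis algorithm, which also minimizes the longest basis cycle. Decompositions come from the FPT approximations for treewidth and for cliquewidth (Oum's rank-width approximation). The polynomial factor $N^{O(1)}$ accounts for arithmetic on binary weights and counts.

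The core is writing $\sigma_\ell$ and $\varepsilon_\ell$ in $\MSO$. Fix a cycle basis $\mathcal B$ whose cycles have length at most~$\ell$. Consider a path $P$ from $s$ to $t$ and a competing path $Q$. Since $P\cup Q$ is a sum of basis cycles, $Q$ can be reached from $P$ by a sequence of local rerouting moves, each across a cycle of length at most~$\ell$. This suggests the following characterization: an induced $s$--$t$ path $P$ is shortest if and only if there is no strictly shorter $s$--$t$ path, and this is witnessed by a potential function $\pi\colon V\to\ZZ_m$ for a modulus $m$ depending on~$\ell$, as in the figure's legend ``$\pmod 5$''. The potential should satisfy $\abs{\pi(u)-\pi(v)}\le1$ along edges, hold the value $\pi(v)=i\bmod m$ at the $i$-th vertex of $P$, and be locally consistent on every short cycle. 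The $\MSO$ formula guesses $\pi$ as $m$ vertex sets. Consistency on short cycles makes $\pi$ lift to an integer potential on the whole graph, because the cycle space is generated by short cycles; this is the standard covering-space argument. A $1$-Lipschitz integer potential with $\pi(t)-\pi(s)=\abs{P}-1$ certifies $d(s,t)\ge\abs{P}-1$. Conversely, the BFS distance from $s$, taken mod $m$, is always a valid witness. For $\varepsilon_\ell$, distances on a geodesic cycle $C$ equal arc lengths. Comparing $d(x,y)$ with $d(u,v)$ then amounts to comparing the lengths of arcs of~$C$. Express these lengths through an $\MSO$-definable bijection between the two shorter arcs, or by a single potential function along $C$ that satisfies the same local consistency as above.

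The main obstacle is proving this potential characterization, in particular that lifting the mod-$m$ potential to an integer potential using only basis cycles of length at most $\ell$ is sound. The local conditions must exclude winding around long cycles; a potential can be locally consistent yet globally inconsistent if the short cycles do not generate the cycle space over $\ZZ$ rather than $\ZZ_2$. I would first try to show that cycles of length at most $\ell$ generate the integer cycle space, equivalently the first homology. This follows from a Hamann--Wei{\ss}auer-style statement that every cycle decomposes into cycles of length at most $\ell$ via geodesic chord shortcuts. If that fails, I would instead encode the winding into additional set variables.
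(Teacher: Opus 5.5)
Your outer reduction is the same as the paper's: compute $\cbl(G)$ from a minimum cycle basis, substitute plain MSO certificates for $\SP$ and $\EqGC$, and apply the treewidth and cliquewidth meta-theorems, which preserve free variables and hence optimization and counting.

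\textbf{The lifting lemma.} The gap is the step you flag as the main obstacle, and the route you plan for it fails. Cycles of length at most $\cbl(G)$ need not generate the integer cycle space. Take a flag triangulation $K$ of the mod-$3$ Moore space (a disk glued to a circle along a degree-$3$ map), and let $G$ be its $1$-skeleton. The triangles of $G$ are exactly the $2$-faces of $K$, and they span $\calZ_2(G)$ because $H_1(K;\mathbb F_2)=0$, so $\cbl(G)=3$. But they span only a subgroup of index $3$ in $H_1(G;\ZZ)$, because $H_1(K;\ZZ)\cong\ZZ/3\ZZ$.

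The missing idea is that integer generation is not needed. Summing the integer increments of a labeling around cycles gives a homomorphism $H_1(G;\ZZ)\to\ZZ$. A binary cycle basis spans a subgroup of finite (odd) index, since its $\{0,\pm1\}$ coefficient matrix with respect to the fundamental cycles is invertible modulo~$2$. A homomorphism into the torsion-free group $\ZZ$ that vanishes on a finite-index subgroup vanishes everywhere. The paper proves exactly this in elementary form in Lemma~\ref{lem:integer-cycle-sums}, using a spanning tree, division by the gcd of the defects, and parity on a fundamental cycle. Your explicit ``local consistency on short cycles'' is also superfluous once the modulus $M$ exceeds $\cbl(G)$. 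The increments lie in $\{-1,0,1\}$, so the sum around a basis cycle is a multiple of $M$ of absolute value below $M$, hence zero.

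\textbf{The $\EqGC$ certificate.} This part is not yet a proof either. A bijection between two arcs is not a monadic object. A cyclic potential living only on $C$, with modulus depending on $\cbl(G)$ alone, cannot exclude winding, since $C$ may be far longer than every basis cycle. The labeling has to live on all of $G$, and you must show it exists when the arc lengths agree.

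The paper does this in four steps:
\begin{enumerate}
\item It prescribes on the edges of $C$ the difference $\beta\in\{-2,\ldots,2\}$ of the two arc traversals, for one of the two directions of $Q$.
\item When the lengths agree, this yields a $1$-Lipschitz potential $q$ on $C$.
\item It extends $q$ to $G$ by $\widehat q(z)=\min_{a\in C}(q(a)+d_G(a,z))$. This extension agrees with $q$ on $C$ precisely because $C$ is isometric.
\item It then uses labelings modulo $N=2M+1>2\cbl(G)$ with increments in $\{-2,\ldots,2\}$, so that the same lifting lemma forces the sum around $C$ to vanish.
\end{enumerate}

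For this particular theorem there is an alternative. The parameters bound $\gcl(G)$, by $2\ctw(G)$ and by $\Oof(\gcw(G)^2)$ via Lemma~\ref{lem:geodesic-cliquewidth-comparison}. So you could choose the modulus, or even a direct first-order formula, above the length of every geodesic cycle.
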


When we have no binary weights, we can state the same running-time bounds with $n:=\abs{V(G)}$ instead of~$N$.
Note that this still covers cardinality optimization, where we use unit weights for the selected sets whose total size we wish to minimize or maximize.
We recall the background on the optimization and counting variants of Courcelle's Theorem in \Cref{sec:mso}.

Let us highlight three algorithmic applications of the above theorem in the context of the so-called ``distance-based covering'' problems~\cite{ChakrabortyFoucaudHakanen2025} which is an umbrella term for many algorithmic problems involving the shortest-path metric. A set $S\subseteq V(G)$ is \emph{geodetic} if every vertex lies on a shortest path between two vertices of~$S$. Given a graph $G$ and an integer $k$,
recall that the objective of \textsc{Geodetic Set} is to find a geodetic set of $G$ with  cardinality at most~$k$. 
We describe the geodetic sets by the following fixed $\MSO[\SP]$ formula, and we obtain the minimum size by minimizing $|S|$ over its satisfying assignments.
We then compare this minimum with the input bound $k$.
\begin{align*}
	\varphi_{\mathrm{geo}}(S)
	:=\forall v\,\exists s\exists t\exists P\,
	\bigl(s\in S\wedge t\in S\wedge\SP(P,s,t)\wedge v\in P\bigr).
\end{align*}

\textsc{Geodetic Set} is $\mathrm{W}[1]$-hard for the combined parameter consisting of solution size, pathwidth, and feedback vertex set number~\cite{KellerhalsKoana2022}. Recently, Tale~\cite{Tale2025} proved that \textsc{Geodetic Set} is NP-hard even at constant pathwidth and feedback vertex set number. See~\cite{KellerhalsKoana2022,Tale2025,ChakrabortyDasFoucaudGahlawatLajou2026,FoucaudEtAl2024,BergougnouxDefrainMcInerney2025,ChakrabortyGahlawatRoy2023,ChakrabortyEtAl2020GeodeticHardness} for more results on \textsc{Geodetic Set}.

\Cref{thm:strong-intro} extends the earlier FPT result for bag-connected treewidth to geodesic cliquewidth.
We also obtain the corresponding results for weighted optimization and exact counting of solutions or optimum solutions, although we do not repeat these statements now for every application.

\begin{corollary}
	\textsc{Geodetic Set} is FPT w.r.t.\  $\gcw$.
\end{corollary}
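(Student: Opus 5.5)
The corollary follows from part~(ii) of \Cref{thm:strong-intro} applied to the formula $\varphi_{\mathrm{geo}}(S)$. First, I would check that $\varphi_{\mathrm{geo}}(S)$ is a fixed formula of $\MSO[\SP]$ and hence of $\CMSOOCTW$. It quantifies only over vertices $v,s,t$ and one vertex set $P$. It uses membership atoms and one atom $\SP(P,s,t)$, and it has the single free monadic variable $S$. Its length $\abs{\varphi_{\mathrm{geo}}}$ is therefore an absolute constant.

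Next I would verify that the satisfying assignments of $\varphi_{\mathrm{geo}}$ are exactly the geodetic sets. Suppose $G\models\varphi_{\mathrm{geo}}(S)$. Then for every $v$ there are $s,t\in S$ and a set $P$ such that $P$ induces a shortest $s$--$t$ path and $v\in P$, so $v$ lies on a shortest path between two vertices of $S$. Conversely, suppose $v$ lies on a shortest $s$--$t$ path $Q$ with $s,t\in S$. Take $P=V(Q)$. A shortest path is an induced path, since a chord would give a shorter walk. Hence $G[P]$ is that path and $\SP(P,s,t)$ holds. The degenerate case $s=t=v$ is also covered, which matches the convention that every vertex of $S$ lies on the trivial geodesic from itself to itself.

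Then I would invoke the optimization part of \Cref{thm:strong-intro}(ii) with unit vertex weights on $S$. This computes, in time $f(\abs{\varphi_{\mathrm{geo}}}+\gcw(G))\,n^{O(1)}=f'(\gcw(G))\,n^{O(1)}$, an assignment $S$ that satisfies $\varphi_{\mathrm{geo}}$ and minimizes $|S|$. A satisfying assignment always exists, since $S=V(G)$ works. The algorithm answers yes exactly when this minimum is at most $k$.

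The only step needing care is whether the algorithm must be given a clique-width expression. If \Cref{thm:strong-intro} assumes one, I would first compute an approximate expression, for example via rank-width, whose width is bounded by a function of $\cw(G)$. I expect the theorem already absorbs this, because its running time is stated in terms of $\gcw(G)$ alone. The value $\cbl(G)$ is needed only in the analysis of the translation of $\SP$ into $\MSO$, not as algorithmic input. Everything else is routine.
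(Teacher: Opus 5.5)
Your proposal is correct and follows the paper's argument: the fixed $\MSO[\SP]$ formula $\varphi_{\mathrm{geo}}(S)$ defines exactly the geodetic sets, the optimization part of \Cref{thm:strong-intro}(ii) minimizes $|S|$ over its satisfying assignments, and the optimum is then compared with $k$. One small correction to your last paragraph: $\cbl(G)$ is used algorithmically, because the modulus $M>\cbl(G)$ determines the translated $\CMSO$ formula, but the theorem's proof computes $\cbl(G)$ in polynomial time from a minimum cycle basis (Horton), so nothing changes for the corollary.
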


\textsc{Geodetic Set} is a representative of the broader class of ``vertex selection'' problems involving shortest paths. In Section~\ref{sec:applications}, we present more algorithmic problems that have been studied in metric and algorithmic graph theory and for which our meta-theorem provides more general or even completely new results. 

A second application of \Cref{thm:strong} is \textsc{Isometric Path Partition}, where we are given a graph $G$ and an integer $k$ and ask whether we can partition $V(G)$ into exactly $k$ shortest paths.
This problem has been studied in structural and algorithmic graph theory~\cite{ChakrabortyDefrainFoucaudMariTale2026}.
\textsc{Isometric Path Partition} is $\mathrm{W}[1]$-hard parameterized by treewidth, but FPT parameterized by treewidth plus diameter~\cite{ChakrabortyDefrainFoucaudMariTale2026}. See~\cite{ChakrabortyDefrainFoucaudMariTale2026,FernauEtAl2025PathPartitions,PenevEtAl2026IsometricPartitions,ChakrabortyEtAl2026IsometricPaths} for more results on this topic.  

\textsc{Isometric Path Partition} can be expressed with a $\MSOtwo[\SP]$ formula as follows. 
For an edge set $F$, let $\operatorname{Comp}_F(C)$ be the ordinary $\CMSOtwo$ formula saying that $C$ is a nonempty connected component of the spanning subgraph $(V(G),F)$.
Isolated vertices are included as one-vertex components.
Define
\begin{align*}
	\operatorname{IPP}(F)
	:=\forall C\,
	\bigl(\operatorname{Comp}_F(C)\rightarrow\exists s\exists t\,\SP(C,s,t)\bigr).
\end{align*}

If $\operatorname{IPP}(F)$ holds, then every connected component
of $(V(G),F)$ has a vertex set $C$ that induces a shortest path
in $G$, where we also allow singleton paths.
Since $C$ is connected in $(V(G),F)$ and every shortest path is
induced, we must include every edge of $G[C]$ in $F$.
Consequently, $(V(G),F)$ is a disjoint union of shortest paths
and has exactly $|V(G)|-|F|$ connected components.
Conversely, from any isometric path partition we obtain a
satisfying edge set $F$ by taking the union of its path edges.

We can therefore use the additive optimization part of
Theorem~\ref{thm:strong} to maximize $|F|$, which gives us a
partition into the minimum possible number of shortest paths.
A partition into exactly~$k$~paths exists if and only if this
minimum is at most $k$ and $k\leq |V(G)|$, since we can split
shortest paths into shorter shortest paths until we obtain
the desired number of parts.
This yields the following corollary.

\begin{corollary}
\textsc{Isometric Path Partition} is FPT w.r.t.\  $\ctw$.
\end{corollary}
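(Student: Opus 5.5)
The plan is to combine the $\MSOtwo[\SP]$ formula $\operatorname{IPP}(F)$ above with the optimization part of \Cref{thm:strong}(i). Since $\MSOtwo[\SP]\subseteq\CMSOCTW$, the theorem applies to $\operatorname{IPP}(F)$. I assign weight $1$ to every edge, with no vertex weights since the only free variable is an edge set, and ask for a satisfying assignment maximizing $|F|$. This takes time $f(\abs{\operatorname{IPP}}+\ctw(G))\,n^{O(1)}$. Because the formula is fixed, this is $f'(\ctw(G))\,n^{O(1)}$, and with unit weights the input length is polynomial in $n$. If $G$ is disconnected, shortest paths between different components do not exist, but the formula handles this uniformly, so no separate case is needed.

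The correctness argument has two directions.

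\emph{Soundness.} Let $F$ satisfy $\operatorname{IPP}$. Each component $C$ of $(V(G),F)$ induces a shortest path $G[C]$. A shortest path is an induced path, so $G[C]$ is a tree, and $F\cap E(G[C])$ is a connected spanning subgraph of this tree. Hence it equals $E(G[C])$. The components therefore partition $V(G)$ into exactly $|V(G)|-|F|$ isometric paths, using that a forest has $|V|-|E|$ components.

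\emph{Completeness.} Given any partition into $p$ isometric paths, I take $F$ to be the union of the path edge sets. The components of $(V(G),F)$ are exactly the parts, since distinct parts share no $F$-edge. Each part satisfies $\SP$ with its endpoints, and $|F|=|V(G)|-p$.

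Consequently the maximum of $|F|$ equals $|V(G)|-p^*$, where $p^*$ is the minimum number of isometric paths in a partition.

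It remains to pass from $p^*$ to exactly $k$ paths. I answer yes if and only if $p^*\le k\le |V(G)|$. For the backward direction, deleting one edge of a shortest path with at least two vertices yields two subpaths, and both remain shortest paths because subpaths of geodesics are geodesics. This raises the count by one. I repeat until the count reaches $k$, which is possible while $k\le|V(G)|$, since only all-singleton partitions cannot be split further. The forward direction is immediate, because any partition into $k$ paths has at most $|V(G)|$ parts and at least $p^*$ parts.

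The only step needing care is the soundness claim that $F$ must contain all edges of $G[C]$, which relies on shortest paths being induced. The paper has already indicated this, so I expect no real obstacle: the substantive work lies entirely in \Cref{thm:strong}.
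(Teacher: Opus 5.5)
Your proposal is correct and follows the paper's argument essentially step for step. You maximize $|F|$ subject to $\operatorname{IPP}(F)$ using the optimization part of \Cref{thm:strong}(i), and you use that shortest paths are induced to conclude that $(V(G),F)$ has exactly $|V(G)|-|F|$ components. You then split shortest paths into shorter shortest paths to reach exactly $k$ parts whenever $p^*\le k\le|V(G)|$.
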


As above, \textsc{Isometric Path Partition} is a representative of the broader class of ``shortest path selection'' problems and in Section~\ref{sec:applications}, we highlight more algorithmic problems that have been studied in algorithmic graph theory and apply Theorem~\ref{thm:strong-intro} to obtain FPT-algorithms w.r.t.   bag-connected treewidth. 

A third application of Theorem~\ref{thm:strong} concerns hitting geodesic cycles. 
Given a graph $G$ and an integer $k$, the objective of \textsc{Geodesic Cycle Transversal} is to decide if $G$ has a subset $S\subseteq V(G)$ with $|S|\leq k$ such that all geodesic cycles of $G$ contain a vertex of $S$. 
Since \textsc{Geodesic Cycle Transversal} admits a fixed formula in $\MSO[\EqGC]$, we have the following corollary.

\begin{corollary}
	\textsc{Geodesic Cycle Transversal} is FPT w.r.t.\  $\gcw$.
\end{corollary}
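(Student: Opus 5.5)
The statement to prove is the last corollary: \textsc{Geodesic Cycle Transversal} is FPT with respect to $\gcw$. The plan is to exhibit a fixed $\MSO[\EqGC]$ formula $\varphi_{\mathrm{gct}}(S)$ with one free vertex-set variable $S$, whose satisfying assignments are exactly the geodesic cycle transversals. We then minimise $\abs{S}$ with unit weights using part~(ii) of \Cref{thm:strong-intro} and compare the optimum with~$k$. Since $\MSO[\EqGC]\subseteq\CMSOOCTW$, the running time is $f(\abs{\varphi_{\mathrm{gct}}}+\gcw(G))\,n^{O(1)}$. The formula is fixed, so this is FPT in $\gcw(G)$.

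The only real work is defining ``$C$ induces a geodesic cycle'' with $\EqGC$. By its definition in the list of atoms, $\EqGC(C,x,y,u,v)$ already asserts that $C$ induces a geodesic cycle containing $x,y,u,v$ (together with $d_G(x,y)=d_G(u,v)$). Instantiating all four vertex arguments with a single vertex $x$ makes the distance condition $d_G(x,x)=d_G(x,x)$ trivial. We therefore set
\[
\operatorname{GC}(C):=\exists x\,\bigl(x\in C\wedge\EqGC(C,x,x,x,x)\bigr),
\]
which holds exactly when $C$ is nonempty and induces a geodesic cycle. Then
\[
\varphi_{\mathrm{gct}}(S):=\forall C\,\bigl(\operatorname{GC}(C)\rightarrow\exists v\,(v\in C\wedge v\in S)\bigr).
\]

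For correctness, note that a geodesic cycle is always an induced cycle: a chord would give a shorter distance in $G$ than along the cycle. Hence geodesic cycles of $G$ correspond bijectively to vertex sets $C$ inducing a geodesic cycle, so $S$ satisfies $\varphi_{\mathrm{gct}}$ if and only if $S$ meets every geodesic cycle.

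I expect no genuine obstacle. The one point to verify is the convention that $\EqGC$ permits repeated arguments, which its definition as stated does. If the paper's formal definition instead demanded distinct arguments, we would replace $\operatorname{GC}(C)$ by $\exists x\exists y\exists u\exists v\,\EqGC(C,x,y,u,v)$, which is still a fixed formula. Optimisation and counting then come for free from \Cref{thm:strong-intro}.
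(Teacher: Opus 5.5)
Your proof is correct and follows the paper's route. The paper simply asserts that \textsc{Geodesic Cycle Transversal} is expressible by a fixed $\MSO[\EqGC]$ formula and applies part~(ii) of \Cref{thm:strong-intro}; your formula built from $\EqGC(C,x,x,x,x)$, together with the observation that geodesic cycles are induced, supplies exactly the detail the paper omits. The atom's definition allows repeated arguments, so your fallback is unnecessary, and under a distinctness convention it would anyway need a separate clause for triangles, which have fewer than four distinct vertices.
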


\paragraph{Proof techniques.}
Our approach to the shortest-path predicate builds on the work of Boulinier, Petit, and Villain~\cite{BoulinierPetitVillain2004,BoulinierPetitVillain2008}.
A path $P=v_0v_1\cdots v_r$ is shortest if and only if there is a map $\lambda\colon V(G)\to\ZZ$ whose values differ by at most one across each edge and increase by one along~$P$.
Of course, an unbounded integer-valued map cannot be quantified directly in ordinary MSO.
Here is the trick: 
For an integer $M\geq3$, we call a labeling $c\colon V(G)\to\mathbb Z_M$ \emph{admissible} if adjacent vertices receive equal or cyclically consecutive labels.
If $M>\cbl(G)$, then every such labeling can be lifted to an integer-valued potential with the same signed edge differences.
We prove this lifting statement by adapting a result of Boulinier, Petit, and Villain~\cite{BoulinierPetitVillain2004,BoulinierPetitVillain2008}.
A fixed cyclic labeling can be represented by $M$ unary sets, and hence is $\MSO$ definable. 

\input{grid-potentials}

We show this idea in \Cref{fig:potential}, where we color the vertices of a grid by their distances from~$s$ modulo~$5$.
In~(a), the labels increase by one modulo~$5$ along the shortest path, while the detour in~(b) has two steps in which they decrease.
If we change the labels along the detour so that every step increases by one, while keeping the other labels unchanged, we obtain the coloring in~(c), where three edges no longer satisfy our admissibility condition.
In~(d), we show one of these edges on a geodesic $4$-cycle, which makes the problem with this coloring easy to see.

\begin{restatable}{theorem}{shortestcertificate}
\label{thm:cert-mso}
	For every integer $M\geq3$, there exists an $\MSO$ formula $\Cert_M(P,x,y)$ of length polynomial in~$M$ that, on every graph~$G$ with $\cbl(G)<M$, is equivalent to $\SP(P,x,y)$.
\end{restatable}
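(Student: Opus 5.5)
The formula $\Cert_M(P,x,y)$ will say two things. First, $P$ induces a path with endpoints $x$ and $y$; this is plain $\MSO$. Second, there exist sets $C_0,\dots,C_{M-1}$ forming an admissible labeling $c\colon V(G)\to\ZZ_M$ such that $c(x)=0$ and the labels increase by exactly one modulo $M$ along $P$ from $x$ to $y$.

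\textbf{Writing the formula.} The sets $C_i$ must partition $V(G)$, and every edge $uv$ must satisfy $c(u)-c(v)\in\{-1,0,1\}\pmod M$. This takes $O(M)$ disjuncts per edge, that is, a disjunction over pairs $(i,i')$ with $i'\in\{i-1,i,i+1\}$.

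Increasing along $P$ needs an orientation. We say: $x\in C_0$, and for every $u\in P$ with $u\neq y$ there is exactly one neighbour $w\in P$ of $u$ with $c(w)=c(u)+1$. To avoid relying on an orientation quantifier, we instead require that for every $u\in P\setminus\{x\}$, its unique $P$-neighbour that is closer to $x$ along $P$ has label $c(u)-1$. Here "closer to $x$ along $P$" is expressed by $\MSO$ path-connectivity in $G[P]$ after deleting $u$. Each clause has size $O(M)$ or $O(M^2)$, so the total length is polynomial in $M$.

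\textbf{Soundness.} Suppose $P=v_0\cdots v_r$ is a shortest $x$–$y$ path, so $r=d(x,y)$. Set $c(v)=d_G(x,v)\bmod M$. Distances from $x$ differ by at most one across each edge, so $c$ is admissible, and $c(v_i)=i\bmod M$. This direction needs no assumption on $\cbl$.

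\textbf{Completeness.} Assume $\cbl(G)<M$ and that the formula holds. The key tool is the lifting lemma (adapting Boulinier–Petit–Villain): an admissible $\ZZ_M$-labeling lifts to $\lambda\colon V(G)\to\ZZ$ with the same signed edge differences in $\{-1,0,1\}$. Given the lift, $\lambda$ is $1$-Lipschitz along edges, so $\lambda(y)-\lambda(x)\le d(x,y)$. Along $P$, $\lambda$ increases by exactly one per step, so $\lambda(y)-\lambda(x)=r=|P|-1$. Hence $r\le d(x,y)$, and $P$ is a shortest path (it is automatically induced).

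\textbf{The lifting lemma, and the main obstacle.} Define the signed difference $\delta(u,v)\in\{-1,0,1\}$ as the unique representative of $c(v)-c(u)$; it is unique because $M\ge3$. Then $\delta$ is a $1$-cochain, and a lift exists on each component if and only if $\delta$ sums to zero around every cycle. Around any cycle the sum is $\equiv0\pmod M$, and its absolute value is at most the cycle length. So every cycle of length less than $M$ has sum exactly $0$.

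By $\cbl(G)<M$ there is a cycle basis consisting of cycles of length less than $M$. The circulation sum is additive over the cycle space; this must be argued over $\ZZ$ with orientations, not merely over $\mathbb F_2$.

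That is the delicate point. An $\mathbb F_2$ basis need not generate the integer cycle space, so short cycles spanning the binary space might not directly give zero integer circulation on every cycle. I would try to handle this by showing that the cycles of length at most $\cbl(G)$ generate all cycles in a way that preserves integer sums. For example, I would induct on cycle length: a cycle longer than $\cbl(G)$ is not geodesic-obstruction-free, so it has a shortcut, which splits it into two shorter cycles whose integer sums add. This uses the Hamann–Weißauer refinement linking geodesic cycle length and $\cbl$.

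Alternatively, I would use the known fact that the integer cycle lattice is generated by any set of cycles containing, for each cycle, a chord decomposition down to basis-length cycles. Then I would define $\lambda$ by summing $\delta$ along paths of a BFS tree.
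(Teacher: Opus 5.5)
\textbf{The gap is in the lifting lemma.} Your formula is fine, and so are the direction from a shortest path to a labeling (reduce $d_G(x,\cdot)$ modulo $M$, giving vertices outside the component of $x$ a constant label and treating $x=y$ separately) and the final step from an integer lift to shortestness. These match the paper in essence. The problem is exactly the step you flag as delicate: passing from ``the increments $\delta$ sum to zero around every cycle of length less than $M$'' to ``they sum to zero around every cycle''.

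Your main route is an induction in which every cycle longer than $\cbl(G)$ is split by a shortcut, and it rests on a false claim. The known inequality is $\cbl(G)\le\gcl(G)$, not the reverse, and geodesic cycles can be longer than $\cbl(G)$. In the cube $Q_3$, the six $4$-faces generate the cycle space, so $\cbl(Q_3)=4$. Yet $000,001,011,111,110,100$ is an isometric $6$-cycle. With $M=5$ this cycle has length at least $M$ and admits no shortcut, so your induction stalls on it. The Hamann--Wei{\ss}auer bound $\ctw(G)\le\tw(G)(\cbl(G)-2)$ does not supply such shortcuts. Your chord-decomposition alternative fails for the same reason, since geodesic cycles are chordless. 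As written, your argument would only prove the statement under the stronger hypothesis $\gcl(G)<M$.

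\textbf{The missing idea.} The worry about integer generation is beside the point. Vanishing of the signed sum is a linear condition, so it suffices that the short basis cycles span the cycle space over $\mathbb Q$. Any binary cycle basis, oriented arbitrarily, is linearly independent over $\mathbb Q$: scale a rational dependence to coprime integers and reduce modulo $2$ to get a nontrivial $\mathbb F_2$-dependence. Since the dimensions agree, it is therefore a $\mathbb Q$-basis.

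The paper proves this directly with a spanning-tree and parity argument:
\begin{itemize}
\item Define $\lambda$ along a spanning tree $T$, and let $b$ be the defect $\delta(e)-(\lambda(v)-\lambda(u))$. It vanishes on $T$ and has zero signed sum on every basis cycle.
\item If $b\neq0$, divide by the gcd of its values. The set $F$ of edges with odd quotient is then nonempty and avoids $T$.
\item $F$ meets every basis cycle, hence every cycle, in an even number of edges.
\item But the fundamental cycle of an edge of $F$ meets $F$ exactly once, a contradiction.
\end{itemize}
With this lemma supplied, the rest of your proof goes through.
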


Similarly, for $\EqGC$, we compare two arcs of a geodesic cycle by subtracting their signed edge contributions in a common cyclic direction.
We show that these differences sum to zero around the cycle precisely when the arc lengths agree.
Again, the existence of such an integer potential is equivalent to a finite cyclic labeling, which can be quantified by $\MSO$. We obtain the following analogous theorem. 

\begin{restatable}{theorem}{guardedcertificate}
\label{thm:eqgc-cert-mso}
	For every integer $M\geq3$, there exists an $\MSO$ formula $\mathsf{CertEqGC}_M(C,x,y,u,v)$ of length polynomial in~$M$ that, on every graph~$G$ with $\cbl(G)<M$, is equivalent to $\EqGC(C,x,y,u,v)$.
\end{restatable}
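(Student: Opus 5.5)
Theorem~\ref{thm:eqgc-cert-mso} follows the same pattern as Theorem~\ref{thm:cert-mso}: we express $\EqGC(C,x,y,u,v)$ by a conjunction of two parts. The first says that $C$ induces a geodesic cycle. The second compares the two distances through a finite cyclic labeling that, by the lifting lemma for $\cbl(G)<M$, behaves like an integer potential.

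\emph{Step 1: geodesicity of $C$.} We first state in plain $\MSO$ that $G[C]$ is a cycle, that is, connected and $2$-regular. A cycle $C$ is geodesic exactly when every pair $a,b\in C$ is joined by one of the two arcs of $C$ that induces a shortest path in $G$. An arc is a connected subset $A\subseteq C$ containing $a,b$ such that $G[A]$ is an $a$--$b$ path; this is $\MSO$-definable. We can therefore write
\[
\forall a,b\in C\;\exists A\subseteq C\;\bigl(\text{$A$ is an $a$--$b$ arc of $C$}\wedge \Cert_M(A,a,b)\bigr).
\]
This uses Theorem~\ref{thm:cert-mso}, so its length is polynomial in $M$. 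Once $C$ is geodesic, $d_G(x,y)$ is the length of a shortest arc of $C$ between $x$ and $y$, and the same holds for $u,v$.

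\emph{Step 2: comparing distances.} It suffices to test equality of the two shortest arcs. We existentially guess an arc $A_1$ from $x$ to $y$ and an arc $A_2$ from $u$ to $v$, each certified shortest by $\Cert_M$. We then need $|A_1|=|A_2|$, which ordinary $\MSO$ cannot express directly. For this we guess an admissible labeling $c\colon V(G)\to\ZZ_M$ using $M$ unary sets. We do not restrict it to $C$, so that the lifting lemma, which needs the whole graph with $\cbl(G)<M$, applies directly.

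A cleaner route uses distances from a base point. Since $A_1$ is a shortest path, any admissible labeling with $c(x)=0$ that increases by one along $A_1$ lifts to a potential $\lambda$ with $\lambda(x)=0$. The values $\lambda$ then record positions along $A_1$, so $\lambda(y)=d(x,y)$. Similarly, a second labeling $c'$ increasing along $A_2$ gives $\lambda'(v)-\lambda'(u)=d(u,v)$. Comparing two lifted values is still not finite-state in general. We therefore combine the two potentials into a single one, following the paper's signed-contribution idea. We orient $C$ cyclically, marking the orientation with the labels, and assign each edge of $A_1$ contribution $+1$ and each edge of $A_2$ contribution $-1$, with edges in both arcs contributing $0$. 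We then require an admissible $\ZZ_M$-labeling on $C$ whose consecutive differences equal these contributions, and which agrees with the potential structure of the lifting lemma.

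The key claim is that such a labeling exists if and only if the signed contributions sum to zero around $C$, which holds if and only if $|A_1|=|A_2|$. The backward direction is immediate from prefix sums. For the forward direction, the lifting lemma supplies an integer potential, and a potential has zero total change around any closed walk.

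\emph{Main obstacle.} The difficulty lies in the forward direction. Around a cycle, the integer sum of the contributions is only determined modulo $M$ by the $\ZZ_M$-labeling. The hypothesis $\cbl(G)<M$ is exactly what excludes a nonzero multiple of $M$: the lifting lemma forces every admissible labeling to have zero winding around every cycle in a basis, and hence around $C$. To apply it, the contribution pattern must be realized by an admissible labeling of all of $G$, not just of $C$. Our first attempt is to extend the labeling to $G$ by a potential that tracks the distance from $x$, corrected along $A_2$. If that fails, we instead bound $|A_1|,|A_2|\le |C|/2$, so the integer sum has absolute value less than $|C|$. Since $C$ is geodesic, $|C|\le 2\cbl(G)+1$ or a similar bound follows from Hamann--Wei{\ss}auer, so a large enough modulus in terms of $M$ rules out any nonzero wraparound.
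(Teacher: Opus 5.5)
Your Step~1 and the overall plan match the paper: two arcs certified shortest by $\Cert_M$, then a winding-number comparison of their lengths through a labeling that the lifting lemma turns into an integer potential. Weighting each edge of $C$ by $[e\in A_1]-[e\in A_2]$ relative to a cyclic orientation is also a legitimate variant. The gap is exactly the point you call the main obstacle, and neither of your fixes closes it.

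The certificate must be an admissible labeling of all of $G$. A labeling of $C$ alone exists exactly when $|A_1|\equiv|A_2|\pmod M$, which says nothing on long cycles. But then ``equal lengths $\Rightarrow$ certificate exists'' is not ``immediate from prefix sums'', since prefix sums only label $C$. Your first attempt is not a construction. Indeed, $d_G(x,\cdot)$ itself changes by $\pm1$ along edges of $C$ outside $A_1\cup A_2$, where you prescribe $0$. Your fallback is false, because the length of a geodesic cycle is not bounded by any function of $\cbl(G)$. The Hamann--Wei{\ss}auer inequality $\ctw(G)\le\tw(G)(\cbl(G)-2)$ also involves treewidth. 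For example, the hypercube $Q_n$ has its cycle space generated by $4$-cycles, so $\cbl(Q_n)=4$. Yet the $2n$-cycle that switches the coordinates to $1$ one at a time, and then back to $0$ in the same order, is isometric. Since the formula may depend only on $M$, no choice of modulus repairs this.

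The missing idea is a second use of isometry, which is the paper's \Cref{lem:eqgc-finite-certificate-outline}. The prefix-sum potential $q$ on $C$ has increments in $\{-1,0,1\}$, so $|q(a)-q(b)|\le d_C(a,b)=d_G(a,b)$ for all $a,b\in C$. Hence the discrete McShane extension $\widehat q(z):=\min_{a\in C}\bigl(q(a)+d_G(a,z)\bigr)$ agrees with $q$ on $C$ and is $1$-Lipschitz on the component of $C$; set it to $0$ elsewhere. Reducing it modulo the chosen modulus gives the required admissible labeling of $G$. The converse is the lifting argument over a basis of cycles of length at most $\cbl(G)$, as you describe.

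You would also still need to make the cyclic orientation $\MSO$-definable, which ``marking it with the labels'' leaves open. The paper avoids a global orientation. It reads directions off the traversals of the two arcs with the cut formula $\mathsf{Forw}$ and guesses once whether to reverse the second arc. This allows edge differences in $\{-2,\ldots,2\}$, which is why it uses modulus $N=2M+1$. With your orientation-relative weights, admissibility modulo $M$ would suffice once the extension step is in place.
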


Now we can use the structural bounds $\cbl(G)\leq2\ctw(G)$ and $\cbl(G)\leq3\gcw(G)$ and replace every $\SP$ and $\EqGC$ atom by its certificate and obtain an ordinary $\CMSOtwo$ formula or an ordinary $\CMSO$ formula, respectively, whose length depends only on the original formula length and the chosen parameter.
We can then apply the ordinary CMSO meta-theorems and obtain \Cref{thm:main}.

\paragraph{Packing-covering templates and the second meta-theorem.}

Having dealt with shortest paths and guarded equidistance, we now turn to
distance bounds that are given as part of the input. 
Recall that we write
$\Dist(u,v,d)$ for the condition $d_G(u,v)\geq d$, where $d$ is a
nonnegative integer given in binary. For every fixed value of $d$, we can
express this condition in first-order logic by excluding all shorter
walks, although the length of the resulting formula depends on $d$.
Our aim here is to allow the numerical value of $d$ to vary while keeping
it outside the parameter, so expanding the atom into an ordinary formula
does not give the kind of algorithm that we are looking for.

It turns out that we cannot allow these input-valued comparisons without restricting their
use: as we show in \Cref{thm:dist-hardness-intro}, evaluating a fixed
existential-monadic formula with one input-valued distance threshold is
already NP-complete on paths with a fixed finite number of 
colors. We return to this hardness result below and for now take it as a motivation to study a 
fragment in which the distance comparisons express only separation and
covering requirements.

Two familiar problems illustrate what we would like to capture. In
\textsc{Distance-$d$-Dominating Set}, we are given a graph $G$ and
nonnegative integers $d,k$, and we ask whether there is a set $S$ of at
most $k$ vertices such that every vertex of $G$ is at distance at most
$d$ from some vertex of $S$. In \textsc{$d$-Scattered Set}, we instead
ask for a set $S$ of at least $k$ vertices whose distinct members have
pairwise distance at least $d$. 
Thus our input numbers 
serve as a covering radius in the first problem and as a separation
threshold in the second.

Recall that we use $\Sep_d(X,Y)$ to express separation between the vertex sets $X$ and $Y$, while $\Cover_r(U,Y)$ requires every vertex of the set $U$, which is given with the input, to lie within distance $r$ of the vertices of $Y$.
Fix vertex-set variables $X_1,\ldots,X_p$ and a side
formula $\psi(\bar X)\in\CMSOCTW$ over a fixed finite unary input
signature. We additionally allow input predicates $L_1,\ldots, L_q\subseteq V(G)$, which here do not have to partition the vertex set, and may overlap in an arbitrary way. 

A \emph{packing-covering template} has the form 

\begin{align*}
  \Phi(\bar X;\bar\delta,\bar\rho)
  :=\psi(\bar X)
  \wedge\bigwedge_{a\in A}\Sep_{\delta_a}(X_{i_a},X_{j_a})
  \wedge\bigwedge_{b\in B}\Cover_{\rho_b}(U_b,X_{\ell_b}).
\end{align*}
$A$ and $B$ are finite index sets, and $i_a,j_a,\ell_b\in\{1,\ldots,p\}$.
Each target set $U_b$ is either $V(G)$ or one of the unary input sets $L_1,\ldots,L_q$.

\begin{restatable}{theorem}{packingcoveringtheorem}
\label{thm:packing-covering}\label{thm:dist-intro}
For every fixed packing-covering template $\Phi$, we can decide whether a satisfying assignment exists, compute a minimum-weight or maximum-weight satisfying assignment whenever one exists, and count all satisfying assignments or all optimum assignments exactly in time $f_\Phi(\ctw(G))N^{O_\Phi(1)}$, for a computable function $f_\Phi$.
\end{restatable}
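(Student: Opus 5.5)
The plan is to reduce everything to one dynamic program over a bag-connected tree decomposition. In that program the unbounded distance information is carried by integer labels whose values inside a single bag differ by at most the width. First, since $\cbl(G)\le 2\ctw(G)$, \Cref{thm:cert-mso} and \Cref{thm:eqgc-cert-mso} let us replace $\psi$ by an ordinary $\CMSOtwo$ formula $\psi'$ whose length depends only on $|\psi|$ and $\ctw(G)$. Next, we use the FPT approximation of Chakraborty et al.\ to compute a tree decomposition in which every bag induces a connected subgraph and has width $k\le g(\ctw(G))$. Every bag then has diameter at most $k$ in $G$. Hence for any set $Y$ and any bag $B$, the values $d_G(v,Y)$ for $v\in B$ all lie in $\{a,\dots,a+k\}\cup\{\infty\}$ for a single base value $a\in\{0,\dots,n\}$. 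This is the key compression: a distance profile on a bag costs $(n+2)\cdot(2k+2)^{k+1}$ states rather than $n^{k+1}$.

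Second, express each distance requirement through local labels. For each set $X_\ell$ that occurs in some $\Cover_{\rho_b}$ or $\Sep_{\delta_a}$ atom, introduce a guessed labelling $\lambda_\ell\colon V(G)\to\{0,\dots,n\}\cup\{\infty\}$ subject to the following local conditions.
\begin{itemize}
\item $\lambda_\ell(v)=0$ if and only if $v\in X_\ell$.
\item $|\lambda_\ell(u)-\lambda_\ell(v)|\le 1$ on every edge, with $\infty$ handled per component.
\item Every $v$ with $0<\lambda_\ell(v)<\infty$ has a neighbour with label $\lambda_\ell(v)-1$.
\end{itemize}
These conditions force $\lambda_\ell=d_G(\cdot,X_\ell)$, so the labelling is unique, which is what counting needs. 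Then $\Cover_\rho(U,X_\ell)$ becomes the local test $\lambda_\ell(u)\le\rho$ for all $u\in U$.

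For $\Sep_\delta(X_i,X_j)$ the difficulty is the ``distinct'' clause, which is not determined by $\lambda$ alone. Here I would attach to each labelling a canonical nearest-source forest: the parent of $v$ is the smallest neighbour in a fixed vertex order with label $\lambda(v)-1$. This parent can be checked inside a bag, since every edge is contained in some bag and the order can be fixed once. The forest is then determined by $X$. A shortest path between two distinct vertices $x\in X_i$ and $y\in X_j$ must cross an edge $uv$ joining two different trees (or pass through a vertex that has parents in two trees). So the separation condition is equivalent to a local inequality of the form $\lambda_i(u)+\lambda_j(v)+1\ge\delta$ on such crossing edges, where the trees are those of $X_i$ and $X_j$ respectively. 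The same-tree pairs need a separate check: for a point $x\in X_i\cap X_j$, it must hold that every other point of $X_i$ is far away. I would verify the exact form of this equivalence carefully, treating $X_i=X_j$ and $X_i\neq X_j$ as separate cases. ``Being in the same tree'' is expressible in $\MSOtwo$ over the parent edges, so this part can be pushed into the automaton for $\psi'$.

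Third, run the standard Courcelle-style dynamic program (for evaluation, optimization and counting) for $\psi'$ together with the tree-identity formulas, taking the product with the label states. A state at a bag records the following data.
\begin{itemize}
\item The automaton or type state of $\psi'$ and the auxiliary formulas, whose number is bounded in $|\Phi|$ and $k$.
\item For each label family, the base value $a\in\{0,\dots,n\}\cup\{\infty\}$ and the offset vector in $\{0,\dots,k\}^{B}$.
\item Pending ``needs a parent'' flags.
\end{itemize}
With $O(|\Phi|)$ label families, there are at most $f(k,|\Phi|)\,n^{O(|\Phi|)}$ states, and the thresholds $\delta_a,\rho_b$ are only compared against labels, at polynomial cost in $N$. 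Since the labels and canonical forests are determined by $\bar X$, satisfying assignments of $\Phi$ correspond bijectively to accepting runs. Therefore weights, given in binary, can be added in the usual semiring fashion, and counts are exact.

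The step I expect to be the main obstacle is the separation encoding: proving that the local crossing-edge condition on the canonical forests is exactly equivalent to $\Sep_\delta$ with the correct treatment of distinctness and of $X_i\cap X_j$, while keeping all witnesses canonical so that counting stays exact.
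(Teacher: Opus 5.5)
Your treatment of covering and of the side formula is sound and matches the paper. You use exact distance potentials forced by local rules (zero set $=X_\ell$, $1$-Lipschitz, every positive finite label has a neighbor one lower), which makes them unique and safe for counting. You also use the Lipschitz compression on connected bags and the certificate translation of $\psi$.

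The gap is in the separation encoding. You flag it yourself, but it is more than a matter of checking details.

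(1) The canonical parent (``smallest neighbor in a fixed vertex order with label $\lambda(v)-1$'') cannot be verified from the state you list. The eligible neighbors of a bag vertex $v$ arrive through edges introduced at different nodes, possibly in different branches. To certify minimality, you must remember for each bag vertex the index of the candidate marked so far, or guess the parent in advance. Partial solutions that differ in this index have, in general, different compatible completions. This costs up to $n$ values per bag vertex and $n^{\Theta(k)}$ rows. So the claimed bijection between satisfying assignments and accepting runs is not established within $f_\Phi(\ctw(G))N^{O_\Phi(1)}$.

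(2) The crossing criterion itself is never proved. Your patch for $x\in X_i\cap X_j$ (``every other point of $X_i$ is far away'') is incomplete, since pairs $(x,y)$ with $y\in X_j\setminus\{x\}$ must be excluded too.

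(3) The criterion couples a numerical test on labels, which only the dynamic program sees, with tree identity, which only the $\MSOtwo$ layer sees. You do not say how the two are connected.

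The idea can be repaired, and canonicity turns out to be unnecessary. Let $F_i,F_j$ be \emph{arbitrary} parent forests for $\lambda_i,\lambda_j$, with roots $r_i(v)\in X_i$ and $r_j(v)\in X_j$. Then $\Sep_\delta(X_i,X_j)$ fails if and only if one of the following holds:
\begin{itemize}
\item some vertex $v$ with $\lambda_i(v)+\lambda_j(v)<\delta$ has $r_i(v)\neq r_j(v)$;
\item some edge $uv$ with $\lambda_i(u)+\lambda_j(v)+1<\delta$ has $r_i(u)\neq r_j(v)$.
\end{itemize}
One direction is the triangle inequality through the roots. For the other, follow a shortest path $w_0\cdots w_d$ between a violating pair $x\neq y$. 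If the roots agree at every $w_m$, then $r_i(w_0)=x\neq y=r_j(w_d)=r_i(w_d)$. So $r_i$ changes along some edge $w_mw_{m+1}$, where $\lambda_i(w_m)+\lambda_j(w_{m+1})+1\leq d$.

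This handles $X_i=X_j$ and overlaps uniformly. Since it holds for every choice of forests, you can quantify over them universally inside the $\MSOtwo$ formula. You then feed it only unary predicates that are determined by $\bar X$ and checked locally: residues of the labels modulo $3$ to orient eligible edges, and flags recording whether $\lambda_i+\lambda_j$ is at most $\delta-1$, $\delta-2$, or $\delta-3$.

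The paper's proof avoids forests altogether. For separation it does not use distances to the final sets. Instead it stores truncated profiles $p_i(b)=\min\{D_i,d_G(b,X_i\cap I_t)\}$ of distances to the selected \emph{forgotten} vertices, which are still $1$-Lipschitz on a connected bag. It checks a newly introduced vertex against these profiles and against the selected bag vertices using precomputed distances, and updates the profiles at forget nodes. At a join it tests $\min_{b\in B}(p_i^{t_1}(b)+p_j^{t_2}(b))\geq d$ in both orders, using that the bag separates the two interiors. Distinctness is automatic there because the interiors are disjoint, so no canonical witnesses are needed for counting.
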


Again, when we do not have vertex weights as part of the input, we obtain the running time $f_\Phi(\ctw(G))n^{O_\Phi(1)}$.
We assume here (and below) that all numerical distance thresholds are bounded by $n$. 

Since the template is fixed, so is the number of selected sets, and
therefore also the number of cardinalities that we need to track.
The proof hence gives a uniform XP algorithm when we include the full template
description, together with the number and form of any additional
cardinality constraints, in the parameter alongside $\ctw(G)$, while
their numerical bounds remain outside the parameter. 
We show in \Cref{sec:hardness-packing-covering} that this is optimal (assuming FPT $\neq$ W[1]) by showing that deciding whether a packing-covering template has a satisfying assignment
is $\mathrm{W}[1]$-hard parameterized by $|\Phi|+\ctw(G)$, even on
unweighted trees over a fixed finite unary signature, with an ordinary
$\mathrm{MSO}_1$ side formula.

For \textsc{Distance-$d$-Dominating Set}, we use the single condition
$\Cover_d(V(G),S)$ and minimize~$|S|$, whereas for
\textsc{$d$-Scattered Set} we use $\Sep_d(S,S)$ and maximize~$|S|$.
Both templates have fixed size, and we obtain their decision versions by
comparing the optimum with the input bound $k$. This gives the following corollary.

\begin{corollary}
\textsc{Distance-$d$-Dominating Set} and \textsc{$d$-Scattered Set}
are FPT w.r.t.\  
$\ctw(G)$.
\end{corollary}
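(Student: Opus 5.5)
The corollary follows by instantiating \Cref{thm:packing-covering} with two fixed templates and then comparing an optimum value with the input bound~$k$. No new dynamic program is needed: all the algorithmic work sits inside \Cref{thm:packing-covering}. The only task is to write down the templates, check that they are legal (fixed, with thresholds as parameters kept outside the template), and verify that their satisfying assignments are exactly the feasible solutions.

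For \textsc{Distance-$d$-Dominating Set}, we take $p=1$, the trivial side formula $\psi(S):=\top$ (which lies in $\CMSOCTW$), $A=\emptyset$, and a single covering requirement with target $U=V(G)$:
\[
  \Phi_{\mathrm{dom}}(S;\rho):=\top\wedge\Cover_{\rho}(V(G),S).
\]
We instantiate it with $\rho:=d$. An assignment $S$ satisfies $\Phi_{\mathrm{dom}}$ precisely when every vertex lies within distance $d$ of $S$. We use unit vertex weights and apply the minimization part of \Cref{thm:packing-covering}. If no satisfying assignment exists, we answer no; this happens only when $V(G)\neq\emptyset$, since otherwise $S=V(G)$ always works. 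Otherwise we answer yes exactly when the minimum $|S|$ is at most $k$.

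For \textsc{$d$-Scattered Set}, we take $\psi:=\top$, $B=\emptyset$, and a single separation requirement:
\[
  \Phi_{\mathrm{sc}}(S;\delta):=\top\wedge\Sep_{\delta}(S,S),
\]
with $\delta:=d$. Because $\Sep$ only constrains pairs of distinct vertices $x\in X$, $y\in Y$, the formula $\Sep_d(S,S)$ says exactly that distinct members of $S$ have pairwise distance at least~$d$. We maximize $|S|$ with unit weights; the empty set is always feasible, so an optimum exists. We then answer yes iff the maximum is at least~$k$.

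Both templates are single fixed objects, independent of the input, so the function $f_\Phi$ and the exponent $O_\Phi(1)$ in \Cref{thm:packing-covering} become absolute. The running time is therefore $f(\ctw(G))\,n^{O(1)}$, with the binary encodings of $d$ and $k$ absorbed into~$N$. This uses the standing assumption that thresholds are at most $n$, which is harmless here: any $d>n$ can be replaced by $n$ for domination (it then means each component must contain a vertex of $S$) and treated analogously for scattering, where every distance is either at most $n-1$ or infinite.

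The only point requiring care is the convention for $d_G=\infty$ between components. Cover should correctly fail for vertices with no reachable member of $S$, and Sep should be satisfied across components. This matches the problem definitions, so no further obstacle arises.
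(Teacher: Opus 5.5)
Your proposal is correct and is the paper's argument: instantiate \Cref{thm:packing-covering} with the fixed templates $\Cover_d(V(G),S)$, minimizing $|S|$, and $\Sep_d(S,S)$, maximizing $|S|$, then compare the optimum with $k$. One small slip: the domination template is always feasible, since $S=V(G)$ covers every vertex at distance zero, so your remark that infeasibility can occur when $V(G)\neq\varnothing$ is mistaken, though harmless, as there is no infeasible case to handle.
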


Both problems are $\mathrm{W}[1]$-hard parameterized by treewidth
alone~\cite{KatsikarelisLampisPaschos2019,KatsikarelisLampisPaschos2022},
while their known treewidth algorithms have an exponential base depending
on the distance bound, with matching lower bounds under
SETH~\cite{BorradaileLe2017,KatsikarelisLampisPaschos2022}.
We give further applications of the theorem in \Cref{sec:applications}.

\paragraph{One covering set and minimum eccentricity.}

When we have only one covering atom and no separation conditions, we
can also let the side formula vary with the input while keeping the
polynomial exponent independent of its length. We call
\[
  \Phi(\bar X;\rho)
  =\psi(\bar X)\wedge\Cover_\rho(U,S)
\]
a \emph{one-cover formula} if $\psi\in\CMSOCTW$, the target set $U$ is either $V(G)$ or one of the unary input sets $L_1,\ldots,L_q$, and $S$ is one distinguished free vertex-set variable among $\bar X$.

\begin{restatable}{theorem}{onecovertheorem}
\label{thm:one-cover}
For a one-cover formula, we can solve feasibility, additive vertex-weight
optimization, and exact counting of satisfying or optimum assignments
in time $f(|\Phi|+\ctw(G))N^{O(1)}$, where $f$ is computable and the
polynomial exponent is independent of the formula.
\end{restatable}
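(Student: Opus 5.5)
The idea is to remove the single input-valued covering radius $\rho$ by turning it into a bounded amount of guessed information plus ordinary $\CMSOtwo$ constraints. Throughout we work with a tree decomposition of width $O(\ctw(G))$ whose bags are connected, obtained from the FPT approximation. Every bag $B$ then has diameter at most $|B|-1\le k:=O(\ctw(G))$ in $G$. The condition $\Cover_\rho(U,S)$ is monotone in $S$ and asks that $d_G(u,S)\le\rho$ for each $u\in U$. The first step is a BFS-layering reformulation. Put $\lambda(v):=\min(d_G(v,S),\rho+1)$. The condition then says $\lambda(u)\le\rho$ on $U$, and $\lambda$ is characterized locally: it is $0$ exactly on $S$, it changes by at most one across every edge, and every vertex with $0<\lambda(v)\le\rho$ has a neighbour with value $\lambda(v)-1$. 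The values may be unbounded, so we cannot quantify $\lambda$ directly. Instead, in the spirit of \Cref{thm:cert-mso}, we use its residue modulo $M:=2k+3>\cbl(G)$, together with a bounded amount of extra data in each bag.

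The second step is the bag-local compression. Inside a connected bag $B$, all vertices have $\lambda$-values within $k$ of each other. So $\lambda|_B$ is determined by one reference value $\lambda(r_B)$ plus offsets in $[-k,k]$, and the offsets are recoverable from the residues mod $M$. The problem is the threshold $\rho$: whether $\lambda(u)\le \rho$ and whether the cap $\rho+1$ is active depend on the actual integer $\lambda(r_B)$. The plan is to run a dynamic program over the decomposition, as in the optimization and counting versions of Courcelle's theorem, rather than a pure logical reduction. A DP state consists of three parts: the $\psi$-type of the partial structure, obtained from the Courcelle--Feferman--Vaught automaton for $\psi$ after replacing $\SP$ and $\EqGC$ atoms by their certificates from \Cref{thm:cert-mso,thm:eqgc-cert-mso}, which has size bounded in $|\Phi|+\ctw(G)$; the offset pattern of $\lambda$ on the bag; and the integer $\lambda(r_B)\in\{0,\dots,\rho+1\}$, stored \emph{explicitly}. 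The last component contributes only a factor $O(n)$ because $\rho\le n$. On top of that, each bag records which vertices still need a "predecessor witness" and which vertices of $S$ occur in it. Transitions at introduce, forget and join nodes check the local constraints on edges and the witness obligations. Because the reference values in adjacent bags are linked by bounded offsets, the number of states is $f(|\Phi|+\ctw(G))\cdot n$. For each state we keep the optimum weight, or the number of assignments, which gives additive optimization and counting in the standard way with arithmetic on $N$-bit numbers.

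The third step is correctness and uniqueness, which matter for counting. Given $S$, the truncated distance function $\lambda$ is uniquely determined. Hence every satisfying assignment $\bar X$ corresponds to exactly one accepting run, and counting is not inflated. Uniqueness holds because the local characterization above (zero on $S$, $1$-Lipschitz, predecessor for every positive value $\le\rho$, cap at $\rho+1$) has $\min(d(\cdot,S),\rho+1)$ as its unique solution: the Lipschitz condition forces $\lambda\le\min(d(\cdot,S),\rho+1)$, and the predecessor chains force the reverse inequality. The exponent of the polynomial is a constant independent of $\Phi$, since $\rho$ enters only through the single explicit counter.

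I expect the main obstacle to be making the predecessor-witness obligations compatible with forgetting vertices while keeping the state space independent of $\rho$ beyond the single counter. A vertex may be forgotten before its witnessing neighbour appears, and the predecessor relation must not create circular justifications. My first attempt is the following. An edge $vw$ with $\lambda(w)=\lambda(v)-1$ always appears together in some bag, so each vertex's obligation can be discharged in a bag containing both endpoints, and we record only a bit per bag vertex. Circularity is excluded automatically because witnesses strictly decrease $\lambda$. If that fails, the fallback is to quantify $\lambda$'s residue class sets and the "predecessor edge set" inside $\CMSOtwo$. The DP would then only need to verify the threshold via the explicit reference integer.
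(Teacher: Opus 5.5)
Your proposal is correct and is essentially the paper's proof. Both run a dynamic program over a nice bag-connected tree decomposition that stores, on each connected bag, one explicit potential value (an $O(n)$ factor) plus bounded increments as in \Cref{lem:dist-profile-compression}, a predecessor bit per bag vertex that is set along introduced edges and checked on forgetting, and logical types of the certificate-translated side formula, with uniqueness of the potential at the root (\Cref{lem:exact-distance-potential}) keeping the counts exact. The one genuine variation is that you cap the potential at $\rho+1$ instead of using exact distances in $\{0,\ldots,n-1\}$; your uniqueness argument justifies this, and it even spares the paper's separate case split on whether $S$ is empty and its infinite potentials on disconnected graphs. The residues modulo $M$ play no role once the reference value is stored explicitly.
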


Again, when we do not have vertex weights (we can still do cardinality optimization) as part of the input, we obtain the running time $f(|\Phi|+\ctw(G))n^{O(1)}$.

An application is \textsc{Minimum Eccentricity Shortest Path}, where we
seek a shortest path whose maximum distance from a vertex of the graph
is as small as possible~\cite{KuceraSuchy2023}. For a given radius $r$,
we express the existence of such a path with eccentricity at most $r$ by
\begin{align*}
  \bigl(\exists s\exists t\,\SP(P,s,t)\bigr)
  \wedge\Cover_r(V(G),P).
\end{align*}
The side formula has fixed length, and \Cref{thm:one-cover} lets us
evaluate this condition with $r$ as part of the input. 
Since the condition is monotone in $r$, we can use binary search to find the minimum eccentricity, obtaining an exact FPT
algorithm parameterized by $\ctw(G)$ without including the optimum
eccentricity in the parameter.

\paragraph{Nearest labels and consistent subsets.}

We next consider the $\NearLab$ atoms and allow a formula to refer to the labels of the closest selected
vertices. Recall that in this case we require that the unary input predicates $L_1,\ldots,L_q$ partition $V(G)$. 
A \emph{single-interface nearest-label formula}
$\psi(S,\bar X)$ is a formula of $\CMSOCTW$ in which we additionally
allow atoms $\NearLab_i(v;S)$, subject to the requirement that every such
atom refers to the same distinguished free set $S$, which we require to
be nonempty. We allow arbitrary Boolean combinations and quantification
over other variables, but no quantifier binds this distinguished $S$.
The additional free sets $\bar X$ are optional, and when there are none
we simply write $\psi(S)$. In optimization and counting, we consider
assignments to all displayed free sets.

\begin{restatable}{theorem}{nearestlabeltheorem}
\label{thm:nearest-label-optimization}\label{thm:nearest-label-intro}
For a single-interface nearest-label formula $\psi(S,\bar X)$, we can
solve feasibility, additive vertex-weight optimization, and exact
counting of satisfying or optimum assignments in time
$f(|\psi|+q+\ctw(G))N^{O(1)}$, where $f$ is computable and the
polynomial exponent is independent of the formula.
\end{restatable}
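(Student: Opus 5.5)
We reduce the nearest-label atoms to an ordinary $\CMSOCTW$ formula evaluated on an auxiliary labelled structure, and then handle the distance information by the same dynamic programming that underlies \Cref{thm:one-cover}.

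The key observation is that a nonempty distinguished set $S$ induces a distance function $d_S(v):=d_G(v,S)$. This function is a potential: it vanishes exactly on $S$, and it changes by at most one across each edge. Moreover, every $v\notin S$ has a neighbour $w$ with $d_S(w)=d_S(v)-1$.

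Whether $\NearLab_i(v;S)$ holds is then determined by a \emph{propagation of labels}. Let $N_i$ denote the set of vertices $v$ such that some nearest $S$-vertex of $v$ lies in $L_i$. Then:
\begin{enumerate}
\item $N_i\cap S=L_i\cap S$;
\item for $v\notin S$, we have $v\in N_i$ if and only if some neighbour $w$ with $d_S(w)=d_S(v)-1$ lies in $N_i$.
\end{enumerate}
Since the $N_i$ are determined uniquely by $S$ and $d_S$, the plan is to quantify the sets $N_1,\ldots,N_q$ existentially. We check the two local rules, and replace each atom $\NearLab_i(v;S)$ by $v\in N_i$.

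The only non-MSO ingredient is the predicate ``$d_S(w)=d_S(v)-1$''. We certify $d_S$ exactly as in the proof of \Cref{thm:cert-mso}, via an admissible cyclic labelling $c\colon V(G)\to\ZZ_M$ with $M>\cbl(G)$, using $M\le 2\ctw(G)+1$. The labelling is constrained so that:
\begin{enumerate}
\item $c$ takes the value $0$ on $S$;
\item every vertex outside $S$ has a neighbour whose label is cyclically one smaller;
\item no vertex outside $S$ carries a lifted value at most $0$.
\end{enumerate}

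The lifting lemma behind \Cref{thm:cert-mso} turns $c$ into an integer potential $\lambda$ with the same edge differences. The difficulty lies in the normalisation of $\lambda$: constraint (iii) is not locally checkable from $c$ alone. I would resolve this with a BFS-layer argument. The potential $\lambda$ is $1$-Lipschitz and equals $0$ on $S$, so $\lambda\le d_S$. The descent condition (ii) shows that from any $v$ one can walk strictly down to $S$. In a connected graph, the lift is normalised by the vertices of $S$, all of which must lift to the same value; this is expressible because on each component the lift is unique up to an additive constant. Since every descending walk has length $\lambda(v)$ and ends in $S$, we get $d_S\le\lambda$, hence $\lambda=d_S$. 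The descent condition (ii) also rules out spurious local minima outside $S$.

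Components of $G$ that contain no vertex of $S$ need separate handling. There $d_S=\infty$, and the atoms are false by definition; connectivity to $S$ is MSO-expressible, so this case is easy.

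After this translation, $\psi$ becomes an ordinary $\CMSOtwo$ formula $\psi'$ whose length is bounded by a function of $|\psi|+q+\ctw(G)$. The extra quantified sets $N_i$ and the labelling classes are existentially quantified, but they are uniquely determined by $S$ once $c$ is fixed. Up to cyclic shift, $c$ is determined by $\lambda=d_S$ per component, and we fix the shift by requiring $c=0$ on $S$. So there is exactly one witness per assignment, and counting is preserved.

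We then apply \Cref{thm:main}(i) to $\psi'$ for feasibility, weighted optimization and counting over the free sets $(S,\bar X)$. This yields running time $f(|\psi|+q+\ctw(G))N^{O(1)}$, with an exponent independent of the formula.

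The main obstacle I expect is the normalisation of the lift (constraint (iii)): making sure that the cyclic labelling certifies $d_S$ itself, rather than some other admissible potential, and doing so with uniqueness of witnesses so that counting stays exact. This is where I would carefully adapt the Boulinier–Petit–Villain lifting together with the descent condition.
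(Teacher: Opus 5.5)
There is a genuine gap, and it sits exactly at the normalisation step you flag. From $c=0$ on $S$ you only learn that $\lambda(s)\in M\ZZ$ for each $s\in S$. Different vertices of $S$ may lift to different multiples of $M$, and none of your local conditions forces them to lift to the same value. So the inference $\lambda\le d_S$, which presupposes $\lambda\equiv0$ on $S$, breaks down.

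Here is a concrete counterexample. Take a path $v_0v_1\cdots v_L$ with $3\mid L$, $S=\{v_0,v_L\}$, $M=3$ (legitimate, since $\cbl=0$), and $c(v_i):=i\bmod 3$.
\begin{itemize}
\item The labelling is admissible and vanishes on $S$.
\item Every $v_i\notin S$ has the neighbour $v_{i-1}$ with label one smaller.
\item The lift normalised by $\lambda(v_0)=0$ is $\lambda(v_i)=i>0$ off $v_0$.
\end{itemize}
Yet $\lambda\neq d_S$, and your propagation rule pushes the label of $v_0$ all the way to $v_{L-1}$. So your existential witnesses are neither sound nor unique. Separately, on components avoiding $S$ the atoms are not false. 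There $d_G(v,S)=\infty$, so by definition every label occurring in $S$ is a nearest label.

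This cannot be patched inside your strategy. No $\CMSOtwo$ formula $\chi(S,v)$ is equivalent to $\NearLab_i(v;S)$, even on coloured paths, where $\cbl=0$ and $\ctw=1$. Consider the words $s_1 0^a v 0^b s_2$, where the two ends form $S$ and carry labels $1$ and $2$. On these, $\NearLab_2(v;S)$ holds iff $b\le a$. That is not a regular language, so by B\"uchi's theorem it is not MSO-definable. Any translation $\psi\mapsto\psi'$ into a plain formula of length bounded in $|\psi|+q+\ctw(G)$, which Courcelle's theorem could then evaluate, would define it.

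The paper therefore does not reduce to Courcelle at all. It runs a dynamic program over a nice bag-connected tree decomposition, storing for each bag:
\begin{itemize}
\item \emph{integer} potentials $h\colon B_t\to\{0,\ldots,n-1\}$ on the bag, of which there are only $n3^{|B_t|-1}$ by \Cref{lem:dist-profile-compression}, since the bag is connected;
\item proposed and collected label sets $A(v)\subseteq R(v)$;
\item the logical type of the side formula, with each $\NearLab_i(v;S)$ replaced by a fresh unary predicate $N_i$ read off from $R$.
\end{itemize}
Because $h$ is an integer map whose zero set is exactly $S$, the Lipschitz and descent checks certify $h=d_G(\cdot,S)$ via \Cref{lem:exact-distance-potential}. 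The test $A(v)=R(v)$ at forget nodes then enforces \eqref{eq:nearest-label-recurrence}. The factor $n$ contributed by the root value of $h$ is precisely the non-finite-state information that residues modulo $M$ discard.
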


And again, when we do not have vertex weights as part of the input, we obtain the running time $f(|\psi|+q+\ctw(G))n^{O(1)}$.

As an application consider the \textsc{Minimum Consistent Subset} problem. Here, we seek a smallest nonempty
set $S$ such that every vertex has a closest selected vertex of its own
color~\cite{BanikEtAl2024MCS,GanianVasilakisWietheger2026}.
We allow ties between colors, provided that at least one of the closest
selected vertices has the required color, so the feasible sets are
precisely those satisfying
\begin{align*}
  S\neq\varnothing\wedge
  \forall v\,\bigwedge_{i=1}^{q}
  \bigl(L_i(v)\longrightarrow\NearLab_i(v;S)\bigr).
\end{align*}
We obtain the unweighted problem by minimizing $|S|$, and the weighted
version by minimizing an additive selection cost. For
\textsc{Minimum Strict Consistent Subset}, we require every closest
selected vertex to have the correct color, which we express by replacing
the implication with an equivalence. Since the $L_i$ form a partition
and $S$ is nonempty, this says exactly that the set of nearest labels at
each vertex consists of its own color alone. Both formulas have length
bounded by a function of $q$, so we obtain FPT algorithms parameterized
by $q+\ctw(G)$. For \textsc{Minimum Consistent Subset}, we cannot
obtain an FPT algorithm parameterized by $\ctw(G)$ alone unless
$\mathrm{P}=\mathrm{NP}$, since the problem is NP-complete on trees
when the number of colors is unrestricted~\cite{BanikEtAl2024MCS}.

\paragraph{Proof techniques. }

We prove \Cref{thm:dist-intro,thm:one-cover,thm:nearest-label-intro} by dynamic programming over a nice tree decomposition
whose bags induce connected subgraphs, with width bounded by a
function of~$\ctw(G)$. 
The key observation is that distances change by at most one across an edge, which substantially limits the information
that we need to store on a connected bag. For a selected set~$A$, a
nonempty bag $B$, and a nonnegative integer cutoff $D$, we call
$(\min\{D,d_G(v,A)\}:v\in B\bigr)$
the \emph{truncated distance profile} of $A$ on the bag $B$. 
With a general (non-bag-connected) tree decomposition we would only have the
bound $(D+1)^{|B|}$ on the number of profiles. Instead, we choose a root
and a spanning tree of $G[B]$, and observe that the root value has only
$D+1$ possibilities, while the change along each of the remaining
$|B|-1$ tree edges belongs to $\{-1,0,1\}$. We therefore obtain at most
\[
  (D+1)3^{|B|-1}
\]
profiles. 
After capping $D$ at $n$, each distance map thus contributes only one
factor depending polynomially on $n$, while its remaining contribution
depends on the bag size. 
We then maintain further problem specific information over the dynamic program.

\paragraph{Logical limitations of the meta-theorems.}

We finally investigate the limitations of our methods.
We first take a viewpoint commonly used in contemporary model theory: a graph class is complicated if it can encode all finite graphs by a fixed logical transduction.
Intuitively, an $\mathcal L$-transduction first colors a graph by adding a fixed collection of unary predicates and then applies an interpretation whose formulas belong to $\mathcal L$.
Such an interpretation specifies its output graph by a domain formula $\nu(x)$, keeping all vertices that satisfy $\nu$ in the output, and an adjacency formula~$\eta(x,y)$, defining all the edges $uv$ that satisfy $\eta(u,v)$.
Under additional assumptions ensuring that these encodings can be constructed efficiently, this also yields algorithmic hardness results.

As a simple example, map graphs can be encoded in colored planar bipartite graphs.
If the two sides are marked by $A$ and $B$, use $\nu(x):=A(x)$ and $\eta(x,y):=x\neq y\wedge\exists z\,(B(z)\wedge E(x,z)\wedge E(y,z))$.
The output has vertex set~$A$, with two vertices adjacent when they have a common neighbor in~$B$.
The graphs obtainable in this way are exactly the map graphs~\cite{ChenGrigniPapadimitriou2002}.

For a logic $\mathcal L$, we call a graph class \emph{monadically $\mathcal L$-dependent} if no fixed $\mathcal L$-transduction produces all finite graphs from the class. 
The opposite property is \emph{monadic $\mathcal L$-independence}.
This terminology extends the model-theoretic notion going back to Baldwin and Shelah~\cite{BaldwinShelah1985}.
For the variations of $\MSO$ we use adjacency structures and vertex colors, while for the variations of $\MSOtwo$ we work with incidence structures and hence may color vertices and edges.
Note that also in the incidence encoding of a graph we evaluate the metric atoms in the original input graph.

This viewpoint has been very fruitful in finite model theory and structural graph theory.
The classical examples are the characterizations of bounded treewidth by monadic $\MSOtwo$-dependence and bounded cliquewidth by monadic $\CMSO$-dependence recalled above.
On monotone graph classes, monadic first-order dependence coincides with nowhere denseness~\cite{AdlerAdler2014} and (assuming $\mathrm{FPT}\neq\mathrm{AW}[*]$) model checking is fixed-parameter tractable exactly on these classes~\cite{grohe2017deciding}.
Also other important classes of graphs have such logical descriptions. For a logic called separator logic, the corresponding boundary on monotone classes is exclusion of a fixed topological minor, which also underlies the efficient model-checking results for separator logic, disjoint-paths logic and torso-treewidth logic~\cite{PilipczukSchirrmacherSiebertzTorunczykVigny2022,sau2026model,SchirrmacherSiebertzStamoulisThilikosVigny2024,SchirrmacherSiebertzVigny2023}.
The recently studied sub-connectivity logic has minor exclusion as its monadic dependence boundary on weakly sparse classes~\cite{MaehlmannEtAl2026SubConnectivity} and is tractable exactly on these classes.

Our guarded equidistance predicate can express and detect precisely the metric obstruction given by long geodesic cycles.
Together with the classical width characterizations, this gives an exact logical characterization of the structural assumptions in \Cref{thm:strong-intro}.

\begin{restatable}{theorem}{guardeddependencetheorem}
\label{thm:dependence-intro}\label{thm:guarded-dependence}
For every graph class $\mathscr C$, the following statements hold.
\begin{enumerate}[label=(\roman*)]
  \item The class $\mathscr C$ has bounded bag-connected treewidth if and only if it is monadically $\MSOGC$-dependent, and equivalently if and only if it is monadically $\MSOCTW$-dependent.
  \item The class $\mathscr C$ has bounded geodesic cliquewidth if and only if it is monadically $\CMSOGC$-dependent, and equivalently if and only if it is monadically $\CMSOOCTW$-dependent.
\end{enumerate}
The equivalences in~\textnormal{(i)} also hold for the counting extensions of both logics.
\end{restatable}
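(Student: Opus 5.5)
Both parts split into a "bounded parameter $\Rightarrow$ dependence" direction, which reduces to the classical characterizations via the certificate theorems, and a "unbounded parameter $\Rightarrow$ independence" direction, which needs a direct encoding through long geodesic cycles.

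\emph{Bounded parameter implies dependence.} Suppose $\mathscr C$ has $\ctw\le k$, so $\tw\le k$ and $\cbl\le 2k$. Put $M:=2k+1$. By \Cref{thm:cert-mso,thm:eqgc-cert-mso}, each $\SP$ and $\EqGC$ atom in any transduction can be replaced by $\Cert_M$ or $\mathsf{CertEqGC}_M$. These certificates are $\MSO$ formulas, and they remain correct after adding unary colors, because the metric atoms are evaluated in the original graph. The result is an ordinary $\CMSOtwo$ transduction that agrees with the original on every graph of $\mathscr C$. If some $\MSOCTW$-transduction (or its counting version) produced all finite graphs from $\mathscr C$, this $\CMSOtwo$ transduction would also produce them. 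That contradicts the classical fact that bounded treewidth classes are monadically $\CMSOtwo$-dependent (Courcelle/Seese). Part (ii) is identical, using $\cbl\le 3\gcw$ and the Courcelle--Oum characterization of bounded cliquewidth by $\CMSO$-dependence. Note that dependence for the larger logic $\MSOCTW$ implies dependence for $\MSOGC$.

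\emph{Unbounded parameter implies independence.} Assume the parameter is unbounded. By Diestel--M\"uller and Hamann--Wei{\ss}auer, either $\tw$ (respectively $\cw$) is unbounded, or geodesic cycles in $\mathscr C$ are arbitrarily long. In the first case the classical results already give an $\MSOtwo$ (respectively $\CMSO$) transduction producing all graphs, and this is in particular an $\MSOGC$ transduction. It therefore remains to show that a class containing arbitrarily long geodesic cycles is monadically $\MSOGC$-independent; an $\MSO[\EqGC]$ transduction suffices for both parts. In (ii) the claim is stated for $\CMSOGC$, but an $\MSO$ transduction is a fortiori one.

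\emph{Encoding through a long geodesic cycle (main obstacle).} The goal is a fixed transduction that interprets all finite graphs from colored long geodesic cycles. First color one long geodesic cycle $C$ by a predicate, and mark a direction and a root using a few colors. Together with $\EqGC$, this lets us define ``arc $xy$ has the same length as arc $uv$''. Equal-length comparisons on a long cycle yield definable addition/translation on an initial segment: from $d(x,y)=d(u,v)$ together with the cyclic order, which $\MSO$ defines from $C$ and an orientation marker, we can compare differences of positions. With translation-equality on a cycle of length $N$ we interpret a grid-like structure, namely pairs of positions compared by arc-length. Concretely, encode a target graph $H$ on vertices $1,\dots,n$ as follows. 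Mark positions $p_i=i$ (an initial segment of color $A$) and place a color $B$ at position $p_i+p_j\cdot n$ whenever $ij\in E(H)$. This requires multiplication, which is the delicate part. Instead I would use the standard trick of marking, for each edge $ij$, a position $q$ with color $B$ such that the arc from the root to $q$ equals the arc from $p_i$ to some $A'$-marked point whose index codes $j$. Alternatively, I would interpret in $\MSO$ a sufficiently long path with a definable equal-distance relation, from which $\MSO$ with segment-length equality interprets arithmetic on positions (this is essentially $\MSO$ with equal-length predicate on words, known to be undecidable/able to define grids). From such grids, all finite graphs follow by a routine interpretation. A cleaner variant is to use vertex-disjoint arcs: color disjoint blocks for each vertex of $H$ and let equal-length comparisons between blocks encode adjacency. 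This needs care to keep everything within distance at most half the cycle so that the geodesic distances equal the arc distances. I expect most of the work to lie in making this encoding rigorous, in particular in handling the half-cycle bound and the restriction that all four points must lie on the guarded cycle $C$.
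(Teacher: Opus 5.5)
\textbf{Gap: the long-geodesic-cycle encoding is never actually constructed.} Your first two steps match the paper. For bounded $\ctw$ (resp.\ $\gcw$) you fix one modulus $M>\cbl$ for the whole class, replace every $\SP$ and $\EqGC$ atom by its certificate, which ignores the added colors, and invoke the classical $\CMSOtwo$- (resp.\ $\CMSO$-) dependence of bounded treewidth (resp.\ cliquewidth). For the converse you reduce to two cases: unbounded $\tw$ or $\cw$, which is handled classically, and arbitrarily long geodesic cycles, using $\cbl\le\gcl$ in part~(ii).

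The missing piece is the only new ingredient of the independence direction: a \emph{fixed} $\MSO[\EqGC]$ interpretation that produces an arbitrary $H$ from a colored long geodesic cycle. You list candidates but carry none out.
\begin{itemize}
  \item The mark at position $p_i+p_jn$ needs multiplication, which you do not define.
  \item The ``standard trick'' does not say how the formula, given $y=p_j$, locates the $A'$-point coding $j$. It also does not show that distinct edges yield distinct arc lengths.
  \item That equal-length comparisons ``define grids'' is asserted, not constructed.
\end{itemize}
So the proposal does not yet show that a class with arbitrarily long geodesic cycles is monadically $\MSOGC$-independent.

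\textbf{The missing idea: a single midpoint test identifies a pair.} Choose $0<a_1<\dots<a_m\le 3m^3$ with all sums $a_i+a_j$ ($i\le j$) distinct, for example $a_i=(2m+1)i^2+i$. Take an arc $p_0p_1\cdots p_{2a_m}$ of a geodesic cycle of length greater than $12m^3$. This arc is shorter than half the cycle, so $d_G(p_s,p_t)=|s-t|$ on it. Add three colors:
\begin{itemize}
  \item $Z$, the vertex set of the cycle;
  \item $U:=\{p_{2a_i}\}$, the representatives at even positions;
  \item $Q:=\{p_{a_i+a_j}: ij\in E(H)\}$, the marked midpoints.
\end{itemize}
For distinct $x,y\in U$, declare them adjacent iff $\exists C\,\bigl(C=Z\wedge\exists z\,(Q(z)\wedge\EqGC(C,z,x,z,y))\bigr)$. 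On the arc, $d(z,x)=d(z,y)$ forces $z=p_{a_i+a_j}$, and the Sidon property makes this midpoint determine the pair. Quantifying $C$ equal to the color $Z$ is how the guard is met. No orientation, root, or arithmetic is needed.

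Your grid idea can also be completed. Mark two cycle points $r_0,r_1$ at distance $k$. Define vertical steps on an arc of length less than $k^2$ by $\EqGC(C,x,y,r_0,r_1)$. Define horizontal steps by cycle edges inside blocks of length $k$, separated by coloring alternate blocks. Then interpret $H$ on the diagonal of the resulting $k\times k$ grid via MSO-definable row and column membership. Either construction, or some other, has to be written out before the proof is complete.
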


\textcolor{black}{
This characterization gives bag-connected treewidth an intrinsic logical meaning.
It places this parameter in the same broader programme of characterizing important graph classes via logic. 
Hence, bag-connected treewidth plays for $\MSOCTW$ the role that treewidth plays for $\MSOtwo$: the structural condition supporting the algorithmic meta-theorem is exactly the condition preventing the logic from encoding all finite graphs by a fixed transduction.
In particular, the need to control long geodesic cycles is reflected in the expressive power of the guarded metric predicate itself.
Our equivalence holds for arbitrary graph classes, without closure assumptions.
Together with the parallel characterization of geodesic cliquewidth, it connects the new algorithmic results to an exact logical description of their structural setting.}
\medskip

Shortest-path quantification alone does not give these characterizations.
Complete graphs remain monadically $\CMSOSP$-dependent, and even cycles remain monadically $\CMSOtwoSP$-dependent.
In contrast, the unguarded predicates $\EqDist$ and $\Close$ already encode arbitrary graphs on colored paths.
Their dependence boundaries therefore require stronger control of distances.

\begin{restatable}{theorem}{unguardeddependencetheorem}
\label{thm:unguarded-dependence-intro}\label{thm:md-arbitrary-classes}
Let $\mathscr C$ be a graph class and let $R\in\{\EqDist,\Close\}$.
\begin{enumerate}[label=(\roman*)]
  \item The class $\mathscr C$ is monadically $\mathrm{MSO}_2[R]$-dependent if and only if it has bounded treewidth and bounded component diameter.
  If $\mathscr C$ is hereditary, this is equivalent to bounded treedepth.
  \item The class $\mathscr C$ is monadically $\mathrm{CMSO}_1[R]$-dependent if and only if it has bounded cliquewidth and bounded component diameter.
\end{enumerate}
The characterization in~\textnormal{(i)} also holds for $\mathrm{CMSO}_2[R]$, and all statements remain true when both predicates are present.
\end{restatable}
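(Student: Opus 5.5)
Both directions are needed, separately for (i) and (ii).

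\textbf{Dependence direction.} Assume bounded treewidth and bounded component diameter, say diameter at most~$D$ (an infinite distance never satisfies $\EqDist$). Every distance value then lies in $\{0,\ldots,D,\infty\}$. For each fixed $k\le D$, the condition $d_G(x,y)=k$ is first-order definable by a formula whose length depends only on~$D$: there is a walk of length $k$ and no walk of length less than~$k$.

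Consequently $\EqDist(x,y,u,v)$ is equivalent to $\bigvee_{k\le D}\bigl(d(x,y)=k\wedge d(u,v)=k\bigr)$. Similarly, $\Close(x,u,v)$ is a finite disjunction over pairs $k<k'\le D$, together with the case $d(x,u)\le D$ and $d(x,v)=\infty$; the latter is expressible since ``same component'' is $\MSO$-definable.

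Thus every $\mathrm{(C)MSO}_i[R]$-transduction on $\mathscr C$ is equivalent, on $\mathscr C$, to a plain $\mathrm{(C)MSO}_i$-transduction. In the incidence encoding the walk formulas pass through edge elements, which is still fine. The classical results then apply: bounded treewidth implies monadic $\CMSOtwo$-dependence, and bounded cliquewidth implies monadic $\CMSO$-dependence (Courcelle--Oum). This gives dependence in both (i) and (ii).

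\textbf{Independence direction.} If treewidth, respectively cliquewidth, is unbounded, we are done by the classical characterizations, since the extended logic contains the plain one. The remaining case is bounded width but unbounded component diameter. Then $\mathscr C$ contains arbitrarily long geodesic paths, namely shortest paths inside components.

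I would first show that colored long paths encode all graphs with $\EqDist$. Colour the vertices of a shortest path $v_0\dots v_L$ so that a set $A$ of ``vertex markers'' sits at chosen positions $p_1<\dots<p_m$. Choose the positions so that all pairwise differences $p_j-p_i$ are distinct, as in a Sidon/Golomb ruler. For each edge $ij$ of the target graph $H$, add an extra coloured pair of markers at distance exactly $p_j-p_i$, placed far away on the path and tagged with colour $F$.

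The output graph has domain~$A$, and $ij$ is an edge iff some pair of $F$-marked vertices $u,v$ satisfies $\EqDist(a_i,a_j,u,v)$. Distances are measured in $G$, but along a shortest path they equal the index differences, so this is exact. The pairing of $F$-vertices needs some care; I would use two colours $F_1,F_2$ and place the two markers of each pair consecutively in the $F$-region, so that the intended pairs are the only $F_1F_2$ pairs that are adjacent in the $F$-order. That order is definable in $\MSO$ along the path, which itself is picked out by a colour.

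For $\Close$, I would define equidistance as the conjunction of $\neg\Close(x,u,v)$ and $\neg\Close(x,v,u)$ from a common endpoint $x$. To make the encoding use a common base point, I would measure all positions from one endpoint $v_0$ instead of using $\EqDist$ on pairs. With positions chosen so that the sums $p_i+p_j$ are distinct, mark for edge $ij$ the vertex at position $p_i+p_j$. Then ``some edge-marker $w$ has $d(v_0,w)=d(v_0,a_i)+d(v_0,a_j)$'' is what needs expressing, and this requires adding distances. Expressing it cleanly with $\Close$ is the main obstacle I expect. My first attempt would be to translate along the path: use the segment from $a_i$ of the same length as $v_0a_j$, obtained by comparing distances from a fixed point via $\Close$-equality, and its endpoint is the sum.

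\textbf{Remaining details and main obstacle.} For (i), the path and colours live in the incidence structure, with vertices coloured, so the same construction applies. For hereditary classes, bounded treedepth is equivalent to having no long paths, and bounded treewidth plus bounded diameter of components implies no long paths. The main difficulty is the $\Close$ encoding in the presence of other vertices of $G$ that are not on the chosen path. Because distances are evaluated in $G$, I must ensure that all relevant quantified vertices are restricted to the coloured geodesic path. This holds since the path is geodesic, so distances between its vertices are path distances.
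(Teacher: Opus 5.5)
Your dependence direction is the paper's argument: eliminate both predicates by first-order formulas under a diameter bound, then apply the classical treewidth and cliquewidth characterizations. Your reduction of independence to the case of bounded width with unbounded component diameter is also correct.

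The genuine gap is that you never produce an encoding for $\Close$ in that remaining case. The attempt you sketch cannot work as stated. $\Close(x,u,v)$ only compares two distances measured from the \emph{same} vertex $x$. Your ``segment from $a_i$ of the same length as $v_0a_j$'' instead requires comparing $d(a_i,w)$ with $d(v_0,a_j)$, which have different base points, and nothing in your plan supplies a common one.

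The missing idea is to make the edge marker itself the common base point. Place the representatives at even positions $2a_1<\dots<2a_m$ of a long geodesic, with $(a_i)$ a Sidon sequence, and for each edge $ij$ of $H$ mark the vertex at position $a_i+a_j$. For a marked $z$ and distinct representatives $x,y$, the formula $\neg\Close(z,x,y)\wedge\neg\Close(z,y,x)$ says $d(z,x)=d(z,y)$. On the geodesic this forces $z$ to be the midpoint of $x$ and $y$, and by the Sidon property the midpoint $a_i+a_j$ is marked exactly when $ij\in E(H)$. All distances involved are finite, so the $\infty$ convention of $\Close$ plays no role. Your sum-of-positions idea is one step from this: with even positions, $q=p_i+p_j$ holds iff some path vertex is equidistant both from $v_0,w$ and from $a_i,a_j$. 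Once you see that, $v_0$ and $w$ are superfluous.

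The same device, via $\EqDist(z,x,z,y)$, handles $\EqDist$ and removes the need for your separate marker region. As written, your pairing there is also slightly off. If the pairs are laid out consecutively, the second marker of one pair and the first marker of the next also form an $F_1F_2$ pair adjacent in the $F$-order. You need either an orientation along the path or gaps larger than $p_m-p_1$ between consecutive pairs.

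Finally, in the hereditary clause, the implication from bounded treewidth and bounded component diameter to the absence of long paths genuinely uses heredity: a path plus a universal vertex has treewidth $2$ and diameter $2$. It also needs the fact that a long path in a graph of bounded treewidth yields a long induced path, which then lies in the class and has large diameter.
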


Bounded component diameter means that one constant bounds the diameter of every connected component of every graph in the class.
For any fixed bound, both predicates are definable by fixed first-order formulas.
This makes their CMSO extensions tractable with the parameters formula length, treewidth or cliquewidth, and component diameter.

These results add new important entries in the dependence map for logics and important graph classes. 
We refer to the excellent survey~\cite{Pilipczuk2025} for more background. 

\paragraph{Algorithmic hardness consequences.}

Logical independence alone does in general not imply computational hardness.
An algorithmic reduction needs polynomial-size host graphs and encodings that can be constructed in polynomial time.
We make this requirement explicit and reuse the logical independence interpretations whenever this is possible. We refer to \Cref{def:efficient-ctw-encoding} for the precise definition of efficient encoding. 

\begin{theorem}\label{thm:hardness-intro}
Let $\mathscr C$ admit efficient encodings of its bag-connected-treewidth obstructions.
Then model checking $\MSOGC$ on monadic lifts of incidence structures of graphs from $\mathscr C$, using a fixed finite family of unary predicates, is NP-hard for one fixed sentence.
It is therefore para-NP-hard parameterized by formula length.
If the unary predicates admit a polynomial color-elimination encoding inside $\mathscr C$ with a fixed metric-correct backward translation, the same conclusion holds on the uncolored class~$\mathscr C$.
\end{theorem}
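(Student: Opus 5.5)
We reduce from an NP-complete problem on arbitrary graphs that is definable by a fixed $\MSOtwo$ sentence, for concreteness \textsc{3-Colorability}. The reduction goes through the monadic-independence interpretation behind \Cref{thm:guarded-dependence}(i). That theorem shows that unbounded bag-connected treewidth yields a fixed $\MSOGC$-transduction producing all finite graphs. By Diestel--M\"uller, unbounded bag-connected treewidth means either unbounded treewidth or arbitrarily long geodesic cycles, and these are the obstructions that \Cref{def:efficient-ctw-encoding} asks to be efficiently encodable.

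\textbf{Encoding step.} I take the efficient-encoding hypothesis to supply three things for every input graph $H$, constructible in time polynomial in $|V(H)|$:
\begin{itemize}
  \item a graph $G_H\in\mathscr C$;
  \item a monadic lift $G_H^+$ of its incidence structure with a fixed finite set of unary predicates;
  \item a fixed interpretation $I=(\nu,\eta)$ in $\MSOGC$ with $I(G_H^+)\cong H$.
\end{itemize}
In the treewidth branch this is the grid/Seese-type encoding, which uses only $\MSOtwo$. In the geodesic-cycle branch, the atom $\EqGC$ on a long marked geodesic cycle defines equal-distance relations between marked positions. These relations serve as a pairing mechanism that interprets an arbitrary adjacency relation.

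\textbf{Backward translation step.} Let $\psi_{3\mathrm{col}}$ be the fixed $\MSOtwo$ sentence for 3-colorability. We form the relativized translation $\psi^{I}$ by the standard method:
\begin{itemize}
  \item vertex quantifiers are guarded by $\nu$;
  \item set quantifiers range over subsets of $\nu$;
  \item edge atoms are replaced by $\eta$;
  \item edge-set quantification of the target is avoided by using the $\MSO$ version of $\psi_{3\mathrm{col}}$, which needs only vertex sets.
\end{itemize}
Then $G_H^+\models\psi^{I}$ if and only if $H\models\psi_{3\mathrm{col}}$. This is a polynomial-time many-one reduction to model checking the single fixed sentence $\psi^{I}$, so that problem is NP-hard. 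Para-NP-hardness parameterized by formula length follows immediately, since the formula length is constant.

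\textbf{Uncolored case.} We compose with the assumed color-elimination encoding. Each lifted structure $G_H^+$ is replaced, in polynomial time, by an uncolored graph $G'_H\in\mathscr C$. The fixed metric-correct backward translation $\tau$ recovers the predicates and rewrites every $\EqGC$ atom of $G_H^+$ as a formula over $G'_H$. Applying $\tau$ to $\psi^{I}$ yields a fixed sentence on $\mathscr C$.

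\textbf{Main obstacle.} The hard part is metric correctness. The $\EqGC$ atoms in $\psi^{I}$ are evaluated in the host graph, so gadget attachments must neither destroy geodesicity of the encoding cycles nor create spurious geodesic cycles and shortcuts. In the colored case, I would rely on the encoding definition guaranteeing that $I$ is evaluated on $G_H$ itself. In the uncolored case, correctness is exactly the content of the hypothesis "metric-correct", so the proof reduces to checking that relativization commutes with $\tau$.
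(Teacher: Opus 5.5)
Your architecture is the paper's: split according to which obstruction can be constructed, encode an arbitrary $H$ in it (grid minors via the Blumensath--Courcelle interpretation, which uses only $\MSOtwo$; long geodesic cycles via $\EqGC$), backward-translate the vertex-set sentence for 3-colorability, and compose with the assumed color-eliminating encoding in the uncolored case.

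The gap is in your encoding step, where you ``take the hypothesis to supply'' a polynomial-size lift $G_H^+$ together with a fixed interpretation producing $H$. \Cref{def:efficient-ctw-encoding} provides much less. It gives, in polynomial time, a graph of $\mathscr C$ either with a listed geodesic cycle of length greater than $12m^3$ or with an explicit $r\times r$ grid-minor model. Turning such an obstruction into an encoding of every $H$, of size polynomial in $|V(H)|$, is the actual content of the proof (\Cref{prop:geodesic-rich-universal,prop:grid-rich-universal}). The monadic-independence statement of \Cref{thm:guarded-dependence} gives no size bound, so it cannot replace this step.

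In the cycle branch, your ``pairing mechanism'' is not yet a construction. Equal-distance tests recover $H$ only under two conditions. First, distinct pairs of representatives must have distinct midpoints. Second, all relevant distances must be realized inside an arc shorter than half the cycle, so that $d_G$ is the difference of positions. With equally spaced representatives, for example, the pairs $\{i,j\}$ and $\{i-1,j+1\}$ share a midpoint, and one marker then creates spurious edges.

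The paper resolves this with the Sidon sequence $a_1<\cdots<a_m\leq 3m^3$ of \Cref{lem:sidon}, whose pairwise sums $a_i+a_j$ are all distinct. The construction, from \Cref{lem:metric-midpoint-encoding}(ii), is as follows. Representatives sit at positions $2a_i$ on an arc $p_0p_1\cdots p_{2a_m}$ of the cycle, of length $2a_m\leq 6m^3<|C|/2$. A predicate $Q$ marks $p_{a_i+a_j}$ for each $ij\in E(H)$, and a predicate $Z$ marks $V(C)$. Distinct representatives $x,y$ are then adjacent when $\exists C\,\exists z\,\bigl(C=Z\wedge Q(z)\wedge\EqGC(C,z,x,z,y)\bigr)$. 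The cubic bound is exactly why the definition asks for cycles longer than $12m^3$, and it is what makes the reduction polynomial.

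You have also misplaced the main obstacle. In the colored case, metric correctness is automatic: unary predicates do not change $G_H$, and $\EqGC$ is evaluated in $G_H$. The real work lies in the injectivity and polynomial size of the encoding. For the uncolored case, the paper, like you, simply invokes the assumed metric-correct backward translation.
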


Polynomial-time constructions of large grid-minor models handle the treewidth obstruction, while constructions of long geodesic cycles handle the metric obstruction.
In particular, model checking for a fixed existential $\MSO[\EqGC]$ sentence is NP-complete on cycles with three unary colors.

For unguarded comparisons, our logical hardness proof yields the following. 
For either $R\in\{\EqDist,\Close\}$, model checking $\FO[R]$ is $\mathrm{AW}[*]$-complete parameterized by formula length already on uncolored caterpillars.
The existential first-order fragments are $\mathrm{W}[1]$-complete there, and for either predicate there is a fixed existential monadic sentence whose model-checking problem is NP-complete.

The input-valued numerical thresholds require one additional construction, encoding a Turing-machine computation in a grid laid out along a colored path.
We obtain the following result. 

\begin{restatable}{theorem}{thmdisthardnessintro}
\label{thm:dist-hardness-intro}
There is a fixed finite unary signature $\sigma$ for which the following statements hold on $\sigma$-colored paths.
\begin{enumerate}[label=(\roman*)]
  \item There is a fixed $\mathrm{EMSO}_1[\Dist]$ formula with one numerical parameter symbol~$\delta$ whose evaluation is NP-complete when the value of~$\delta$ is given in binary as part of the input.
  \item There is a fixed $\mathrm{EMSO}_1[\Close]$ sentence whose model-checking problem is NP-complete.
\end{enumerate}
\end{restatable}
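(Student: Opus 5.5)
Both items are NP-complete model-checking results for fixed existential monadic sentences on colored paths, so I would prove both by reduction from one NP-complete problem. Membership in NP is immediate: guess the quantified vertex sets and then evaluate the first-order part in polynomial time. Distances on a path are differences of positions, so each $\Dist$ or $\Close$ atom takes constant time, and the numerical parameter $\delta$ is given in binary.

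For hardness I would simulate a nondeterministic Turing machine, as the introduction suggests. I would fix a universal nondeterministic single-tape machine $\mathcal M$ whose acceptance problem is NP-complete when it runs for time $T=m$ on inputs of length $n\le m$, with $m$ polynomial in the instance size. The computation grid is $T\times T$ (time rows, tape columns). Lay it out along a path of length about $T^2$, placing row $i$ as a block of $T$ consecutive vertices, separated by marker vertices. The fixed colors $\sigma$ mark block boundaries and the encoded input in the first row. An existential set quantifier guesses, for each cell, a symbol from the finite alphabet of tape symbols times states; there are boundedly many such pairs, so boundedly many sets suffice.

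The crux is that correctness of the computation compares cell $(i,j)$ with cell $(i+1,j)$, and these sit at path distance exactly $T$ (one block length). Local first-order constraints along the path can only relate neighbours, so the logic must recognize vertices at distance exactly $T$.

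In item~(i) I would set the block length so that $\delta=T$ (or $T+1$) is the input value. Then ``$y$ is the vertical successor of $x$'' becomes $\Dist(x,y,\delta)\wedge\neg\Dist(x,y,\delta+1)$ together with a direction test. Here $\delta+1$ is not available directly, so I would instead use $\neg\Dist(x,y',\delta)$ for the path neighbour $y'$ of $y$ closer to $x$. The window condition of the machine is then a fixed universal first-order formula over triples $(x-1,x,x+1)$ and their successor cells. The path length is polynomial, and the binary size of $\delta$ is logarithmic, so the reduction runs in polynomial time.

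In item~(ii) no input threshold exists, so I would let the guessed structure supply the ruler. I would use $\Close$ to identify vertical partners via a reference guessed set: for each $x$, its partner is the vertex $y$ in the next block whose distance from a fixed reference point at the block start equals $x$'s distance from the start of its own block. Equal distance can be defined from $\Close$ as $\neg\Close(z,a,b)\wedge\neg\Close(z,b,a)$; on a path with $a,b$ on the same side of $z$ this forces $a=b$. So I need a cleverer equidistance test with different centres. Since $d(z,a)=d(z,b)$ for $a,b$ on opposite sides of $z$ means $z$ is their midpoint, I would mark midpoints with a guessed set $Z$ and require that the midpoint of $x$ and its partner $y$ lie in $Z$. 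To make this uniform I would interleave the configuration into reversed and forward copies, so that each cell and its partner are mirror images about a marker vertex: rows are written alternately reversed, joined by a single centre. Then ``$y$ is the partner of $x$'' is $\exists z\,(Z(z)\wedge\neg\Close(z,x,y)\wedge\neg\Close(z,y,x))$, with $z$ ranging over a colored centre between the two rows. Consecutive rows must be mirrored with a distinct centre per row pair, which fixed colors on the input path can provide.

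I expect the main obstacle to be ensuring that only intended pairs are compared. The spurious midpoints $z$ and the boundary effects must be excluded by the fixed colors, and the mirrored layout must make the window constraints of $\mathcal M$ still readable with correct left/right orientation.
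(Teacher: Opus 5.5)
Your argument for (i) is essentially the paper's. You lay the rows of the computation end to end along the path, let the single input threshold be the row period, and define the vertical successor by $\Dist(x,y,\delta)$ together with $\neg\Dist(x,y',\delta)$ for a path neighbour $y'$ of $y$. The paper realises your unspecified ``direction test'' by colouring positions with their residues modulo~$3$, which gives a directed successor $\mathsf{Succ}$. Taking $y'$ to be the $\mathsf{Succ}$-predecessor of $y$ then forces $y$ to sit exactly one period after $x$, so exact distance and direction come out of one formula.

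For (ii) you take a genuinely different route, and it is incomplete exactly at the step you flag. In the mirrored layout, reflection about the centre between rows $i+k$ and $i+k+1$ maps row $i$ onto row $i+2k+1$ column by column, for every $k\geq0$. So $\exists z\,(Z(z)\wedge\neg\Close(z,x,y)\wedge\neg\Close(z,y,x))$ also relates cell $(i,j)$ to cell $(i+2k+1,j)$. Colouring rows and centres by their index modulo any constant $p$ cannot exclude these pairs (take $k=p$).

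The missing idea is that $\Close$ itself recovers the linear order of the path. If you colour one endpoint $s$, then $\Close(s,x,y)$ says that $x$ precedes $y$. You can then require $z$ to be the only centre strictly between $x$ and $y$, and read each alternately reversed row in tape orientation using a parity colour. With that, your simulation goes through.

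The paper does not simulate a machine for (ii) at all; it derives (ii) from its unguarded hardness theorem. That theorem encodes an arbitrary graph $H$ on a coloured path:
\begin{itemize}
  \item it places the representatives of $H$ at positions $2a_i$ for a Sidon sequence $a_1<\cdots<a_m$, so that distinct pairs have distinct midpoints;
  \item it colours the midpoint $a_i+a_j$ for every edge $ij$ of $H$;
  \item it recovers $H$ by a fixed existential $\FO[\Close]$-interpretation;
  \item it backward-translates the existential monadic sentence for $3$-colourability.
\end{itemize}
The Sidon property removes the spurious-midpoint problem by design. The same encoding also yields the $\mathrm{AW}[*]$- and $\mathrm{W}[1]$-hardness results and the uncoloured-caterpillar versions. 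Your route has the merit of handling both items within one computation-simulation framework.
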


\paragraph*{Organization.}

After the preliminaries, we prove the shortest-path certificate theorem, i.e., \Cref{thm:cert-mso}, in \Cref{sec:guarded-model-checking}, and the guarded-equidistance certificate theorem, \Cref{thm:eqgc-cert-mso}, in \Cref{sec:eqgc-in-mso}.
We use these certificates in \Cref{sec:first-meta-theorem} to prove the first algorithmic meta-theorem, \Cref{thm:strong-intro}.
In \Cref{sec:dist-meta}, we give a joint dynamic program proving the packing-covering, one-cover, and nearest-label results, \Cref{thm:dist-intro,thm:one-cover,thm:nearest-label-intro}.
In \Cref{sec:encoding-hardness}, we develop the encoding constructions and prove the dependence characterizations, \Cref{thm:dependence-intro,thm:unguarded-dependence-intro}, together with the model-checking hardness results, \Cref{thm:hardness-intro}, and other theorems with similar flavour.
Finally, \Cref{sec:applications} gives further algorithmic applications.

\section{Preliminaries}

\subsection{Graphs, paths, and distances}

For a positive integer $q$, we write $[q]:=\set{1,\ldots,q}$.
All graphs we consider are finite, simple, and undirected.
We allow vertex weights encoded in binary, but these are only relevant for weighted optimization and not for the definition of distances.
We write $V(G)$ for the vertices and $E(G)$ for the edges of a graph~$G$, and put $n:=\abs{V(G)}$ and $m:=\abs{E(G)}$.

A \emph{walk} in $G$ is a sequence $W=v_0v_1\cdots v_\ell$, where $\ell\geq0$ and $v_{i-1}v_i\in E(G)$ for every $i\in\{1,\ldots,\ell\}$.
We allow vertices and edges to occur repeatedly in a walk.
We call a walk a \emph{path} if all its vertices are distinct, and we call a path \emph{induced} if no two nonconsecutive vertices are adjacent in~$G$.
We call a walk $W=v_0v_1\cdots v_\ell$ \emph{closed} if $v_\ell=v_0$ and it is a \emph{cycle} if $\ell\geq3$, $v_\ell=v_0$, and the vertices $v_0,\ldots,v_{\ell-1}$ are pairwise distinct. 
The \emph{length} of a walk is its number of edge traversals, so the walk~$W$ above has length~$\ell$, while for a path or cycle the length is simply its number of edges.
If $W$ runs from $x$ to~$y$, then $W^{-1}$ denotes the walk obtained by traversing $W$ in reverse order from $y$ to~$x$.
If $W_1$ ends where $W_2$ begins, then $W_1W_2$ denotes their concatenation, where we write the common endpoint only once.

For a vertex set $P\subseteq V(G)$, we write $G[P]$ for the subgraph induced by~$P$.
We also call the set~$P$ a path if the induced subgraph $G[P]$ is a path.
More precisely, we say that $P$ is an \emph{$s$-$t$ path set} if either
\begin{enumerate}[label=(\roman*)]
  \item $s=t$ and $P=\set{s}$, or
  \item $s\neq t$, the graph $G[P]$ is a path, and its two endpoints are exactly $s$ and~$t$.
\end{enumerate}
Thus, whenever we describe a path by its vertex set, we require that set to induce the path.

For vertices $u,v$ in the same connected component, we write $d_G(u,v)$ for the length of a shortest $u$-$v$ path, and if they lie in different components, we put $d_G(u,v)=\infty$.
For a vertex $v$ and sets $A,B\subseteq V(G)$, we use
\begin{align*}
  d_G(v,A)&:=\min_{a\in A}d_G(v,a),
  \qquad \text{ and } \qquad
  d_G(A,B):=\min_{a\in A,\,b\in B}d_G(a,b),
\end{align*}
where we take the minimum of an empty set to be~$\infty$.
We call an $s$-$t$ path set $P$ \emph{shortest} if its length equals~$d_G(s,t)$.

We call a cycle $C$ \emph{geodesic}, or \emph{isometric}, if $d_C(x,y)=d_G(x,y)$ for all $x,y\in V(C)$, where~$d_C$ denotes distance measured along the cycle.
Equivalently, for every pair of distinct vertices \mbox{$x,y\in V(C)$}, at least one of the two $x$-$y$ arcs of $C$ is a shortest $x$-$y$ path in~$G$.

\subsection{The binary cycle space}

We identify every edge set $F\subseteq E(G)$ with its characteristic vector $\chi_F\in\mathbb F_2^{E(G)}$, where $\chi_F(e)=1$ if $e\in F$ and $\chi_F(e)=0$ otherwise.
Under this identification, vector addition is symmetric difference, so $\chi_F+\chi_{F'}=\chi_{F\symdiff F'}$.
The \emph{binary cycle space} $\calZ_2(G)$ is the vector subspace generated by the characteristic vectors of the simple cycles of~$G$.
Equivalently, it consists of the edge sets in which every vertex has even degree.
A \emph{cycle basis} is a collection of simple cycles whose characteristic vectors form a basis of~$\calZ_2(G)$.
We refer to Diestel for these standard facts about the cycle space \cite[\S1.9]{Diestel2017}.

We define the cycle-basis length $\cbl(G)$ of a graph $G$ as follows. 
If $G$ is a forest, set $\cbl(G)=0$.
Otherwise define $\cbl(G) := \min_{\calB}\max_{C\in\calB}\abs{E(C)}$,
where the minimum ranges over all binary cycle bases~$\calB$ of~$G$.
Equivalently, $\cbl(G)$ is the least nonnegative integer $L$ such that the cycles of length at most~$L$ generate~$\calZ_2(G)$.
Boulinier, Petit, and Villain use the related \emph{cyclomatic characteristic}, which they define through signed cycle bases and set equal to~$2$ for forests~\cite{BoulinierPetitVillain2004,BoulinierPetitVillain2008}.
Here we use binary cycle bases and set $\cbl(G)=0$ for forests.
We can still use their intrinsic-delay argument, since every binary cycle basis gives a basis of the signed cycle space~\cite[Lemma~2.4]{KavithaEtAl2009CycleBases}.
The problem of minimizing the longest cycle in a basis is called the \emph{shortest maximal cycle basis} problem by Chickering, Geiger, and Heckerman \cite{ChickeringGeigerHeckerman1995}.

A \emph{minimum cycle basis} is a cycle basis minimizing the sum of the lengths of its cycles.
Horton proved that a minimum cycle basis can be computed in polynomial time \cite{Horton1987}.
The following fact identifies its longest cycle with our parameter.

\begin{lemma}[{\cite[Theorem~4]{ChickeringGeigerHeckerman1995}}]\label{lem:mcb-bottleneck}
Let $\calB_{\min}$ be a minimum cycle basis of~$G$.
Set $L=0$ when $G$ is a forest, and otherwise set $L:=\max_{C\in\calB_{\min}}\abs{E(C)}$.
Then $L=\cbl(G)$.
\end{lemma}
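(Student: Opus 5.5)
The statement to prove is \Cref{lem:mcb-bottleneck}: the longest cycle of a minimum cycle basis equals $\cbl(G)$. The plan is to prove both inequalities separately, using a greedy (matroid) description of minimum cycle bases.

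\textbf{Easy direction.} The inequality $\cbl(G)\leq L$ is immediate. The basis $\calB_{\min}$ is one of the bases in the minimum defining $\cbl(G)$, so the minimum is at most its longest cycle. Forests are trivial, since both sides are $0$.

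\textbf{Harder direction.} For $L\leq\cbl(G)$, we view the simple cycles of $G$ as elements of the binary matroid given by linear independence in $\calZ_2(G)$. Its bases are exactly the cycle bases, because the simple cycles span $\calZ_2(G)$. A basis of minimum total weight, with weight equal to length, is exactly a minimum-weight basis of this matroid. The standard matroid exchange property then says the following: every minimum-weight basis, with cycles sorted by length $\ell_1\leq\cdots\leq\ell_k$, has the lexicographically smallest sorted weight sequence among all bases. In particular, it minimizes the maximum weight.

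I would prove this directly. Fix $t:=\cbl(G)$. The cycles of length at most $t$ span $\calZ_2(G)$, so they contain a basis. Hence the matroid restricted to cycles of length at most $t$ has rank $k=\dim\calZ_2(G)$. Now suppose $\calB_{\min}$ contains a cycle $C$ with $\abs{E(C)}>t$. Remove $C$ from $\calB_{\min}$; the remaining $k-1$ cycles span a hyperplane $H$. Since the cycles of length at most $t$ span the whole space, some cycle $D$ with $\abs{E(D)}\leq t$ lies outside $H$. Then $(\calB_{\min}\setminus\{C\})\cup\{D\}$ is again a basis, and its total length is strictly smaller. This contradicts minimality, so $L\leq t$.

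\textbf{Main obstacle.} The only delicate point is making sure the framework is the right one. Cycle bases must be exactly the bases of the vector matroid on simple cycles, and the exchange step must produce a set of simple cycles of the right size that is linearly independent. Both follow from elementary linear algebra over $\mathbb F_2$: $D\notin H$ makes the new set independent of size $k$. Beyond that the argument is routine, and it matches the cited \cite[Theorem~4]{ChickeringGeigerHeckerman1995}.
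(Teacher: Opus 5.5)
Your proof is correct and complete. The paper gives no argument of its own here; it simply imports the result from \cite[Theorem~4]{ChickeringGeigerHeckerman1995}. Your single-exchange step is the standard proof that a minimum-weight basis of a matroid is also a bottleneck basis. If a basis cycle $C$ were longer than $\cbl(G)$, the short cycles would span $\calZ_2(G)$, so one of them would lie outside the hyperplane spanned by the other basis cycles. Swapping it in for $C$ would give a cycle basis of strictly smaller total length. This replaces the citation with a self-contained argument that uses only linear algebra over $\mathbb F_2$, and it applies to every minimum cycle basis, as the lemma requires.

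Two details you handled correctly are worth keeping explicit. The new cycle $D$ cannot already belong to $\calB_{\min}\setminus\{C\}$, since $D\notin H$, so the exchanged family really has $\dim\calZ_2(G)$ distinct simple cycles. The forest case is trivial because both sides are $0$.

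The preliminary claim about lexicographically smallest sorted weight sequences is true but unnecessary. The direct exchange already gives $L\leq\cbl(G)$, and $\cbl(G)\leq L$ is immediate from the definition.
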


Thus $\cbl(G)$ can be computed in polynomial time.
In particular, from a minimum cycle basis we can compute $M:=\max\set{3,L+1}$, which satisfies $M>\cbl(G)$.

\providecommand{\gcw}{\operatorname{gcw}}

\subsection{Treewidth and bag-connected treewidth}
\label{sec:bag-connected-treewidth}

A \emph{tree decomposition} of $G$ is a pair $(T,(B_t)_{t\in V(T)})$, where $T$ is a tree and the bags $B_t\subseteq V(G)$ satisfy the following conditions.
\begin{enumerate}[label=(\roman*)]
  \item Every vertex of $G$ belongs to at least one bag.
  \item Every edge of $G$ has both endpoints in some bag.
  \item For every vertex $v\in V(G)$, the nodes $t$ with $v\in B_t$ induce a connected subtree of $T$.
\end{enumerate}
We define the width of the decomposition as $\max_t\abs{B_t}-1$, and the \emph{treewidth} $\tw(G)$ is the minimum width of a tree decomposition of $G$.
We call a tree decomposition \emph{bag-connected} if every nonempty bag induces a connected subgraph of $G$, and we write $\ctw(G)$ for the minimum width of such a decomposition.
This parameter is called connected treewidth by Diestel and M\"uller~\cite{DiestelMuller2018}, while we use the term \emph{bag-connected treewidth} to make the connectivity requirement explicit.
For disconnected graphs, we obtain both parameters by taking the maximum over the connected components.
Since a bag-connected tree decomposition is also a tree decomposition, we have
$\tw(G)\leq\ctw(G)$.

We write $\gcl(G)$ for the maximum length of a geodesic cycle in $G$, with $\gcl(G)=0$ when $G$ is a forest.
Every cycle is a binary sum of geodesic cycles of no greater length~\cite[Theorem~3.1]{GeorgakopoulosSprussel2009}, so these cycles generate the cycle space and give the first inequality below.
For the second inequality, we use the bound of Diestel and M\"uller~\cite[Lemma~7.1]{DiestelMuller2018}, which shows that a geodesic cycle of length $r$ forces bag-connected treewidth at least $\lceil r/2\rceil$:
\begin{equation}
  \cbl(G)\leq\gcl(G)\leq2\ctw(G).
  \label{eq:cycle-length-bounds}
\end{equation}
In the other direction, Hamann and Wei{\ss}auer proved that, whenever $G$ contains a cycle, we have
\begin{equation}
  \ctw(G)\leq\tw(G)\bigl(\cbl(G)-2\bigr),
  \label{eq:hamann-weissauer}
\end{equation}
while forests have bag-connected treewidth at most one~\cite[Theorem~2]{HamannWeissauer2016}.
Together, these inequalities show that a graph class has bounded bag-connected treewidth if and only if it has bounded treewidth and bounded cycle-basis length, or equivalently bounded treewidth and bounded geodesic-cycle length.
They also give $M\leq\max\set{3,2\ctw(G)+1}$ for the number $M$ chosen above.

\subsection{Nice bag-connected tree decompositions}
\label{sec:nice-bag-connected-decompositions}

For our dynamic programs, we need a rooted decomposition in which we can introduce and forget vertices one at a time, while every nonempty bag remains connected.
We use the following version of a nice tree decomposition, allowing separate nodes at which we process the edges of the graph.

\begin{definition}[Nice bag-connected tree decomposition]
\label{def:nice-bag-connected-decomposition}
We call a bag-connected tree decomposition $(T,(B_t)_{t\in V(T)})$ \emph{nice} if $T$ is rooted, the root and all leaves have empty bags, and every non-leaf node $t$ has one of the following forms.
\begin{enumerate}[label=(\roman*)]
  \item At an \emph{introduce-vertex node}, we have one child $t'$ and $B_t=B_{t'}\cup\{v\}$ for some $v\notin B_{t'}$.
  \item At a \emph{forget-vertex node}, we have one child $t'$ and $B_t=B_{t'}\setminus\{v\}$ for some $v\in B_{t'}$.
  \item At an \emph{introduce-edge node}, we have one child $t'$ with $B_t=B_{t'}$, and we assign to the node an edge $uv\in E(G)$ with both endpoints in $B_t$.
  \item At a \emph{join node}, we have two children $t_1,t_2$ and $B_t=B_{t_1}=B_{t_2}$.
\end{enumerate}
We require every edge of $G$ to be assigned to exactly one introduce-edge node.
When we require a nonempty bag to be connected, we always refer to the subgraph of $G$ induced by that bag, including edges that have not yet been introduced.
\end{definition}

When we process the decomposition from the leaves towards the root, a vertex may be introduced in several branches, but it is forgotten exactly once, and all its incident edges have been introduced before that event.
We can therefore check the local conditions involving a vertex and charge its selection weights when we forget it, since at that point we have already processed all the information supplied by its incident edges.
A nice bag-connected decomposition of approximately optimal width can be efficiently computed, as shown in~\cite{ChakrabortyOrdyniakPersonalCommunication}. 

\begin{theorem}[\cite{ChakrabortyOrdyniakPersonalCommunication}]
\label{thm:compute-nice-bag-connected-decomposition}
Given a graph $G$ with $n$ vertices and $m$ edges, we can compute a nice bag-connected tree decomposition of width $\Oof(c^2)$ in time $2^{\Oof(c^2)}n^{\Oof(1)}$, where $c:=\ctw(G)$.
\end{theorem}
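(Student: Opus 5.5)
The plan is to start from an arbitrary bag-connected tree decomposition of approximately optimal width and turn it into a nice one without destroying bag-connectivity. For the first part I would invoke the FPT approximation of Chakraborty et al.\ mentioned in the introduction. It produces, in time $2^{\Oof(c^2)}n^{\Oof(1)}$, a bag-connected tree decomposition $(T,(B_t))$ of width $\Oof(c^2)$. If only a coarser guarantee were available, I would instead aim for the combination $\tw(G)\le c$ together with inequality~\eqref{eq:cycle-length-bounds} and the Hamann--Weißauer bound~\eqref{eq:hamann-weissauer}. Their construction thickens the bags of an ordinary tree decomposition of width $\Oof(c)$ by short cycles of length at most $\cbl(G)\le 2c$, which gives width $\Oof(c\cdot c)=\Oof(c^2)$. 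A minimum cycle basis can be computed in polynomial time by \Cref{lem:mcb-bottleneck}. This explains the quadratic width bound. I would also handle disconnected graphs componentwise and join the resulting components under a root with empty bag.

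The main step, and the one I expect to be the real obstacle, is the niceness transformation. The standard construction replaces each tree edge $tt'$ by a chain that forgets the vertices of $B_{t'}\setminus B_t$ one at a time and then introduces the vertices of $B_t\setminus B_{t'}$ one at a time. It also duplicates join bags and adds introduce-edge nodes. The difficulty is that these intermediate bags need not induce connected subgraphs. I would pass through the intersection $B_t\cap B_{t'}$, but that intersection may itself be disconnected or empty. To repair this, I would enlarge bags along the chain rather than shrinking them first. Go from $B_{t'}$ up to $B_t\cup B_{t'}$ by introducing the vertices of $B_t\setminus B_{t'}$ in an order where each new vertex has a neighbor in the current bag. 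Such an order exists because $G[B_t]$ is connected: if $B_t\cap B_{t'}\neq\varnothing$, use BFS inside $G[B_t]$ from that intersection. Then forget the vertices of $B_{t'}\setminus B_t$ in reverse BFS order within $G[B_{t'}]$, taken from $B_t\cap B_{t'}$, so that the remaining bag stays connected. The intermediate bags have size at most $2(\text{width}+1)$, so the width remains $\Oof(c^2)$.

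The running-intersection property survives, since every inserted bag lies between $B_t$ and $B_{t'}$ and consists of vertices of $B_t\cup B_{t'}$. When $B_t\cap B_{t'}=\varnothing$, the vertices of the two bags cannot share any other bag, so no vertex or edge crosses the separation. I would therefore cut the tree there and reconnect through empty bags, which are allowed because the definition only requires nonempty bags to be connected.

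For the ends of the decomposition, each leaf bag is grown from the empty bag by introducing its vertices in BFS order, and the root is shrunk to the empty bag by forgetting in reverse BFS order. Both keep every nonempty bag connected. Join nodes get copies of the bag, which are trivially connected. For each edge I would pick one node whose bag contains both endpoints and insert an introduce-edge node above it with the same bag. Finally, I would check that the number of nodes is polynomial, namely $\Oof(n\cdot c^2)$ chain nodes plus $m$ edge nodes. Hence the total running time is dominated by the $2^{\Oof(c^2)}n^{\Oof(1)}$ approximation step.
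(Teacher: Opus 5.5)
The paper gives no argument for this statement; it imports the whole result, niceness included, from \cite{ChakrabortyOrdyniakPersonalCommunication}. Your main route uses the same external approximation algorithm for the width bound $\Oof(c^2)$ and the running time $2^{\Oof(c^2)}n^{\Oof(1)}$, so on that point you and the paper coincide. What you add is an explicit argument that the output can be made nice without losing bag-connectivity, and that argument is correct.

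You identified the genuine obstacle. The textbook chain (forget $B_{t'}\setminus B_t$, then introduce $B_t\setminus B_{t'}$) passes through the adhesion $B_t\cap B_{t'}$, which can be disconnected. For example, take $B_{t'}=\{a,x,b\}$ and $B_t=\{a,y,b\}$ with paths $axb$ and $ayb$. Your introduce-then-forget chain instead passes through $B_t\cup B_{t'}$, and the BFS orders from the nonempty adhesion inside $G[B_t]$ and $G[B_{t'}]$ keep every intermediate bag connected. When the adhesion is empty, running intersection shows that no vertex or edge crosses, so routing through the empty bag is legitimate. The price is at most doubling the width. Each vertex is still forgotten exactly once, namely on the chain above the top node of its subtree. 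Every introduced vertex $v$ satisfies $v\notin V_s$ by running intersection, which is what the dynamic program in \Cref{sec:dist-meta} relies on.

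Two caveats. First, your fallback via Hamann--Wei{\ss}auer is not yet an algorithm. Inequality \eqref{eq:hamann-weissauer} only bounds the optimum $\ctw(G)$. You would have to check that their construction runs in polynomial time starting from an arbitrary tree decomposition of width $\Oof(c)$, such as a $2$-approximate one, rather than from an optimal or specially structured one. If it does, it would even give time $2^{\Oof(c)}n^{\Oof(1)}$, so this needs verifying rather than assuming.

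Second, your count of $\Oof(nc^2)$ chain nodes presumes the decomposition has $\Oof(n)$ tree nodes. You can ensure this beforehand by contracting tree edges whose bags are nested. This keeps all bags connected, because the merged bag is the larger of the two.
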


Thus we can construct the decompositions needed by our dynamic programs within the stated parameterization, without requiring a decomposition as part of the input.

\subsection{Cliquewidth and geodesic cliquewidth}
\label{sec:geodesic-cliquewidth}
\label{sec:connected-cliquewidth}

We introduce \emph{geodesic cliquewidth} as one of our main structural parameters, combining cliquewidth with cycle-basis length.
We first recall the definition of cliquewidth, and then explain how the resulting parameter relates to the length of geodesic cycles and to the connected decompositions used in our dynamic programs.

With a $k$-expression, we construct a vertex-labeled graph using labels from $[k]$ and four operations: we create a labeled vertex, take the disjoint union of two labeled graphs, add all edges between the vertices of two distinct labels, or change every occurrence of one label to another.
The \emph{cliquewidth} $\cw(G)$ is the least $k$ for which we can construct $G$ in this way, where we disregard the labels at the end of the construction.
The labels used in an expression serve only to describe the construction, and we distinguish them from the unary input labels that occur in our formulas.

\begin{definition}[Geodesic cliquewidth]
\label{def:geodesic-cliquewidth}
\label{def:connected-cliquewidth}
We define the \emph{geodesic cliquewidth} of a graph $G$ by
\[
  \gcw(G):=\cw(G)+\cbl(G).
\]
\end{definition}

For example, 
for a forest, the cycle space is trivial and hence $\gcw(G)=\cw(G)$, while a complete graph on at least three vertices has geodesic cliquewidth five, since its cliquewidth is two and its cycle space is generated by triangles.
In particular, bounded geodesic cliquewidth still allows dense graphs of arbitrarily large order.
The number $M:=\max\set{3,\cbl(G)+1}$ chosen above satisfies $M\leq\max\set{3,\gcw(G)+1}$, so we can bound it directly in terms of our main parameter.

\subsection{Geodesic cliquewidth and geodesic cycles}
\label{sec:geodesic-cliquewidth-comparison}

The definition of geodesic cliquewidth uses cycle-basis length, since this is what we need for our potential arguments.
Nevertheless, once we also bound cliquewidth, we can equivalently require a bound on the length of every geodesic cycle, as we show next.

\begin{lemma}
\label{lem:geodesic-cliquewidth-comparison}
\label{thm:geodesic-cliquewidth-characterization}
\label{lem:connected-cliquewidth-comparison}
\label{thm:connected-cliquewidth-characterization}
For every graph $G$, we have
\[
  \gcw(G)\leq\cw(G)+\gcl(G)
  =\Oof\bigl(\gcw(G)^2\bigr).
\]
Consequently, a graph class has bounded geodesic cliquewidth if and only if it has bounded cliquewidth and bounded geodesic-cycle length.
\end{lemma}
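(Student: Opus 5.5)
The first inequality is immediate from what we already have. By \eqref{eq:cycle-length-bounds} we have $\cbl(G)\leq\gcl(G)$, and adding $\cw(G)$ to both sides gives $\gcw(G)\leq\cw(G)+\gcl(G)$. For the second relation, $\cw(G)\leq\gcw(G)$ holds trivially, so it suffices to show $\gcl(G)=\Oof\bigl((\cw(G)+\cbl(G))^2\bigr)$. The characterization of classes then follows at once: one direction uses $\cbl\leq\gcl$, and the other uses the quadratic bound.

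For the quadratic bound, let $k:=\cw(G)$ and $L:=\cbl(G)$, and fix a $k$-expression of $G$ and a geodesic cycle $C$ of length $r$.

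Step 1: find a balanced node. Walking down the expression tree from the root, we find a node whose constructed subgraph has vertex set $A$ with $r/3\leq\abs{A\cap V(C)}\leq2r/3$.

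Step 2: control the portals. Every edge of $G$ between $A$ and $V(G)\setminus A$ is created by a later join operation. Such an operation acts on a whole label class of $A$ at once, since later relabelings only merge classes. Hence, if $a\in A$ has a neighbour $b\notin A$, then $b$ is adjacent to every vertex of $A$ carrying the same label as $a$. Call the vertices of $A$ with a neighbour outside $A$ \emph{portals}. Any two portals with the same label are at distance at most~$2$ in $G$. Because $C$ is geodesic, they are then also at distance at most~$2$ along $C$. So the portals on $C$ fall into at most $k$ clusters, each a subarc of $C$ of length at most~$2$. In particular $C$ crosses the cut $(A,V(G)\setminus A)$ only $\Oof(k)$ times, and $C$ splits into $\Oof(k)$ maximal arcs lying alternately inside and outside~$A$. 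By the balance condition, one of these arcs, say an inside arc $Q$, has length $\Omega(r/k)$.

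Step 3: use the short generating cycles. This is the step I expect to be the main obstacle, since cycles themselves have cliquewidth at most~$4$, so the bound $L$ must enter here. By definition of $\cbl$, we can write $C$ as a binary sum of cycles $D_1,\ldots,D_m$, each of length at most $L$. I would then build a modified potential, in the spirit of the lifting argument of Boulinier, Petit, and Villain used later. The idea is to assign to each vertex $v$ a position on $C$, or on $\ZZ_r$, by a nearest-point projection onto $C$. This projection moves by at most $\Oof(1)$ along every edge not incident to portals, and by $\Oof(1)$ times the cluster diameter across the cut. Along every short cycle $D_i$, the projection should then have winding number zero as soon as $r\gg L$, because $D_i$ stays within a ball of radius $L/2$, which covers only an arc of length $\Oof(L)$ of the geodesic cycle $C$. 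On the other hand, $C$ itself winds once. Summing over $D_1,\ldots,D_m$ gives a contradiction unless $r$ is bounded. The delicate part is that the nearest-point projection can jump near the $\Oof(k)$ portal clusters, where distances to far-apart points of $C$ may nearly tie. Each jump must be charged to a cluster and to a ball of radius $\Oof(L)$ around it. This should force $r\leq \Oof(k)\cdot\Oof(L)+\Oof(k^2+L^2)=\Oof((k+L)^2)$, with the long arc $Q$ from Step~2 providing a region of $C$ at distance more than $L$ from every portal, where the winding argument applies cleanly. If the projection argument resists, my fallback is to show directly that some $D_i$ must contain both a vertex near the middle of $Q$ and a vertex outside $A$. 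Any path between these two passes through a portal cluster, so $D_i$ has length at least $\Omega(r/k)-\Oof(1)$, which gives $r=\Oof(kL)$.
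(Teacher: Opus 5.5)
\paragraph{Verdict: the key step is missing.} Your reduction to bounding $\gcl(G)$ is sound, and so are your Steps~1 and~2. A balanced node exists. Two portals with the same label at that node have a common neighbour outside $A$. Geodesicity then gives $\Oof(k)$ crossings of $C$ with the cut. The gap is Step~3. It is the only place where $L$ enters, so it carries the whole lemma, and you do not prove it. Your main plan needs the nearest-point projection onto $C$ to move by $\Oof(1)$ along every edge not incident to a portal. Nothing in Steps~1--2 controls this. The portals only describe the boundary of the single set $A$. Deep inside $A$, or outside it, nothing you have established prevents vertices that are nearly equidistant from far-apart points of $C$; there the projection jumps far along $C$ with no portal nearby. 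Without this, the claim that every short $D_i$ has winding number zero is unsupported, and the charging of jumps to clusters is never specified.

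\paragraph{The fallback fails more concretely, and what is missing.} You give no reason why some $D_i$ must meet both the middle of $Q$ and $V(G)\setminus A$. There is no local reason, because the identity $C=\sum_i D_i$ over $\mathbb F_2$ constrains parities only globally. Your length estimate is also wrong. A path from the middle of $Q$ to $V(G)\setminus A$ must pass through a portal, but not necessarily one on $C$. A label class of portals that avoids $C$ may sit right next to the middle of $Q$, so such a $D_i$ can be very short.

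What is missing is a global parity argument that accounts for all $k$ portal classes. One way to do it:
\begin{itemize}
\item Split the $D_i$ into cycles inside $G[A]$, cycles outside, and cycles meeting the cut. Each cycle meeting the cut lies within distance $L/2$ of the set $P_j$ of portals of some label $j$.
\item Restrict $\sum_i D_i$ to $E(G[A])$. This shows that the odd-degree vertices of $E(C)\cap E(G[A])$ are the odd-degree vertices of an edge set contained in these $k$ neighbourhoods.
\item This partitions them into even groups whose $G$-diameters sum to $\Oof(kL)$.
\item By geodesicity these groups become windows on $C$ of total length $\Oof(kL)$. The inside/outside status of $C$ is preserved across each window and constant between windows, apart from at most $3k$ isolated inside portals.
\item This contradicts balance unless $r=\Oof(kL)$.
\end{itemize}

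\paragraph{How the paper does it.} The paper avoids this argument entirely:
\begin{itemize}
\item It takes Distel's partition of $V(G)$ into parts contained in closed neighbourhoods, whose quotient $H$ has treewidth at most $k-1$.
\item It shows $\cbl(H)\leq L$.
\item It applies Hamann--Wei{\ss}auer to get a bag-connected decomposition of $H$ of width $\Oof(kL)$.
\item It lifts this to a tree decomposition of $G$ whose bags have diameter $\Oof(kL)$.
\item It finishes with the Dourisboure--Gavoille bound: some bag of the induced decomposition of $C$ contains two vertices at cycle distance at least $r/3$.
\end{itemize}
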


\begin{proof}
The first inequality follows from $\cbl(G)\leq\gcl(G)$, so we only need to bound $\gcl(G)$ in terms of cliquewidth and cycle-basis length.
For forests the assertion is immediate, and we may therefore assume that $G$ contains a cycle, writing $k:=\cw(G)\geq2$ and $L:=\cbl(G)\geq3$.
By Distel's theorem~\cite[Theorem~3]{Distel2026}, we can choose a partition $\mathcal P$ of $V(G)$ such that each part is contained in the closed neighborhood of some vertex and the quotient $H:=G/\mathcal P$ has treewidth at most $k-1$.
Two distinct parts are adjacent in $H$ precisely when an edge of $G$ joins them, and the vertex whose closed neighborhood contains a part need not itself belong to that part.

We first show that $\cbl(H)\leq L$ by considering the projection of edges of $G$ to edges of $H$, where we discard edges within a part and count repeated quotient edges modulo two.
This projection maps the binary cycle space of $G$ into that of $H$, and in fact every cycle of $H$ occurs as the projection of a closed walk in $G$.
To see this, we choose an edge of $G$ realizing each edge of the quotient cycle, and within each part we join the endpoints of successive realizing edges through a vertex whose closed neighborhood contains the part.
Each of these connecting walks has at most two edges and projects either to no edge or to the same quotient edge twice, so it contributes zero over $\mathbb F_2$.
The resulting closed walk therefore projects to the original quotient cycle, which proves that the map on cycle spaces is surjective.
We can now project a cycle basis of $G$ whose cycles have length at most~$L$ and decompose each nonempty image into cycles, obtaining a generating set for the cycle space of~$H$ in which every cycle has length at most $L$.

By \eqref{eq:hamann-weissauer}, we can choose a bag-connected tree decomposition of $H$ of width at most $w:=(k-1)(L-2)$.
This remains true if $H$ is a forest, since in that case its bag-connected treewidth is at most one and we have $w\geq1$.
We replace each bag of this decomposition by the union of the parts that it contains, which gives a tree decomposition of $G$.
If two vertices belong to the same expanded bag, their parts are joined within the corresponding quotient bag by a path of at most $w$ edges, because that bag is connected and has at most $w+1$ vertices.
We lift this path to a walk in~$G$ by using one realizing edge for each quotient edge and at most two edges within each visited part, so the two prescribed vertices have distance at most $3w+2$ in $G$.

Now let $C$ be a geodesic cycle of length $r$, and intersect every expanded bag with $V(C)$ to obtain a tree decomposition of $C$.
Since $C$ is isometric, any two vertices in one of these bags have distance at most $3w+2$ along the cycle as well.
On the other hand, every tree decomposition of an $r$-cycle has a bag containing two vertices whose distance along the cycle is at least $r/3$~\cite[Lemma~4]{DourisboureGavoille2007}.
We therefore obtain
$r\leq3(3w+2)=9(k-1)(L-2)+6$,
and hence $\cw(G)+\gcl(G)=\Oof(k(L+1))=\Oof(\gcw(G)^2)$, as required.
\end{proof}

\subsection{Counting monadic second-order logic}
\label{sec:mso}

We first recall the underlying logics.
In first-order logic $\FO$, we quantify over vertices and use equality, adjacency $E(x,y)$, and the unary input predicates.
In monadic second-order logic~$\MSO$ we may additionally quantify over vertex sets and use membership atoms $x\in X$.

To work with edge variables and edge-set variables as well, we use the incidence structure~$\mathcal I(G)$, whose universe is $V(G)\mathbin{\dot\cup}E(G)$.
We distinguish the two sorts by unary predicates and use the relation $\Inc(e,v)$ to say that the edge $e$ is incident with the vertex $v$.
The logic $\MSOtwo$ is monadic second-order logic on this structure, or equivalently the logic in which we may quantify over vertices, edges, vertex sets, and edge sets.
We can express adjacency by requiring two vertices to be distinct and to have an edge incident with both of them.
The Gaifman graph of $\mathcal I(G)$ is the one-subdivision of $G$, and therefore has treewidth at most $\max\set{\tw(G),2}$.

For counting monadic second-order logic, we also allow atoms $\operatorname{card}_{a,\ell}(X)$ expressing that $\abs{X}\equiv a\pmod\ell$, where $\ell\geq2$ and $0\leq a<\ell$.
Here $a$ and $\ell$ are constants specified in the formula, and their encodings contribute to its length.
We write $\CMSO$ and $\CMSOtwo$ for the counting extensions of $\MSO$ and $\MSOtwo$, respectively, and in the latter logic we may apply these atoms to both vertex sets and edge sets.
We use these modulo tests to express conditions on individual sets, while the algorithmic counting problem considered below asks for the exact number of satisfying assignments.

The unary input predicates $L_1,\ldots,L_q$ describe given subsets of $V(G)$, and we distinguish them from the set variables whose values we choose when evaluating a formula.
These input sets may overlap unless we explicitly require a partition, as we do for the nearest-label atoms.
When we work with incidence structures, we may also have unary input predicates on edge elements.
In every case, the interpretations of the input predicates are fixed by the instance and are not included among the choices that we optimize or count.

We use the model-checking, optimization, and counting forms of Courcelle's theorem for these logics~\cite[Chapter~6]{CourcelleEngelfriet2012}.
Let $\varphi(X_1,\ldots,X_r)$ be a formula whose free variables are exactly the displayed set variables.
For each $i$, we put $D_i(G):=V(G)$ if $X_i$ is a vertex-set variable and $D_i(G):=E(G)$ if it is an edge-set variable, so for $\CMSO$ every domain $D_i(G)$ is $V(G)$.
We write $\operatorname{Sol}_{\varphi}(G)$ for the following set of satisfying assignments:
\[
  \operatorname{Sol}_{\varphi}(G)
  :=
  \bigl\{
    (S_1,\ldots,S_r)\in\textstyle\prod_{i=1}^{r}2^{D_i(G)}
    :G\models\varphi(S_1,\ldots,S_r)
  \bigr\}.
\]

\begin{definition}[Additive optimization]
\label{def:mso-optimization}
For additive optimization, we are given a graph $G$, a formula $\varphi(X_1,\ldots,X_r)$, and a weight function $w_i\colon D_i(G)\to\mathbb Z$ for each free set variable $X_i$, with all weights encoded in signed binary notation.
We are looking for a satisfying assignment that minimizes or maximizes the additive objective
\[
  w(S_1,\ldots,S_r)
  :=\sum_{i=1}^{r}\sum_{a\in S_i}w_i(a).
\]
We return the optimum value together with an assignment attaining it, or report that no satisfying assignment exists.
These weights concern only the objective, and we continue to measure graph distances by the number of edges.
\end{definition}

\begin{definition}[Counting]
\label{def:mso-counting}
For counting, we are given a graph $G$ and a formula $\varphi(X_1,\ldots,X_r)$, and we compute the number $\#\varphi(G):=\abs{\operatorname{Sol}_{\varphi}(G)}$, with the output written in binary.
If weights and a minimization or maximization objective are also given, we may instead count the satisfying assignments attaining the optimum.
In either case, we return zero when no satisfying assignment exists.
\end{definition}

In the following statements, we write $N$ for the total binary input length, including the formula, the interpretations of the input predicates, and any weights, and we include the cost of arithmetic and of writing the output in the running time.

\begin{theorem}[Courcelle's theorem for treewidth {\cite{Courcelle1992,ArnborgLagergrenSeese1991,CourcelleMosbah1993,Bodlaender1996,CourcelleEngelfriet2012}}]
\label{thm:courcelle-treewidth}
There are a computable function $f$ and a constant $c$ such that we can perform each of the following tasks in time $f(\abs{\varphi}+\tw(G))N^c$.
\begin{enumerate}
  \item Given a graph $G$ and a $\CMSOtwo$ sentence $\varphi$, we decide whether $G\models\varphi$.
  \item Given a graph $G$, a $\CMSOtwo$ formula $\varphi$ whose free variables are vertex-set or edge-set variables, and signed binary weights, we compute a minimum-weight or maximum-weight satisfying assignment and its weight, or report that no satisfying assignment exists.
  \item Given a graph $G$ and a $\CMSOtwo$ formula $\varphi$ whose free variables are vertex-set or edge-set variables, we count all satisfying assignments exactly, and when weights and an optimization objective are also given, we may instead count all optimum assignments exactly.
\end{enumerate}
\end{theorem}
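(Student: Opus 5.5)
This is Courcelle's theorem in its decision, optimization and counting forms, cited from the literature. The plan is to assemble it from the standard ingredients rather than reprove it; the only real work is the bookkeeping that keeps arithmetic cost inside the factor $N^c$.

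\textbf{Step 1: decomposition.} Using Bodlaender's linear-time algorithm, we compute a tree decomposition of width at most $\tw(G)$ in time $f(\tw(G))\,n$. We convert it into a nice decomposition of the incidence structure $\mathcal I(G)$, whose Gaifman graph has treewidth at most $\max\set{\tw(G),2}$. The input predicates become unary labels, and each free set variable is treated as an additional unary label whose value we guess.

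\textbf{Step 2: automaton.} The $\CMSOtwo$ formula $\varphi$ is translated, via the standard Feferman--Vaught/automaton construction for bounded-width labeled trees, into a finite deterministic bottom-up tree automaton $\mathcal A_\varphi$. Its number of states is bounded by a computable function of $\abs{\varphi}+\tw(G)$.
\begin{itemize}
\item The modulo-counting atoms $\operatorname{card}_{a,\ell}$ only add residue counters modulo $\ell$ to the states.
\item Guessing the free sets corresponds to choosing, for each element, which labels it carries when that element is introduced.
\end{itemize}
Running $\mathcal A_\varphi$ on the decomposition decides $G\models\varphi$ for sentences.

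\textbf{Step 3: optimization and counting.} We replace the single run by a dynamic program over pairs (node, automaton state).
\begin{itemize}
\item \emph{Optimization.} For each pair we store the best additive weight of a partial assignment reaching that state, together with a witness pointer. The values propagate in the $(\min,+)$ or $(\max,+)$ semiring.
\item \emph{Counting.} We store the number of partial assignments in the $(+,\times)$ semiring, or pairs (optimum, number of optimum assignments) when optimum solutions are counted.
\end{itemize}
Each element's weight is charged exactly once, when it is labeled.

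\textbf{Main obstacle: arithmetic cost.} This is the step I expect to be hardest. Weights are signed binary numbers, so partial sums have $O(N)$ bits. Counts are at most $2^{O(n+m)}$, so they have $O(N)$ bits as well. Every semiring operation therefore costs $\mathrm{poly}(N)$, and with $O(n+m)$ nodes and $f(\abs{\varphi}+\tw(G))$ states per node the total running time is $f(\abs{\varphi}+\tw(G))\,N^c$ for an absolute constant $c$. Here $c$ is independent of $\varphi$, because the formula only affects the state count. For the detailed correctness of the automaton translation I would cite \cite[Chapter~6]{CourcelleEngelfriet2012} and not redo it.
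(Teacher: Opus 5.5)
Your proposal is correct and takes the same route as the paper, which does not reprove this theorem but cites the standard decision, optimization and counting forms of Courcelle's theorem~\cite[Chapter~6]{CourcelleEngelfriet2012} \cite{ArnborgLagergrenSeese1991,CourcelleMosbah1993}, together with Bodlaender's algorithm~\cite{Bodlaender1996} for computing the decomposition. Your bit-length accounting for weights and counts is exactly what the bound in terms of $N$ requires; the one point to state precisely is that each element's labels must either be fixed at a single owning node (for instance its forget node) or be kept consistent across all of its introduce nodes, so that satisfying assignments correspond bijectively to runs and each weight is charged once.
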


\begin{theorem}[Courcelle's theorem for cliquewidth {\cite{CourcelleMakowskyRotics2000,OumSeymour2006,CourcelleEngelfriet2012}}]
\label{thm:courcelle-cliquewidth}
There are a computable function~$f$ and a constant $c$ such that we can perform each of the following tasks in time \mbox{$f(\abs{\varphi}+\cw(G))N^c$}.
\begin{enumerate}
  \item Given a graph $G$ and a $\CMSO$ sentence $\varphi$, we decide whether $G\models\varphi$.
  \item Given a graph $G$, a $\CMSO$ formula $\varphi$ whose free variables are vertex-set variables, and signed binary weights, we compute a minimum-weight or maximum-weight satisfying assignment and its weight, or report that no satisfying assignment exists.
  \item Given a graph $G$ and a $\CMSO$ formula $\varphi$ whose free variables are vertex-set variables, we count all satisfying assignments exactly, and when weights and an optimization objective are also given, we may instead count all optimum assignments exactly.
\end{enumerate}
\end{theorem}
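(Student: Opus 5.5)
This theorem collects known results, so the plan is to assemble the standard pipeline and check that every stage fits within $f(\abs{\varphi}+\cw(G))N^c$. The first step is to obtain a decomposition without needing one as input. With $k:=\cw(G)$, we compute a rank-decomposition of width at most $3k+1$ (or any bound depending only on $k$) using the Oum--Seymour approximation algorithm \cite{OumSeymour2006}. This runs in time $g(k)n^{O(1)}$, and we convert it into a $k'$-expression with $k'\le 2^{O(k)}$. We can either iterate over guesses for $k$, or run the approximation with increasing thresholds, so that $k$ is never required as input.

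The second step is the logic-to-automaton translation \cite{CourcelleMakowskyRotics2000,CourcelleEngelfriet2012}. Fix $\varphi(X_1,\ldots,X_r)\in\CMSO$ and the bound $k'$. We view a $k'$-expression whose leaves are annotated with the input predicates $L_1,\ldots,L_q$ and with membership bits for $X_1,\ldots,X_r$ as a term over a finite ranked alphabet. The alphabet consists of the leaf symbols, $\oplus$, $\eta_{i,j}$ and $\rho_{i\to j}$ for $i,j\in[k']$. We prove by induction on $\varphi$ that the set of annotated terms defining a structure satisfying $\varphi$ is recognized by a deterministic bottom-up tree automaton $\mathcal A_{\varphi,k'}$, whose size is a computable function of $\abs{\varphi}+k'$. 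Atomic formulas, including $E(x,y)$, membership, the $L_i$, and the modular atoms $\operatorname{card}_{a,\ell}$ (tracked with a mod-$\ell$ counter), each need finitely many states. Boolean connectives are handled by product and complement. Existential set quantification is handled by projection followed by determinization. Alternatively, we can use the Feferman--Vaught-style composition of rank-$\abs{\varphi}$ CMSO types under $\oplus$, $\eta_{i,j}$ and $\rho_{i\to j}$. Running the automaton bottom-up on the expression of $G$ settles model checking in linear time in the expression size.

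The third step handles optimization and counting by evaluating the same automaton over a semiring rather than deterministically. For each node of the expression and each state $q$, we store the optimum weight over annotations of the subterm reaching $q$: we use $(\min,+)$ or $(\max,+)$ with $\pm\infty$ for emptiness, together with a witness pointer for reconstructing an optimal assignment. For counting, we instead store the number of annotations reaching $q$ in $(\mathbb N,+,\cdot)$, and for counting optima we use pairs (optimum, count). Because $\mathcal A_{\varphi,k'}$ is deterministic, every assignment corresponds to exactly one run, so the counts are exact. Leaf weights come from $w_i$, and the root combines the accepting states.

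The main obstacle is the arithmetic cost, together with the non-elementary size of the automaton. The latter is absorbed into $f$. For the former, weights in signed binary and counts up to $2^{rn}$ both have $O(N)$ bits, so each semiring operation costs $N^{O(1)}$. The number of operations is $f(\abs{\varphi}+k)\cdot\abs{\text{expression}}$, and the expression has size $n^{O(1)}$. This yields the bound $f(\abs{\varphi}+\cw(G))N^c$ with an absolute constant $c$.
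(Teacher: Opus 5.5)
Your proposal is correct and is essentially the argument the paper relies on. The paper gives no separate proof; it cites Courcelle--Makowsky--Rotics and Courcelle--Engelfriet for the automaton (equivalently, type-composition) evaluation over $k$-expressions together with its optimization and counting extensions, and, as you do, it uses the Oum--Seymour algorithm to compute a rank-decomposition and convert it into an expression whose width depends only on $\cw(G)$. The one small imprecision is that counts may need up to $rn$ bits rather than $O(N)$ bits, but since $r\le\abs{\varphi}$, each arithmetic operation still costs $\abs{\varphi}^{O(1)}N^{O(1)}$, which is absorbed into $f(\abs{\varphi}+\cw(G))N^{c}$.
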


For both width parameters, we can construct the decomposition or expression needed by the algorithm within the stated running time.
For treewidth, we use the standard fixed-parameter algorithms for finding a decomposition of bounded width~\cite{Bodlaender1996,KorhonenLokshtanov2024}.
For cliquewidth, we use the algorithm of Oum and Seymour to find a rank-decomposition of width bounded by a function of~$\cw(G)$, and then convert it into a cliquewidth expression whose width is bounded by a function of the same parameter~\cite{OumSeymour2006}.

\subsection{Extensions of monadic second-order logic}
\label{sec:metric-logics}

We now give the formal conventions for all our additional atoms.
We always measure distances in the original graph $G$, also when we interpret the surrounding formula on the incidence structure~$\mathcal I(G)$.

\begin{definition}[Shortest-path atom]
\label{def:sp-atom}
For a vertex-set variable $P$ and vertex variables $s,t$, we interpret $\SP(P,s,t)$ as saying that $P$ is a shortest $s$-$t$ path set.
Thus $G[P]$ must be a path with endpoints $s,t$ and length $d_G(s,t)$, where for $s=t$ we require $P=\set{s}$.
\end{definition}

\begin{definition}[Guarded equidistance]
\label{def:eqgc}
For a vertex-set variable $C$ and vertex variables $x,y,u,v$, we interpret $\EqGC(C,x,y,u,v)$ as saying that $G[C]$ is a geodesic cycle, that all four vertices belong to $C$, and that $d_G(x,y)=d_G(u,v)$.
\end{definition}
Since the cycle is isometric, we obtain the same equality of distances on $G[C]$.
We include the requirement that $C$ induces a geodesic cycle in the meaning of the atom itself, so the atom is false whenever this requirement fails, in particular on forests.

\begin{definition}[Distance thresholds, separation, and covering]
\label{def:dist-atom}
For the input-valued distance bounds, we use a finite set $\Delta$ of numerical parameter symbols together with a valuation $\eta\colon\Delta\to\mathbb Z_{\geq0}$ whose values are given in binary as part of the input.
For vertex variables $u,v$ and a numerical symbol $\delta\in\Delta$, we interpret the distance atom by
\[
  \Dist(u,v,\delta)
  \quad\Longleftrightarrow\quad
  d_G(u,v)\geq\eta(\delta).
\]
By a \emph{template}, we mean a formula in which we use these numerical parameter symbols as placeholders for the distance bounds.
For each input instance, we assign values to the symbols through the valuation $\eta$, while the formula itself remains unchanged.
Thus, when we measure the template length, we count its logical syntax, including the occurrences of the numerical symbols, but we do not include the binary encodings of their values, which are supplied separately as part of the input.
For example, the template $\Dist(u,v,\delta)$ is the same whether we assign the value $3$ or the value $1000$ to $\delta$.
When we display a concrete nonnegative threshold $d$, we write $\Dist(u,v,d)$.

For vertex-set arguments $X,Y$, we interpret $\Sep_\delta(X,Y)$ as requiring $d_G(x,y)\geq\eta(\delta)$ for every $x\in X$ and $y\in Y$ with $x\neq y$.
We omit equal vertices even when $X$ and $Y$ are different variables, so the atom does not require the sets to be disjoint, and $\Sep_\delta(X,X)$ expresses pairwise separation of the distinct vertices of $X$.
For a concrete threshold $d$, we use the notation $\Sep_d(X,Y)$.

For vertex-set arguments $U,Y$ and a numerical symbol $\rho\in\Delta$, we interpret $\Cover_\rho(U,Y)$ as requiring that every $u\in U$ has a vertex $y\in Y$ with $d_G(u,y)\leq\eta(\rho)$.
For a concrete radius~$r$, we write $\Cover_r(U,Y)$ and have
$\Cover_r(U,Y) \Leftrightarrow 
  \forall u\in U\,\exists y\in Y\,\neg\Dist(u,y,r+1)$.
In the packing-covering and one-cover fragments, we take the target set $U$ to be $V(G)$ or one of the unary input label sets $L_1,\ldots, L_q$. 

\end{definition}

\begin{definition}[Nearest-label atoms]
\label{def:nearest-label-atoms}
For the nearest-label atoms, we assume that the unary input label sets $L_1,\ldots,L_q$ form a partition of $V(G)$, as in the introduction.
For a vertex $v$, a vertex set $S$, and an index $i\in[q]$, we use
\[
  \NearLab_i(v;S)
  \quad\Longleftrightarrow\quad
  \exists u\in S\cap L_i\colon d_G(v,u)=d_G(v,S).
\]

That is, we ask whether at least one closest selected vertex has label $i$, and when several vertices with different labels attain the same minimum distance, we record all their labels.
We interpret every such atom as false when $S=\varnothing$.
\end{definition}

\begin{definition}[Unguarded distance comparisons]
\label{def:unguarded-distance-atoms}
For vertex variables $x,y,u,v$, we interpret $\EqDist(x,y,u,v)$ as requiring $d_G(x,y)=d_G(u,v)<\infty$.
For $\Close(x,u,v)$, we instead require $d_G(x,u)<d_G(x,v)$, using the usual order on $\mathbb Z_{\geq0}\cup\set{\infty}$.
Thus a finite distance is smaller than~$\infty$, while $\infty<\infty$ is false.
\end{definition}

Note how the distance atoms behave on disconnected graphs.
Vertices in different components have distance at least $d$ for every finite threshold $d$, whereas a covering condition can only be satisfied by vertices in the same component as the vertex that we wish to cover.
In particular, $\Cover_r(U,\varnothing)$ holds exactly when $U=\varnothing$.

\begin{definition}[Metric extensions]
\label{def:guarded-logics}
Let 
\[
  \mathcal A:=\set{\SP,\EqGC,\Dist,\Sep,\Cover,\NearLab,\EqDist,\Close}.
\]
Here $\Dist$, $\Sep$, and $\Cover$ stand for the families indexed by numerical bounds, while $\NearLab$ stands for the atoms $\NearLab_i$ indexed by input labels.
If $\mathcal L$ is one of the logics $\FO$, $\MSO$, $\CMSO$, $\MSOtwo$, and $\CMSOtwo$, and $\mathcal F\subseteq\mathcal A$, we write $\mathcal L[\mathcal F]$ for the extension in which we additionally allow the atoms from $\mathcal F$ with arguments of the appropriate sorts.
Adding an atom does not add new kinds of quantification, so in a first-order extension a set argument must be a named unary predicate or a fixed set parameter, rather than a quantified set variable.
We write, for example, $\mathcal L[\SP,\EqGC]$ instead of $\mathcal L[\{\SP,\EqGC\}]$.
\end{definition}

In particular, the guarded logics $\CMSOCTW$ and $\CMSOOCTW$ extend $\CMSOtwo$ and $\CMSO$, respectively, by $\SP$ and $\EqGC$ alone.
We treat the numerical distance conditions and nearest-label atoms in the separate fragments introduced above.
For the single-interface nearest-label fragment, every nearest-label atom refers to the same distinguished free set $S$, which we require to be nonempty and do not bind by a quantifier, while other variables may be quantified and additional free sets may occur.
The packing-covering, one-cover, and nearest-label fragments are described in \Cref{sec:dist-meta}.

\section{Expressing shortest paths in MSO}
\label{sec:guarded-model-checking} 

We begin our study with the shortest-path predicate and recall the main theorem that we prove in this section.

\shortestcertificate*

We prove the theorem in three steps. 
We first characterize shortest paths by integer potentials.
We then replace the unbounded potentials by finite cyclic labelings.
And finally, we express the resulting certificates in ordinary~$\MSO$. 

\subsection{Shortest paths as integer potentials}

For an integer $k\geq0$, a map $f\colon V(G)\to\ZZ$ is \emph{$k$-Lipschitz} if  $|f(x)-f(y)|\leq k$ on every edge~$xy$.
We use $1$-Lipschitz maps in this section and $2$-Lipschitz maps in the guarded-equidistance certificate in the next section.

A $1$-Lipschitz map, say $\lambda$, obviously gives a lower bound on the length of every path between two vertices, that is, every $s$-$t$ path has length at least $|\lambda(t)-\lambda(s)|$.
If the values of $\lambda$ increase by one along a path $P$ from $s$ to $t$, then we have $|E(P)|=\lambda(t)-\lambda(s)$, so every $s$-$t$ path has length at least $|E(P)|$, and hence $P$ is shortest. 
This is captured in the following well-known lemma; see e.g.\ Dantzig's treatment of the shortest-route problem~\cite{Dantzig1957} and \cite[Chapters~4 and~5]{AhujaMagnantiOrlin1993}.

\begin{lemma}\label{lem:integer-potential}
Let $P=v_0v_1\cdots v_r$ be a path in~$G$, with $s=v_0$ and $t=v_r$.
The following statements are equivalent.
\begin{enumerate}[label=(\alph*)]
  \item $P$ is a shortest $s$-$t$ path.
  \item There exists a map $\lambda\colon V(G)\to\ZZ$ such that
  \begin{align}
    \abs{\lambda(u)-\lambda(v)}&\leq1
      &&\text{for every }uv\in E(G), \label{eq:integer-potential}\\
    \lambda(v_{i+1})&=\lambda(v_i)+1
      &&\text{for every }0\leq i<r. \label{eq:saturated-path}
  \end{align}
\end{enumerate}
\end{lemma}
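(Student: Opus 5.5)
We prove the two implications separately. Both are short; the only choice to make is which potential to use for (a)$\Rightarrow$(b).

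For (b)$\Rightarrow$(a), we use the telescoping bound already sketched in the text. Let $W=w_0w_1\cdots w_\ell$ be any $s$-$t$ walk. By \eqref{eq:integer-potential}, the triangle inequality gives
\[
\lambda(t)-\lambda(s)=\sum_{j=1}^{\ell}\bigl(\lambda(w_j)-\lambda(w_{j-1})\bigr)\leq\ell.
\]
Summing \eqref{eq:saturated-path} along $P$ gives $\lambda(t)-\lambda(s)=r$. Hence every $s$-$t$ walk has length at least $r=|E(P)|$, so $d_G(s,t)=r$ and $P$ is shortest.

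For (a)$\Rightarrow$(b), we take the distance-from-$s$ potential $\lambda(v):=d_G(s,v)$ for $v$ in the component of $s$. Adjacent vertices $u,v$ in that component satisfy $|d_G(s,u)-d_G(s,v)|\leq1$ by the triangle inequality, which gives \eqref{eq:integer-potential} there.

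Now let $P$ be shortest. Every prefix $v_0\cdots v_i$ of $P$ is then a shortest $s$-$v_i$ path: if a shorter route to $v_i$ existed, concatenating it with the suffix of $P$ would give an $s$-$t$ walk shorter than $r$. So $d_G(s,v_i)=i$ for every $i$, and \eqref{eq:saturated-path} holds.

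The one point needing care is that $\lambda$ must be defined on all of $V(G)$, while $d_G(s,\cdot)=\infty$ outside the component of $s$. We set $\lambda\equiv0$ on every other component. No edge joins different components, so \eqref{eq:integer-potential} is unaffected. This is the only step beyond routine verification, and it is handled by this choice.
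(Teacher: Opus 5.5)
Your proof is correct and takes essentially the paper's approach. For (b)$\Rightarrow$(a), the telescoping bound $\lambda(t)-\lambda(s)\leq\ell$ over any $s$-$t$ walk is the argument the paper sketches just before the lemma. For (a)$\Rightarrow$(b), the paper only cites standard references, and your distance potential $d_G(s,\cdot)$, extended by $0$ outside the component of $s$, is the standard argument behind them.
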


We cannot use \Cref{lem:integer-potential} directly in MSO, as $\lambda$ can take arbitrarily large integer values. 
On graphs with a bounded-length cycle basis, however, it is enough to track the labels modulo a suitable integer.

\subsection{Finite cyclic certificates}

We fix an integer $M\geq3$ and write $\mathbb Z_M=\ZZ/M\ZZ$ for the residues modulo~$M$, represented by $0,\ldots,M-1$.
We now perform all arithmetic on labels in $\mathbb Z_M$ modulo~$M$.

\begin{definition}\label{def:admissible-labeling}
A map $c\colon V(G)\to\mathbb Z_M$ is an \emph{admissible $\mathbb Z_M$-labeling} if, for every edge $uv\in E(G)$, the difference $c(v)-c(u)$ is congruent to one of $-1$, $0$, and~$1$ modulo~$M$.
Thus adjacent vertices receive equal or consecutive labels in the cyclic order (where $M-1$ and $0$ are also consecutive).
When the modulus is clear, we simply call $c$ an \emph{admissible labeling}.
\end{definition}

Since $M\geq3$, the integers $-1,0,1$ represent distinct residues, so an admissible labeling determines a unique integer increment in $\set{-1,0,1}$ whenever we traverse an edge from one endpoint to the other.
For a traversal from $u$ to $v$, we take the increment congruent to $c(v)-c(u)$, and when we reverse the traversal, we negate this increment.
We would like to recover an integer potential by adding the increments along a path from a fixed starting vertex, but for this to work, the result must be independent of the path that we choose.
We first prove a general lemma showing that we can recover a potential whenever the increments sum to zero on the cycles of a binary cycle basis.

To keep track of signs, we choose, independently for each undirected edge $e=uv$, one of the directions $u\to v$ and $v\to u$, which we call its \emph{reference orientation}.
This choice does not change the graph, which remains undirected, and we may still traverse every edge in either direction.
Given a function $\delta\colon E(G)\to\ZZ$, we interpret $\delta(e)$ as the increment in the reference direction, and $-\delta(e)$ as the increment in the opposite direction.
For a walk $W=v_0v_1\cdots v_\ell$, we define its \emph{signed sum} by
\[
  \delta(W):=\sum_{i=1}^{\ell}\varepsilon_i\delta(v_{i-1}v_i),
\]
where $\varepsilon_i=1$ if the reference orientation of $v_{i-1}v_i$ is $v_{i-1}\to v_i$, and $\varepsilon_i=-1$ otherwise.
We count each traversal separately, even when an edge occurs several times, and for a walk of length zero we take the sum to be zero.
When we sum around a cycle, we may choose either direction of traversal, since reversing that direction only changes the sign and therefore does not affect whether the sum is zero.

The following lemma is a consequence of the relation between binary and signed cycle bases~\cite[Lemma~2.4]{KavithaEtAl2009CycleBases} and the characterization of potential differences by vanishing signed cycle sums~\cite[Section~20.1]{BondyMurty2008}.
We give an elementary proof using a spanning tree and a parity argument.

\begin{lemma}
\label{lem:integer-cycle-sums}
\label{cor:appendix-closed-walk-orthogonality}
Fix a reference orientation of every edge of~$G$ and a function $\delta\colon E(G)\to\ZZ$.
If the signed sum of $\delta$ around every cycle of a binary cycle basis~$\mathcal B$ is zero, then there exists a map $\lambda\colon V(G)\to\ZZ$ such that $\lambda(v)-\lambda(u)=\delta(e)$ for every edge $e$ whose reference orientation is~$u\to v$.
Consequently, the signed sum of $\delta$ around every closed walk is zero.
\end{lemma}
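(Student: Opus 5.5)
We work one connected component at a time and construct $\lambda$ along a spanning tree. Components are independent: a cycle lies inside a single component, and a binary cycle basis of $G$ splits into bases of the cycle spaces of the components. So assume $G$ is connected, fix a spanning tree $T$ and a root $r$, and set $\lambda(r):=0$. For every other vertex $v$, define $\lambda(v):=\delta(T_{r,v})$, the signed sum along the unique $r$-$v$ path in $T$. This already gives the required identity $\lambda(v)-\lambda(u)=\delta(e)$ on every tree edge $e$ with reference orientation $u\to v$: one endpoint's tree path extends the other's by the edge $e$, and the sign in the signed sum accounts for the orientation.

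For a non-tree edge $e$ with reference orientation $u\to v$, consider its fundamental cycle $C_e:=e+T_{v,u}$. The quantity $\delta(e)-(\lambda(v)-\lambda(u))$ equals the signed sum $\delta(C_e)$ traversed in the direction $u\to v$ along $e$. So the task reduces to showing that $\delta(C)=0$ for every fundamental cycle. Note that $\lambda$ is well defined even if $\delta(C_e)\neq 0$, so the reduction costs nothing.

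To prove $\delta(C)=0$ for all cycles, define a linear functional on the integer cycle space. For each cycle $C$, form the signed vector $\sigma_C\in\ZZ^{E(G)}$ with entries $\pm1$ on $E(C)$ according to a traversal direction; then $\delta(C)=\langle\delta,\sigma_C\rangle$.

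This step is the main obstacle. The hypothesis only controls a binary basis $\mathcal B$, so we must pass from binary to signed combinations. We invoke the cited fact [KavithaEtAl, Lemma~2.4] that the signed vectors of a binary cycle basis form a $\mathbb Q$-basis of the signed cycle space, of dimension $m-n+1$. The elementary version of this fact runs as follows. Order the cycles of $\mathcal B$ and the non-tree edges so that the $\mathbb F_2$ incidence matrix restricted to non-tree edges is invertible mod 2. The integer matrix with entries in $\{0,\pm1\}$ then has odd determinant, hence nonzero determinant, hence is invertible over $\mathbb Q$. A signed cycle vector is determined by its non-tree coordinates, because the tree coordinates are forced by flow conservation. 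Therefore each $\sigma_{C_e}$ is a rational combination of the vectors $\sigma_B$ for $B\in\mathcal B$. Since $\langle\delta,\sigma_B\rangle=0$ for every $B\in\mathcal B$, linearity gives $\delta(C_e)=0$, and so $\lambda$ is a potential on all edges.

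The consequence for closed walks is immediate. Along any walk $v_0\cdots v_\ell$, each traversal contributes $\varepsilon_i\delta(v_{i-1}v_i)=\lambda(v_i)-\lambda(v_{i-1})$, so the signed sum telescopes to $\lambda(v_\ell)-\lambda(v_0)$. For a closed walk this is $0$.
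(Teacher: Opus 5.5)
Your proof is correct, but it settles the key step by a different route from the paper. Both arguments build $\lambda$ along a spanning tree $T$. Both then reduce to showing that the defect $\delta(e)-(\lambda(v)-\lambda(u))$ vanishes on non-tree edges, which is the same as showing that $\delta$ has signed sum zero on fundamental cycles.

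You argue on the vector side. The $\{0,\pm1\}$-matrix $S$ of signed basis cycles, restricted to non-tree edges, reduces mod $2$ to the binary incidence matrix, which is invertible. Hence $\det S$ is odd and $S$ is invertible over $\mathbb Q$. Combined with the fact that a circulation is determined by its non-tree coordinates, this shows that the signed basis cycles span the rational cycle space, and linearity finishes the proof.

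The paper argues on the functional side and avoids determinants and rational linear algebra. The defect $b$ vanishes on tree edges and has signed sum zero on basis cycles. If $b\neq0$, dividing by the gcd of its values gives an integer function with some odd value. The set $F$ of edges where this value is odd then meets every basis cycle, and hence every cycle, in an even number of edges. But the fundamental cycle of an edge in $F$ meets $F$ exactly once, a contradiction.

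These are dual views of one fact: the paper's gcd-and-reduce descent is exactly the proof that an integer matrix invertible mod $2$ has trivial integer kernel. The paper's version is shorter and needs only the binary span property and fundamental cycles. Yours establishes the stronger, reusable statement that a binary cycle basis is a rational basis of the signed cycle space (the cited lemma of Kavitha et al.). The cost is the extra facts about determinants and circulations.

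One small correction: invertibility of the mod-$2$ matrix does not depend on any choice of ordering. It holds because projecting $\mathcal Z_2(G)$ onto the non-tree coordinates is an isomorphism and $\mathcal B$ is a basis, so you should state that reason rather than phrase it as choosing an order.
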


\begin{proof}
We can prove the statement separately for each connected component, since every basis cycle lies in one component, so we may assume that $G$ is connected.
We choose a spanning tree~$T$ and a root $r$, and we define $\lambda(v)$ by adding the signed increments along the unique $r$-$v$ path in $T$.
In particular, we have $\lambda(r)=0$, and the differences of $\lambda$ on the tree edges are exactly the given increments.

We now check whether the same holds for the other edges.
For an edge $e$ whose reference orientation is $u\to v$, we put
$b(e):=\delta(e)-\bigl(\lambda(v)-\lambda(u)\bigr)$.
Thus $b(e)$ measures the difference between the increment that we want and the one given by $\lambda$.
We already know that $b(e)=0$ on every tree edge, and we want to show that it is zero on every edge.

When we add the potential differences along a closed walk $W=v_0v_1\cdots v_\ell$, all intermediate terms cancel, and since $v_\ell=v_0$, we have
\[
  \sum_{i=0}^{\ell-1}\bigl(\lambda(v_{i+1})-\lambda(v_i)\bigr)
  =\lambda(v_\ell)-\lambda(v_0)=0.
\]

This cancellation is a standard observation called \emph{telescoping}.
The signed sum of $\delta$ is zero on every basis cycle, and the same is true for the potential differences, so their difference $b$ also has signed sum zero on every basis cycle.

Suppose that some value of $b$ is nonzero.
We let $g$ be the positive greatest common divisor of the nonzero values $\abs{b(e)}$, and we put $a(e):=b(e)/g$.
All values of $a$ are integers, and at least one is odd, because otherwise $2g$ would divide every value of $b$, contrary to our choice of $g$.
We let $F$ be the nonempty set of edges on which $a$ is odd.
Since $b$ is zero on the tree edges, $F$ contains no edge of $T$.

The signed sum of $a$ is still zero on every basis cycle.
We now consider these sums modulo~$2$ (hence adding and subtracting have the same effect).
Each edge in $F$ contributes $1$, while every other edge contributes $0$, so every basis cycle contains an even number of edges from $F$.
Every cycle of $G$ is a symmetric difference of basis cycles, and taking symmetric differences preserves this property, so every cycle contains an even number of edges from $F$.

We now choose an edge $e\in F$ and take the cycle formed by $e$ and the unique path in $T$ between its endpoints.
This is called the \emph{fundamental cycle} of $e$ with respect to $T$.
All edges of this cycle except $e$ belong to $T$, so the cycle contains exactly one edge from $F$, which is a contradiction.

We conclude that $b(e)=0$ for every edge, and hence $\lambda$ has all the required differences.
The signed sum of these differences around every closed walk is zero by the cancellation shown above.
\end{proof}

We now apply the lemma to the increments of an admissible labeling.
Around any cycle, these increments sum to a multiple of $M$, because the label differences telescope modulo~$M$.
If the cycle has length smaller than $M$, the absolute value of the sum is also smaller than $M$, and hence the sum must be zero.
We use this observation to prove the following lifting statement, following the argument of Boulinier, Petit, and Villain for their intrinsic-delay criterion~\cite[\S4.3, Definition~4.4 and the proof of Theorem~4.5]{BoulinierPetitVillain2004}; see also~\cite[\S3.2]{BoulinierPetitVillain2008}.
\pagebreak

\begin{lemma}\label{lem:cyclic-lifting}
Let $M\geq3$ satisfy $M>\cbl(G)$, and let $c\colon V(G)\to\mathbb Z_M$ be an admissible labeling.
There exists a map $\lambda\colon V(G)\to\ZZ$ such that
\begin{enumerate}[label=(\roman*)]
  \item $\lambda(v)\equiv c(v)\pmod M$ for every $v\in V(G)$, and
  \item for every pair of adjacent vertices $u,v$, the difference $\lambda(v)-\lambda(u)$ is the unique integer in $\set{-1,0,1}$ congruent to $c(v)-c(u)$ modulo~$M$.
\end{enumerate}
We call such a map $\lambda$ an \emph{integer lift} of $c$.
\end{lemma}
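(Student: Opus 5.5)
We define increments from $c$ and apply \Cref{lem:integer-cycle-sums}. First fix an arbitrary reference orientation of every edge. For an edge $e$ oriented $u\to v$, let $\delta(e)\in\set{-1,0,1}$ be the unique integer congruent to $c(v)-c(u)$ modulo $M$. It exists because $c$ is admissible, and it is unique because $M\geq3$ makes $-1,0,1$ pairwise incongruent. Reversing the traversal negates both $c(v)-c(u)$ and the chosen representative, so the increment of a traversal in either direction is the signed value of $\delta$ prescribed in the lemma.

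Next we verify the hypothesis of \Cref{lem:integer-cycle-sums}. If $G$ is a forest, the hypothesis is vacuous and there is nothing to check. Otherwise, take a binary cycle basis $\mathcal B$ all of whose cycles have length at most $\cbl(G)<M$; such a basis exists by definition of $\cbl$. For a cycle $C=v_0v_1\cdots v_\ell$ in $\mathcal B$, each signed term $\varepsilon_i\delta(v_{i-1}v_i)$ is congruent to $c(v_i)-c(v_{i-1})$. These differences telescope to $c(v_\ell)-c(v_0)=0$ modulo $M$, so $\delta(C)\equiv0\pmod M$. On the other hand, $|\delta(C)|\leq\ell\leq\cbl(G)<M$. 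Hence $\delta(C)=0$. \Cref{lem:integer-cycle-sums} then yields $\lambda_0\colon V(G)\to\ZZ$ with $\lambda_0(v)-\lambda_0(u)=\delta(e)$ for each edge $e$ oriented $u\to v$, which already gives (ii).

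To get (i), we adjust $\lambda_0$ by a constant on each connected component $K$. Fix a vertex $r_K\in K$ and add to $\lambda_0$ on $K$ an integer $a_K$ with $\lambda_0(r_K)+a_K\equiv c(r_K)\pmod M$; this preserves all edge differences. For any other vertex $v\in K$, choose an $r_K$-$v$ path. Summing the increments along it gives $\lambda(v)-\lambda(r_K)\equiv c(v)-c(r_K)$ modulo $M$, again by telescoping, so $\lambda(v)\equiv c(v)$. The only delicate point is the bound $|\delta(C)|<M$, and this is exactly where the hypothesis $M>\cbl(G)$ enters; the rest is routine.
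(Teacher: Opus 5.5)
Your proof is correct and follows the paper's argument essentially step for step. It uses the same increments $\delta$ in $\{-1,0,1\}$ and the same divisibility-plus-size argument ($|\delta(C)|\leq\cbl(G)<M$) to force zero signed sums on a basis of short cycles, and then applies \Cref{lem:integer-cycle-sums} and shifts by a constant on each component to fix the residues.
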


\begin{proof}
We choose a reference orientation for each edge, and for an edge $e$ with reference direction~\mbox{$u\to v$}, we take $\delta(e)$ to be the unique integer in $\set{-1,0,1}$ that is congruent to $c(v)-c(u)$ modulo~$M$.
We choose a binary cycle basis $\mathcal B$ in which every cycle has length at most~$\cbl(G)$, and if~$G$ is a forest, we take the empty basis.

For each cycle $C\in\mathcal B$, the label differences cancel when we add them around the cycle, so the signed sum of $\delta$ is zero modulo~$M$.
This means that the sum is divisible by $M$.
On the other hand, each increment has absolute value at most one, so the absolute value of the sum is at most $\abs{E(C)}\leq\cbl(G)<M$.
The only multiple of $M$ with absolute value smaller than $M$ is zero, so the signed sum is zero on every basis cycle.

We can therefore apply \Cref{lem:integer-cycle-sums} and obtain an integer potential $\lambda$ whose difference on every edge is the given increment.
The same holds when we traverse an edge in the opposite direction, since both the increment and the potential difference change sign.

We still need to make sure that $\lambda(v)$ has the same residue as $c(v)$ at every vertex.
Within each connected component, we can add the same integer to all values of $\lambda$ without changing any edge difference.
We choose a root $r$ in each component and use this freedom to make $\lambda(r)\equiv c(r)\pmod M$.
For any vertex $v$ in that component, we then add the edge differences along an $r$-$v$ path and obtain
\[
  \lambda(v)-\lambda(r)\equiv c(v)-c(r)\pmod M.
\]
Since the labels at $r$ agree modulo~$M$, the labels at $v$ also agree, and hence $\lambda(v)\equiv c(v)\pmod M$, as required.
\end{proof}

In \Cref{fig:cyclic-lifting}, we show how we pass from labels modulo~$11$ to integer labels on an $8$-cycle.
The step from $10$ to $0$ in the cyclic labeling becomes the step from $-1$ to $0$ in the integer labeling, so in both cases we increase the value by one.

\begin{figure}[t]
  \centering
  \begin{tikzpicture}[
      vertex/.style={
        circle,
        draw,
        thick,
        minimum size=8mm,
        inner sep=0pt,
        fill=white
      },
      edge label/.style={
        font=\scriptsize,
        fill=white,
        inner sep=1pt
      },
      cycle edge/.style={
        thick,
        -{Latex[length=1.6mm,width=1.1mm]}
      },
      panel title/.style={
        font=\small,
        align=center
      },
      lift/.style={
        -{Latex[length=2.7mm]},
        thick
      },
      >=Latex
    ]

    \node[vertex] (l1) at (0,2.4)       {$10$};
    \node[vertex] (l2) at (1.7,1.7)     {$0$};
    \node[vertex] (l3) at (2.4,0)       {$1$};
    \node[vertex] (l4) at (1.7,-1.7)    {$2$};
    \node[vertex] (l5) at (0,-2.4)      {$3$};
    \node[vertex] (l6) at (-1.7,-1.7)   {$2$};
    \node[vertex] (l7) at (-2.4,0)      {$1$};
    \node[vertex] (l8) at (-1.7,1.7)    {$0$};

    \draw[cycle edge]
      (l1) -- node[edge label] {$+1$} (l2);
    \draw[cycle edge]
      (l2) -- node[edge label] {$+1$} (l3);
    \draw[cycle edge]
      (l3) -- node[edge label] {$+1$} (l4);
    \draw[cycle edge]
      (l4) -- node[edge label] {$+1$} (l5);
    \draw[cycle edge]
      (l5) -- node[edge label] {$-1$} (l6);
    \draw[cycle edge]
      (l6) -- node[edge label] {$-1$} (l7);
    \draw[cycle edge]
      (l7) -- node[edge label] {$-1$} (l8);
    \draw[cycle edge]
      (l8) -- node[edge label] {$-1$} (l1);

    \node[panel title] at (0,3.25)
      {\textbf{(a)} Admissible labeling\\
       $c\colon V(C)\to\mathbb Z_{11}$};

    \node[font=\small,align=center] at (0,-3.25)
      {$10\to0$ increases the label by one\\
       when we work modulo~$11$};

    \draw[lift] (3.8,0) -- node[above,font=\small,align=center]
      {$\lambda(v)\equiv c(v)\pmod{11}$} (6.1,0);

    \begin{scope}[xshift=9.6cm]
      \node[vertex] (r1) at (0,2.4)       {$-1$};
      \node[vertex] (r2) at (1.7,1.7)     {$0$};
      \node[vertex] (r3) at (2.4,0)       {$1$};
      \node[vertex] (r4) at (1.7,-1.7)    {$2$};
      \node[vertex] (r5) at (0,-2.4)      {$3$};
      \node[vertex] (r6) at (-1.7,-1.7)   {$2$};
      \node[vertex] (r7) at (-2.4,0)      {$1$};
      \node[vertex] (r8) at (-1.7,1.7)    {$0$};

      \draw[cycle edge]
        (r1) -- node[edge label] {$+1$} (r2);
      \draw[cycle edge]
        (r2) -- node[edge label] {$+1$} (r3);
      \draw[cycle edge]
        (r3) -- node[edge label] {$+1$} (r4);
      \draw[cycle edge]
        (r4) -- node[edge label] {$+1$} (r5);
      \draw[cycle edge]
        (r5) -- node[edge label] {$-1$} (r6);
      \draw[cycle edge]
        (r6) -- node[edge label] {$-1$} (r7);
      \draw[cycle edge]
        (r7) -- node[edge label] {$-1$} (r8);
      \draw[cycle edge]
        (r8) -- node[edge label] {$-1$} (r1);

      \node[panel title] at (0,3.25)
        {\textbf{(b)} Integer lift\\
         $\lambda\colon V(C)\to\mathbb Z$};

      \node[font=\small,align=center] at (0,-3.25)
        {$-1\to0$ increases the label by one\\
         when we work with integers};
    \end{scope}

  \end{tikzpicture}

  \caption{An admissible labeling of an $8$-cycle modulo~$11$ and its integer lift.
  The arrows show the direction in which we read the increments, while the cycle itself is undirected.
  Every edge has the same increment in both drawings, and these increments add to $1+1+1+1-1-1-1-1=0$ around the cycle.}
  \label{fig:cyclic-lifting}
\end{figure}

Combining the integer lift with \Cref{lem:integer-potential}, we obtain a certificate for shortest paths that uses only finitely many labels.

\begin{theorem}\label{thm:finite-certificate}
Let $M\geq3$ satisfy $M>\cbl(G)$, and let $P=v_0v_1\cdots v_r$ be a path with $s=v_0$ and $t=v_r$.
Then $P$ is a shortest $s$-$t$ path if and only if there exists an admissible labeling $c\colon V(G)\to\mathbb Z_M$ such that $c(v_{i+1})=c(v_i)+1$ in $\mathbb Z_M$ for every $0\leq i<r$.
\end{theorem}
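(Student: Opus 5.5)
We prove the two directions separately, combining \Cref{lem:integer-potential} with the lifting statement of \Cref{lem:cyclic-lifting}.

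For the forward direction, we assume that $P$ is a shortest $s$-$t$ path. By \Cref{lem:integer-potential} there is a $1$-Lipschitz map $\lambda\colon V(G)\to\ZZ$ that increases by one along $P$. Concretely, one may take $\lambda(v):=d_G(s,v)$ on the component of $s$ and $0$ elsewhere. We define $c(v):=\lambda(v)\bmod M$. Across every edge, $\lambda$ changes by an element of $\set{-1,0,1}$, so $c(v)-c(u)$ is congruent to $-1$, $0$ or $1$, and $c$ is admissible. Reducing the relation $\lambda(v_{i+1})=\lambda(v_i)+1$ modulo $M$ gives $c(v_{i+1})=c(v_i)+1$ in $\mathbb Z_M$. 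This direction does not use the hypothesis $M>\cbl(G)$.

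For the backward direction, we take an admissible labeling $c$ with $c(v_{i+1})=c(v_i)+1$ along $P$. Since $M\geq3$ and $M>\cbl(G)$, \Cref{lem:cyclic-lifting} yields an integer lift $\lambda$. Its difference across every edge is the unique integer in $\set{-1,0,1}$ congruent to the label difference, so $\lambda$ satisfies \eqref{eq:integer-potential}. For each path edge $v_iv_{i+1}$, the label difference is $1$ modulo $M$. Because $M\geq3$, the residue of $1$ is distinct from those of $0$ and $-1$, so the integer increment is exactly $+1$, which is \eqref{eq:saturated-path}. \Cref{lem:integer-potential} then shows that $P$ is shortest.

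There is no real obstacle, because the substantive work is already done in \Cref{lem:cyclic-lifting}. The only points needing care are the uniqueness of the increment, where $M\geq3$ ensures that $-1,0,1$ are distinct residues, and the degenerate case $r=0$, where both sides hold trivially.
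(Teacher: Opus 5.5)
Your proposal is correct and follows the paper's proof essentially step for step. In the forward direction you reduce the integer potential from \Cref{lem:integer-potential} modulo $M$. In the backward direction you lift via \Cref{lem:cyclic-lifting}, use $M\geq3$ to force increment exactly $+1$ on path edges, and conclude with \Cref{lem:integer-potential}.
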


\begin{proof}
If $P$ is shortest, we take the integer potential from \Cref{lem:integer-potential} and reduce its values modulo~$M$.
We obtain an admissible labeling, and its values increase by one along every edge of~$P$ from $s$ to $t$.

Conversely, if we have such an admissible labeling, we apply \Cref{lem:cyclic-lifting} to obtain an integer lift.
The difference on every edge from $v_i$ to $v_{i+1}$ is then exactly $1$, since this is the unique element of $\set{-1,0,1}$ with the required residue.
The lift is also $1$-Lipschitz on the whole graph, so \Cref{lem:integer-potential} shows that $P$ is shortest.
\end{proof}

\subsection{Expressing the finite certificate in plain MSO}
\label{sec:SP-formula}

We now write the certificate as an MSO formula.
We fix $M\geq3$ and write $\Path(P,s,t)$ for an MSO formula saying that $P$ is an $s$-$t$ path set.
If $s=t$, we require $P=\set{s}$.
Otherwise, we require that $s,t\in P$, that $G[P]$ is connected, and that $s$ and $t$ have degree one while every other vertex has degree two (in the graph $G[P]$).
We can express these conditions by MSO formulas of constant length.

We describe the labeling by vertex-set variables $\bar C=(C_0,\ldots,C_{M-1})$, where $C_i$ contains the vertices with label $i$.
We read all indices modulo~$M$, so $C_M=C_0$ and $C_{-1}=C_{M-1}$.
We write $\Part_M(\bar C)$ for the formula saying that every vertex belongs to exactly one of these sets, although some sets may be empty.
We write $\Adm_M(\bar C)$ for the formula requiring, for each $i$, that every neighbor of a vertex in $C_i$ belongs to at least one of the sets $C_{i-1}$, $C_i$, and $C_{i+1}$.
Together, $\Part_M(\bar C)$ and $\Adm_M(\bar C)$ describe an admissible labeling of the whole graph, including the vertices outside $P$.

We still need the labels to increase by one as we follow the path from $s$ to $t$.
For this, we use the formula
\begin{align*}
  \Orient_M(P,s,t,\bar C):={}&
  \forall x\,\Bigl((x\in P\wedge x\neq t)\rightarrow
    \exists y\,\bigl(y\in P\wedge E(x,y)\\
  &\hspace{22mm}{}\wedge
    \bigvee_{i=0}^{M-1}(x\in C_i\wedge y\in C_{i+1})\bigr)\Bigr).
\end{align*}

This formula says that every vertex of $P$ except $t$ has a neighbor in $P$ with the next cyclic label.
The variable $s$ does not appear in the formula, but we write it with the arguments because we use the formula together with $\Path(P,s,t)$.
We then define
\begin{align}
  \Cert_M(P,s,t):={}&\Path(P,s,t)\wedge
    \exists C_0\cdots\exists C_{M-1}\Bigl(
      \Part_M(\bar C)\wedge\Adm_M(\bar C)\notag\\
    &\hspace{40mm}{}\wedge\Orient_M(P,s,t,\bar C)
    \Bigr).
  \label{eq:cert-formula}
\end{align}

\begin{proof}[Proof of \Cref{thm:cert-mso}]
We assume that $M>\cbl(G)$ and first suppose that $\Cert_M(P,s,t)$ holds with $s\neq t$.
By the path condition, we can write $G[P]=v_0v_1\cdots v_r$, where $v_0=s$ and $v_r=t$, and the sets $C_0,\ldots,C_{M-1}$ give an admissible labeling $c$.

We start at $s$, whose only neighbor in $P$ is $v_1$, so $\Orient_M$ requires $c(v_1)=c(v_0)+1$.
We now suppose that the labels increase by one along the path up to $v_i$, where $1\leq i<r$.
The vertex $v_{i-1}$ has the previous label, which is different from the next label because $M\geq3$.
Since the only neighbors of $v_i$ in $P$ are $v_{i-1}$ and $v_{i+1}$, the neighbor with the next label must be $v_{i+1}$.
Continuing along the path, we see that the labels increase by one at every step, so \Cref{thm:finite-certificate} shows that the path is shortest.

Conversely, we suppose that $P$ is a shortest $s$-$t$ path set with $s\neq t$.
By \Cref{thm:finite-certificate}, we can choose an admissible labeling whose values increase by one along the path from $s$ to $t$.
We use its label classes as the sets $C_0,\ldots,C_{M-1}$, which then satisfy $\Part_M$ and $\Adm_M$.
Every vertex of the path except $t$ has the next vertex on the path as a neighbor with the next label, so $\Orient_M$ also holds.

If $s=t$, both the shortest-path condition and the path formula require $P=\set{s}$.
For this one-vertex path, there is nothing to check in $\Orient_M$, and we can give every vertex of the graph the same label.
Thus $\Cert_M(P,s,t)$ is equivalent to $\SP(P,s,t)$ in all cases.

Finally, we check the length of the formula when we write out all abbreviations.
For $\Part_M$, we use one condition with $M$ alternatives to say that every vertex has a label, and $O(M^2)$ conditions to say that no vertex has two different labels.
For $\Adm_M$, we use $M$ conditions of constant size, while $\Orient_M$ contains a disjunction with $M$ alternatives.
Since the path formula has constant length, the resulting ordinary MSO formula has length polynomial in~$M$.
\end{proof}

We can now use the certificate formula to replace shortest-path atoms in larger formulas.
For $\varphi\in\CMSOSP$, we define $\tau_M(\varphi)\in\CMSO$ by replacing every atom $\SP(P,s,t)$ with a copy of $\Cert_M(P,s,t)$.
In each copy, we give the quantified variables new names that do not occur elsewhere in the formula, so their quantifiers do not bind any other variables by mistake.
We leave all ordinary atoms unchanged, including modulo-counting atoms and unary input predicates, and we keep the Boolean connectives and quantifiers of the original formula.
It is standard to prove equivalence by induction on the structure of $\phi$, and we obtain the following lemma. 

\begin{lemma}\label{lem:translation}
Let $G$ be a graph and let $M\geq3$ satisfy $M>\cbl(G)$.
For every $\varphi\in\CMSOSP$ and every assignment $\alpha$ of its free variables, we have
\begin{align*}
  G,\alpha\models\varphi
  \quad\Longleftrightarrow\quad
  G,\alpha\models\tau_M(\varphi).
\end{align*}
Moreover, there is a fixed polynomial $q$ such that $\abs{\tau_M(\varphi)}\leq q(\abs{\varphi}+M)$.
\end{lemma}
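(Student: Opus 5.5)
We prove both claims by structural induction on $\varphi$, with the assignment $\alpha$ allowed to vary inside the induction. The base cases are the atoms. Every ordinary atom of $\CMSO$ (equality, adjacency, membership, unary input predicates, and the modulo-counting atoms $\operatorname{card}_{a,\ell}$) is left unchanged by $\tau_M$, so equivalence is trivial there. For an atom $\SP(P,s,t)$, the translation is a renamed copy of $\Cert_M(P,s,t)$. Renaming bound variables does not change the meaning, and the fresh names do not capture $P,s,t$. Since $M\geq3$ and $M>\cbl(G)$, \Cref{thm:cert-mso} shows that $G,\alpha\models\Cert_M(P,s,t)$ if and only if $G,\alpha\models\SP(P,s,t)$, for every assignment $\alpha$.

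For the inductive steps, $\tau_M$ commutes with the Boolean connectives and with the quantifiers $\exists x,\forall x,\exists X,\forall X$. For negation and the binary connectives, the equivalence follows directly from the induction hypothesis at the same $\alpha$. For a quantifier, say $\exists X\,\psi$, we have $G,\alpha\models\exists X\,\psi$ if and only if $G,\alpha[X\mapsto S]\models\psi$ for some $S\subseteq V(G)$. By the induction hypothesis applied to $\alpha[X\mapsto S]$, this holds if and only if $G,\alpha[X\mapsto S]\models\tau_M(\psi)$ for some $S$, which is exactly $G,\alpha\models\tau_M(\exists X\,\psi)$. The same argument covers the other quantifiers.

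The one point needing care is variable capture. The bound variables $\bar C$ introduced inside each certificate copy, and any bound variables inside $\Path$, must receive names occurring nowhere else in $\varphi$. They then never shadow a variable of $\varphi$, and no quantifier of $\varphi$ binds an occurrence inside a certificate except through the arguments $P,s,t$. Also, the free variables of each copy are exactly $P,s,t$, so $\tau_M(\varphi)$ has the same free variables as $\varphi$. With this convention in place, the induction above goes through without further obstacle.

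For the length bound, let $p$ be the polynomial from \Cref{thm:cert-mso}, which bounds the length of each certificate copy by $p(M)$. Renaming adds only a logarithmic factor for variable indices, which we can absorb into $p$. There are at most $\abs{\varphi}$ occurrences of $\SP$ atoms, and all other symbols of $\varphi$ are copied unchanged. Hence $\abs{\tau_M(\varphi)}\leq\abs{\varphi}+\abs{\varphi}\,p(M)\leq q(\abs{\varphi}+M)$ for the polynomial $q(z):=z+z\,p(z)$, taking $p$ monotone.
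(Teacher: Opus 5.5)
Your proof is correct and follows the same route as the paper: substitute fresh copies of $\Cert_M$ for the $\SP$ atoms, prove equivalence by the standard structural induction (with the assignment varying), and bound the length using the polynomial certificate length from \Cref{thm:cert-mso}. One minor correction: the logarithmic cost of fresh variable indices depends on $\abs{\varphi}$ as well as on $M$, so it has to be absorbed into $q$ rather than $p$; a suitable polynomial $q(\abs{\varphi}+M)$ still bounds the length.
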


For the incidence encoding, we use the same certificate, but we require its newly introduced element variables to range over vertices and its set variables to range over subsets of $V(G)$.
We obtain the same translation from $\CMSOtwoSP$ to $\CMSOtwo$, with the same equivalence and polynomial bound on the formula length.

\section{Expressing guarded equidistance in MSO}
\label{sec:eqgc-in-mso}

We now turn to the guarded-equidistance predicate and prove the following theorem, which we recall from the introduction.

\guardedcertificate*

As in the shortest-path proof, we first describe the property by an integer potential, then replace its values by finitely many labels, and finally express the resulting conditions in MSO.
This time, we compare two arcs of a cycle by following them in the same direction around the cycle and subtracting their edge contributions.
We show that the sum of these differences around the cycle is zero exactly when the two arcs have the same length.

\subsection{Equal arc lengths as integer potentials}

We consider a vertex set $C$ such that $G[C]$ is a cycle.
An $a$-$b$ \emph{arc} of $C$ is a path subgraph of~$G[C]$ with endpoints $a$ and $b$, and when $a=b$, we allow the \emph{stationary arc} consisting only of $a$, whose length is zero.
For now, we view an arc as a path subgraph, so we allow either of the two arcs between distinct endpoints, even when its vertex set does not induce a path (this can happen when the endpoints are adjacent on the cycle and we choose the longer arc).
We let $P$ be an $s$-$t$ arc and~$Q$ a $u$-$v$ arc, where the arcs may overlap and their endpoints need not be distinct.

Here, we do not fix a reference orientation as in the previous section.
By an \emph{oriented edge}~$(r,z)$, we simply mean an ordered pair of adjacent vertices, which specifies that we traverse the edge from $r$ to $z$.
Thus, for every undirected edge $rz$, we consider both $(r,z)$ and $(z,r)$, and the signed increments defined below change sign when we reverse this order.
We define $\varepsilon_P(r,z)$ to be $1$ if the traversal of $P$ from $s$ to $t$ uses the edge from $r$ to $z$, to be $-1$ if it uses that edge in the opposite direction, and to be $0$ if it does not use the edge.
We define $\varepsilon_Q$ in the same way, using the traversal of $Q$ from $u$ to $v$, and all these values are zero for a stationary arc.
To compare the traversal of $P$ with either direction of traversal of $Q$, we put
\begin{align*}
  \beta^+_{P,Q}(r,z):=\varepsilon_P(r,z)-\varepsilon_Q(r,z), \quad \text{ and } \quad
  \beta^-_{P,Q}(r,z):=\varepsilon_P(r,z)+\varepsilon_Q(r,z).
\end{align*}
Here the superscript $+$ means that we keep the traversal of $Q$, while the superscript $-$ means that we reverse it before subtracting its contribution.
Reversing the traversal replaces $\varepsilon_Q$ by~$-\varepsilon_Q$, which explains why we use addition in the second formula.
Both functions take values in $\{-2,-1,0,1,2\}$, and reversing the oriented edge changes the sign of the value.

To construct the integer vertex labels, we start with label zero at one vertex and then add the required change at each edge as we go around the cycle.
These labels are consistent at the starting vertex precisely when the sum of the differences is zero, since this is the value that we obtain when we return to that vertex.
We apply this observation to the differences $\beta^\eta_{P,Q}$ in the following lemma.

\begin{lemma}\label{lem:eqgc-potential-outline}
\label{lem:eqgc-integer-potential}
\label{cor:eqgc-directed-potential}
The arcs $P$ and $Q$ have equal lengths if and only if there exist $\eta\in\{+,-\}$ and a map $q\colon C\to\ZZ$ such that
\begin{equation}
  q(z)-q(r)=\beta^\eta_{P,Q}(r,z)
  \qquad\text{for every oriented edge $(r,z)$ of $G[C]$}.
  \label{eq:eqgc-gradient-outline}
\end{equation}
If the lengths agree, we can choose $\eta$ and $q$ so that $q$ is $1$-Lipschitz on $G[C]$.
\end{lemma}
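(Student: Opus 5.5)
Since $G[C]$ is a cycle, a map $q$ with prescribed differences on every oriented edge exists exactly when the signed sum of those differences around the cycle vanishes. This is the one-cycle case of \Cref{lem:integer-cycle-sums}, or directly: start at any vertex and add increments along the cycle. So the plan is to fix a cyclic direction of traversal of $G[C]$ and compute the signed sum $\beta^\eta_{P,Q}(C)$ of each function around the cycle in that direction. By linearity this equals $\varepsilon_P(C)\mp\varepsilon_Q(C)$, where $\varepsilon_P(C)$ is the signed sum of $\varepsilon_P$ around the cycle.

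The key computation is $\varepsilon_P(C)=\pm|E(P)|$. Every edge of $P$ is traversed by $P$ either along the chosen cyclic direction at every step, or against it at every step. Indeed, an arc is a subpath of the cycle, and walking from $s$ to $t$ along it moves consistently in one rotational sense. Edges outside $P$ contribute $0$. Hence $\varepsilon_P(C)=\sigma_P|E(P)|$ with $\sigma_P\in\{+1,-1\}$, and similarly $\varepsilon_Q(C)=\sigma_Q|E(Q)|$. For a stationary arc both sides are $0$ and the sign is arbitrary.

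For the forward direction, suppose $|E(P)|=|E(Q)|$. Choose $\eta=+$ if $\sigma_P=\sigma_Q$ and $\eta=-$ otherwise, so that the cycle sum becomes $\sigma_P|E(P)|-\sigma_P|E(Q)|=0$. Define $q$ by integrating. For the converse, a zero cycle sum for some $\eta$ gives $\sigma_P|E(P)|=\pm\sigma_Q|E(Q)|$, and taking absolute values yields equal lengths. The main point to handle carefully is the Lipschitz refinement. The values of $\beta$ lie in $\{-2,\dots,2\}$, and a value $\pm2$ occurs exactly on edges used by both arcs where the chosen $\eta$ makes them add. With the sign choice above, however, on a common edge the oriented contributions of $P$ and of $Q$ (possibly reversed) point in the same rotational sense and therefore cancel in $\beta^\eta$, giving $0$. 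On an edge used by only one arc the value is $\pm1$. So every increment lies in $\{-1,0,1\}$ and $q$ is $1$-Lipschitz on $G[C]$. I expect the obstacle to be presenting this sign bookkeeping cleanly, in particular checking that the single choice of $\eta$ both zeroes the cycle sum and cancels on overlaps. The stationary and degenerate cases are handled by taking $\eta=+$ and $q\equiv$ the integrated values directly.
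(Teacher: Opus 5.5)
Your proposal is correct and follows essentially the same route as the paper: the paper also picks $\eta$ so that $P$ and the selected traversal of $Q$ run in the same rotational direction, observes that their contributions cancel on shared edges (so the increments lie in $\{-1,0,1\}$ and sum to $|E(P)|-|E(Q)|=0$), integrates around the cycle, and for the converse telescopes to $0=\sigma_P|E(P)|-\sigma_Q|E(Q)|$. Your bookkeeping via $\varepsilon_P(C)=\sigma_P|E(P)|$ and linearity is just a more explicit packaging of that same computation.
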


\begin{proof}
We first suppose that the two arcs have the same positive length, and we choose $\eta$ so that the traversal of $P$ and the selected traversal of $Q$ follow the same direction around the cycle.
We then traverse the whole cycle in this direction.
The contributions of the two arcs cancel on their common edges, so the prescribed differences belong to $\{-1,0,1\}$ and have sum $|E(P)|-|E(Q)|=0$.
We therefore obtain consistent integer labels by starting with zero at one vertex and adding these differences around the cycle.
The labels satisfy~\eqref{eq:eqgc-gradient-outline}, and the absolute difference across every edge is at most one, so the resulting map is $1$-Lipschitz.
If both arcs are stationary, all prescribed differences are zero, and we use a constant map instead.

For the converse, we suppose that $q$ satisfies~\eqref{eq:eqgc-gradient-outline} and choose one of the two directions around the cycle.
We write $\vec E(C)$ for the set of edges ordered in this cyclic direction, and we use $\sigma_P,\sigma_Q\in\{-1,1\}$ to indicate whether each arc follows or opposes this direction, using the traversal of $Q$ selected by $\eta$.
For a stationary arc, we may choose either sign, since its length is zero.
When we sum the potential differences around the cycle, they cancel, and we obtain
\begin{align*}
  0=\sum_{(r,z)\in\vec E(C)}\beta^\eta_{P,Q}(r,z)
   =\sigma_P|E(P)|-\sigma_Q|E(Q)|.
\end{align*}

The two signed lengths are therefore equal, and taking their absolute values gives \mbox{$|E(P)|=|E(Q)|$}.
\end{proof}

\subsection{Finite cyclic certificates}

We again use residues modulo an integer to represent the potential by finitely many label classes.
For the lifting argument from \Cref{sec:guarded-model-checking}, we need a labeling of the whole graph, although the prescribed differences concern only the cycle.
This is where we use isometry: a potential that is $1$-Lipschitz on the cycle also satisfies $|q(a)-q(b)|\leq d_G(a,b)$ for vertices $a,b$ of the cycle, so we can extend it to the rest of the graph.
In the finite certificate, we allow edge differences of absolute value at most two, since these include all possible values of $\beta^+$ and $\beta^-$.

\begin{lemma}\label{lem:eqgc-finite-certificate-outline}
\label{lem:eqgc-modular-characterization}
Suppose that $G[C]$ is an isometric cycle, and let $P$ be an $s$-$t$ arc and $Q$ a $u$-$v$ arc of~$C$.
Let $N$ be an integer with $N>4$ and $N>2\cbl(G)$.
Then $|E(P)|=|E(Q)|$ if and only if there exist $\eta\in\{+,-\}$ and a map $c\colon V(G)\to\mathbb Z_N$ satisfying the following conditions.
\begin{enumerate}[label=(\roman*)]
  \item For every oriented edge $(r,z)$ of $G$, the residue $c(z)-c(r)$ has an integer representative in $\{-2,-1,0,1,2\}$.
  \item For every oriented edge $(r,z)$ of $G[C]$, we have $c(z)-c(r)\equiv\beta^\eta_{P,Q}(r,z)\pmod N$.
\end{enumerate}
\end{lemma}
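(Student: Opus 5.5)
For the forward direction we assume $|E(P)|=|E(Q)|$ and apply \Cref{lem:eqgc-integer-potential}. This gives $\eta\in\{+,-\}$ and a $1$-Lipschitz map $q\colon C\to\ZZ$ on $G[C]$ whose differences on the oriented cycle edges are $\beta^\eta_{P,Q}$. Since $G[C]$ is isometric, $q$ is $1$-Lipschitz on $C$ with respect to $d_G$, that is, $|q(a)-q(b)|\leq d_C(a,b)=d_G(a,b)$ for all $a,b\in C$.

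We extend $q$ to all of $V(G)$ by the McShane-type formula
\[
  \lambda(w):=\min_{a\in C}\bigl(q(a)+d_G(a,w)\bigr),
\]
working componentwise. On vertices outside the component of $C$ we take a constant map. The extension agrees with $q$ on $C$ by the Lipschitz bound, and it is $1$-Lipschitz on every edge of $G$. Reducing $\lambda$ modulo $N$ gives $c$. Condition~(i) holds because every edge difference lies in $\{-1,0,1\}$. Condition~(ii) holds because on cycle edges the difference is exactly $\beta^\eta_{P,Q}$.

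For the converse, we take $c$ and $\eta$ satisfying (i) and (ii) and lift $c$ to an integer potential, repeating the proof of \Cref{lem:cyclic-lifting}. Since $N>4$, the residues $-2,\dots,2$ are distinct modulo $N$. Each oriented edge $(r,z)$ therefore receives a unique increment $\delta\in\{-2,\dots,2\}$ congruent to $c(z)-c(r)$, and reversing the edge negates this increment. We then choose a binary cycle basis with cycles of length at most $\cbl(G)$. Around each basis cycle the signed sum of $\delta$ is divisible by $N$, because the differences telescope modulo $N$. Its absolute value is at most $2\cbl(G)<N$, so the sum is zero. \Cref{lem:integer-cycle-sums} now yields $\lambda\colon V(G)\to\ZZ$ with $\lambda(z)-\lambda(r)=\delta(r,z)$ on every edge.

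On the edges of $G[C]$, condition~(ii) together with $\beta^\eta_{P,Q}\in\{-2,\dots,2\}$ and uniqueness of the representative gives $\delta=\beta^\eta_{P,Q}$. The restriction of $\lambda$ to $C$ therefore satisfies \eqref{eq:eqgc-gradient-outline}. The converse direction of \Cref{lem:eqgc-integer-potential}, which needs no Lipschitz condition, then gives $|E(P)|=|E(Q)|$.

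The main obstacle is the forward extension. We must check that the extended map is $1$-Lipschitz and still agrees with $q$ on $C$; this is exactly where isometry of $C$ is used. The converse is a routine adaptation of the lifting argument, and the doubled bound $2\cbl(G)$ accounts for increments of size two.
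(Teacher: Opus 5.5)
Your proposal is correct and follows essentially the same route as the paper. In the forward direction, you extend the $1$-Lipschitz potential from \Cref{lem:eqgc-potential-outline} to all of $G$ by the McShane-type minimum formula, using isometry to keep agreement on $C$. In the converse, you lift $c$ to an integer potential via a short binary cycle basis, the bound $2\cbl(G)<N$, and \Cref{lem:integer-cycle-sums}, and then identify the increments on $G[C]$ with $\beta^\eta_{P,Q}$ by uniqueness of representatives modulo $N>4$.
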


\begin{proof}
For the forward implication, we choose $\eta$ and the $1$-Lipschitz potential $q$ from \Cref{lem:eqgc-potential-outline}.
Since the cycle is isometric, we have $|q(a)-q(b)|\leq d_{G[C]}(a,b)=d_G(a,b)$ for all $a,b\in C$.
We write~$H$ for the connected component containing $C$ and extend $q$ to its vertices by putting
\begin{equation}
  \widehat q(z):=\min_{a\in C}\bigl(q(a)+d_G(a,z)\bigr)
  \qquad\text{for }z\in V(H).
  \label{eq:discrete-mcshane}
\end{equation}
The minimum is a finite integer, since $C$ is nonempty and all the distances in this expression are finite.
For $z\in C$, the choice $a=z$ gives $\widehat q(z)\leq q(z)$, while the inequality $q(z)\leq q(a)+d_G(a,z)$ for every $a\in C$ gives the reverse bound.
Thus the extension agrees with $q$ on $C$.

To check the Lipschitz condition, we consider an edge $rz$ of $H$ and choose a vertex $a\in C$ that gives the minimum for $z$.
We then have
\[
  \widehat q(r)\leq q(a)+d_G(a,r)
  \leq q(a)+d_G(a,z)+1
  =\widehat q(z)+1.
\]
By interchanging $r$ and $z$, we obtain the reverse inequality, so $\widehat q$ is $1$-Lipschitz on $H$.
We give all vertices outside $H$ label zero and reduce the resulting integer labels modulo $N$.
The labeling satisfies condition~(i), and it satisfies condition~(ii) because we have preserved the potential on~$C$.

For the reverse implication, we fix a reference orientation of every edge, as in \Cref{lem:integer-cycle-sums}.
For an edge $e$ whose reference orientation is $r\to z$, we let $\delta(e)$ be the representative of $c(z)-c(r)$ in $\{-2,-1,0,1,2\}$, which is unique because $N>4$.
We choose a binary cycle basis $\mathcal B$ whose cycles have length at most $\cbl(G)$.
Around each $B\in\mathcal B$, the label differences sum to zero modulo $N$, so the signed sum of $\delta$ is divisible by $N$.
Its absolute value is at most $2|E(B)|\leq2\cbl(G)<N$, and hence this sum must be zero.
By \Cref{lem:integer-cycle-sums}, we obtain an integer potential $p$ with the prescribed differences on every edge of $G$.
On an edge of $G[C]$, condition~(ii) forces this difference to equal~$\beta^\eta_{P,Q}$, since both values belong to $\{-2,-1,0,1,2\}$ and are congruent modulo $N$.
The restriction of~$p$ to $C$ therefore satisfies~\eqref{eq:eqgc-gradient-outline}, and \Cref{lem:eqgc-potential-outline} gives equality of the arc lengths.
\end{proof}

\subsection{Expressing the finite certificate in plain MSO}

We fix $M\geq3$ and put $N:=2M+1$, so whenever $M>\cbl(G)$, the assumptions on $N$ in \Cref{lem:eqgc-finite-certificate-outline} hold.
We use the path and partition formulas from \Cref{sec:SP-formula} and describe the remaining conditions below.

We write $\mathsf{Cycle}(C)$ for the formula saying that $C$ is nonempty, $G[C]$ is connected, and every vertex of $C$ has exactly two neighbors in $C$.
We also put
$\mathsf{Arc}(C,P,a,b):=\Path(P,a,b)\wedge P\subseteq C$.
Note that the only arcs that this representation omits are the longer arcs between adjacent cycle vertices, whose vertex sets induce the whole cycle, and we will not need those arcs in the certificate.

We write $\mathsf{Isometric}_M(C)$ for the formula requiring $\mathsf{Cycle}(C)$ and, for every two distinct vertices $a,b\in C$, a set $P$ satisfying $\mathsf{Arc}(C,P,a,b)\wedge\Cert_M(P,a,b)$.
If $M>\cbl(G)$, then \Cref{thm:cert-mso} shows that every witnessing path is shortest in $G$, so we obtain $d_{G[C]}(a,b)=d_G(a,b)$.
Conversely, when $G[C]$ is isometric, a shortest arc between $a$ and $b$ is also shortest in $G$ and provides the required witness.
Thus $\mathsf{Isometric}_M(C)$ expresses exactly that $G[C]$ is an isometric cycle.

We next describe the direction in which a path traverses one of its edges, using the cut that remains when we remove that edge.
For a path set $P$ with endpoints $a,b$, we let $\mathsf{Forw}(P,a,b,r,z)$ require $r,z\in P$, $E(r,z)$, and a set $Z\subseteq P$ containing $a,r$ but neither $b$ nor $z$, such that $rz$ is the only edge of $G[P]$ between $Z$ and $P\setminus Z$.
All these requirements have constant-length MSO formulas.
If the traversal from $a$ to $b$ uses the edge from $r$ to $z$, we take $Z$ to be the vertices of the path prefix ending at $r$.
Conversely, any such set $Z$ forces the traversal to cross from $r$ to $z$, because this is its only edge to the part containing $b$.
For a stationary path, the formula is false, since it would require its only endpoint to belong to $Z$ and not belong to $Z$ at the same time.

For an $s$-$t$ path $P$, we can therefore express $\varepsilon_P(r,z)=1$ by $\mathsf{Forw}(P,s,t,r,z)$ and $\varepsilon_P(r,z)=-1$ by $\mathsf{Forw}(P,s,t,z,r)$, while the value zero is expressed by the negation of both of the conditions.
We can now form, for every $j\in\{-2,-1,0,1,2\}$ and $\eta\in\{+,-\}$, a constant-length formula
\[
  \mathsf{Beta}^{\eta}_j(P,s,t,Q,u,v,r,z)
\]
expressing $\beta^\eta_{P,Q}(r,z)=j$.
For $\eta=+$, we take the disjunction over the pairs $h,k\in\{-1,0,1\}$ with $h-k=j$, requiring $\varepsilon_P(r,z)=h$ and $\varepsilon_Q(r,z)=k$.
For $\eta=-$, we use the pairs with $h+k=j$ instead, so in either case we have a fixed finite disjunction of the formulas just described.

As in the shortest-path certificate, we use sets $\bar X=(X_0,\ldots,X_{N-1})$ to partition all vertices into cyclic label classes.
For $j\in\{-2,-1,0,1,2\}$, we put
\begin{align*}
  \mathsf{Diff}^{\bar X}_j(r,z)
  :=\bigvee_{i=0}^{N-1}\bigl(r\in X_i\wedge z\in X_{[i+j]_N}\bigr),
\end{align*}
where $[i+j]_N$ is the representative of $i+j$ modulo $N$ in $\{0,\ldots,N-1\}$.
Under the partition condition, this formula says that the label of $z$ minus the label of $r$ is congruent to $j$ modulo $N$.

For each fixed choice of $\eta\in\{+,-\}$, we let $\mathsf{Wind}^{\eta}_N(C,P,s,t,Q,u,v)$ existentially quantify the label sets $\bar X$ and require $\Part_N(\bar X)$ together with the following conditions.
For every ordered pair of adjacent vertices $r,z$ of $G$, we require at least one of the five formulas $\mathsf{Diff}^{\bar X}_j(r,z)$ to hold.
If $r,z\in C$, we additionally require at least one of the five conjunctions
\[
  \mathsf{Beta}^{\eta}_j(P,s,t,Q,u,v,r,z)
  \wedge\mathsf{Diff}^{\bar X}_j(r,z)
\]
to hold.
We thus bound the differences on every edge of the graph and prescribe their values on the cycle edges, checking both directions of each edge.
These two directions give compatible requirements, since reversing an edge changes the sign of its prescribed difference.

We define $\mathsf{EqLen}^{C}_N(P,s,t,Q,u,v)$ as the disjunction
\[
  \mathsf{Wind}^{+}_N(C,P,s,t,Q,u,v)
  \vee\mathsf{Wind}^{-}_N(C,P,s,t,Q,u,v).
\]
In this way, we make one choice of $\eta$ for the whole cycle, as required by the finite certificate.
For arcs of an isometric cycle and $M>\cbl(G)$, \Cref{lem:eqgc-finite-certificate-outline} shows that this formula holds exactly when the arcs have equal lengths.
We do not need to choose a cyclic orientation in the formula, since the endpoints of the paths determine their traversals and the two alternatives account for reversing $Q$.

We can now choose shortest arcs for the two distances that we want to compare and require their lengths to agree.
We define
\begin{align*}
  \mathsf{CertEqGC}_M(C,x,y,u,v):={}&
    \mathsf{Isometric}_M(C)
    \wedge x\in C\wedge y\in C\wedge u\in C\wedge v\in C\\
    &{}\wedge\exists P\exists Q\Bigl(
      \mathsf{Arc}(C,P,x,y)\wedge\Cert_M(P,x,y)\\
    &\hspace{23mm}{}\wedge\mathsf{Arc}(C,Q,u,v)\wedge\Cert_M(Q,u,v)\\
    &\hspace{23mm}{}\wedge\mathsf{EqLen}^{C}_N(P,x,y,Q,u,v)
    \Bigr).
\end{align*}

\begin{proof}[Proof of \Cref{thm:eqgc-cert-mso}]
We assume that $M>\cbl(G)$ and first consider an assignment satisfying the certificate.
Then $G[C]$ is an isometric cycle containing the four distinguished vertices, and by \Cref{thm:cert-mso}, the witnessing paths have lengths $d_G(x,y)$ and $d_G(u,v)$.
The final condition says that these lengths agree, so $\EqGC(C,x,y,u,v)$ holds.

Conversely, suppose that $\EqGC(C,x,y,u,v)$ holds.
We choose shortest $x$-$y$ and $u$-$v$ arcs in the cycle, using a stationary arc whenever the two endpoints coincide.
The arcs have equal lengths and, by isometry, are shortest in $G$, so their vertex sets satisfy the path certificates.
By \Cref{lem:eqgc-finite-certificate-outline}, we also have the label classes required by $\mathsf{EqLen}^{C}_N$, and hence all conditions of the final certificate hold.

The formulas for cycles, arcs, and traversal directions have constant length.
The partition condition uses $O(N^2)$ pairwise-disjointness conditions and a covering condition of length $O(N)$, while each modular-difference formula uses $O(N)$ alternatives.
Only a constant number of copies of $\Cert_M$ occur in the expanded certificate, so, since $N=2M+1$, its complete length is polynomial in $M$.
When we expand the abbreviations, we choose fresh bound variables each time to avoid capturing variables from the surrounding formula.
\end{proof}

We now replace both kinds of metric atoms inside larger formulas.
For $\varphi\in\CMSOOCTW$, we obtain $\sigma_M(\varphi)\in\CMSO$ by replacing each $\SP$ atom by a fresh copy of $\Cert_M$ and each $\EqGC$ atom by a fresh copy of $\mathsf{CertEqGC}_M$.
We leave all ordinary atoms, including modulo-counting atoms and unary input predicates, unchanged, and we apply the translation recursively through Boolean connectives and quantifiers, again yielding the standard lemma. 

\begin{lemma}\label{lem:guarded-translation}
Let $G$ be a graph and let $M\geq3$ satisfy $M>\cbl(G)$.
For every $\CMSOOCTW$ formula $\varphi$ and every assignment $\alpha$ of its free variables, we have $G,\alpha\models\varphi$ if and only if $G,\alpha\models\sigma_M(\varphi)$.
Moreover, there is a fixed polynomial $q$ such that $\abs{\sigma_M(\varphi)}\leq q(\abs{\varphi}+M)$.
\end{lemma}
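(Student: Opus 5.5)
We prove \Cref{lem:guarded-translation} by structural induction on $\varphi$, simultaneously for all assignments $\alpha$, exactly as for \Cref{lem:translation}. The substantive content sits entirely in the atomic cases, which are already handled. For an atom $\SP(P,s,t)$, the translation is a fresh copy of $\Cert_M(P,s,t)$, and \Cref{thm:cert-mso} gives $G,\alpha\models\SP(P,s,t)$ if and only if $G,\alpha\models\Cert_M(P,s,t)$, because $M>\cbl(G)$. For an atom $\EqGC(C,x,y,u,v)$, we use \Cref{thm:eqgc-cert-mso} with the same $M$ in the same way; its proof already fixes $N=2M+1$, so the hypotheses of \Cref{lem:eqgc-finite-certificate-outline} hold. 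Ordinary atoms, namely equality, adjacency, membership, unary input predicates and modulo-counting atoms, are left unchanged, so these cases are trivial.

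For the inductive steps, Boolean connectives commute with $\sigma_M$, so they follow directly from the induction hypothesis. For a quantifier $\exists X\,\psi$ or $\exists x\,\psi$, the translation is $\exists X\,\sigma_M(\psi)$, and we apply the hypothesis to every extended assignment $\alpha[X\mapsto S]$. The graph $G$, and hence the condition $M>\cbl(G)$, is the same for all assignments, so the hypothesis applies uniformly. The one point needing care is variable capture. Each inserted certificate quantifies only over fresh variables, namely the label sets $\bar C$ and $\bar X$ and the auxiliary $P,Q,Z$. Therefore the free variables of each copy are exactly the arguments of the replaced atom, and the satisfaction of the copy under $\alpha$ depends only on $\alpha$ restricted to those arguments. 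This is the step I would state explicitly, since it is the only place where the syntactic substitution could go wrong. After that the proof is routine.

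For the length bound, each atom occurrence of $\varphi$ is replaced by a formula of length at most $p(M)$ for a fixed polynomial $p$. This follows from the length statements in \Cref{thm:cert-mso,thm:eqgc-cert-mso}, taking $p$ as the maximum of the two bounds. Renaming variables to fresh ones costs at most a logarithmic factor in the index encoding, which is absorbed into the polynomial. All other symbols are copied unchanged. Hence $\abs{\sigma_M(\varphi)}\leq\abs{\varphi}\cdot p(M)+\abs{\varphi}$, which is bounded by a fixed polynomial $q(\abs{\varphi}+M)$.
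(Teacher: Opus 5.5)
Your proposal is correct and is essentially the paper's argument. The paper also obtains the lemma by the standard structural induction, with the atomic cases supplied by \Cref{thm:cert-mso,thm:eqgc-cert-mso} (the latter with $N=2M+1$) and the length bound coming from the polynomial size of the certificates. Your explicit treatment of variable capture is a welcome addition, but the fresh renaming must also cover the first-order bound variables of the certificates, such as those in $\Orient_M$ and $\mathsf{Isometric}_M$, not only the set variables you list.
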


For incidence structures, we again restrict the auxiliary element variables to vertices and the auxiliary set variables to subsets of $V(G)$.
The same statements then hold for the translation from $\CMSOCTW$ to $\CMSOtwo$.

\section{Model checking with shortest paths and guarded equidistance}
\label{sec:first-meta-theorem}

We now combine the two certificates with the model-checking, optimization, and counting theorems from \Cref{sec:mso} to prove our first main theorem.
We restate it here for convenience.

\strongtheorem*

We use the optimization and counting conventions from \Cref{def:mso-optimization,def:mso-counting}, so weights are signed integers given in binary and we measure the running time in the total binary input length.
We do not require a tree decomposition or a cliquewidth expression as part of the input.

\begin{proof}
By Horton's algorithm and \Cref{lem:mcb-bottleneck}, we can compute $\cbl(G)$ in polynomial time~\cite{Horton1987}.
We put $M:=\max\set{3,\cbl(G)+1}$ and form the translated formula $\psi:=\sigma_M(\varphi)$.
By \Cref{lem:guarded-translation}, the formula $\psi$ belongs to $\CMSOtwo$ in part~\textnormal{(i)} and to $\CMSO$ in part~\textnormal{(ii)}, has length polynomial in $\abs{\varphi}+M$, and agrees with $\varphi$ on every assignment.
We apply \Cref{thm:courcelle-treewidth,thm:courcelle-cliquewidth}, respectively, and obtain fixed-parameter algorithms with parameters $\abs{\varphi}+\tw(G)+\cbl(G)$ and $\abs{\varphi}+\cw(G)+\cbl(G)$.
For part~\textnormal{(i)}, the bounds $\tw(G)\leq\ctw(G)$ and $\cbl(G)\leq2\ctw(G)$ from the definition of bag-connected treewidth and~\eqref{eq:cycle-length-bounds} also give fixed-parameter tractability with parameter $\abs{\varphi}+\ctw(G)$.
For part~\textnormal{(ii)}, we simply use the definition $\gcw(G)=\cw(G)+\cbl(G)$, so the second parameter is $\abs{\varphi}+\gcw(G)$.

The translation keeps the same free variables and exactly the same satisfying assignments, so it preserves every additive objective and the number of satisfying or optimum assignments.
The optimization and exact counting statements therefore follow from the same two theorems, including for signed binary weights.
\end{proof}

\section{Dynamic programming on classes of bounded bag-connected treewidth}
\label{sec:dist-meta}

In this section, we allow the distance bounds to be part of the input and prove the packing-covering theorem, \Cref{thm:dist-intro}, together with the results for one-cover and nearest-label formulas, \Cref{thm:one-cover,thm:nearest-label-intro}.
We first recall the three kinds of formulas, and then give one dynamic program that handles all of them.
As explained in the introduction, our main observation is that, on a connected bag, a distance function has only a small number of possible restrictions.
We store these restrictions together with the logical information needed to evaluate the $\CMSOCTW$ side formula.

The covering fragments were suggested by ChatGPT 6.0 Astra and were revised and verified by the authors; see the \hyperref[sec:ai-methodology]{AI methodology statement} at the end of the paper.

We start by recalling the definition of packing-covering templates.
Recall that $\Sep_d(X,Y)$ requires distance at least $d$ between every pair of distinct vertices $x\in X$ and $y\in Y$, while $\Cover_r(U,Y)$ requires every vertex of~$U$ to lie within distance~$r$ of~$Y$.

\begin{definition}[Packing-covering template]
\label{def:packing-covering-template}
Fix vertex-set variables $X_1,\ldots,X_p$ and a formula $\psi(X_1,\ldots,X_p)\in\CMSOCTW$ over a fixed finite expansion of the graph by unary input predicates.
A \emph{packing-covering template} is an expression
\[
  \Phi(\bar X;\bar\delta,\bar\rho)
  :=\psi(\bar X)
  \wedge\bigwedge_{a\in A}\Sep_{\delta_a}(X_{i_a},X_{j_a})
  \wedge\bigwedge_{b\in B}\Cover_{\rho_b}(U_b,X_{\ell_b}).
\]

$A$ and $B$ are finite index sets, and $i_a,j_a,\ell_b\in\{1,\ldots,p\}$.
Each target set $U_b$ is either $V(G)$ or one of the unary input sets $L_1,\ldots,L_q$. 
Finally, $\bar\delta=(\delta_a)_{a\in A}$ and $\bar\rho=(\rho_b)_{b\in B}$ are tuples of numerical parameter symbols whose values are nonnegative integers given with the input.
\end{definition}

The side formula may impose any fixed $\CMSOCTW$ condition, so we can require selected sets to induce paths, forests, connected subgraphs, or shortest paths.
\medskip

We next recall the nearest-label fragment.
\begin{definition}[Single-interface nearest-label formulas]
\label{def:nearest-label-fragment}
We assume that the unary input predicates $L_1,\ldots,L_q$ partition $V(G)$.
For a nonempty set $S$, the atom
\[
  \NearLab_i(v;S)
  \quad\Longleftrightarrow\quad
  \exists u\in S\cap L_i\colon d_G(v,u)=d_G(v,S)
\]
says that at least one closest selected vertex has label $i$, and we keep every label that attains the minimum when there is a tie.
A \emph{single-interface nearest-label formula} $\theta(S,\bar X)$ is a $\CMSOCTW$ formula in which we also allow atoms $\NearLab_i(v;S)$, all referring to the same distinguished free set $S$.
We require $S\neq\varnothing$, and no quantifier may bind this distinguished variable.
The additional free sets $\bar X$ are optional, and when they are absent we write $\theta(S)$.
For optimization and counting, we consider assignments to all displayed free sets.
\end{definition}

Finally, a \emph{one-cover formula} has the form
\[
  \Phi(\bar X;\rho)=\psi(\bar X)\wedge\Cover_\rho(U,S),
\]
where $\psi\in\CMSOCTW$, the target set $U$ is given by the input, and $S$ is one distinguished free set among $\bar X$.
Here the side formula, its free variables, and the radius $\rho$ may all vary with the input.
We count the numerical symbol in $|\Phi|$, but we do not count the binary value assigned to it.
Thus this fragment allows a varying side formula, one covering set, and no separation atoms.

We first give the proof for connected graphs.
Then every vertex has distance at most $n-1$ from every nonempty selected set, which lets us use finite integer potentials.
We then explain the simple changes needed for disconnected graphs after the proof.

Recall that for additive optimization, each free vertex-set variable $X_i$ has a weight function \mbox{$w_i:V(G)\to\mathbb Z$}, with the weights given in signed binary notation.
We minimize or maximize $\sum_i\sum_{v\in X_i}w_i(v)$, and we write $N$ for the total binary input length, including the formula, numerical bounds, and any weights.

\packingcoveringtheorem*

\onecovertheorem*

\nearestlabeltheorem*

\subsection{Distance profiles and local certificates}

We first prove a profile bound and local distance conditions that we will use in the joint proof of \Cref{thm:packing-covering,thm:nearest-label-optimization,thm:one-cover}.
The profile bound limits the number of table rows, and the local conditions let us check covering and nearest labels while we process the decomposition.

For $A\subseteq V(G)$ and an integer $D\geq0$, we put $\pi_A^D(v):=\min\{D,d_G(v,A)\}$, with $\pi_\varnothing^D(v)=D$.
We write $\pi_A^D|_B$ for the restriction of this function to a bag $B$.
Note that its value changes by at most one across each edge.

\begin{lemma}[Distance-profile compression]
\label{lem:dist-profile-compression}
If $G[B]$ is connected and $b:=|B|\geq1$, then
\begin{align*}
  \bigl|\{\pi_A^D|_B:A\subseteq V(G)\}\bigr|
  \leq (D+1)3^{b-1}.
\end{align*}
The same bound holds for all maps $p:B\to\{0,\ldots,D\}$ whose values differ by at most one across every edge of~$G[B]$.
These maps can be enumerated in time $(D+1)3^{b-1}(b+\log(D+2))^{O(1)}$.
\end{lemma}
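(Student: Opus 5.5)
We reduce the first claim to the second and prove the second by encoding a map through a spanning tree. For every $A\subseteq V(G)$, the truncated profile $\pi_A^D|_B$ takes values in $\{0,\ldots,D\}$. It differs by at most one across every edge of $G$, hence across every edge of $G[B]$. Indeed, if $A=\varnothing$ it is constant $D$. Otherwise, for an edge $uv$ the triangle inequality gives $|d_G(u,A)-d_G(v,A)|\leq1$ (both sides are $\infty$ simultaneously when $u,v$ lie outside the component of $A$, in which case both truncated values are $D$), and truncation $x\mapsto\min\{D,x\}$ is $1$-Lipschitz. So it suffices to bound the number of maps $p\colon B\to\{0,\ldots,D\}$ that are $1$-Lipschitz on $G[B]$.

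Since $G[B]$ is connected, we fix a spanning tree $T_B$ of $G[B]$, a root $r\in B$, and an ordering $r=v_1,\ldots,v_b$ of $B$ such that each $v_j$ with $j\geq2$ has its parent $u_j$ in $T_B$ among $v_1,\ldots,v_{j-1}$ (for instance, a BFS order). To each admissible $p$ we associate the code $\bigl(p(r),(p(v_j)-p(u_j))_{j=2}^{b}\bigr)\in\{0,\ldots,D\}\times\{-1,0,1\}^{b-1}$. The code is injective: by induction along the ordering, $p(v_j)=p(u_j)+(p(v_j)-p(u_j))$ is determined. This gives the bound $(D+1)3^{b-1}$, which applies to both sets of maps.

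For enumeration, we compute $T_B$ and the ordering by BFS in time polynomial in $b$. We then iterate over all $(D+1)3^{b-1}$ codes. For each code, we reconstruct the candidate $p$ by adding increments along the ordering, discard it if some value leaves $\{0,\ldots,D\}$ or some edge of $G[B]$ (not only tree edges) has difference larger than one, and output it otherwise. Each step manipulates $b$ integers of $O(\log(D+2))$ bits and checks at most $\binom b2$ edges, so the time per code is $(b+\log(D+2))^{O(1)}$. Injectivity guarantees that no map is output twice.

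The main obstacle is modest: we must ensure that the enumeration outputs exactly the Lipschitz maps and not merely an overcount. This is handled by the explicit filter on all edges of $G[B]$. We must also handle the $\infty$/empty-$A$ cases in the profile claim, which the truncation convention $\pi^D_\varnothing\equiv D$ resolves.
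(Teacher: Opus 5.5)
Your proof is correct and follows essentially the same route as the paper. Both fix a rooted spanning tree of $G[B]$, encode a map by its root value and its $\{-1,0,1\}$ increments along tree edges, and enumerate all codes while discarding those that leave $\{0,\ldots,D\}$ or violate the Lipschitz condition on non-tree bag edges. You additionally verify that truncated profiles are $1$-Lipschitz, including the $\infty$ and $A=\varnothing$ cases, which the paper only remarks on just before the lemma.
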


\begin{proof}
We fix a rooted spanning tree of $G[B]$.
Once we choose the value at the root and the change along every tree edge directed away from it, we know the value at every vertex of the bag.
We have $D+1$ choices for the root value and three choices, $-1$, $0$, and $1$, for each of the~$b-1$ tree-edge increments.
We can enumerate these choices and discard a map if a value leaves the allowed range or if the map violates the Lipschitz condition on another bag edge.
\end{proof}

For covering and nearest labels, we need exact distances.
We can recognize these distances by local conditions: the labels change by at most one across an edge, and every positive label has a neighbor whose label is one smaller.

\begin{lemma}[Exact distance potentials]
\label{lem:exact-distance-potential}
Let $\lambda:V(G)\to\{0,\ldots,n-1\}$ and put $S:=\lambda^{-1}(0)$.
Then $S\neq\varnothing$ and $\lambda(v)=d_G(v,S)$ for every vertex~$v$ if and only if $|\lambda(u)-\lambda(v)|\leq1$ on every edge and every vertex with a positive label has a neighbor whose label is one smaller.
\end{lemma}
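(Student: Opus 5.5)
We prove the two directions separately. Throughout, the graph is assumed connected, as stated just before this subsection, so all distances are finite.

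For the forward direction, suppose $S\neq\varnothing$ and $\lambda(v)=d_G(v,S)$ for all $v$. The Lipschitz condition follows from the triangle inequality. For an edge $uv$ we have $d_G(u,S)\leq d_G(v,S)+1$, and symmetrically, so $|\lambda(u)-\lambda(v)|\leq1$. For the predecessor condition, let $\lambda(v)>0$. Then $v\notin S$, and we take a shortest path from $v$ to a nearest vertex of $S$. Its second vertex $w$ is a neighbour of $v$ with $d_G(w,S)\leq\lambda(v)-1$. The Lipschitz bound gives $d_G(w,S)\geq\lambda(v)-1$, so $\lambda(w)=\lambda(v)-1$.

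For the backward direction, assume the two local conditions hold. First we show $S\neq\varnothing$. Pick a vertex $v$ minimizing $\lambda$; such a vertex exists since $V(G)$ is nonempty. If $\lambda(v)>0$, then $v$ has a neighbour with smaller label, contradicting minimality. Hence $\lambda(v)=0$ and $v\in S$.

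Next, for an arbitrary vertex $v$ we prove $\lambda(v)\leq d_G(v,S)$. Take a shortest path from $v$ to $S$ and sum the per-edge changes, each of absolute value at most one. Since the endpoint in $S$ has label $0$, we get $\lambda(v)\leq 0+d_G(v,S)$.

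For the reverse inequality, start at $v$ and repeatedly step to a neighbour whose label is one smaller, as long as the current label is positive. The predecessor condition guarantees each step exists, and the label strictly decreases, so after exactly $\lambda(v)$ steps we reach a vertex with label $0$, that is, a vertex of $S$. This walk has length $\lambda(v)$, so $d_G(v,S)\leq\lambda(v)$.

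No step is a real obstacle. The only points needing care are nonemptiness of $S$, handled by the minimum-label argument, and the use of connectivity so that every $d_G(v,S)$ is finite. The range bound $\lambda\leq n-1$ plays no role in the argument; it only keeps the labels finite for the later dynamic program.
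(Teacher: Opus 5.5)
Your proof is correct and follows essentially the same route as the paper: the Lipschitz condition gives $\lambda(v)\leq d_G(v,S)$, and following neighbours of decreasing label gives $d_G(v,S)\leq\lambda(v)$. The only cosmetic difference is that you obtain $S\neq\varnothing$ from a vertex of minimum label, whereas the paper reads nonemptiness directly off the descending walk.
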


\begin{proof}
The distances to a nonempty set satisfy both conditions.
Conversely, from any vertex $v$ we can follow neighbors with decreasing labels until we reach $S$, which takes exactly $\lambda(v)$ steps.
This shows that $S$ is nonempty and that $d_G(v,S)\leq\lambda(v)$.
Since labels change by at most one along each edge, every path from~$v$ to a zero-labeled vertex has length at least~$\lambda(v)$.
\end{proof}

We can also check the nearest labels by a local rule.
For a nonempty set $S$, we put $h(v):=d_G(v,S)$ and write $R(v)\subseteq[q]$ for the labels of all closest selected vertices.
Then
\begin{equation}
  R(v)=
  \begin{cases}
    \{i\},&h(v)=0\text{ and }v\in L_i,\\[1mm]
    \displaystyle\bigcup_{\substack{u\in N_G(v)\\h(u)=h(v)-1}}R(u),&h(v)>0.
  \end{cases}
  \label{eq:nearest-label-recurrence}
\end{equation}
This is because a shortest path from~$v$ to a closest selected vertex begins with a neighbor of distance one less.
Conversely, when we start with such a neighbor, we can extend a shortest path from it to~$S$ by one edge and obtain a shortest path from $v$ to $S$.
The union therefore gives exactly the labels of the closest selected vertices, including every label in a tie.
Once we know $h$, this rule determines~$R$ uniquely, starting at distance zero and continuing in increasing order of distance.

\subsection{The dynamic program}

We now give one dynamic program for all three theorems, using the same decomposition and the same method for evaluating the logical formula.
The information that we store about distances depends on whether we consider separation, covering, or nearest labels.

For packing-covering templates, we store separation profiles, and for each distinct covering set we also store proposed distances to that set.
At each vertex of the current bag, we keep a bit indicating whether we have already introduced an edge to a neighbor whose proposed distance is one smaller.

For nearest labels, we also guess, for each vertex, which labels occur among its closest selected vertices.
Instead of the bit, we keep a set of labels that we collect as we introduce edges.
Whenever we introduce an edge to a neighbor whose proposed distance is one smaller, we add that neighbor's proposed nearest labels to this set.
In this way, we can later check whether the labels collected from these neighbors agree with the labels that we proposed.

We combine this distance information with logical types, which allow us to check arbitrary $\CMSOCTW$ side conditions and count satisfying assignments exactly.
\bigskip

By \Cref{thm:compute-nice-bag-connected-decomposition}, we first compute a nice bag-connected tree decomposition of width $w=\mathcal O(\ctw(G)^2)$ in fixed-parameter time.
We describe the dynamic program in terms of~$w$.

\paragraph{Preparing the decomposition.}
When separation atoms occur in the formula, we first compute all distances $d_G(u,v)$, which we later use to check separation and update the profiles.
The root and leaves of our nice decomposition have empty bags, and every other node introduces a vertex, introduces an edge, forgets a vertex, or joins two equal bags.
Each graph edge is introduced exactly once.
For a node $t$, we write $B_t$ for its bag and $V_t$ for the union of all bags in the subtree rooted at $t$, including $B_t$ itself.
We put $I_t:=V_t\setminus B_t$ and write $E_t$ for the set of edges introduced in this subtree.
The bag $B_t$ separates $I_t$ from $V(G)\setminus V_t$, and every edge incident with a vertex of $I_t$ already belongs to $E_t$.
Each vertex is forgotten exactly once.

\paragraph{Maintaining logical types.}
Since our bag-connected decomposition has width~$w$, we have $\cbl(G)\leq2w$.
Using $M:=\max\{3,2w+1\}$, \Cref{lem:guarded-translation} translates the side formula into an ordinary~$\CMSOtwo$ formula~$\psi^\ast$ of length bounded by a computable function of its original length and~$w$.
For a nearest-label formula $\psi(S,\bar X)$, we choose fresh unary predicates $N_1,\ldots,N_q$, replace every atom $\NearLab_i(v;S)$ by $N_i(v)$, and add the condition $S\neq\varnothing$. We call the resulting formula also $\psi^*$. 
Now the logical part of the dynamic program can work with ordinary unary predicates in all cases.
We use the usual dynamic program on a tree decomposition, keeping a logical type as one coordinate of each table row.
We refer to \cite[Chapter~6]{CourcelleEngelfriet2012} for background on this methodology.

Very briefly, the logical type is defined as follows.
Let $\ell$ be the quantifier rank of~$\psi^\ast$, and let~$\mathcal M$ be the finite set of counting moduli occurring in it.
At node~$t$, consider the incidence structure of the partial graph $(V_t,E_t)$, with its input predicates, its free-set assignments, and a name for each vertex of~$B_t$ that is chosen from a fixed set of $w+1$ boundary symbols.
Two such annotated partial graphs have the same \emph{logical type} if they satisfy the same $\CMSOtwo$ formulas of quantifier rank at most~$\ell$ using moduli from~$\mathcal M$, with the bag names and the fixed free-set assignments.
There are only finitely many types for the fixed signature, moduli, quantifier rank, and bag-size bound.
We store the numerical potentials separately, so their range does not increase the number of logical types.

The standard composition property of these types gives update rules for introducing a vertex, introducing an edge, forgetting a bag name, and joining two partial graphs along their named bag vertices.
Each rule determines the parent type from the child type or types and the local membership data and 
these finite update tables are effectively computable.
At the root of the decomposition, the type determines whether~$\psi^\ast$ holds.

\medskip

For the distance information we make two reductions.
First, we replace each separation threshold~$d$ by $\min\{d,n\}$ and each covering radius $\rho$ by $\min\{\rho,n-1\}$.
Since $G$ is connected and nonempty, every distance between two vertices is at most $n-1$, so these replacements preserve the separation and covering conditions.
An empty covering set covers exactly the empty target set for every finite radius, so the replacement is fine for this case. 
We remove separation atoms with threshold zero because they impose no restriction.

Second, finite distance potentials can describe distances only to nonempty selected sets.
We therefore consider separately every choice of which distinct covering sets are empty and which are nonempty.
If we declare $X_j$ empty, we check that every target set covered by $X_j$ is empty, and we discard the case if this fails.
Otherwise, we forbid membership in $X_j$ when introducing vertices and store no potential for this set.
For each set declared nonempty, we keep one potential and later check the local conditions that make it equal to the distance to that set.
Every complete assignment belongs to exactly one of these cases, so we can add their counts and compare their optimum values.
The number of cases is fixed for a fixed packing-covering template, and a one-cover formula has only two cases.
For nearest-label formulas, we already require the distinguished set $S$ to be nonempty, so we need no such case distinction.

\paragraph{The tables and their coordinates.}
We write $Y_1,\ldots,Y_r$ for all displayed free vertex-set variables of the formula under consideration.

At node $t$, we have finished all local checks at vertices in $I_t$, while vertices in the bag $B_t$ may still have edges introduced later.
Indeed, every neighbor of a vertex in $I_t$ belongs to $V_t$, and every edge incident with such a vertex belongs to $E_t$.
We therefore keep the information needed to complete the local checks at bag vertices as we process the remaining graph.

For packing-covering templates, we use a table
\[
  \mathcal T_t^{\mathrm{pc}}
  [\mu,\tau,\mathbf p,\boldsymbol\lambda,\mathbf a,\mathbf c],
\]
and for nearest-label formulas, we use a table
\[
  \mathcal T_t^{\mathrm{nl}}
  [\mu,\tau,h,R,A,\mathbf c].
\]
For a one-cover formula, we use the first table without separation profiles and with at most one covering potential.
Both tables have the following coordinates.
\begin{description}
  \item[$\mu$.]
  We use a map $\mu\colon B_t\to2^{[r]}$ to specify the free vertex-set memberships on the bag, where $i\in\mu(v)$ means that $v\in Y_i$.
  \item[$\tau$.]
  We store the logical type of $(V_t,E_t)$ with its input predicates, chosen set memberships, and named bag vertices.
  These names identify the vertices shared with the rest of the graph.
  \item[$\mathbf c$.]
  If cardinality constraints are present, we use one counter for each set whose size we need to check.
  Its value is the number of selected vertices of that set in $I_t$, so we count only vertices that have already been forgotten.
  We omit these counters when there are no cardinality constraints.
\end{description}
We keep the bag memberships explicitly as well as in the logical type, since at a join we need to check that both children make the same choices on their shared bag.
Because the counters exclude bag vertices, we can add the child counters without counting any vertex twice.

For each set $X_i$ that occurs in a remaining separation atom, we let $D_i$ be the largest threshold involving that set.
We keep its profile
\[
  p_i(b):=\min\{D_i,d_G(b,X_i\cap I_t)\}
  \qquad(b\in B_t).
\]
Thus $\mathbf p$ gives the distances from bag vertices to selected interior vertices, with every value above~$D_i$ replaced by $D_i$.
We measure these distances in the original graph $G$, so a shortest path used in this definition may leave $V_t$.
The map $\mu$ separately gives the selected vertices in the bag, since bag vertices and interior vertices have different roles at a join.

For every covering set $X_j$ declared nonempty, we keep two maps on $B_t$.
\begin{description}
  \item[$\lambda_j$.]
  We use $\lambda_j(v)\in\{0,\ldots,n-1\}$ as the proposed distance from $v$ to the final selected set $X_j$.
  We require $\lambda_j(v)=0$ exactly when $\mu(v)$ places $v$ in $X_j$.
  \item[$a_j$.]
  We use a bit $a_j(v)$ to indicate whether an introduced edge already joins $v$ to a vertex whose potential is one smaller, so
  \[
    a_j(v)=1
    \quad\Longleftrightarrow\quad
    \exists u\colon uv\in E_t
    \text{ and }\lambda_j(u)=\lambda_j(v)-1.
  \]
\end{description}
This definition refers to the potential values of the partial assignment, including values at forgotten vertices, but for a bag vertex we store only the resulting bit.
The bit tells us whether at least one suitable neighbor exists, so several such neighbors do not give several choices.

For nearest labels, we use the following data instead.
\begin{description}
  \item[$h$.]
  We use $h\colon B_t\to\{0,\ldots,n-1\}$ as the proposed distance to $S$, with $h(v)=0$ exactly when $v\in S$.
  \item[$R$.]
  At each $v\in B_t$, we propose a nonempty set $R(v)\subseteq[q]$ of labels that occur among its closest selected vertices.
  If $h(v)=0$ and $v\in L_i$, we require $R(v)=\{i\}$.
  In the structure used for the logical type, we let $N_i(v)$ hold exactly when $i\in R(v)$.
  \item[$A$.]
  We use $A(v)\subseteq R(v)$ for the labels already collected from neighbors whose potential is one smaller, considering only introduced edges:
  \[
    A(v):=\bigcup_{\substack{uv\in E_t\\h(u)=h(v)-1}}R(u).
  \]
\end{description}
Thus $R(v)$ is the label set that we expect in the complete solution, while $A(v)$ is the part already obtained through processed edges.
When $h(v)>0$, we check $A(v)=R(v)$ before forgetting~$v$.
The potentials $\lambda_j$ and $h$ describe proposed distances to the final selected sets, whereas the separation profiles describe distances only to selected vertices in $I_t$.
We choose the potential values when processing the bag vertices and then check the local conditions that make these values correct.

We require every stored profile and potential to be $1$-Lipschitz on $G[B_t]$.
We check this condition on all edges induced by the bag, including edges that have not yet been introduced.
Together with the connectedness of the bag, this condition gives the bound on the number of possible rows.

\paragraph{What an entry represents.}
We now define the partial objects represented by a row, so that we can state precisely what the table stores.
A partial object contains the chosen free vertex sets on $V_t$, any chosen free edge sets on $E_t$, and proposed metric data on all of $V_t$.
For packing-covering, these metric data are maps $\lambda_j\colon V_t\to\{0,\ldots,n-1\}$ for the covering sets declared nonempty.
For nearest labels, they are a map $h\colon V_t\to\{0,\ldots,n-1\}$ and nonempty sets $R(v)\subseteq[q]$ for all $v\in V_t$.
We use data on the whole processed vertex set only to define the meaning of a row, the algorithm itself stores only the bag coordinates listed above.

For a row index $\xi$, we write $\mathcal F_t(\xi)$ for the set of partial objects that satisfy the following conditions.
\begin{enumerate}[label=(\roman*)]
  \item The memberships, potentials, and nearest-label sets on $B_t$ agree with the corresponding coordinates of $\xi$.
  The coordinate $\tau$ is the logical type of this partial graph with its chosen memberships and labels, and $\mathbf c$ gives the required cardinalities in $I_t$.
  A covering set declared empty contains no vertex of $V_t$.
  \item In the packing-covering case, every separation atom holds for all relevant pairs of distinct vertices in $V_t$, with distances measured in $G$.
  Each profile is exactly the truncated distance to the corresponding selected interior set, as defined above.
  \item Every potential is $1$-Lipschitz on $G[V_t]$, and its zero set is exactly the corresponding selected set restricted to $V_t$.
  For a covering potential $\lambda_j$, every vertex $v\in I_t$ with $\lambda_j(v)>0$ has a neighbor $u$ with $\lambda_j(u)=\lambda_j(v)-1$.
  Every vertex $v\in I_t$ in a target set covered by $X_j$ satisfies the corresponding bound $\lambda_j(v)\leq\rho_b$.
  The bits $a_j$ have exactly the meanings stated above.
  \item In the nearest-label case, the proposed label set at a zero-potential vertex is the singleton containing its input label.
  For every $v\in V_t$, the union of $R(u)$ over edges $uv\in E_t$ with $h(u)=h(v)-1$ is contained in $R(v)$.
  For every $v\in I_t$ with $h(v)>0$, this union equals~$R(v)$.
  At a bag vertex, the union is exactly the stored set $A(v)$.
\end{enumerate}

We omit a condition when the table has no corresponding coordinate.
All local checks are therefore complete on $I_t$, while the bits and collected label sets allow us to finish the checks at bag vertices later.
We can also check the Lipschitz condition on all of $G[V_t]$ as we proceed, since every edge with an interior endpoint has already been introduced and every other edge lies within the bag.

We describe minimization first, and obtain maximization by reversing the cost comparisons.
For a partial object $\mathfrak a$, we define its cost by
\[
  W_t(\mathfrak a)
  :=\sum_{i=1}^r\sum_{v\in Y_i\cap I_t}w_i(v).
\]
We therefore charge vertex weights only after the vertices have been forgotten.
If free edge sets also have weights, we add their contributions on $E_t$, charging each edge when it is introduced.
Each row stores the triple
\begin{align*}
  \mathcal T_t[\xi]
  &:=\bigl(C_t[\xi],O_t[\xi],K_t[\xi]\bigr),\\
  C_t[\xi]
  &:=|\mathcal F_t(\xi)|,\\
  O_t[\xi]
  &:=\min_{\mathfrak a\in\mathcal F_t(\xi)}W_t(\mathfrak a),\\
  K_t[\xi]
  &:=\bigl|\{\mathfrak a\in\mathcal F_t(\xi):
       W_t(\mathfrak a)=O_t[\xi]\}\bigr|.
\end{align*}
Here $C_t$ counts all represented partial objects, $O_t$ is their smallest cost, and $K_t$ counts the objects with that cost.
We give an empty row the value $(0,+\infty,0)$.
When we only need feasibility or counting, we set all weights to zero, and a row is feasible exactly when $C_t[\xi]>0$.

The same partial choice of free sets may have several proposed metric certificates, and we treat them as different objects in $\mathcal F_t(\xi)$.
This is necessary because the final distances may depend on selected vertices outside $V_t$.
At the root, however, the local checks will force the unique correct certificate for each satisfying assignment (as we prove below).

\paragraph{Combining contributions to a row.}
We initialize each row with $(0,+\infty,0)$.
Suppose that a transition contributes $(C,O,K)$ to a row that already contains $(C',O',K')$.
We add the total counts, keep the smaller cost, and add the optimum counts when the costs are equal.
Thus the new entry is
\begin{align*}
  \widetilde C&=C+C',
  \qquad \widetilde O=\min\{O,O'\},\\
  \widetilde K&=
  \begin{cases}
    K,&O<O',\\
    K',&O'<O,\\
    K+K',&O=O'.
  \end{cases}
\end{align*}

We use this rule for every valid contribution and ignore child rows whose count is zero.
At a node with one child $s$, a valid transition with local cost $\Delta$ contributes
\[
  \bigl(C_s[\xi_s],O_s[\xi_s]+\Delta,K_s[\xi_s]\bigr)
\]
to the resulting parent row.
At a join, a valid pair of child rows contributes
\[
  \bigl(
    C_{t_1}[\xi_1]C_{t_2}[\xi_2],
    O_{t_1}[\xi_1]+O_{t_2}[\xi_2],
    K_{t_1}[\xi_1]K_{t_2}[\xi_2]
  \bigr).
\]

Counts are multiplied because we combine independent choices from the two subtrees, and we add their costs because neither child has charged a shared bag vertex.
We next describe the validity tests and the resulting parent coordinates for each kind of node.

\paragraph{Empty leaves.}
At an empty leaf, we have $V_t=E_t=\varnothing$, so there is one empty assignment and one empty certificate.
All bag maps are empty, all counters are zero, and the logical type is the type $\tau_0$ of the empty structure.
We give this row the value $(1,0,1)$ and every other row the value $(0,+\infty,0)$.

\paragraph{Introducing a vertex.}
We let $s$ be the child of $t$, where $B_t=B_s\cup\{v\}$ and $v\notin V_s$.
The new vertex has no neighbor in $I_s$, and no edge incident with $v$ belongs to $E_s$.
We can therefore choose its bag data now, but an introduced edge cannot yet show that it has a neighbor of smaller potential.
For each child row with positive count, we try all possible memberships $\mu(v)$ and potential values at~$v$.
In the nearest-label case, we also try all nonempty sets $R(v)$.
We reject a choice if it puts $v$ in a set declared empty, violates the condition on a zero set, or violates the Lipschitz condition on an edge between $v$ and $B_s$.
If $h(v)=0$ and $v\in L_i$, we also require $R(v)=\{i\}$.

For each covering potential, we set $a_j(v):=0$, and for nearest labels, we set $A(v):=\varnothing$.
We keep the existing potentials, bits, and label sets on $B_s$.
We compute the new logical type from the child type, the input predicates at $v$, and the memberships chosen for $v$.
In the nearest-label case, these memberships include $N_i(v)$ exactly for the indices $i\in R(v)$.
The counters do not change and the local cost is $\Delta=0$, since we have not yet forgotten $v$.

For the separation profiles, we have $I_t=I_s$, so we keep the old values and compute only the value at $v$.
Since $B_s$ separates $v$ from $I_s$, we set
\[
  p_i(v)
  :=\min\left\{D_i,
       \min_{b\in B_s}\bigl(d_G(v,b)+p_i(b)\bigr)
     \right\}.
\]

As before, a minimum over an empty set is $\infty$.
If the selected interior set is nonempty, every path from $v$ to that set meets $B_s$, which gives one inequality in the corresponding equality for exact distances.
For the other inequality, we concatenate shortest paths through a bag vertex.
Replacing an old distance above $D_i$ by $D_i$ does not change the new value below $D_i$, so the same formula works for the truncated profiles.
If the selected interior set is empty, the formula gives~$D_i$, as required.

Only pairs involving the new vertex can now violate separation.
For an atom $\Sep_d(X_i,X_j)$ with $v\in X_i$, we require $p_j(v)\geq d$ and $d_G(v,b)\geq d$ for every $b\in X_j\cap B_s$.
If $v\in X_j$, we make the same checks with $X_i$ and $X_j$ exchanged.
The profile test checks all relevant interior vertices at once, while the other tests check the selected bag vertices one by one.
We never compare $v$ with itself, so we allow it to belong to both sets.
If all checks pass, the child triple contributes unchanged to the resulting parent row.

\paragraph{Introducing an edge.}
We let $s$ be the child of $t$ and suppose that $t$ introduces the edge $uv$.
The vertex set and bag do not change, and we have $E_t=E_s\cup\{uv\}$.
We use the new edge to update the information about neighbors of smaller potential at its endpoints.
For each covering potential, we set
\begin{align*}
  a_j^t(v)&:=a_j^s(v)\vee
       [\lambda_j(u)=\lambda_j(v)-1],\\
  a_j^t(u)&:=a_j^s(u)\vee
       [\lambda_j(v)=\lambda_j(u)-1],
\end{align*}
where a bracketed condition has value one when it holds and zero otherwise.
We leave the other bits unchanged.
For nearest labels, if $h(u)=h(v)-1$, we replace $A(v)$ by $A(v)\cup R(u)$, and if $h(v)=h(u)-1$, we make the corresponding update at $u$.
If the potentials are equal, neither collected label set changes.
We reject a row if an updated set is not contained in the corresponding proposed set $R$.
These rules include all relevant information from the edge, so we do not choose one particular neighbor as a witness.

All memberships, separation profiles, potentials, proposed label sets, and counters otherwise stay the same.
The Lipschitz condition at the endpoints has already been checked because both vertices lie in the bag.
We now try all free edge-set memberships, if any, and compute the new logical type by the edge-introduction rule.
Each choice contributes the child triple with local cost zero, unless free edge sets have weights.
In that case, we add the weights of the chosen edge-set memberships as the local cost.

\paragraph{Forgetting a vertex.}
We let $s$ be the child of $t$, where $B_s=B_t\cup\{v\}$.
We now have $V_t=V_s$ and $I_t=I_s\cup\{v\}$.
Since every edge incident with $v$ has been introduced, we can finish all local checks at this vertex.
For each covering potential with $\lambda_j(v)>0$, we require $a_j(v)=1$.
For each target set $U_b$ covered by $X_j$, we also require $\lambda_j(v)\leq\rho_b$ whenever $v\in U_b$.
For nearest labels, we require $A(v)=R(v)$ when $h(v)>0$.
This equality checks the complete nearest-label recurrence at $v$, and it also guarantees a neighbor of smaller potential because $R(v)$ is nonempty.
At potential zero, we have already checked the selected-set membership and the singleton input label, and we require no neighbor of smaller potential.

The vertex set $V_t$ has not changed, so we need no new separation tests.
We do, however, update the profiles because $v$ now belongs to the interior.
For every remaining bag vertex $b$, we set
\[
  p_i^t(b):=
  \begin{cases}
    \min\{p_i^s(b),d_G(b,v)\},&v\in X_i,\\
    p_i^s(b),&v\notin X_i.
  \end{cases}
\]

This gives exactly the distance to the enlarged selected interior set, truncated at $D_i$.
We restrict all other bag maps to $B_t$ and obtain the new logical type by removing the boundary name of $v$, while keeping the vertex and its memberships in the partial graph.

For each set $Y_i$ whose size we track, we increase its counter by one if $i\in\mu(v)$.
We also charge the local selection cost
\[
  \Delta(v):=\sum_{i\in\mu(v)}w_i(v),
\]
so the parent row receives $(C_s[\xi_s],O_s[\xi_s]+\Delta(v),K_s[\xi_s])$.
Because we forget $v$ exactly once, we include each of its weight and cardinality contributions exactly once.

\paragraph{Joining two subtrees.}
We let $t_1,t_2$ be the children of $t$ and write $B:=B_t=B_{t_1}=B_{t_2}$ for their common bag.
The two processed vertex sets intersect exactly in $B$, their interiors are disjoint, and their introduced-edge sets are disjoint.
We combine two rows only when their memberships $\mu$ agree on $B$ and their proposed global metric data also agree there.
For packing-covering, we require agreement of every $\lambda_j$, while for nearest labels, we require agreement of $h$ and $R$.
The profiles, predecessor bits, and collected label sets may differ because they describe information from different subtrees.

For covering, we set $a_j^t(v):=a_j^{t_1}(v)\vee a_j^{t_2}(v)$ for every $v\in B$.
For nearest labels, we set $A_t(v):=A_{t_1}(v)\cup A_{t_2}(v)$.
These updates combine the information from the two introduced-edge sets.
The union is contained in the common proposed set $R(v)$ because both child sets are contained in it.
We keep the common memberships, potentials, and proposed label sets, add the counters, and compute the parent logical type by joining vertices with the same boundary name.

It remains to check separation between selected vertices in different interiors.
The bag $B$ separates these interiors in the whole graph, including paths that leave $V_t$.
For $Z_1\subseteq I_{t_1}$ and $Z_2\subseteq I_{t_2}$, we therefore have
\[
  d_G(Z_1,Z_2)
  =\min_{b\in B}\bigl(d_G(Z_1,b)+d_G(b,Z_2)\bigr).
\]

Every path between the sets meets $B$, which gives one inequality, and concatenating shortest paths to and from a bag vertex gives the other.
If either set is empty, both sides are $\infty$.
If $B$ is empty, the connectedness of $G$ ensures that at least one child interior is empty, so both sides are again $\infty$.

For each atom $\Sep_d(X_i,X_j)$, we hence require
\begin{align*}
  \min_{b\in B}
    \bigl(p_i^{t_1}(b)+p_j^{t_2}(b)\bigr)&\geq d,\\
  \min_{b\in B}
    \bigl(p_j^{t_1}(b)+p_i^{t_2}(b)\bigr)&\geq d.
\end{align*}

Truncation does not change these tests, since $D_i,D_j\geq d$ and any sum below $d$ uses two distances that were not truncated.
The interiors are disjoint, so every tested pair consists of distinct vertices.
Pairs within one child and pairs involving a bag vertex have already been checked.
We obtain the parent profile by
\[
  p_i^t(b):=\min\{p_i^{t_1}(b),p_i^{t_2}(b)\}
  \qquad(b\in B),
\]
since the selected parent interior is the union of the selected child interiors.

Two compatible partial objects have a unique union because they agree on all set memberships and proposed metric data on $B$.
Conversely, restricting a parent object to the two subtrees gives a unique pair of compatible child objects.
The separation tests above check all new pairs between the interiors.
The local potential conditions at an interior vertex do not change, since it has no neighbor in the other interior or outside its own processed vertex set.
We therefore obtain every valid parent object from exactly one compatible pair of child objects.
The costs add because no bag vertex has been charged and the introduced-edge sets are disjoint, so the contribution is
\[
  \bigl(
    C_{t_1}[\xi_1]C_{t_2}[\xi_2],
    O_{t_1}[\xi_1]+O_{t_2}[\xi_2],
    K_{t_1}[\xi_1]K_{t_2}[\xi_2]
  \bigr).
\]

\paragraph{Correctness.}
At a leaf, the table has the required meaning because there is exactly one empty object.
When we introduce a vertex or an edge, each parent object determines its child object and the unique choice of newly introduced data.
Our tests check exactly the conditions that become relevant at that step.
When we forget a vertex, the partial object itself stays the same, but we complete its local checks and charge its cost.
The object also determines the updated profiles and counters.
At a join, the correspondence proved above shows that every compatible pair gives a valid parent object and every parent object occurs once.
By induction over the decomposition, $C_t$, $O_t$, and $K_t$ therefore have the meanings stated at every node.
All later compatibility tests depend only on the row index, so partial objects in the same row have the same possible extensions.
We may thus keep only their best cost and the number of objects with that cost when computing optimum values and counts.

\paragraph{Reading the answer at the root.}
We let $t_0$ be the empty root, where $V_{t_0}=I_{t_0}=V(G)$ and $E_{t_0}=E(G)$.
We accept exactly the rows whose logical type satisfies $\psi^\ast$ and whose counters satisfy the required cardinality bounds.
The boundary maps are empty, but all their local conditions have now been checked on the whole graph.

For each covering set declared nonempty, the candidate potential is $1$-Lipschitz, its zero set is~$X_j$, and every positive-potential vertex has a neighbor whose potential is one smaller.
Following such neighbors must reach a zero, so $X_j$ is indeed nonempty.
By \Cref{lem:exact-distance-potential}, we then have $\lambda_j(v)=d_G(v,X_j)$ for every vertex.
The checked covering inequalities are therefore exactly the required covering conditions, and all separation conditions hold by the table invariant.

For nearest labels, the nonemptiness of $R(v)$ and the equality $A(v)=R(v)$ at positive-potential vertices guarantee a neighbor of smaller potential.
The same potential lemma gives $h(v)=d_G(v,S)$ for the nonempty set $S$.
The label conditions are exactly \eqref{eq:nearest-label-recurrence}, including the singleton label at distance zero.
Induction on $h(v)$ therefore shows that $R(v)$ contains exactly the labels of all closest selected vertices.
Thus the predicates $N_i$ used in the logical type have the intended meanings, including every label involved in a tie.

Conversely, a satisfying assignment gives an accepted root object when we use its true distances and nearest-label sets.
The assignment determines these data uniquely, and it also determines the profiles, predecessor bits, collected label sets, counters, and logical types at all nodes.
Each satisfying assignment therefore contributes exactly one root object, even though its restrictions may have had several proposed certificates earlier in the computation.
Introducing the same boundary vertex in different branches creates no extra choices at the root because the data must agree when those branches are joined.
We also introduce no counting choices for quantified logical witnesses or for particular neighbors of smaller potential.

We obtain the number of satisfying assignments by summing $C_{t_0}[\xi]$ over all accepted rows and all disjoint choices of empty covering sets.
The instance is feasible exactly when this sum is positive.
For a feasible instance, the smallest value of $O_{t_0}[\xi]$ is the optimum, and we count optimum assignments by summing $K_{t_0}[\xi]$ over the rows with that value.
If we store a pointer to one transition giving each optimum, we can follow these pointers backwards to recover an optimum assignment.
For maximization, we use the largest cost instead of the smallest and give an empty row cost $-\infty$.

\paragraph{The number of rows and the running time.}
Every nonempty bag has size $b\leq w+1$.
We generate the possible profile and potential maps using \Cref{lem:dist-profile-compression}, discard choices that fail the required conditions, and keep only rows reached by the transitions.
It is enough to generate all $1$-Lipschitz maps in the relevant range, since the transitions ensure the more specific meanings of the profiles and potentials.
We never enumerate subsets of $V_t$ or the full partial certificates used to define $\mathcal F_t$.

For a fixed packing-covering template, we let $s$ be the number of separation profiles and~$H$ the number of covering sets declared nonempty in the current case.
The profiles give at most $(n+1)^s3^{s(b-1)}$ choices, and the covering potentials together with their bits give at most $n^H3^{H(b-1)}2^{Hb}$ choices.
The membership map has $2^{rb}$ possibilities, while the number of logical types depends only on $\Phi$ and $w$.
If we track $k$ cardinalities, the counters give at most $(n+1)^k$ choices.
We therefore have at most
\[
  f_\Phi(w)(n+1)^{s+H+k}
\]
rows, where we include the fixed choice of counters in the dependence of $f_\Phi$.
The numbers $s$, $H$, and~$k$ are fixed here, so the polynomial exponent depends only on the template and that fixed choice of counters.

For nearest labels, there are at most $n3^{b-1}$ potential maps.
For each vertex-label pair, the label is either absent from $R(v)$, present in $R(v)$ but absent from $A(v)$, or present in both.
We therefore have at most $3^{qb}$ choices for $R$ and $A$, and at most $n3^{b-1}3^{qb}$ choices for all metric data.
We can include the dependence on $q$, the free-set memberships, and the logical formula in a function of $|\psi|+q+w$.
If we impose cardinality constraints on $k$ selected sets, we keep one counter for each of these sets, counting its vertices that have already been forgotten.
Each counter has at most $n+1$ possible values, so we obtain at most $f(|\psi|+q+w)(n+1)^{k+1}$ rows.
Without cardinality constraints, we use no counters and take $k=0$.

For a one-cover formula, we use no separation profile and only one covering potential, or no potential when the covering set is empty.
In the nonempty case, there are at most $n3^{b-1}2^b$ choices for the metric data.
With $k$ fixed counters, we therefore have at most $f(|\Phi|+w)(n+1)^{k+1}$ rows.
An empty bag has just one choice of empty metric maps in every case.

At a join, we may check every pair of child rows, while at a node with one child, we try all permitted new memberships and metric data and compute the resulting parent row.
These operations take a polynomial number of steps in the table sizes, with an additional factor depending on the formula and $w$ for the logical-type updates.
For nearest-label and one-cover formulas, the polynomial exponent is independent of the formula and graph parameters when the number of extra cardinality counters is fixed.
For a fixed packing-covering template, the exponent may also depend on its numbers of profiles and covering sets.

We perform all counting with exact integers.
A partial object is described by its free-set memberships and, at each vertex, a bounded number of potential values and finite label sets, where the number of coordinates depends on the formula.
Each finite potential value needs $\mathcal O(\log(n+1))$ bits, so the logarithm of the total number of partial objects is polynomial in the binary input length~$N$.
The counts therefore have polynomial bit length, and the same holds for sums of the binary selection weights.

Before processing a cardinality condition $\ell\leq|Y_i|\leq u$, we intersect its allowed interval with~$[0,n]$ and reject the instance if the intersection is empty.
An equality is the case $\ell=u$, and we handle a one-sided bound in the same way.
Reading and reducing the binary numerical bounds takes $O(N)$ time, so a running-time bound in terms of $n$ still includes this initial term when all selection weights are zero or one.

Finally, we computed the nice decomposition in fixed-parameter time with width $w=O(\ctw(G)^2)$, and its number of nodes is bounded by a function of $\ctw(G)$ times a fixed polynomial in $n$.
Substituting these bounds into the table analysis gives the running times in our three theorems \Cref{thm:packing-covering,thm:nearest-label-optimization,thm:one-cover}.
$\hfill\square$

\medskip

\paragraph{Disconnected graphs.}
It is not difficult to handle also disconnected graphs. We simply 
also allow a potential to have value $\infty$.
Across an edge, we require either two finite values differing by at most one or two infinite values.
At every vertex, including those with infinite potential, we require potential zero if and only if the vertex belongs to the corresponding selected set.
We require a neighbor of potential one less only at vertices with positive finite potential.
These conditions force the potential to equal the distance to the selected set in every component meeting that set, and to equal $\infty$ in every other component.

Since each nonempty bag is connected, its potential map is either one of the finite maps already considered or the single all-infinite map.
Thus the number of states increases by at most a constant factor for each potential.
For every covering set declared nonempty, we also add its nonemptiness to the side formula and reject an infinite potential at any vertex of a corresponding target set.
The bounds can still be reduced as above, since every finite graph distance is at most~$n-1$ and every input threshold or radius is finite.
The separation profiles and their updates continue to use $\min\{D_i,\infty\}=D_i$.
At a join with an empty bag, there is no path between the two interiors, so the separation equality still has value $\infty$ on both sides even if both interiors are nonempty.

For nearest labels, a vertex in a component disjoint from $S$ has distance $\infty$ to every selected vertex, so its nearest-label set contains all labels occurring in $S$.
We consider each nonempty set $Q\subseteq[q]$ and add conditions to the side formula requiring
\[
  S\cap L_i\neq\varnothing\quad(i\in Q),
  \qquad
  S\cap L_i=\varnothing\quad(i\notin Q).
\]
We also keep $S\neq\varnothing$ in the side formula.
At every vertex with $h(v)=\infty$, we require $R(v)=Q$ and $A(v)=\varnothing$.
For an infinite covering potential, we similarly fix the predecessor bit at zero.
We perform the decreasing-neighbor updates and recurrence tests only at finite-potential vertices, so we never use an expression such as $\infty-1$.
There are at most $2^q$ choices of $Q$, and each nonempty selected set determines exactly one of them, so adding their counts preserves the number of satisfying assignments.

\subsection{Nearest-set objectives}
\label{sec:nearest-set-objectives}

In nearest-label evaluation, we ask which labels occur among the selected vertices closest to a given vertex, allowing several labels when there is a tie.
We now consider objectives that depend on the distance itself, with each vertex contributing a cost determined by its distance to the selected set.
Such objectives occur in classical network location problems~\cite{Hakimi1964,KarivHakimi1979}.
For example, in the $p$-median problem, we choose $p$ facilities and minimize the total weighted distance from the clients to their nearest facilities
and in the $p$-means problem, we choose $p$ facilities and minimize the total weighted squared distance from the clients to their nearest facilities.
In uncapacitated facility location, we also pay an opening cost for every facility that we choose, and we minimize the sum of the opening costs and the service costs without prescribing the number of facilities.

Both problems, i.e., $p$-median and $p$-means are known to be W[1]-hard parameterized by treewidth using 
a folklore reduction from the $k$-Center problem, which is shown in~\cite{KatsikarelisLampisPaschos2019}. Moreover,
both are known to be in XP parameterized by treewidth (see~\cite[Section~5.2]{ChakrabartyConroySarkar2025} for $p$-median and \cite{ChakrabartyConroySarkar2025} for $p$-means).

Here we work with unweighted graphs and allow an arbitrary cost function of the nearest distance at each vertex, including functions that are not monotone and values that forbid particular distances.
We obtain exact fixed-parameter algorithms in bag-connected treewidth, while the number of selected vertices and the cost functions remain part of the input.
We may also impose logical conditions on the selected set and count feasible or optimum sets.
Since the cost of a vertex depends on its distance to the selected set, we cannot express this objective using only fixed weights on the selected vertices.
We therefore keep the distance costs outside the logical language and handle them directly in the dynamic program from \Cref{sec:dist-meta}.
The tables already store the distance to the selected set, so we only need to add the corresponding cost when we forget a vertex. This is just a small modification of our earlier dynamic program.

In particular, we obtain fixed-parameter algorithms for
$p$-median and $p$-means on unweighted graphs, parameterized
by bag-connected treewidth, while the number $p$ of facilities remains
part of the input.
To the best of our knowledge, these parameterized tractability
results were not previously known.

An instance of \textsc{Nearest-Set Optimization} consists of a connected nonempty unweighted graph $G$, a candidate set $F\subseteq V(G)$, a rational opening cost $o_s$ for each $s\in F$, and a cost table $h_v\colon\{0,\ldots,n-1\}\to\mathbb Q\cup\{+\infty\}$ for each vertex $v$, where $n:=|V(G)|$.
All finite costs are rational numbers written in binary, and the total input length $N$ includes the opening costs and the cost tables, or their chosen representation as described below.
We seek a nonempty set $S\subseteq F$ that minimizes
\[
  \sum_{s\in S}o_s+
  \sum_{v\in V(G)}h_v\bigl(d_G(v,S)\bigr).
\]
We treat $+\infty$ as a forbidden distance in both the minimization and maximization variants, and we may also require $|S|=p$ or $|S|\leq p$ for a nonnegative integer $p$ given with the input.
Instead of explicit tables, we may use any fixed representation from which all table entries can be computed in time polynomial in the input length.

\begin{theorem}\label{thm:nearest-set-optimization}
We can solve \textsc{Nearest-Set Optimization}, its maximization variant, and exact counting of feasible or optimum sets in fixed-parameter time with parameter $\ctw(G)$.
We may also impose a $\CMSOCTW$ condition $\psi(S)$, with running time $f(|\psi|+\ctw(G))N^{O(1)}$ for a computable function $f$.
\end{theorem}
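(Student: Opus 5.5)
We reduce \textsc{Nearest-Set Optimization} to a slight modification of the dynamic program from \Cref{sec:dist-meta}, specifically its one-cover variant with a single nonempty distinguished set $S$. We take $S$ as the distinguished free set and force $S\subseteq F$ and $S\neq\varnothing$ in the side formula, so $F$ is treated as a unary input predicate. The optional condition $\psi(S)$ is conjoined to the side formula. We use no covering atom; equivalently, we use $\Cover_{n-1}(V(G),S)$, which is vacuous on a connected graph. The table then stores, on each bag, the potential $\lambda\colon B_t\to\{0,\ldots,n-1\}$, the predecessor bits $a(v)$, the memberships, the logical type of $\psi^\ast$, and, if $|S|=p$ or $|S|\leq p$ is required, one counter for $|S\cap I_t|$. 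By \Cref{lem:dist-profile-compression} the number of rows is $f(|\psi|+w)\,n^{O(1)}$ with an exponent independent of $\psi$, where $w=O(\ctw(G)^2)$ comes from \Cref{thm:compute-nice-bag-connected-decomposition}.

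The only change is to the cost. The cost of a partial object now includes $\sum_{v\in S\cap I_t}o_v+\sum_{v\in I_t}h_v(\lambda(v))$. At a forget node for $v$ we add the local cost $\Delta(v)=[v\in S]\,o_v+h_v(\lambda(v))$, and we discard the transition if $h_v(\lambda(v))=+\infty$, since this value encodes a forbidden distance. Each vertex is forgotten exactly once, so every term is charged exactly once. At a join the costs still add, because no bag vertex has been charged. The combination rule for triples $(C,O,K)$ is unchanged, with rational arithmetic in place of integer arithmetic; maximization reverses the comparisons as before.

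Correctness follows from the root argument already given. At the root, the Lipschitz and predecessor checks and \Cref{lem:exact-distance-potential} force $\lambda(v)=d_G(v,S)$. Each feasible set $S$ therefore corresponds to exactly one accepted root object, and the charged cost of that object is precisely the objective value. Consequently the counts $C$ and $K$ give exact counts of feasible and of optimum sets. Summing over the accepted rows, whose logical type satisfies $\psi^\ast$ and whose counter meets the bound on $p$, gives the answers.

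The main point needing care is the bit length and arithmetic on rationals. A sum of up to $2n$ binary rationals with different denominators can be brought to a common denominator of polynomial bit length in $N$; for example, we can multiply all costs once by the least common multiple of the denominators. The table entries are assumed computable in polynomial time, so all costs and sums remain of size polynomial in $N$. The running time is therefore $f(|\psi|+\ctw(G))N^{O(1)}$.
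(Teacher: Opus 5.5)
Your proposal is correct and follows essentially the same route as the paper. Both reuse the exact-potential states of the covering dynamic program from \Cref{sec:dist-meta}. Both charge $h_v(\lambda(v))$, plus $o_v$ when $\lambda(v)=0$, at the forget node of $v$, rejecting forbidden distances there. Both keep a counter for the cardinality bound and invoke \Cref{lem:exact-distance-potential} at the root, so each feasible set yields exactly one accepted object.

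The differences are cosmetic. You enforce $S\subseteq F$ and $S\neq\varnothing$ through the side formula instead of allowing potential zero only at candidate vertices. You also spell out the common-denominator argument for the rational arithmetic, which the paper only asserts.
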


\begin{proof}
We proceed exactly as in \Cref{sec:dist-meta} and use the exact-potential states from the covering part of \Cref{thm:packing-covering}.
We take the zero set of the potential as $S$ and allow a zero value only at a candidate vertex.
When we forget a vertex $v$, we reject a forbidden distance and add the cost $h_v(\lambda(v))$, together with $o_v$ if $\lambda(v)=0$.
If there is a cardinality bound, we also count the forgotten vertices with potential zero.
At a join, we add the child counters and costs because the two interiors are disjoint, and at the root, we check the final cardinality bound.

The local potential conditions force a nonempty zero set, and by \Cref{lem:exact-distance-potential}, every accepted labeling is exactly $\lambda(v)=d_G(v,S)$.
Conversely, each feasible set determines this labeling and all predecessor bits uniquely.
It therefore gives exactly one accepted root object, so the extra distance data do not increase its count.
On a nonempty connected bag of size $b$, we have at most $n3^{b-1}2^b$ potential-and-bit states, while an empty bag has one empty state.
The cardinality counter has at most $n+1$ values, so the dynamic program takes $2^{O(w)}N^{O(1)}$ time even if we check all pairs of child states at each join.
Adding and comparing the binary rational costs takes polynomial time as well.

To count feasible sets, we add the counts of alternatives and multiply the counts of compatible child rows at joins in the straight forward way.
\end{proof}

\section{Logical boundaries and computational hardness}
\label{sec:guarded-logic}
\label{sec:exact-dependence}
\label{sec:encoding-hardness}

We now show that the structural conditions in our guarded meta-theorems hold exactly for the graph classes on which the corresponding logics are monadically dependent.
To prove this, we represent each vertex of an arbitrary graph $H$ by a vertex at an even position along a sufficiently long path.
Between any two distinct representatives, we therefore have a path vertex exactly halfway between them, at equal distance from both.
We call this vertex their midpoint, and we choose the positions so that different pairs of representatives have different midpoints.
We then give a midpoint a special label exactly when its two representatives correspond to adjacent vertices of $H$.
This label is a unary input predicate, which allows us to recover the edges of $H$ by testing whether a labeled vertex is at equal distance from two representatives.

We use this construction both for unguarded distance comparisons on a shortest path and for guarded equidistance on an arc shorter than half of a geodesic cycle.
We first prove the results on monadic dependence, and then use the same constructions to prove hardness in the cases where we can build the encodings in polynomial time.
For the result in which a distance threshold is part of the input, we use a separate construction based on a Turing-machine computation, which we present at the end of this section.

The hardness proofs were generated by ChatGPT 6.0 Astra and were revised and verified by the authors; see the \hyperref[sec:ai-methodology]{AI methodology statement} at the end of the paper.

\subsection{Monadic interpretations}

We use transductions in which each output vertex is one input element, so we do not use copies or tuples of input elements.
In a transduction we first add a fixed finite number of unary predicates to the input structure, where we may choose their sets of elements freely.
We call this a \emph{monadic lift}, and we can think of the new predicates as additional colors.
We then define the output graph by two fixed formulas: a domain formula $\nu(x)$, which selects its vertices, and an adjacency formula~$\eta(x,y)$, which defines its edges.
We require the adjacency formula to be symmetric and irreflexive, so the output is a simple undirected graph.
For vertex-set logics, we use the adjacency structure of~$G$ as the input, while for incidence logics we use~$\mathcal I(G)$ and may color both vertex elements and edge elements.
We always evaluate metric atoms in the original graph~$G$, even when the domain formula selects only some of its vertices.
For further background on interpretations, we refer to Courcelle and Engelfriet and to the recent survey by Pilipczuk~\cite{CourcelleEngelfriet2012,Pilipczuk2025}.

\begin{definition}[Monadic dependence]
\label{def:monadic-interpretation}
\label{def:monadic-dependence}
A graph class~$\mathscr C$ is \emph{monadically $\mathcal L$-dependent} if there is no fixed $\mathcal L$-interpretation that produces every finite graph, up to isomorphism, from monadic lifts of members of~$\mathscr C$ in the appropriate encoding.
Otherwise, we call~$\mathscr C$ \emph{monadically $\mathcal L$-independent}.
\end{definition}

We use the following classical characterizations for the underlying second-order logics.
These characterizations also hold with our copyless transductions; see~\cite[Theorems~32 and~33]{Pilipczuk2025}.

\begin{theorem}
\label{thm:incidence-transduction}
\label{thm:classical-dependence}
For every graph class~$\mathscr C$, the following statements hold.
\begin{enumerate}[label=(\roman*)]
  \item The class~$\mathscr C$ has bounded treewidth if and only if its incidence structures are monadically $\MSOtwo$-dependent, equivalently monadically $\CMSOtwo$-dependent.
  \item The class~$\mathscr C$ has bounded cliquewidth if and only if its adjacency structures are monadically $\CMSO$-dependent.
\end{enumerate}
\end{theorem}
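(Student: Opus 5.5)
The plan is to prove both parts as instances of one scheme. The direction from bounded width to dependence comes from the fact that transductions preserve boundedness of cliquewidth. The direction from unbounded width to independence comes from extracting large grids and then interpreting arbitrary graphs from them. Since the statement is essentially classical, the main work is to check that the known transductions can be made copyless in the sense of \Cref{def:monadic-interpretation}.

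For~(i), first suppose that $\tw(G)\leq k$ for all $G\in\mathscr C$. The incidence structure $\mathcal I(G)$ has Gaifman graph the one-subdivision of~$G$, so its treewidth is at most $\max\set{k,2}$. Monadic lifts do not change the Gaifman graph. A $\CMSOtwo$-interpretation on such structures is a $\CMSO$-transduction of a relational structure of bounded treewidth. Hence, by the Courcelle--Engelfriet theory~\cite{CourcelleEngelfriet2012}, it outputs graphs of cliquewidth bounded by a function of $k$ and the interpretation, and so it cannot produce all finite graphs. This gives dependence for $\MSOtwo$ and, a fortiori, for $\CMSOtwo$.

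For the converse of~(i), assume treewidth is unbounded. By the Grid Minor Theorem~\cite{RobertsonSeymour1986}, the graphs in $\mathscr C$ contain arbitrarily large grid minors. We fix a minor model and mark it with colors. One edge color marks the spanning trees of the branch sets, another marks the edges realizing grid adjacencies, and a vertex color selects one representative per branch set. With these colors, ``same branch set'' becomes $\MSOtwo$-definable as connectivity in the first edge color, and adjacency of branch sets becomes definable through the second edge color. To obtain an arbitrary graph $H$ on $n$ vertices, we take a grid of size polynomial in~$n$ and choose $n$ well-separated representatives. For every edge of $H$, we colour a path in the grid between the corresponding representatives, using Seese's standard argument~\cite{Seese1991}: paths are pairwise disjoint except at crossings, and a crossing gadget is distinguished by an extra colour. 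The adjacency formula then says that a path of the relevant colour connects the two representatives through the branch sets. Every output vertex is a single input element, so the interpretation is copyless.

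For~(ii), the forward direction is the same as before: $\CMSO$-transductions map classes of bounded cliquewidth to classes of bounded cliquewidth~\cite{CourcelleEngelfriet2012}. For the reverse direction, we invoke Courcelle and Oum~\cite{CourcelleOum2007}. From graphs of unbounded rank-width, equivalently unbounded cliquewidth, a fixed $\CMSO$-transduction obtains all grids, via vertex-minors of large grids, which are $C_2\mathrm{MSO}$-definable. We then compose with an $\MSO$-interpretation producing all graphs from grids, as in the treewidth case but now working on adjacency structures, with edges of the grid represented by adjacency. This composition is a fixed $\CMSO$-interpretation.

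The main obstacle is copylessness in this last step. The Courcelle--Oum transduction, and the edge-to-vertex conversion in the grid-to-graph step, may a priori use copies of the input. I would first try to show that the output graphs need only as many vertices as there are branch representatives. Each output vertex can then be chosen as a distinct input element marked by a colour, with the edge encodings carried by colours rather than by copies. If that is delicate to verify directly, I would instead appeal to \cite[Theorems~32 and~33]{Pilipczuk2025}, which state these characterizations for copyless transductions.
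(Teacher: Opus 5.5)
Your proposal is correct and takes essentially the paper's route. The paper does not reprove the theorem. It obtains the dependence directions from the fact that fixed counting monadic interpretations preserve bounded cliquewidth, including on incidence structures of bounded-treewidth graphs, and the independence directions from the Excluded Grid Theorem with Seese's interpretation argument for (i) and from Courcelle--Oum for (ii). It handles the copylessness issue you flag exactly by your fallback, citing \cite[Theorems~32 and~33]{Pilipczuk2025}.
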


The first characterization follows from the Excluded Grid Theorem and Seese's interpretation argument~\cite{RobertsonSeymour1986,Seese1991}.
The second characterization is due to Courcelle and Oum~\cite{CourcelleOum2007}.
For the directions from bounded width to dependence, we use the fact that fixed counting monadic interpretations preserve bounded cliquewidth.
This also applies when the input structures are incidence structures of graphs of bounded treewidth~\cite{CourcelleEngelfriet2012}.

\subsection{Encodings in geodesics and geodesic cycles}

We first consider the~$\Close$ and~$\EqDist$ atoms.
We represent each vertex of the graph that we want to encode by a vertex at an even position along a path.
For any two distinct representatives, there is then a path vertex exactly halfway between them, which we call their midpoint.
We choose the positions so that different unordered pairs have different midpoints, and we give a midpoint a special color exactly when its two representatives should be adjacent in the output graph.
We also make sure that no midpoint between two distinct representatives is itself a representative, since we will need this when we replace the colors by pendant leaves.
An increasing sequence of integers is called a \emph{Sidon sequence} if different unordered pairs of its entries have different sums, where we also allow an entry to be paired with itself.
For background on Sidon sequences and their constructions, we refer to O'Bryant's survey~\cite{OBryant2004}.
For our encoding, the following elementary construction is sufficient, and we include its proof for completeness.

\begin{lemma}
\label{lem:sidon}
For every $m\geq1$, we can compute in polynomial time integers $0<a_1<\cdots<a_m\leq3m^3$ such that all sums $a_i+a_j$, for $1\leq i\leq j\leq m$, are distinct.
\end{lemma}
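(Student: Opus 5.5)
The plan is to use the classical Erdős–Turán construction over a prime field and then shift it so that the entries are positive. First choose a prime $p$ with $m\leq p\leq 2m$. Bertrand's postulate guarantees that one exists, and we can find it in polynomial time by trial division over the $O(m)$ candidates. Then put
\[
  a_i:=2p\,i+(i^2\bmod p)\qquad(1\leq i\leq m),
\]
where $i^2\bmod p$ denotes the residue in $\{0,\ldots,p-1\}$. Since the residue is smaller than $2p$, the sequence is strictly increasing, and $a_1\geq 2p>0$.

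To check the Sidon property, suppose $a_i+a_j=a_k+a_l$ with $i\leq j$ and $k\leq l$. Write each entry as $2p\,i+r_i$ with $0\leq r_i<p$. Then $r_i+r_j<2p$, so comparing the quotients and remainders modulo $2p$ gives $i+j=k+l$ and $r_i+r_j=r_k+r_l$. The second equality implies $i^2+j^2\equiv k^2+l^2\pmod p$. Together with $i+j=k+l$, which holds over the integers and hence modulo $p$, this gives
\[
  2ij\equiv 2kl\pmod p.
\]
Since $p$ is odd when $m\geq3$, we conclude $ij\equiv kl$. So $\{i,j\}$ and $\{k,l\}$ are the root multisets of the same quadratic over $\mathbb F_p$, and therefore coincide modulo $p$. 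All indices lie in $[1,m]\subseteq[1,p]$. They are distinct modulo $p$ unless $m=p$, in which case the index $p$ reduces to $0$. In either case, residues in $\{1,\ldots,p\}$ determine the integers, so the multisets are equal as integer multisets and $(i,j)=(k,l)$.

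For the size bound, $a_m\leq 2pm+p-1<(2m+1)p\leq 2m(2m+1)$. This is at most $3m^3$ for $m\geq2$. The small cases $m=1,2$, and $p=2$ (which only occurs when $m\leq2$), are handled directly: take $a_1=1$, or $a_1=1,a_2=2$. The sums $2,3,4$ are distinct and satisfy $2\leq 3\cdot 8$. Every step is polynomial arithmetic on $O(\log m)$-bit numbers, so the construction runs in polynomial time.

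The main obstacle is the decoding step. We must make sure that the carries separate cleanly, which is why the multiplier is $2p$ rather than $p$, and that the quadratic argument works for repeated indices and for the index $p$ itself when $m=p$.
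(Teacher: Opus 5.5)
Your proof is correct, but it takes a different route from the paper.

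The paper needs no primes. It sets $B:=2m+1$ and $a_i:=Bi^2+i$. Then the residue of $a_i+a_j$ modulo $B$ is the index sum $i+j$, and the quotient is the exact integer $i^2+j^2$. From these two values it obtains $ij$ over $\mathbb{Z}$ and recovers $\{i,j\}$ as the roots of an integer quadratic.

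You swap the two roles: the quotient by $2p$ carries $i+j$, and the remainder carries $i^2+j^2$ only modulo $p$. That is what forces a prime modulus, since you must divide by $2$ and use that a monic quadratic over $\mathbb{F}_p$ has a unique multiset of roots. It is also why you need Bertrand's postulate, a prime search, and separate treatment of $m\le 2$.

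In exchange you get the much stronger bound $a_m<2m(2m+1)=O(m^2)$. This is optimal up to a constant for such sets, because the $\binom{m+1}{2}$ distinct sums must fit into $[2,2a_m]$. The paper's bound $a_m=(2m+1)m^2+m=O(m^3)$ is cruder, but it is all that is needed for the encoding (a geodesic of length $6m^3$), and its argument is shorter and avoids number theory.

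One small wording point. The indices $1,\ldots,m$ are pairwise distinct modulo $p$ also when $m=p$, since they lie in a complete residue system; the only change is that $p$ reduces to $0$. Your closing observation, that residues of elements of $\{1,\ldots,p\}$ determine the integers, already handles this correctly.
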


\begin{proof}
For $m=1$, we choose $a_1:=1$.
For $m\geq2$, we put $B:=2m+1$ and $a_i:=Bi^2+i$.
Suppose that $a_i+a_j=a_k+a_\ell$, where $i\leq j$ and $k\leq\ell$.
By reducing this equality modulo~$B$, we obtain $i+j=k+\ell$, since both sums lie between~$2$ and~$2m$ and are therefore smaller than~$B$.
The original equality now also gives $i^2+j^2=k^2+\ell^2$, and together these two equalities imply $ij=k\ell$.
Two integers are determined, up to their order, by their sum and product, so we have $i=k$ and $j=\ell$.
This argument also covers repeated indices, such as $i=j$.
The sequence is strictly increasing, and we have $a_m=(2m+1)m^2+m\leq3m^3$ for $m\geq2$.
We can compute all its entries in polynomial time.
\end{proof}

We can use the unguarded predicates to test whether two finite distances from the same vertex are equal by the formulas 
\begin{align*}
  \mathsf{Mid}_{\EqDist}(z;x,y)
    &:=\EqDist(z,x,z,y),\\
  \mathsf{Mid}_{\Close}(z;x,y)
    &:=\neg\Close(z,x,y)\wedge\neg\Close(z,y,x).
\end{align*}

When $x,y,z$ lie on one geodesic and $x\neq y$, either formula holds exactly when~$z$ is the midpoint of the $x$-$y$ subpath.
For~$\Close$, we need to keep in mind that the two negated comparisons also hold when both distances are infinite, whereas~$\EqDist$ requires finite distances.
In the construction below, all colored vertices lie in one connected component, so all distances that we compare are finite.

\begin{lemma}
\label{lem:metric-midpoint-encoding}
\label{lem:md-geodesic-coding}
\label{lem:geodesic-cycle-interpretation}
Let~$H$ be a graph on $m\geq1$ vertices.
\begin{enumerate}[label=(\roman*)]
  \item If~$G$ contains a geodesic of length at least~$6m^3$, then we can add two unary predicates to~$G$ so that a fixed existential $\FO[\EqDist]$-interpretation and a fixed existential $\FO[\Close]$-interpretation each produce~$H$.
  \item If~$G$ contains a geodesic cycle of length greater than~$12m^3$, then we can add three unary predicates to~$G$ so that a fixed $\MSO[\EqGC]$-interpretation produces~$H$.
\end{enumerate}
The interpretations do not depend on~$H$ or~$m$.
If we are given the geodesic or the cyclic order of the geodesic cycle, we can compute the unary predicates in polynomial time.
The same constructions work in all extensions of these logics, including their incidence and counting versions.
\end{lemma}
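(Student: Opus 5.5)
For part~(i), I fix a geodesic $P=w_0w_1\cdots w_L$ of $G$ with $L\geq6m^3$. I apply \Cref{lem:sidon} to obtain $0<a_1<\cdots<a_m\leq3m^3$ and let the representative of the $i$-th vertex of $H$ be $x_i:=w_{2a_i}$; all positions lie in $\{0,\ldots,L\}$. The first predicate $R$ marks exactly the representatives. Since $P$ is a geodesic, $d_G(w_p,w_q)=|p-q|$ for all its vertices. Thus for $i\neq j$ the vertex $w_{a_i+a_j}$ is equidistant from $x_i$ and $x_j$, and it is the only vertex of $P$ with this property. The second predicate $A$ marks $w_{a_i+a_j}$ exactly when $ij\in E(H)$. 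The Sidon property makes these midpoints distinct for distinct pairs. It also shows that no midpoint of a distinct pair is a representative: otherwise $a_i+a_j=2a_k=a_k+a_k$, which is excluded because repeated indices are allowed in \Cref{lem:sidon}.

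The interpretation uses $\nu(x):=R(x)$ and
\[
\eta(x,y):=x\neq y\wedge\exists z\,\bigl(A(z)\wedge\mathsf{Mid}(z;x,y)\bigr),
\]
with $\mathsf{Mid}$ either $\mathsf{Mid}_{\EqDist}$ or $\mathsf{Mid}_{\Close}$. This formula is existential, symmetric and irreflexive.

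The main subtlety is that $\mathsf{Mid}$ is evaluated in $G$, not in $P$. So I must show that no colored midpoint $z=w_{a_k+a_l}$ of a different pair is equidistant from $x_i$ and $x_j$. This holds because $z$ lies on $P$ and distances along a geodesic are realized in $G$. Hence $d_G(z,x_i)=|a_k+a_l-2a_i|$ and $d_G(z,x_j)=|a_k+a_l-2a_j|$. These are equal only if $a_k+a_l=a_i+a_j$, since $a_i\neq a_j$. The Sidon property then gives $\{k,l\}=\{i,j\}$. All compared distances are finite, so the $\Close$ version agrees with the $\EqDist$ version.

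For part~(ii), I take a geodesic cycle $C$ of length greater than $12m^3$ and an arc $P\subseteq C$ of length $6m^3$, which is less than half of $C$. The shorter arc between any two vertices of $P$ is then contained in $P$, and by isometry its length is their distance in $G$, so $P$ is a geodesic. I place $R$ and $A$ on $P$ as in part~(i), and a third predicate $K$ marks $V(C)$. The adjacency formula is
\[
\eta(x,y):=x\neq y\wedge\exists z\,\exists Z\,\bigl(A(z)\wedge\forall u\,(u\in Z\leftrightarrow K(u))\wedge\EqGC(Z,z,x,z,y)\bigr).
\]
Here the guard holds because $C$ is geodesic, and $\EqGC$ tests $d_G(z,x)=d_G(z,y)$. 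Correctness then follows from the same distance computation on $P$ as in part~(i).

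All predicates are computable in polynomial time from the geodesic, or from the cyclic order of the geodesic cycle, because \Cref{lem:sidon} is polynomial. None of the formulas depends on $H$ or $m$. Since they only use atoms available in every extension, the construction transfers verbatim to the incidence and counting versions: there the domain formula additionally restricts to vertex elements, and the metric atoms are evaluated in $G$.
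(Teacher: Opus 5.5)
Your proposal is correct and follows the paper's proof essentially step by step. Both use Sidon-spaced representatives at positions $2a_i$, edge markers at the midpoints $a_i+a_j$, and the $\mathsf{Mid}_R$ adjacency test, and for part~(ii) both take an arc shorter than half of the geodesic cycle and add a third predicate marking $V(C)$ to satisfy the guard. Your explicit check that no other pair's colored midpoint passes the equidistance test is exactly the paper's equivalence $\mathsf{Mid}_R(p_c;p_{2a_i},p_{2a_j})\Leftrightarrow c=a_i+a_j$.
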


\begin{proof}
We write $V(H)=\set{1,\ldots,m}$ and choose the sequence from~\Cref{lem:sidon}.
For~\textnormal{(i)}, we choose a geodesic $p_0p_1\cdots p_{2a_m}$ and put
\begin{align*}
  U&:=\set{p_{2a_i}:1\leq i\leq m},&
  Q&:=\set{p_{a_i+a_j}:ij\in E(H)}.
\end{align*}

We use the unary input set~$U$ to identify the representatives of the vertices of~$H$.
For each edge~$ij$ of~$H$, we put the midpoint $p_{a_i+a_j}$ between its representatives $p_{2a_i}$ and $p_{2a_j}$ into the unary input set~$Q$.
Every subpath of a geodesic is also a geodesic, so we have $d_G(p_s,p_t)=|s-t|$.
It follows that, for either $R\in\set{\EqDist,\Close}$ and distinct~$i,j$,
\begin{align*}
  \mathsf{Mid}_R(p_c;p_{2a_i},p_{2a_j})
  \quad\Longleftrightarrow\quad c=a_i+a_j.
\end{align*}

We use~$U$ as the domain of the interpretation and make two distinct representatives~$x,y$ adjacent exactly when there is a vertex $z\in Q$ satisfying $\mathsf{Mid}_R(z;x,y)$.
Since different pairs have different midpoints, we obtain exactly the edges of~$H$.
For either choice of~$R$, the domain and adjacency formulas are existential first-order formulas.

For~\textnormal{(ii)}, we choose an arc $p_0p_1\cdots p_{2a_m}$ of the given geodesic cycle~$C$.
As \mbox{$2a_m\leq6m^3<|C|/2$}, this arc is shorter than half of the cycle, so the distance in~$G$ between any two of its vertices is the difference between their positions.
We use the same sets~$U,Q$ as before and add a third unary input set~$Z$ consisting of all vertices of~$C$.
We again use~$U$ as the domain of the interpretation.
For two distinct representatives~$x,y$, we define adjacency by quantifying a vertex set~$C$ that contains exactly the vertices of~$Z$ and asking whether some $z\in Q$ satisfies $\EqGC(C,z,x,z,y)$.
Since~$C$ is the vertex set of the chosen geodesic cycle, the guard holds, and the distance equality is exactly the midpoint test that we used in~\textnormal{(i)}.
We therefore obtain~$H$ again.
This definition only requires vertex-set quantification, and we can use it on incidence structures by restricting the element variables to vertices and the set variables to vertex sets.
Finally, we can compute the unary predicates directly from the listed positions, which gives the claimed polynomial running time.
\end{proof}

We obtain the graph with no vertices by taking~$U=\varnothing$.
It follows that a class with unbounded component diameter is monadically independent for each of the two unguarded first-order extensions, while a class with unbounded geodesic-cycle length is monadically independent for~$\MSO[\EqGC]$.
In particular, the first statement applies to the class of paths, and the second applies to the class of cycles.

\subsection{Dependence for unguarded comparisons}
\label{sec:md-metric-dependence}

Write $D(G)$ for the maximum diameter of a connected component of~$G$.

The above encoding shows that bounded component diameter is necessary for dependence.
It is also the only additional requirement imposed by either unguarded predicate, as bounded distances can simply be expressed by first-order formulas. 

\begin{lemma}
\label{lem:md-bounded-diameter-elimination}
For every fixed~$D$, both $\EqDist$ and~$\Close$ are first-order definable on graphs with~$D(G)\leq D$.
The definitions also apply to incidence structures.
\end{lemma}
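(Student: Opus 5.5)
The plan is to define, for each fixed $k\in\set{0,\ldots,D}$, a first-order formula $\mathsf{D}_{\leq k}(x,y)$ expressing $d_G(x,y)\leq k$. Then we rewrite both atoms as finite Boolean combinations of these formulas. The key point is that, when $D(G)\leq D$, every finite distance lies in $\set{0,\ldots,D}$. Moreover, $d_G(x,y)=\infty$ holds exactly when $\neg\mathsf{D}_{\leq D}(x,y)$, because two vertices in the same component have distance at most the diameter of that component.

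First, we define the threshold formulas by induction on $k$:
\begin{align*}
  \mathsf{D}_{\leq 0}(x,y)&:=x=y,\\
  \mathsf{D}_{\leq k+1}(x,y)&:=\mathsf{D}_{\leq k}(x,y)\vee\exists z\bigl(E(x,z)\wedge\mathsf{D}_{\leq k}(z,y)\bigr).
\end{align*}
A routine induction shows that $\mathsf{D}_{\leq k}(x,y)$ holds if and only if there is a walk of length at most $k$ from $x$ to $y$, which is the same as $d_G(x,y)\leq k$. From these we obtain exact-distance formulas $\mathsf{D}_{=k}:=\mathsf{D}_{\leq k}\wedge\neg\mathsf{D}_{\leq k-1}$ for $k\geq1$, and we put $\mathsf{D}_{=0}:=\mathsf{D}_{\leq0}$.

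Second, we replace the two atoms. We set
\[
  \EqDist(x,y,u,v)\;\equiv\;\bigvee_{k=0}^{D}\bigl(\mathsf{D}_{=k}(x,y)\wedge\mathsf{D}_{=k}(u,v)\bigr).
\]
This is correct because both distances are required to be finite, and every finite distance is at most $D$. For the second atom we set
\[
  \Close(x,u,v)\;\equiv\;\bigvee_{k=0}^{D}\bigl(\mathsf{D}_{=k}(x,u)\wedge\neg\mathsf{D}_{\leq k}(x,v)\bigr).
\]
If $d_G(x,u)=k<d_G(x,v)$, where the latter distance may be $\infty$, then the disjunct for this $k$ holds. Conversely, a true disjunct gives $d_G(x,u)=k$ and $d_G(x,v)>k$. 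In particular, the case $d_G(x,u)=\infty$ correctly makes the formula false, matching the convention that $\infty<\infty$ fails. The length of each formula depends only on $D$.

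Third, for incidence structures we only need to replace the adjacency atom $E(x,z)$. We use the first-order formula $x\neq z\wedge\exists e\,(\Inc(e,x)\wedge\Inc(e,z))$, and we relativize the quantified variable $z$ to vertex elements. Substituting these definitions atom by atom then gives equivalent formulas in the base logic, and the equivalence is proved by a standard structural induction.

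There is no real obstacle in this argument. The only delicate point is the treatment of infinite distances, and both the case $d_G(x,v)=\infty$ in $\Close$ and the finiteness requirement in $\EqDist$ are already handled by the formulas above.
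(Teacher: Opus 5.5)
Your proof is correct and is essentially the paper's own argument. The paper states this lemma without a separate proof and relies only on the remark that bounded distances are first-order expressible; your threshold formulas $\mathsf{D}_{\leq k}$, together with your explicit treatment of infinite distances in $\Close$, the finiteness requirement in $\EqDist$, and the relativized incidence version of adjacency, spell out exactly that remark. (Your recursion doubles the formula length at each step, giving formulas of length $2^{O(D)}$, which is harmless since only dependence on $D$ matters.)
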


\begin{proposition}
\label{prop:md-general-reduction}
Let~$\mathcal L$ contain first-order logic, and suppose that substituting first-order formulas for relation symbols in an $\mathcal L$-formula always gives an $\mathcal L$-formula.
For either $R\in\set{\EqDist,\Close}$, a graph class is monadically $\mathcal L[R]$-dependent if and only if it is monadically $\mathcal L$-dependent and has bounded component diameter.
\end{proposition}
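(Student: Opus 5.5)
We prove the two directions separately, using \Cref{lem:md-geodesic-coding} for necessity of bounded component diameter and \Cref{lem:md-bounded-diameter-elimination} for sufficiency.

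\emph{From $\mathcal L[R]$-dependence to the two conditions.} Every $\mathcal L$-interpretation is also an $\mathcal L[R]$-interpretation, so monadic $\mathcal L[R]$-dependence of $\mathscr C$ implies monadic $\mathcal L$-dependence. For bounded component diameter we argue by contraposition. Suppose the component diameter in $\mathscr C$ is unbounded. Then for every $m$ some $G\in\mathscr C$ has a component of diameter at least $6m^3$. A shortest path between two vertices at maximal distance in that component is a geodesic of length at least $6m^3$. By \Cref{lem:md-geodesic-coding}(i), a fixed existential $\FO[R]$-interpretation, which is also an $\mathcal L[R]$-interpretation because $\mathcal L$ contains $\FO$, produces every graph $H$ on $m$ vertices from a two-predicate lift of $G$. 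The graph with no vertices is obtained by taking $U=\varnothing$. Hence $\mathscr C$ is monadically $\mathcal L[R]$-independent. For incidence encodings, we restrict the element variables to vertex elements and color only vertices, as noted at the end of that lemma.

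\emph{From the two conditions to $\mathcal L[R]$-dependence.} Suppose $D(G)\le D$ for all $G\in\mathscr C$, and suppose towards a contradiction that a fixed $\mathcal L[R]$-interpretation $(\nu,\eta)$ produces all finite graphs from lifts of $\mathscr C$. By \Cref{lem:md-bounded-diameter-elimination}, there is a first-order formula $\theta_R$, depending only on $D$, that agrees with $R$ on all graphs of component diameter at most $D$ and on their incidence structures. We substitute $\theta_R$ for every $R$-atom, renaming bound variables to avoid capture. By the closure hypothesis on $\mathcal L$, this yields an $\mathcal L$-interpretation $(\nu',\eta')$. An induction on formulas shows that it agrees with $(\nu,\eta)$ on every monadic lift of every $G\in\mathscr C$. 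Here it matters that unary lift predicates do not change $d_G$, so $\theta_R$ stays correct after lifting. Hence $(\nu',\eta')$ produces all finite graphs from $\mathscr C$, contradicting $\mathcal L$-dependence.

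\emph{Main obstacle.} The main work lies in \Cref{lem:md-bounded-diameter-elimination}, which the proposition takes as given. If it had to be proved here, one would write $\mathrm{dist}_{\le k}(x,y)$ as an existential formula over a walk of length $k$ through edge (or incidence) atoms. Then $d(x,y)=k$ is $\mathrm{dist}_{\le k}\wedge\neg\mathrm{dist}_{\le k-1}$, and $d(x,y)=\infty$ is $\neg\mathrm{dist}_{\le D}$, which is correct because $D(G)\le D$. With these, $\EqDist(x,y,u,v)$ becomes $\bigvee_{k\le D}(d(x,y)=k\wedge d(u,v)=k)$, and $\Close$ is a similar finite disjunction that treats $\infty$ as the largest value. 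The only delicate point is to get the infinite-distance conventions right, for example that $\Close(x,u,v)$ holds when $d(x,u)$ is finite and $d(x,v)=\infty$.
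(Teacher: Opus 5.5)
Your proof is correct and follows the paper's argument: the forward direction uses $\mathcal L\subseteq\mathcal L[R]$ together with the midpoint encoding of \Cref{lem:metric-midpoint-encoding}\textnormal{(i)} on a long geodesic, and the converse substitutes the first-order definitions from \Cref{lem:md-bounded-diameter-elimination} into the interpretation. One small fix to your optional sketch of that lemma: take $\mathrm{dist}_{\le k}(x,y)$ as a disjunction over walks of length $0,\ldots,k$ (or allow stationary steps), since a walk of length exactly $k$ need not exist when $d_G(x,y)<k$, for instance by parity in bipartite graphs.
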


\begin{proof}
If a class is $\mathcal L[R]$-dependent, then it is also $\mathcal L$-dependent. 
The class must also have bounded component diameter by~\Cref{lem:metric-midpoint-encoding}\textnormal{(i)}.
Conversely, suppose that the class is $\mathcal L$-dependent and has a fixed bound on component diameter.
In any $\mathcal L[R]$-interpretation, we replace each metric atom by its first-order definition from~\Cref{lem:md-bounded-diameter-elimination}.
We then obtain an $\mathcal L$-interpretation with the same outputs on all monadic lifts of the class, so these outputs cannot include all finite graphs.
\end{proof}

We now combine this proposition with~\Cref{thm:classical-dependence} to obtain the statements of~\Cref{thm:unguarded-dependence-intro} for arbitrary graph classes, including the counting extensions.

\unguardeddependencetheorem*

For hereditary classes, we can state the incidence characterization in terms of treedepth.
The following lemma follows from the well known fact that a class has bounded treedepth if and only if it does not contain long paths; see~\cite[Theorem~13.3]{NesetrilOssonaDeMendez2012}.

\begin{lemma}
\label{lem:md-hereditary-structure}
A hereditary graph class has bounded treedepth if and only if it has bounded treewidth and bounded component diameter.
\end{lemma}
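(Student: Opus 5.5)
The lemma asserts that, for a hereditary class $\mathscr C$, bounded treedepth is equivalent to bounded treewidth together with bounded component diameter. I would prove the two directions separately, using only the standard facts $\tw(G)\leq\operatorname{td}(G)-1$ and the path characterization of treedepth cited from \cite[Theorem~13.3]{NesetrilOssonaDeMendez2012}: a class has bounded treedepth if and only if there is a bound on the length of the paths contained in its members (as subgraphs). Equivalently, a graph with no path on $k$ vertices has treedepth less than $k$, and a path on $k$ vertices has treedepth $\lceil\log_2(k+1)\rceil$.

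\emph{Bounded treedepth implies the two conditions.} Suppose every $G\in\mathscr C$ has treedepth at most $t$. Then $\tw(G)\leq t-1$. For the diameter bound, take two vertices $u,v$ in one component of $G$ and a shortest $u$-$v$ path. This path is a subgraph of $G$, and treedepth is monotone under subgraphs, so the path has treedepth at most $t$. Hence it has fewer than $2^t$ vertices, and $D(G)<2^t$. Heredity is not needed for this direction.

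\emph{The two conditions imply bounded treedepth.} Suppose $\tw(G)\leq k$ and $D(G)\leq D$ for all $G\in\mathscr C$, and assume for contradiction that treedepth is unbounded. By the path characterization, the members of $\mathscr C$ then contain arbitrarily long paths as subgraphs, though not necessarily as induced or geodesic ones. This is the only real obstacle: a long path subgraph need not produce a large diameter, as fans and other graphs with a dominating vertex show. This is exactly where heredity must be used.

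For a path subgraph $P$ with vertex set $V(P)$, the induced subgraph $G[V(P)]$ lies in $\mathscr C$. It is connected and has treewidth at most $k$, so by hypothesis its diameter is at most $D$. I then want to show that a connected graph of treewidth at most $k$ containing a Hamiltonian path on $L$ vertices has diameter growing with $L$. Without further ingredients this is false, since a fan has treewidth $2$, a Hamiltonian path and diameter $2$. But heredity also lets us pass to the induced subgraph on a suitable subset of $V(P)$, not only on $V(P)$ itself.

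The sharper route I would commit to is the following. Long paths in graphs of bounded treewidth force a long \emph{induced} path. Concretely, I would invoke the known theorem that graphs of treewidth at most $k$ that contain a path on $L$ vertices also contain an induced path on $f_k(L)$ vertices, with $f_k(L)\to\infty$; this holds because $K_{s,s}$-subgraph-free graphs with long paths have long induced paths, and bounded treewidth excludes $K_{k+1,k+1}$. Given a long induced path $Q$ in $G$, heredity puts $G[V(Q)]=Q$ in $\mathscr C$, and $Q$ is connected with diameter $|V(Q)|-1$. This is unbounded, contradicting $D(G)\leq D$ on $\mathscr C$.

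I expect the main work to lie in citing or justifying this induced-path step. If the paper prefers to stay self-contained, an alternative is to argue directly on $G[V(P)]$ with a tree decomposition of width $k$: a balanced separator of size at most $k+1$ splits $P$ into many long subpaths, and one extracts a long induced subpath by repeatedly shortcutting along chords, each time keeping a long segment. Relying on the cited result is cleaner.
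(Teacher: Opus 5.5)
Your proof is correct and is essentially the paper's argument, which simply appeals to the characterization of bounded treedepth by excluding long paths as subgraphs, together with $\tw(G)\leq\operatorname{td}(G)-1$. You also make explicit a step the paper leaves implicit. Under heredity, bounded component diameter only rules out long \emph{induced} paths, and it is bounded treewidth (which excludes $K_{k+1,k+1}$ as a subgraph), via the Galvin--Rival--Sands theorem, that upgrades long path subgraphs to long induced paths.
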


\begin{corollary}
\label{cor:md-hereditary-treedepth}
On hereditary graph classes, bounded treedepth is equivalent to monadic dependence for each of $\MSOtwo[\EqDist]$, $\MSOtwo[\Close]$, and $\MSOtwo[\EqDist,\Close]$, and for their counting extensions.
\end{corollary}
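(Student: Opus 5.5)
The statement to prove is \Cref{cor:md-hereditary-treedepth}. It should follow by chaining \Cref{thm:unguarded-dependence-intro}(i) with \Cref{lem:md-hereditary-structure}. Fix a hereditary class $\mathscr C$ and a logic $\mathcal L\in\{\MSOtwo,\CMSOtwo\}$, and treat the predicate sets $\{\EqDist\}$, $\{\Close\}$ and $\{\EqDist,\Close\}$ separately.

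\textbf{Single predicate.} For $R\in\{\EqDist,\Close\}$, I would apply \Cref{prop:md-general-reduction} with this $\mathcal L$. Both $\MSOtwo$ and $\CMSOtwo$ contain first-order logic on incidence structures and are closed under substituting first-order formulas for relation symbols. By \Cref{lem:md-bounded-diameter-elimination}, the first-order definitions of $\EqDist$ and $\Close$ under a diameter bound also work on incidence structures. The proposition then shows that $\mathscr C$ is monadically $\mathcal L[R]$-dependent if and only if it is monadically $\mathcal L$-dependent and has bounded component diameter. By \Cref{thm:classical-dependence}(i), monadic $\mathcal L$-dependence of the incidence structures is the same as bounded treewidth, which gives exactly part~(i) of \Cref{thm:unguarded-dependence-intro}.

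\textbf{Both predicates.} Here I would repeat the argument of \Cref{prop:md-general-reduction} with both atoms present. The forward direction uses \Cref{lem:metric-midpoint-encoding}(i), since an encoding with one atom is in particular an encoding in the larger logic. The backward direction replaces every occurrence of either atom by its first-order definition from \Cref{lem:md-bounded-diameter-elimination}.

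\textbf{Passing to treedepth.} Since $\mathscr C$ is hereditary, \Cref{lem:md-hereditary-structure} replaces ``bounded treewidth and bounded component diameter'' by ``bounded treedepth''. The only point to double-check is that this lemma is applied to the class $\mathscr C$ itself and not to some closure of it. The dependence characterization holds for arbitrary classes, so no closure is needed, and heredity is used only in this final step.

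I do not expect a real obstacle. The one place needing care is whether \Cref{lem:md-hereditary-structure} is correct as stated, so I would sketch it quickly. Bounded treedepth excludes long paths, which bounds both diameter and treewidth. Conversely, if treedepth is unbounded, then $\mathscr C$ contains graphs with arbitrarily long paths as subgraphs. A graph of bounded treewidth and bounded component diameter contains no long path: bounded treewidth together with a long path yields long induced paths, which has the same effect by Ne\v{s}et\v{r}il and Ossona de Mendez. Under heredity such induced paths would lie in the class and would have large diameter, a contradiction. I would simply cite \cite[Theorem~13.3]{NesetrilOssonaDeMendez2012} as the paper does.
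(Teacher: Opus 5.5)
Your proposal is correct and follows the paper's route. You combine part~(i) of \Cref{thm:unguarded-dependence-intro}, obtained from \Cref{prop:md-general-reduction} and \Cref{thm:classical-dependence} with the two-predicate case handled the same way, with \Cref{lem:md-hereditary-structure}, and heredity enters only at that last step. One correction to your optional sketch of that lemma: the sentence ``a graph of bounded treewidth and bounded component diameter contains no long path'' is false for individual graphs, since a path plus a universal vertex has treewidth at most two and diameter at most two. It is true only as a statement about the hereditary class $\mathscr C$, which is how the rest of your argument (passing to long induced paths that then lie in $\mathscr C$) actually uses it.
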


Note that here we need the assumption that the class is hereditary.
For example, if we add a universal vertex to each path, we obtain a class with treewidth at most two and component diameter at most two, while its treedepth is unbounded.
We also obtain an algorithmic consequence from the first-order definitions of the distance comparisons.
Model checking, additive optimization, and exact counting for $\CMSOtwo[\EqDist,\Close]$ and $\CMSO[\EqDist,\Close]$ are fixed-parameter tractable with parameters $|\varphi|+\tw(G)+D(G)$ and $|\varphi|+\cw(G)+D(G)$, respectively.
Since the substitutions preserve every assignment of the free sets, we can apply the ordinary Courcelle theorems directly.

\subsection{Dependence for guarded comparisons}

For guarded equidistance, we use a long geodesic cycle to carry out the above midpoint construction.
We combine this construction with the classical characterizations in terms of treewidth and cliquewidth, and obtain exactly the structural conditions from our guarded meta-theorems.

\guardeddependencetheorem*

\begin{proof}
We first suppose that~$\mathscr C$ has bounded bag-connected treewidth, so its treewidth and cycle-basis length are also bounded.
We choose $M\geq3$ larger than the cycle-basis length of every graph in~$\mathscr C$.
By~\Cref{lem:guarded-translation}, we can translate every fixed $\CMSOCTW$-interpretation into an ordinary $\CMSOtwo$-interpretation that gives the same output on every monadic lift of a graph in~$\mathscr C$.
Since~$\mathscr C$ has bounded treewidth, the translated interpretation cannot produce all finite graphs, by~\Cref{thm:classical-dependence}.
We therefore obtain dependence for~$\CMSOCTW$ and for each of the smaller logics in the statement.

For the converse, we suppose that~$\mathscr C$ has unbounded bag-connected treewidth.
Its treewidth or its geodesic-cycle length must then be unbounded~\cite{DiestelMuller2018}.
If treewidth is unbounded, then \Cref{thm:classical-dependence}\textnormal{(i)} gives an $\MSOtwo$-interpretation that produces all finite graphs from monadic lifts of the incidence structures of graphs in~$\mathscr C$.
If geodesic-cycle length is unbounded, we obtain an $\MSO[\EqGC]$-interpretation with this property from~\Cref{lem:metric-midpoint-encoding}\textnormal{(ii)}.
In either case, the interpretation belongs to~$\MSOGC$, so we already have independence for the smaller guarded logic.
This proves the incidence characterizations, both with and without counting.

For the vertex-set logics, we recall that $\gcw(G)=\cw(G)+\cbl(G)$.
If~$\mathscr C$ has bounded geodesic cliquewidth, we can again choose one modulus $M$ for the whole class and translate every fixed $\CMSOOCTW$-interpretation into an ordinary $\CMSO$-interpretation.
The outputs remain the same, and bounded cliquewidth prevents the translated interpretation from producing all finite graphs.
We therefore have dependence for both~$\CMSOGC$ and~$\CMSOOCTW$.
If geodesic cliquewidth is unbounded, then cliquewidth or cycle-basis length is unbounded.
In the first case, \Cref{thm:classical-dependence}\textnormal{(ii)} shows that~$\mathscr C$ is monadically $\CMSO$-independent.
In the second case, geodesic-cycle length is also unbounded because $\cbl(G)\leq\gcl(G)$, so we use the cycle construction from~\Cref{lem:metric-midpoint-encoding}\textnormal{(ii)}.
Thus dependence for either of these vertex-set logics is equivalent to bounded geodesic cliquewidth.
\end{proof}

We were originally hoping that the shortest path predicate alone would give us this characterization, however, this is not the case. 
We show this with even cycles, and we also describe the simpler case of complete graphs.

\begin{proposition}
\label{prop:mso1sp-cliques}
\label{prop:mso2sp-even-cycles}
The class of complete graphs is monadically $\CMSOSP$-dependent, and the class of even cycles is monadically $\CMSOtwoSP$-dependent.
Consequently, $\CMSOtwoSP$-dependence does not characterize bounded bag-connected treewidth, and $\CMSOSP$-dependence does not characterize bounded geodesic cliquewidth.
\end{proposition}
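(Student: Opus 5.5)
The plan is to eliminate the $\SP$ atom by an ordinary formula on each class, and then use the fact that each class is already well behaved for the plain logic.

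\textbf{Complete graphs.} In a complete graph $K_n$, shortest paths are very simple. For $s=t$ the only shortest path set is $\set{s}$. For $s\neq t$ we have $d(s,t)=1$, so the only shortest path set is $\set{s,t}$. Hence on the class of complete graphs
\[
\SP(P,s,t)\Longleftrightarrow \bigl(s\in P\wedge t\in P\wedge\forall z\,(z\in P\rightarrow z=s\vee z=t)\bigr),
\]
which is a fixed first-order formula. This equivalence also holds for $n\le 2$. We substitute this formula for every $\SP$ atom in a $\CMSOSP$-interpretation. The result is a $\CMSO$-interpretation with the same output on every monadic lift of a complete graph. Complete graphs have cliquewidth at most two, so by \Cref{thm:classical-dependence}\textnormal{(ii)} no such interpretation produces all finite graphs. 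Alternatively, one can use directly that $\CMSO$ transductions preserve bounded cliquewidth.

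\textbf{Even cycles.} Let $G=C_{2k}$. We must express $\SP(P,s,t)$ without any metric atom. A path set $P$ from $s$ to $t$ in the cycle is one of the two arcs, and it is shortest exactly when its length is at most $k$. Equivalently, $|P|-1\le |V\setminus P|+1$, that is, $|P|\le |V\setminus P|+2$. The key remark is that this is a comparison of the sizes of two complementary arcs, which plain $\CMSOtwo$ cannot express in general. On a cycle, however, we can use edge quantification and a matching.

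We express ``$|P|\le|V\setminus P|+2$'' as follows. There is an injection from $P$ minus two vertices into $V\setminus P$ that is MSO definable on a cycle: reflect the arc across the antipodal axis. To implement this, I would guess a \emph{perfect matching of the antipodal pairs}. This is not definable directly, so instead I would use the following arc-based criterion. $P$ is shortest iff there is a vertex set $Q\supseteq P$ that is an arc with $|Q|=k+1$. Bounding this by size is still not possible, so the better approach is to pick a \emph{fixed} monadic colouring. The key point is that dependence only requires us to handle monadic lifts whose interpretation is \emph{fixed}, and we are free to add unary predicates to the translation.

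Concretely, the main idea is that even cycles have bounded treewidth and bounded $\cbl$ is false. So instead I would use the general fact that a $\CMSOtwoSP$-interpretation on a cycle can be simulated by a $\CMSOtwo$-transduction with additional colours. Colour one antipodal pair $a,\bar a$ and write $A_1,A_2$ for the two $a$-$\bar a$ arcs of length $k$. Then $P$ is shortest iff, up to rotating through the colours, it is contained in a half-cycle. Precisely, $P$ is shortest iff there is an $a'$-$\bar a'$ half arc containing $P$. Because the cycle is even, the half arcs containing $P$ are exactly the $k+1$-vertex arcs, and these are the images of the coloured one under rotation.

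This brings me to the main obstacle: expressing ``arc of exactly $k+1$ vertices'' in $\CMSOtwo$ with finitely many colours. I would try first to encode the antipodal map as a definable relation. This can be done using colours $c(v)=$ position mod $3$ together with a colour marking $a$. Here $X$ is a half arc iff $X$ and its complement arc $Y$ admit an edge set $F$ forming a ``ladder'' bijection. This is not local, however, so the fallback is a different route.

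The fallback is to show that $\CMSOtwoSP$-interpretations on even cycles still produce only graphs of bounded cliquewidth. We argue directly via a tree decomposition of width $2$ (a path decomposition along the cycle). Along this decomposition, the $\SP$ atom depends on a comparison $|P|\le k+1$. We track it with a counter bounded by $n$, i.e.\ the output is generated by an XP-sized automaton. I would then invoke a Ramsey/pumping argument showing that the output graphs avoid large half-graphs, hence have bounded cliquewidth. Making this rigorous is where I expect the real work to lie.
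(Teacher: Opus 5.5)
Your argument for complete graphs is correct and matches the paper's. The even-cycle half, however, is not proved. Each of your attempts to express $\SP$ with unary colours is abandoned, and the fallback cannot work as stated, for three reasons:
\begin{itemize}
\item Its intermediate claim is false. Plain $\MSO$ on coloured cycles already defines a linear order along the cycle (cut at two marked adjacent vertices), so fixed interpretations produce arbitrarily large half-graphs.
\item Its implication is false too. Excluding large half-graphs does not bound cliquewidth: grids have maximum degree four, hence no large half-graphs, yet unbounded cliquewidth.
\item An automaton carrying a counter up to $n$ is not finite-state, so it yields no bounded-width conclusion.
\end{itemize}

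The missing idea is that the antipodal map need not be definable from unary predicates. You can add it to each structure as a binary relation $A$. A fixed $\CMSOtwoSP$-interpretation then becomes a fixed $\CMSOtwo$-interpretation over these expanded structures with exactly the same outputs. So it suffices to check that the expansions still have bounded treewidth, and they do. Write the cyclic order as $a_0,\ldots,a_{n-1},b_0,\ldots,b_{n-1}$ with $a_i,b_i$ antipodal. The bags $\set{a_0,b_0,a_i,b_i,a_{i+1},b_{i+1}}$ for $0\leq i<n-1$ form a path decomposition of width at most five of the cycle together with its antipodal matching. Attaching a leaf bag for each edge element handles the incidence structure.

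Over this expansion, your own geometric criterion becomes expressible. For $x\neq y$, $\SP(P,x,y)$ holds iff $\Path(P,x,y)$ holds and $P\subseteq Q$ for some $Q$ with $\Path(Q,x,a)$ and $A(x,a)$. For $x=y$ one requires $P=\set{x}$. The translated interpretation is then a counting monadic interpretation of bounded-treewidth structures. Its outputs have bounded cliquewidth, so it cannot produce all finite graphs.

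You also do not address the ``consequently'' part. On even cycles, $\cbl(C_r)=r$ gives unbounded $\gcw$, and via $\cbl\leq2\ctw$ also unbounded $\ctw$. Moreover, $\CMSOtwoSP$-dependence implies $\CMSOSP$-dependence. Hence even cycles witness both failures of characterization. Complete graphs do not, since they have $\gcw\leq5$.
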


\begin{proof}
On a complete graph, $\SP(P,x,y)$ says that $P=\set{x}$ when $x=y$, and that $P=\set{x,y}$ otherwise.
We can express both conditions in first-order logic, treating membership in~$P$ as a unary predicate.
We can therefore replace every shortest-path atom by an ordinary formula, and dependence follows from bounded cliquewidth.

For an even cycle, we add a binary relation~$A$ that pairs each vertex with the vertex halfway around the cycle, which we call its antipode.
We use this additional relation only in the proof of dependence.
For distinct vertices~$x,y$, an $x$-$y$ path set is shortest exactly when it is contained in an $x$-$a$ path set, where~$a$ is the antipode of~$x$.
Indeed, the two $x$-$a$ paths are the two halves of the cycle, so a path starting at~$x$ is contained in one of them exactly when its length is at most half the cycle length.
We can express this condition in ordinary monadic second-order logic using~$A$ and the path-set formula.
For $x=y$, we use the condition $P=\set{x}$ as before.

We next show that these expanded incidence structures have bounded treewidth.
We write the cyclic order as $a_0,\ldots,a_{n-1},b_0,\ldots,b_{n-1}$, where $n\geq2$ and $a_i$ and~$b_i$ are antipodal.
For $0\leq i<n-1$, we take the bags
$\set{a_0,b_0,a_i,b_i,a_{i+1},b_{i+1}}$
in this order, which gives a path decomposition of the cycle together with its antipodal matching.
Each bag has at most six vertices.
For every cycle edge, we attach a leaf bag containing its endpoints and its incidence element to a displayed bag that contains both endpoints.
We obtain a tree decomposition of width at most five for the expanded incidence structure, and adding unary predicates does not change this bound.
After replacing the shortest-path atoms, we therefore have fixed counting monadic interpretations of structures of bounded treewidth.
Their output graphs have bounded cliquewidth, so they cannot include all finite graphs.

An even cycle of length~$r$ has cycle-basis length~$r$, so its geodesic cliquewidth is at least~$r$.
Its bag-connected treewidth is also at least~$r/2$, by~\eqref{eq:cycle-length-bounds}.
Even cycles therefore have unbounded values of both parameters.
Since dependence for~$\CMSOtwoSP$ also implies dependence for the vertex-set fragment~$\CMSOSP$, the even cycles give both claimed counterexamples.
\end{proof}
\subsection{From efficient encodings to hardness}
\label{sec:effective-hardness}

The preceding results show that the required encodings exist, but they do not tell us how to construct them efficiently.
In particular, the graph in which we encode a given graph may be very large.
To prove computational hardness, we therefore also require that we can construct the encoding in polynomial time.

\begin{definition}[Polynomially effective universality]
\label{def:effective-universality}
We say that a graph class~$\mathscr C$ is \emph{polynomially $\mathcal L$-universal} if, given a finite graph~$H$, we can construct in polynomial time a monadic lift of a graph $G_H\in\mathscr C$ such that one fixed $\mathcal L$-interpretation produces~$H$.
The family of unary predicates and the interpretation do not depend on~$H$, and the size of~$G_H$ is polynomial in~$|V(H)|+|E(H)|$.
We use the adjacency or incidence encoding required by~$\mathcal L$.
\end{definition}

We translate a sentence about the output graph back to the input structure by replacing adjacency with the formula that defines it and restricting vertex quantifiers to the interpreted domain.
For a quantified vertex set, we also require that all its elements belong to this domain.
We call this procedure \emph{backward translation}.
The sentences that we translate below use only ordinary graph adjacency, so we do not need to evaluate distances in the output graph.
We use the following standard consequences of backward translation through interpretations.
For the translation theorem, we refer to~\cite[Theorem~7.10]{CourcelleEngelfriet2012}, and for the parameterized hardness results, we refer to~\cite{FlumGrohe2005}.
The existential first-order statement follows by translating the usual sentence expressing the existence of a clique of size~$k$, while the fixed-sentence NP-hardness statement follows by translating the existential monadic second-order sentence for 3-colorability.

\begin{lemma}
\label{thm:effective-hardness}
\label{lem:encoding-hardness-transfer}
Suppose that a fixed interpretation gives polynomially effective encodings of all finite graphs.
Then the following statements hold.
\begin{enumerate}[label=(\roman*)]
  \item If the interpretation is first-order in the input vocabulary, then first-order model checking on the corresponding input structures is $\mathrm{AW}[*]$-hard parameterized by sentence length.
  If its domain and adjacency formulas are existential first-order, then existential first-order model checking is $\mathrm{W}[1]$-hard.
  \item If a monadic logic~$\mathcal L$ contains $\MSO$ and allows backward translation through the interpretation, then model checking~$\mathcal L$ on the corresponding input structures is NP-hard for one fixed sentence.
  If the interpretation is first-order, we can choose this sentence to have only existential set quantifiers followed by a first-order formula.
\end{enumerate}
\end{lemma}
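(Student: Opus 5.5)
The plan is to reduce from known hard problems on arbitrary graphs. Given an instance graph $H$, the effective encoding builds in polynomial time a monadic lift $G_H^+$ of some $G_H$ in the class, and the fixed interpretation $I=(\nu,\eta)$ produces $H$ from $G_H^+$. For a sentence $\chi$ about graphs, we form its backward translation $\chi^I$. This relativizes every vertex quantifier to $\nu$, relativizes every set quantifier by $\forall z\,(z\in X\rightarrow\nu(z))$, and replaces each adjacency atom $E(x,y)$ by $\eta(x,y)$. By the translation theorem~\cite[Theorem~7.10]{CourcelleEngelfriet2012}, $H\models\chi$ holds if and only if $G_H^+\models\chi^I$. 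The size of $\chi^I$ is bounded by $|\chi|\cdot O(|\nu|+|\eta|)$. The map $\chi\mapsto\chi^I$ is computable, and it stays in the same logic provided the logic contains the formulas $\nu,\eta$ and is closed under this substitution. That closure is exactly the hypothesis in~(ii); in~(i) it holds automatically for first-order logic.

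For~(i), I would reduce from first-order model checking on arbitrary graphs, which is $\mathrm{AW}[*]$-complete parameterized by sentence length~\cite{FlumGrohe2005}. The reduction sends $(H,\chi)$ to $(G_H^+,\chi^I)$. The new parameter $|\chi^I|$ depends only on $|\chi|$, and the instance is computed in polynomial time, so this is an fpt-reduction. For the existential claim, I would reduce from \textsc{Clique}, which is $\mathrm{W}[1]$-hard. The clique sentence $\exists x_1\cdots\exists x_k\bigl(\bigwedge_{i<j}(x_i\neq x_j\wedge E(x_i,x_j))\bigr)$ uses adjacency only positively. After relativization and substitution, the conjuncts $\nu(x_i)$ and $\eta(x_i,x_j)$ are existential formulas in positive positions, so pulling their quantifiers to the front (after renaming) yields an existential first-order sentence of length $O(k^2)$.

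For~(ii), I would fix the existential monadic sentence $\chi_3$ for 3-colorability. It reads $\exists X_1\exists X_2\exists X_3$, followed by a first-order matrix saying that every vertex lies in some $X_i$ and that no edge has both endpoints in the same $X_i$. Its backward translation $\chi_3^I$ is one fixed $\mathcal L$-sentence, since $I$ is fixed. The map $H\mapsto G_H^+$ is then a polynomial-time reduction from \textsc{3-Colorability}, which gives NP-hardness for this single sentence. When $I$ is first-order, $\chi_3^I$ consists of three existential set quantifiers, the relativization guards on those sets, and a first-order matrix, so it has the claimed prefix form.

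I do not expect any step to be a real obstacle; the argument is bookkeeping. The only delicate point is checking that substitution preserves the required fragments. In the existential-FO case, the substituted formulas $\eta$ must occur only positively, which holds for the clique sentence but would fail for general sentences. One also has to confirm that relativizing set quantifiers does not leave the logic. For the metric extensions, the translation never evaluates distances in $H$, because the source sentences use only plain adjacency, so the metric atoms appear only inside $\nu$ and $\eta$ and are evaluated in $G_H$ as intended.
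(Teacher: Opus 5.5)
Your proposal is correct and follows the paper's route. You use backward translation through the fixed interpretation (\cite[Theorem~7.10]{CourcelleEngelfriet2012}), reduce from first-order model checking on graphs for $\mathrm{AW}[*]$-hardness, reduce from \textsc{Clique} via the translated clique sentence, in which adjacency occurs only positively, for $\mathrm{W}[1]$-hardness, and reduce from \textsc{3-Colorability} via its fixed existential monadic sentence for NP-hardness, which is exactly how the paper justifies the lemma.
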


\subsection{Unguarded hardness on paths and caterpillars}

On paths, we can construct the encoding in polynomial time by taking a path of length~$2a_m$ and assigning the two unary predicates to the positions defined above.
To obtain graphs without input colors, we replace these predicates by small groups of pendant leaves.
This construction works for both unguarded predicates, since adding the leaves does not change distances between vertices of the original path.
We call a tree a \emph{caterpillar} if we can obtain it from a path by attaching any number of leaves to the vertices of the path.
The following lemma follows by encoding colors in the obvious way by vertex degrees. 

\begin{lemma}
\label{lem:uncolored-midpoint-encoding}
There are fixed existential $\FO[\EqDist]$- and $\FO[\Close]$-interpretations such that we can encode every finite graph in an uncolored caterpillar of polynomial size, and we can construct this caterpillar in polynomial time.
\end{lemma}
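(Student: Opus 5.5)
We start from the colored construction in \Cref{lem:metric-midpoint-encoding}\textnormal{(i)}: we take the path $p_0p_1\cdots p_{2a_m}$ with $a_1<\cdots<a_m$ from \Cref{lem:sidon}, together with the two color classes $U=\set{p_{2a_i}}$ and $Q=\set{p_{a_i+a_j}:ij\in E(H)}$. We then remove the colors by attaching pendant leaves to the spine. Each spine vertex in $U$ gets exactly one pendant leaf, each spine vertex in $Q$ gets exactly two, and every other spine vertex gets none. The result is a caterpillar with $O(m^3)$ vertices, and we can write it down directly from the positions. Leaves lie on no shortest path between spine vertices, so distances between spine vertices stay $|s-t|$. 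Hence $\mathsf{Mid}_R(p_c;p_{2a_i},p_{2a_j})$ still holds exactly when $c=a_i+a_j$. A representative is never a midpoint of two distinct representatives, since $a_i+a_j=2a_k$ would contradict the Sidon property; so $U$ and $Q$ are disjoint and the leaf counts are well defined.

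It remains to define $U$ and $Q$ by fixed existential first-order formulas. Let $\mathsf{Spine}(x)$ say that $x$ has at least two neighbors. The endpoints $p_0$ and $p_{2a_m}$ are the exceptions, but each is a representative and so has a leaf, which gives degree at least two (when $m\ge1$ we have $2a_m\ge2$). Define
\begin{itemize}
  \item $\mathsf{Leaf}(y)$ to be the formula saying that $y$ has exactly one neighbor, that neighbor $x$ satisfies $\mathsf{Spine}(x)$, and $y$ is not itself a spine vertex;
  \item $U'(x)$ to say that $x$ has exactly one leaf neighbor;
  \item $Q'(x)$ to say that $x$ has exactly two leaf neighbors.
\end{itemize}

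The delicate point is degenerate configurations in which a leaf is misread as spine or vice versa. These include the tiny case $m=1$, where the spine is $p_0p_1p_2$ with leaves on both ends, and interior spine vertices of degree two without leaves. I would rule them out by the explicit degree bookkeeping above, or more simply by attaching three extra leaves to every spine vertex. Then spine vertices have degree at least four, leaves have degree one, and $U$ and $Q$ are marked by having exactly four and five leaf neighbors respectively.

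``Exactly $k$ leaf neighbors'' is not existential, since it needs a universal quantifier for ``no further leaf.'' Existential formulas cannot express exact counts, but they can express lower bounds. So I would use \emph{at least} counts together with a nesting of the leaf numbers:
\begin{itemize}
  \item ordinary spine vertices get three leaves;
  \item $U$-vertices get four leaves;
  \item $Q$-vertices get five leaves.
\end{itemize}
A vertex $x$ is in $U$ when it has at least four leaf neighbors, which are neighbors of degree one. Degree one is itself universal, so instead I define a leaf as a neighbor $y$ of $x$ that is not a spine vertex. Recognizing spine vertices existentially is also awkward.

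A cleaner route is to avoid identifying leaves at all, because the domain formula and the $\mathsf{Mid}$ test only need spine vertices. If we set the degrees to $4$ and $5$ and require the domain formula to say ``at least four neighbors'' existentially, then $Q$-vertices, with degree $5+2$, also satisfy it. This fails, since the domain would then contain unwanted vertices.

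The resolution is to swap which class is recognized by a lower bound: give the $U$-vertices the largest number of leaves, say ten, and the $Q$-vertices seven, with all other spine vertices at three. Then
\begin{itemize}
  \item $U$ is defined by ``at least $11$ distinct neighbors'';
  \item for adjacency, we ask for a vertex $z$ with at least $8$ neighbors satisfying $\mathsf{Mid}_R(z;x,y)$.
\end{itemize}
The remaining danger is that a $U$-vertex with at least $8$ neighbors is accepted as $z$. This is harmless, because a $U$-vertex is never a midpoint of two distinct representatives, by the Sidon argument above. Both formulas are existential, the caterpillar has polynomial size, and $\EqDist$ and $\Close$ behave as in \Cref{lem:metric-midpoint-encoding} since all the vertices involved lie on the geodesic spine.
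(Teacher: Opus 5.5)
Your final construction (ten leaves at representatives, seven at edge midpoints, three at all other spine vertices, with an existential lower bound of $11$ neighbors for the domain and $8$ neighbors for the midpoint witness) is correct and follows essentially the paper's approach. The paper differs only in bookkeeping: it extends the backbone to positions $0,\ldots,2a_m+1$ so that every marked vertex is internal, which lets it use just three and two leaves with degree thresholds $5$ and $4$. Both proofs use the same Sidon argument to rule out representatives as midpoints. One slip occurs in an abandoned intermediate step, where you say $p_0$ is a representative; it is not, since $a_1>0$, but this plays no role once every spine vertex carries leaves. As in the paper, you should also keep $x\neq y$ in the adjacency formula and handle the empty graph $H$ separately.
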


\begin{proof}
We first assume that $V(H)=\set{1,\ldots,m}$ with $m\geq1$, and we choose the sequence from~\Cref{lem:sidon}.
We take a path with positions $0,\ldots,2a_m+1$, which forms the backbone of the caterpillar.
We attach three leaves at each representative position~$2a_i$, and for each edge~$ij$ of~$H$ we attach two leaves at position~$a_i+a_j$.
By the Sidon property, the positions for different edges are distinct, and none of them is a representative position.
All these positions are internal vertices of the backbone, so the representatives have degree five, the vertices representing edges have degree four, and every other vertex has degree at most two.

For each fixed~$d$, we can express degree at least~$d$ by an existential first-order formula asking for~$d$ distinct neighbors.
We therefore use degree at least five as the domain formula.
For either $R\in\set{\EqDist,\Close}$, we declare two distinct representatives~$x,y$ adjacent if there is a vertex~$z$ of degree at least four satisfying $\mathsf{Mid}_R(z;x,y)$.
Every such vertex~$z$ lies on the backbone.
It cannot be another representative, since an equality $2a_h=a_i+a_j$ with $i\neq j$ would contradict the Sidon property, which also includes sums with repeated indices.
Thus a vertex can satisfy the adjacency test exactly when we added the corresponding edge marker, as in~\Cref{lem:metric-midpoint-encoding}.
Both defining formulas are existential first-order, and the caterpillar has $O(m^3)$ vertices.
If~$H$ has no vertices, we use a single-vertex tree, on which the domain formula selects no vertex.
\end{proof}

\begin{theorem}
\label{thm:unguarded-model-checking-hardness}
\label{prop:eqdist-tree-hardness}
\label{thm:close-tree-hardness}
For either $R\in\set{\EqDist,\Close}$, the following statements hold both on paths expanded by two fixed unary predicates and on uncolored caterpillars.
\begin{enumerate}[label=(\roman*)]
  \item Model checking $\FO[R]$ is $\mathrm{AW}[*]$-complete parameterized by sentence length.
  \item Model checking the existential fragment of~$\FO[R]$ is $\mathrm{W}[1]$-complete parameterized by sentence length.
  \item There is a fixed $\mathrm{EMSO}_1[R]$ sentence whose model-checking problem is NP-complete.
\end{enumerate}
The hardness statements therefore also hold on trees, even when we add bag-connected treewidth to the parameter.
They remain true when both unguarded predicates are available.
\end{theorem}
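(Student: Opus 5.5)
We prove the hardness parts by combining the uncolored caterpillar encoding of \Cref{lem:uncolored-midpoint-encoding}, and its colored-path version from \Cref{lem:metric-midpoint-encoding}\textnormal{(i)}, with the transfer statement \Cref{lem:encoding-hardness-transfer}. The upper bounds come from the fact that both atoms can be evaluated in polynomial time once all pairwise distances are known.

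\emph{Hardness.} On colored paths we build the encoding as follows. Given $H$, we take a path of length $2a_m$ using the Sidon sequence of \Cref{lem:sidon}, and we place the predicates $U$ and $Q$ at the positions $2a_i$ and $a_i+a_j$. This is a polynomially effective encoding, and the domain and adjacency formulas of both interpretations are existential first-order. On caterpillars, \Cref{lem:uncolored-midpoint-encoding} gives the same kind of encoding. We then apply \Cref{lem:encoding-hardness-transfer}.
\begin{itemize}
\item Backward translation of arbitrary first-order graph sentences gives $\mathrm{AW}[*]$-hardness of $\FO[R]$. The translated sentence has length bounded by a function of the original sentence length, because the interpretation is fixed.
\item Translating the existential sentence for a $k$-clique gives $\mathrm{W}[1]$-hardness of the existential fragment. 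Substituting existential domain and adjacency formulas at positive occurrences keeps the sentence existential. We also need the clique sentence to use only positive adjacency together with inequalities, and this is the standard sentence.
\item Translating the 3-colorability sentence, three existential set quantifiers followed by a first-order formula, gives a fixed $\mathrm{EMSO}_1[R]$ sentence with an NP-hard model-checking problem.
\end{itemize}
The only subtle point in the last item is that the first-order matrix of the 3-colorability sentence mentions adjacency negatively. This causes no problem, since $\mathrm{EMSO}$ only constrains the set quantifiers, and the matrix may be any $\FO[R]$ formula.

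\emph{Membership.} For all three upper bounds we first run BFS from every vertex, which computes all distances in polynomial time. After that, every $\EqDist$ or $\Close$ atom is decided by a table lookup.
\begin{itemize}
\item For membership in $\mathrm{AW}[*]$, we treat the precomputed atoms as ordinary relations of a finite relational structure of polynomial size. The sentence is then a plain first-order sentence over this structure, and first-order model checking on arbitrary finite structures is in $\mathrm{AW}[*]$~\cite{FlumGrohe2005}.
\item For membership in $\mathrm{W}[1]$, we use that existential first-order model checking on arbitrary relational structures, with relations of fixed arity at most four, lies in $\mathrm{W}[1]$.
\item For membership in NP, we guess the sets and then evaluate the first-order matrix in polynomial time, which is possible because the sentence is fixed.
\end{itemize}

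\emph{Closing remarks.} Paths and caterpillars are trees, so their bag-connected treewidth is at most one. The hardness statements therefore persist when $\ctw$ is added to the parameter. When both predicates are available, the hardness reductions still apply because they use only one of them, and membership is unaffected because both atoms are precomputed.

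The main obstacle is checking that the existential form survives backward translation in part (ii). Domain relativization of existential quantifiers introduces conjunctions with the existential domain formula, which is harmless. Adjacency appears only positively in the clique sentence. So the translated sentence stays existential, and this is what \Cref{lem:encoding-hardness-transfer}\textnormal{(i)} requires.
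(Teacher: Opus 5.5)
Your proposal is correct and follows essentially the same route as the paper. For hardness, you apply \Cref{lem:encoding-hardness-transfer} to the colored-path encoding and to \Cref{lem:uncolored-midpoint-encoding}. For membership, you compute all distances, materialize $\EqDist$ and $\Close$ as explicit relations, and invoke the standard (existential) first-order model-checking results, together with guess-and-check for $\mathrm{EMSO}_1$. You finish with the observation that trees have $\ctw\le 1$. Your explicit check that substituting the existential domain and adjacency formulas at positive occurrences keeps the clique sentence existential is exactly the content the paper packages into part~\textnormal{(i)} of the transfer lemma.
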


\begin{proof}
We apply~\Cref{lem:encoding-hardness-transfer} to the polynomial path encoding and to~\Cref{lem:uncolored-midpoint-encoding}.

For the upper bounds, we first compute all graph distances and use them to construct the relations $\EqDist$ and $\Close$ explicitly in polynomial time.
We can then apply the standard membership results for first-order and existential first-order model checking over a fixed relational vocabulary~\cite{FlumGrohe2005}.
For a fixed existential-monadic sentence, we guess its sets and check the remaining first-order formula in polynomial time, which proves membership in NP.
The same NP argument applies to $\EqGC$, because, once its set argument is fixed, we can check in polynomial time that it induces a cycle, compare its internal distances with the graph distances, and test the required equality.

All the graphs in these constructions are trees, so their bag-connected treewidth is at most one.
\end{proof}

The same hardness results hold on fixed unary expansions of any class in which we can efficiently construct sufficiently long geodesics in graphs of polynomial size.
More precisely, we require an algorithm that, given~$m$, constructs in time polynomial in~$m$ a graph in the class together with a listed geodesic of length at least~$6m^3$.
We can then apply the encoding to define the two unary predicates and use the same fixed existential interpretations.
Unbounded component diameter alone does not guarantee such an algorithm.

\medskip

A related idea appears in the hardness proof of Grohe and Schweikardt~\cite[Section~4]{GroheSchweikardt2018}.
They encode arbitrary graphs in trees of height three and recover adjacency by comparing the numbers of certain neighbors.
Here we use positions along a path and compare distances to marked vertices, so we also use numerical information inside a simple graph to describe an arbitrary graph.
Their counting logic allows comparisons between cardinalities, whereas the counting atoms of $\CMSO$ used here only test cardinalities modulo fixed integers.
Their construction also gives a conclusion about monadic independence, since the formulas that select the encoded vertices and define their adjacency are fixed and do not depend on the graph being encoded.
We therefore obtain a fixed $\mathrm{FOC}(\{P_=\})$-interpretation that produces every finite graph from an uncolored tree of height at most three, where $P_=$ tests whether two counting terms have the same value.
In our terminology, this means that the class of trees of height at most three is monadically $\mathrm{FOC}(\{P_=\})$-independent, so bounded treedepth does not guarantee monadic dependence for this logic.

\subsection{Effective guarded hardness}

For guarded equidistance, we can use the same approach when we can construct a sufficiently long geodesic cycle.
For the treewidth obstruction, we use the classical grid encoding.
We state the two assumptions separately, since the cycle construction uses only vertex sets, while the grid construction also uses edge sets.

\begin{definition}[Constructing obstructions in polynomial time]
\label{def:polynomial-geodesic-richness}
\label{def:polynomial-grid-richness}
\label{def:efficient-ctw-encoding}
We say that a graph class~$\mathscr C$ is \emph{polynomially geodesically rich} if, given~$m\geq1$, we can construct in polynomial time a graph of polynomial size in~$\mathscr C$ together with a geodesic cycle of length greater than~$12m^3$, whose vertices are listed in cyclic order.
We say that~$\mathscr C$ is \emph{polynomially grid-minor rich} if, given~$r\geq1$, we can construct in polynomial time a graph of polynomial size in~$\mathscr C$ together with a minor model of the $r\times r$ grid, including lists of its branch sets and one edge realizing each grid adjacency.
Here the time and size bounds are polynomial in the numerical values of~$m$ and~$r$, respectively.
If at least one of these two conditions holds, we say that~$\mathscr C$ \emph{admits efficient encodings of its bag-connected-treewidth obstructions}.
\end{definition}

\begin{proposition}
\label{prop:geodesic-rich-universal}
\label{cor:geodesic-rich-hardness}
If~$\mathscr C$ is polynomially geodesically rich, then it is polynomially $\MSO[\EqGC]$-universal using three unary predicates.
On these unary expansions, there is a fixed $\mathrm{EMSO}_1[\EqGC]$ sentence whose model-checking problem is NP-complete.
In particular, this holds on cycles with three unary predicates.
\end{proposition}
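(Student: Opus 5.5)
The plan has two halves: universality from \Cref{lem:metric-midpoint-encoding}\textnormal{(ii)}, and NP-completeness by backward translation through \Cref{lem:encoding-hardness-transfer}\textnormal{(ii)} with membership in NP handled separately.

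\textbf{Universality.} Given a finite graph $H$ with $m:=|V(H)|\geq1$, I use polynomial geodesic richness to construct, in time polynomial in $m$, a graph $G_H\in\mathscr C$ together with a geodesic cycle $C$ of length greater than $12m^3$ listed in cyclic order. \Cref{lem:metric-midpoint-encoding}\textnormal{(ii)} then supplies the three unary predicates $U,Q,Z$, computable in polynomial time from the listed cycle. It also supplies a fixed $\MSO[\EqGC]$-interpretation producing $H$, independent of $H$. For $m=0$ I take any small member of the class with $U=\varnothing$. The size of $G_H$ is polynomial in $m$, hence in $|V(H)|+|E(H)|$. This gives \Cref{def:effective-universality}.

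\textbf{Hardness.} I backward-translate the fixed existential monadic sentence for 3-colorability, $\exists X_1\exists X_2\exists X_3\,\chi$ with $\chi$ first order. Relativizing quantifiers to $U$ and substituting the adjacency formula of the interpretation for $E$ yields a fixed $\MSO[\EqGC]$ sentence that holds in the lift of $G_H$ iff $H$ is 3-colorable. The main obstacle is keeping this sentence in $\mathrm{EMSO}_1$. The adjacency formula from \Cref{lem:metric-midpoint-encoding}\textnormal{(ii)} quantifies a set $C$ equal to $Z$, and this quantifier sits under negations inside $\chi$.

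I plan to pull $C$ out as a single leading existential set quantifier, together with the first-order condition $\forall w\,(w\in C\leftrightarrow Z(w))$. This is legitimate because $C$ is uniquely determined by $Z$, so replacing each inner quantified copy by the outer free $C$ gives an equivalent formula. Every remaining quantifier is then a first-order quantifier over vertices, and the $\EqGC$ atoms become atoms with set argument $C$. The result has the form $\exists C\exists X_1\exists X_2\exists X_3\,\chi'$ with $\chi'\in\FO[\EqGC]$, i.e., an $\mathrm{EMSO}_1[\EqGC]$ sentence, and NP-hardness follows.

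\textbf{Membership in NP.} I guess the existential sets and evaluate the first-order remainder in polynomial time. Each $\EqGC$ atom is checked with its set argument fixed: test that $C$ induces a cycle, compare the distances along it with the graph distances from BFS, and test the required equality, as already noted in the proof of \Cref{thm:unguarded-model-checking-hardness}.

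\textbf{Cycles.} For the final claim, cycles are polynomially geodesically rich. Given $m$, output $C_r$ with $r=12m^3+1$, which is itself a geodesic cycle, listed in order.
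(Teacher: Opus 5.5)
Your proposal is correct and follows the paper's argument: universality comes from \Cref{lem:metric-midpoint-encoding}\textnormal{(ii)}, and the $\mathrm{EMSO}_1$ form is obtained exactly as in the paper. You quantify a single set $C$ equal to $Z$ up front together with the three color classes, so that the remaining conditions are first-order with guarded $\EqGC$ atoms. Your explicit NP-membership argument and your check that the cycles $C_r$ with $r=12m^3+1$ are polynomially geodesically rich fill in details that the paper leaves implicit.
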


\begin{proof}
We use the cycle constructed by the assumed algorithm and apply~\Cref{lem:metric-midpoint-encoding}\textnormal{(ii)} to obtain the encoding in polynomial time.
To obtain a sentence with only existential set quantifiers, we quantify a set~$C$ equal to the input predicate~$Z$ and three sets that partition the representatives into color classes.
For any two distinct representatives~$x,y$ in the same color class, we require that there is no vertex~$z\in Q$ satisfying $\EqGC(C,z,x,z,y)$.
After the initial existential set quantifiers, all these conditions are first-order conditions with guarded metric atoms, and together they express that the encoded graph is 3-colorable.
\end{proof}

We next combine the standard interpretation of finite graphs in grids with the interpretation of minors in incidence structures~\cite[Lemmas~5.1 and~6.7]{BlumensathCourcelle2010}.
Given a grid minor model, we can construct the encoding in polynomial time, using a fixed finite number of unary predicates and no metric predicates.
We therefore obtain the following statement.

\begin{proposition}
\label{prop:grid-rich-universal}
Every polynomially grid-minor rich class is polynomially $\MSOtwo$-universal.
\end{proposition}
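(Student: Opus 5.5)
The plan is to compose two fixed interpretations and observe that the composite is an $\MSOtwo$-interpretation whose unary predicates can be computed in polynomial time from the supplied grid minor model. Given $H$ with $m$ vertices, we first choose $r$ polynomial in $m$, say $r:=\max\set{1,m^2}$ or any bound for which the standard construction needs it. We then call the assumed algorithm to obtain $G_H\in\mathscr C$ together with branch sets $(V_{i,j})_{i,j\in[r]}$ and one realizing edge for each grid adjacency. Both $G_H$ and the model have size polynomial in $r$, and hence in $|V(H)|+|E(H)|$.

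The first step colors the incidence structure $\mathcal I(G_H)$ so that a fixed $\MSOtwo$-interpretation produces the $r\times r$ grid, following the interpretation of minors in incidence structures~\cite[Lemma~6.7]{BlumensathCourcelle2010}. We mark one representative vertex in each branch set and use these marked vertices as the domain. We mark all edges lying inside branch sets with one predicate, choosing for instance a spanning tree of each $G[V_{i,j}]$, computed in polynomial time. We mark the chosen realizing edges with a second predicate. Two representatives are adjacent exactly when there are sets $X\ni x$ and $Y\ni y$ that are connected via marked internal edges, are maximal with this property (they are the branch sets), and are joined by a marked realizing edge. All of this is expressible in $\MSOtwo$ with edge-set quantification. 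To make the grid rows and columns available, we may add two further predicates distinguishing horizontal from vertical realizing edges.

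The second step uses the standard interpretation of arbitrary graphs in grids~\cite[Lemma~5.1]{BlumensathCourcelle2010}. Vertices of $H$ correspond to diagonal grid positions $(i,i)$, marked by a unary predicate, and an edge $ij$ is flagged by a unary mark at position $(i,j)$. Adjacency of $(i,i)$ and $(j,j)$ is then expressed by the existence of a marked position in the same row as one and the same column as the other. Rows and columns are $\MSO$-definable from the horizontal and vertical edge predicates as connected components of the corresponding subgraphs. These marks are pulled back to the representative vertices of the branch sets. Composing the two interpretations, which is standard for monadic interpretations, gives one fixed $\MSOtwo$-interpretation with a fixed number of unary predicates. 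None of these choices depends on $H$, and every predicate is computed directly from the listed model in polynomial time.

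The main obstacle is the first step: the realizing edges may have endpoints inside branch sets in a way that makes the maximal-connected-set description ambiguous, for example if internal and realizing edges interact or if branch sets are adjacent via several edges. I would sidestep this by marking only a spanning tree inside each branch set and only the single listed realizing edge per grid adjacency. The branch sets are then exactly the components of the subgraph formed by the internal-edge mark, which is cleanly $\MSOtwo$-definable, and adjacency only uses the marked realizing edges.
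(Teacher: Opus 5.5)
Your proposal is correct and follows the paper's route. The paper obtains the statement by composing the $\MSOtwo$-interpretation of a minor in an incidence structure with the standard interpretation of arbitrary graphs in colored grids (Lemmas~6.7 and~5.1 of Blumensath and Courcelle), computing all unary predicates in polynomial time from the supplied grid-minor model; you make both interpretations explicit, and the worry in your last paragraph is harmless either way, since disjoint connected branch sets are already the components of the subgraph formed by all edges inside branch sets, so restricting to spanning trees is optional.
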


We can now combine this proposition with the cycle encoding, since the grid encoding uses only $\MSOtwo$ and the cycle encoding uses only $\MSO[\EqGC]$, both of which are available in $\MSOGC$.
By applying \Cref{lem:encoding-hardness-transfer}\textnormal{(ii)}, we obtain the following consequence.

\begin{corollary}
\label{cor:effective-ctw-hardness}
If a graph class admits efficient encodings of its bag-connected-treewidth obstructions, then model checking~$\MSOGC$ on a fixed finite unary expansion of its incidence structures is NP-hard for one fixed sentence.
It is therefore para-NP-hard parameterized by formula length, also for all counting extensions and for the logic that additionally contains~$\SP$.
\end{corollary}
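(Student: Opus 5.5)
The plan is to combine the two efficient universality results with the backward-translation lemma, splitting according to which obstruction is efficiently constructible. By \Cref{def:efficient-ctw-encoding}, the class $\mathscr C$ is polynomially geodesically rich or polynomially grid-minor rich. In the first case, \Cref{prop:geodesic-rich-universal} gives polynomial $\MSO[\EqGC]$-universality with three unary predicates. It even supplies a fixed $\mathrm{EMSO}_1[\EqGC]$ sentence with NP-complete model checking on the corresponding lifts. The cycle interpretation uses only vertex-set quantification and vertex colors, so it transfers to incidence structures: we restrict element variables to vertex elements and set variables to subsets of $V(G)$, exactly as in \Cref{lem:metric-midpoint-encoding}. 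The resulting sentence is therefore an $\MSOGC$ sentence evaluated on a unary expansion of $\mathcal I(G_H)$. In the second case, \Cref{prop:grid-rich-universal} gives polynomial $\MSOtwo$-universality, and $\MSOtwo\subseteq\MSOGC$.

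In either case we then apply \Cref{lem:encoding-hardness-transfer}\textnormal{(ii)}. The encoded graph $H$ is 3-colorable if and only if $G_H$, with its computed unary predicates, satisfies the backward translation of the fixed $\mathrm{EMSO}$ 3-colorability sentence. The translation replaces adjacency by the defining formula and relativizes vertex and set quantifiers to the domain formula. Because that sentence talks only about ordinary adjacency in $H$, no distances in the output graph must be expressed. The metric atoms appear only inside the adjacency formula and are evaluated in $G_H$, as the convention for interpretations requires. The construction of $G_H$ and its predicates runs in polynomial time, and the sentence is fixed. This gives a polynomial reduction from 3-colorability, so the problem is NP-hard for one fixed sentence, and hence para-NP-hard parameterized by formula length.

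The extensions are immediate. The counting extensions and the logic $\CMSOCTW$ contain $\MSOGC$ syntactically, so the same fixed sentence witnesses hardness there. The unary expansion is a fixed finite family: at most three predicates in the cycle case, and the fixed number from \cite{BlumensathCourcelle2010} in the grid case. If both richness conditions hold, we may simply pick one of them.

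The step needing most care is checking that backward translation is legitimate for $\MSOGC$ in the mixed incidence setting. In the cycle case, the adjacency formula quantifies a vertex set $C$ equal to the input predicate $Z$. After relativizing to vertex sorts, this must still be a well-formed $\MSOGC$ formula on $\mathcal I(G)$ with $\EqGC$ evaluated in $G$. I would verify this directly from the definition of the metric extensions, noting that the atoms take vertex arguments and the guard set is a vertex set.
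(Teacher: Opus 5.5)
Your proposal is correct and follows essentially the same route as the paper. You split on geodesic versus grid-minor richness and invoke \Cref{prop:geodesic-rich-universal}, transferred to incidence structures by vertex-sort relativization as in \Cref{lem:metric-midpoint-encoding}, or \Cref{prop:grid-rich-universal}; since both interpretations lie in $\MSOGC$, you then apply \Cref{lem:encoding-hardness-transfer}(ii), and the counting and $\SP$ extensions follow by syntactic inclusion, exactly as in the paper.
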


This proves the colored statement of~\Cref{thm:hardness-intro}.
For the uncolored statement, we use the assumed polynomial encoding that removes colors and translate the fixed hardness sentence through this encoding.
Such an encoding must preserve the meaning of the metric atoms as well as the interpreted graph relations.
The construction with pendant leaves works for the unguarded predicates, but for a general class there is no guarantee that a construction removing colors stays inside that class.

For vertex-set logic, the cycle construction already gives hardness for~$\CMSOGC$.
The same NP-hardness result holds whenever the class is polynomially $\CMSO$-universal, by the general transfer lemma.
The cliquewidth characterization alone does not give the required polynomial-time encoding, and the grid construction uses edge sets, which are not available in vertex-set logic.
Thus the logical characterizations hold without any efficiency assumption, while the general hardness results require the additional algorithmic assumptions stated above.

\subsection{Hardness for distance thresholds}
\label{sec:dist-hardness}

The remaining hardness result concerns~$\Dist$ when its threshold is part of the input.
Here we compare a distance with one input number, rather than comparing two distances as we do with~$\EqDist$ and~$\Close$.
We use a separate construction for this case.

We encode accepting computations of a fixed nondeterministic Turing machine deciding \mbox{\textsc{3-Satisfiability}}.
We first view a computation as a grid whose rows describe successive configurations of the machine, and then write these rows one after another along a path.
Unary input predicates describe the initial configuration and the row boundaries, while existentially quantified vertex sets describe the symbols in the cells.
We use the metric predicate to identify corresponding cells in consecutive rows, which allows us to check that each row follows from the preceding one by a legal transition.
The numerical threshold of~$\Dist$ gives the width of a row.
\pagebreak

\thmdisthardnessintro*

\begin{proof}
Statement~\textnormal{(ii)} follows from~\Cref{thm:unguarded-model-checking-hardness} for~$\Close$ on colored paths, so we only need to prove~\textnormal{(i)}.
We fix a nondeterministic single-tape Turing machine~$M$ with a one-way tape that decides \textsc{3-Satisfiability} in polynomial time.
We choose a polynomial~$p$ that bounds its running time and tape usage on inputs of length~$m$, with $p(m)\geq m+1$.
We use $C:=p(m)+2$ cells for each configuration, including two permanent boundary cells that the head never scans.
We allow a halted configuration to be followed by the same configuration, so that we can describe every computation using exactly $H:=p(m)+1$ rows.

We let $\Gamma$ be the fixed configuration alphabet, which includes the boundary symbols and symbols describing both the state and the scanned tape symbol at the head position.
We choose a fixed finite set $\mathcal W\subseteq\Gamma^6$ of allowed $2$-by-$3$ windows.
If the upper middle cell of a window contains the head, we require its three lower symbols to agree with one legal transition of~$M$.
If the upper triple contains no head, we require its middle symbol to be copied to the lower row.
If the head appears only at an end of the upper triple, this window imposes no further condition, and we forbid upper triples containing more than one head.
For a valid upper configuration, the window centered at its head therefore determines all cells that may change.
The other windows, together with the boundary conditions, require every remaining cell to stay unchanged.
Thus the lower row satisfies all window conditions exactly when it is a legal successor of the upper row.
Since we fix the initial row to be the initial configuration, we can apply this argument successively to all rows.

For an input word~$a$ of length~$m$, we take the path $P_a=v_0v_1\cdots v_{HC-1}$ and use $v_{rC+c}$ for the cell in row~$r$ and column~$c$.
We let the unary predicate $\mathsf{Cell}$ hold at every vertex of this path.
For $i\in\{0,1,2\}$, we let the unary predicate~$Q_i$ hold at exactly those vertices~$v_j$ with $j\equiv i\pmod 3$.
We can then define the directed successor relation $\mathsf{Succ}(x,y)$ by requiring that $x$ and~$y$ are adjacent and belong to consecutive residue classes, in the order $Q_0,Q_1,Q_2,Q_0$.
We use a further predicate $R_{\mathrm{end}}$ for the last cell of each row and define the horizontal successor relation by $\mathsf{Hor}(x,y):=\mathsf{Succ}(x,y)\wedge\neg R_{\mathrm{end}}(x)$.
The remaining unary predicates describe the permanent boundary symbols, the required symbols of the initial row, and the last row.
The number of these predicates depends only on~$M$.

We set $\eta(\delta):=C$ and define the vertical successor relation by
\begin{equation}
  \mathsf{Ver}_\delta(x,y)
  :=
  \Dist(x,y,\delta)
  \wedge
  \exists z\,
  \bigl(\mathsf{Succ}(z,y)\wedge\neg\Dist(x,z,\delta)\bigr).
  \label{eq:directed-distance-jump}
\end{equation}
For $x=v_i$ and $y=v_j$, the witness~$z$ must be $v_{j-1}$, and the distance conditions become $|j-i|\geq C$ and $|j-1-i|<C$.
Since the indices are integers, these two inequalities hold exactly when $j=i+C$.
Consequently, we have
\begin{equation}
  (P_a,\eta)\models\mathsf{Ver}_\delta(v_i,v_j)
  \quad\Longleftrightarrow\quad
  j=i+C,
  \label{eq:jump-correctness}
\end{equation}
so the formula identifies exactly the corresponding cells in consecutive rows.

For each $\gamma\in\Gamma$, we existentially quantify a vertex set~$X_\gamma$.
We require these sets to partition~$\mathsf{Cell}$, agree with the required initial and boundary symbols, satisfy the window conditions between every two consecutive rows, and place an accepting state in the last row.
We can express all these requirements by fixed first-order conditions using the unary predicates, $\mathsf{Hor}$, and~$\mathsf{Ver}_\delta$, with every cell variable restricted to~$\mathsf{Cell}$.
We write $\theta_\delta(\bar X)$ for their conjunction and obtain the fixed existential-monadic template
\begin{equation}
  \Psi(\delta):=\exists (X_\gamma)_{\gamma\in\Gamma}\,\theta_\delta(\bar X).
  \label{eq:fixed-dist-template}
\end{equation}
By the window conditions and~\eqref{eq:jump-correctness}, we have $(P_a,\eta)\models\Psi(\delta)$ exactly when~$M$ has an accepting computation on~$a$.
We can construct the colored path and the binary value $\eta(\delta)=C$ in polynomial time, which proves NP-hardness in~\textnormal{(i)}.
For membership in NP, we guess the sets~$X_\gamma$ and evaluate the fixed first-order formula in polynomial time.
Finally, we take~$\sigma$ to be the union of the fixed unary signatures used in this construction and in the colored-path construction for~\textnormal{(ii)}.
\end{proof}

Thus the logic with an unrestricted numerical threshold is already hard at bag-connected treewidth one, even for one fixed template and one threshold given as part of the input.
The packing-covering and one-cover theorems avoid this obstruction by restricting where numerical distance conditions may occur.

\subsection{W[1]-hardness of packing-covering}
\label{sec:hardness-packing-covering}

For every fixed packing-covering template, our algorithm is fixed-parameter tractable in bag-connected treewidth, although the exponent of the polynomial running time may depend on the template.
We now ask whether we can choose this exponent independently of the template and obtain a running time of the form $F(\abs{\Phi}+\ctw(G))n^{O(1)}$, while still keeping the numerical distance values outside the parameter.
We show that such an algorithm would imply $\mathrm{FPT}=\mathrm{W}[1]$, even when the input graph is a tree.

Recall that $\Sep_d(X,Y)$ requires distance at least~$d$ between every pair of distinct vertices $x\in X$ and $y\in Y$, while $\Cover_r(U,S)$ requires every vertex of the input target set~$U$ to lie within distance~$r$ of some vertex of~$S$.
In each covering atom, we use one unary input predicate for~$U$ and one selected-set variable for~$S$.
The selected sets satisfy a side formula, and the numerical distance conditions appear positively in a conjunction outside this formula.
We prove hardness separately for templates with only separation conditions and for templates with only covering conditions, using an ordinary $\MSO$ side formula in both cases.

\begin{theorem}
\label{thm:packing-covering-hardness}
Deciding whether a packing-covering template has a satisfying assignment is $\mathrm{W}[1]$-hard parameterized by $\abs{\Phi}+\ctw(G)$, even on unweighted trees over a fixed finite unary signature, with an ordinary $\MSO$ side formula and no additional numerical cardinality constraints.
This holds under each of the following restrictions:
\begin{enumerate}[label=(\roman*)]
  \item We use only separation conditions, all with the same input threshold, and require every selected set to be a singleton.
  \item We use only covering conditions, all with the same input radius and the same unary input target set, and each covering atom has one selected-set variable as its second argument.
\end{enumerate}
\end{theorem}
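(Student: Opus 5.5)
Both parts are parameterized reductions from \textsc{Multicolored Clique}. An instance has a graph $H$ whose vertices are split into color classes $V_1,\ldots,V_k$. The template has $O(k^2)$ set variables and $O(k^2)$ metric atoms, so $\abs{\Phi}$ is bounded by a function of $k$. The host graph is a tree, so $\ctw=1$. All the numerical information of $H$ goes into the input threshold or radius and into the lengths of paths in the tree. The key tool is the same as in \Cref{lem:sidon}: a Sidon-type numbering turns the condition that two chosen vertices are adjacent into the condition that a distance equals a prescribed value. That value is read off by a single shared threshold.

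For \textnormal{(i)}, I would take a subdivided star. Its center is $o$, and for each color $i$ and each vertex $v\in V_i$ there is a leg of length $L+a_v$. Here $a_v$ is an injective numbering from \Cref{lem:sidon} and $L$ is large. Unary predicates $\mathsf{Leaf}_i$ mark the leaf tips of color $i$. Tip distances are sums $2L+a_u+a_v$, but one threshold only gives the inequality $\geq d$. So for each color pair $i<j$ I would add gadget vertices $g_{uv}$ for the non-edges $uv$ of $H$. These are placed, via extra unary colors and carefully chosen leg lengths, so that they are close to the tip pair exactly when that pair is a non-edge. I would choose one common threshold $d$. The side formula says each $X_i$ is a singleton inside $\mathsf{Leaf}_i$, and it selects auxiliary singletons $Y_{ij}$. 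Each $Y_{ij}$ is tied by $\MSO$ (unary colors) to be the unique designated partner of the pair. Then $\Sep_d(X_i,Y_{ij})$, $\Sep_d(X_j,Y_{ij})$, and the pairwise $\Sep_d(X_i,X_j)$ together are violated exactly when the pair is a non-edge. I expect this encoding step to be the main obstacle: reconciling ``forbid non-edges'' with only lower-bound separation and one threshold. If the direct gadget fails, I would use distances on a tree of the form $D_i+D_j$ with a Sidon numbering, so that $2L+a_u+a_v\geq d$ for all pairs. I would then force adjacency through the $\MSO$ side formula, which can check membership in a unary predicate $E_{ij}$ marking edge-vertices adjacent to both selected legs. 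The separation atoms would then be used only to make the selections canonical. I would commit to this fallback if the gadget arithmetic becomes inconsistent.

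For \textnormal{(ii)}, I would reverse the roles. There is one target set $U$ (unary predicate $\mathsf{Check}$) consisting of one check vertex $c_{ij}$ per color pair, and one common radius $r$. For each color $i$ I would build a spider whose legs correspond to vertices of $V_i$. Check vertices hang off the center at depths chosen so that $d(c_{ij},\text{tip}_u)+d(c_{ij},\text{tip}_v)$ encodes $a_u+a_v$. Selecting $S_{ij}$ as a singleton edge-vertex of type $ij$ then lets $\Cover_r(U,S_{ij})$ hold exactly when the edge endpoints agree with the chosen $X_i,X_j$. The agreement is enforced by lengths $r-a_u$ and the side formula that links $S_{ij}$ to $X_i$ through unary incidence labels. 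In both parts I would finally verify three things. First, the constructed tree has polynomial size: the lengths are $O(\abs{V(H)}^3)$ by \Cref{lem:sidon}. Second, $\abs{\Phi}=O(k^2)$. Third, satisfying assignments correspond to multicolored cliques. This yields $\mathrm{W}[1]$-hardness for the parameter $\abs{\Phi}+\ctw(G)$.
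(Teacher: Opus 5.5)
There is a genuine gap. You never give the mechanism that moves the clique constraint into the metric atoms, and both of your routes for (i) fail for structural reasons.

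\textbf{Part (i).} Since $Y_{ij}$ is chosen existentially and $\Sep_d$ only imposes lower bounds, the atoms can only say that the chosen witness is far from the chosen tips. Making a non-edge fail requires that \emph{every} admissible witness be close. Forcing $Y_{ij}$ to be ``the designated partner of the pair'' by an $\MSO$ formula over a fixed set of colours on a tree is exactly the adjacency information you are trying to encode. Vertices in different branches of the centre can be related by such a formula only through the finitely many $\MSO$ types of their branches.

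Your fallback cannot work at all. If the $\MSO$ side formula enforces adjacency and the separation atoms only make choices canonical, satisfiability is decided by an $\MSO$ formula of length $f(k)$ on a tree. Courcelle's theorem evaluates this in FPT time, so the reduction would place \textsc{Multicolored Clique} in FPT, which is impossible unless $\mathrm{FPT}=\mathrm{W}[1]$. Moreover, a vertex adjacent to two legs of the star closes a cycle.

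The missing idea is the paper's complementary encoding, which turns two one-sided tests under one threshold into an equality test. A vertex gadget for $v$ with code $a$ has terminals $x_v^{+}$ and $x_v^{-}$ at depths $A+a$ and $A+M-a$ below a common root $r$. An edge gadget has, for each endpoint with code $b$, terminals $y^{+}$ and $y^{-}$ at depths $B+b$ and $B+M-b$. With $D:=A+B+M$, paths through $r$ give $d(x^+,y^-)=D+a-b$ and $d(x^-,y^+)=D-a+b$. Hence $\Sep_D(X_h^+,Y_{ij,h}^-)\wedge\Sep_D(X_h^-,Y_{ij,h}^+)$ holds if and only if $a=b$, that is, if and only if the independently selected edge $e_{ij}$ has the selected $v_h$ as its endpoint. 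The side formula only does bookkeeping: singletons of the right terminal type lying in the same component of $T-r$ as a centre of the prescribed role. No Sidon property is needed; distinct codes suffice.

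\textbf{Part (ii)} has the same gap in a sharper form. The atom $\Cover_r(U,S_{ij})$ with a singleton $S_{ij}$ must cover every check vertex, not only $c_{ij}$. More importantly, since $U$ is a fixed input set, the atom never mentions $X_i$ or $X_j$. It measures only distances from fixed vertices to the selected edge vertex, so it cannot compare that edge's endpoints with the selected vertices. ``Linking $S_{ij}$ to $X_i$ through unary incidence labels'' again moves the adjacency test into $\MSO$.

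The paper circumvents this as follows. It takes $U$ to be all vertex-gadget terminals, which are pairwise at distance greater than $D$. In the side formula it defines $W^+_{ij,h}:=(U\setminus X_h^+)\cup Y^-_{ij,h}$, and symmetrically $W^-_{ij,h}$. Every target except $x^+_{v_h}$ is then covered at distance zero, and $x^+_{v_h}$ can only be covered by $y^-_{e_{ij},h}$. So $\Cover_D(U,W^+_{ij,h})$ holds if and only if $a_h\le b_{ij,h}$, and the complementary atom gives $a_h\ge b_{ij,h}$.

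Finally, your predicates $\mathsf{Leaf}_i$ and $E_{ij}$ make the unary signature grow with $k$, while the statement requires a fixed signature. The paper distinguishes the $k+\binom{k}{2}$ gadget roles by attaching $h$ pendant marker vertices to a centre of role $h$. A first-order formula whose length depends only on $k$ counts these markers.
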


\begin{proof}
We give two parameterized reductions from \textsc{Multicolored Clique}, whose hardness follows from the $\mathrm{W}[1]$-completeness of \textsc{Clique} proved by Downey and Fellows~\cite{DowneyFellows1995W1}.
We are given a graph~$H$ with color classes $V_1,\ldots,V_k$, and we ask whether we can choose a vertex $v_i\in V_i$ for every $i\in[k]$ so that the chosen vertices are pairwise adjacent.
We may assume that $k\geq2$, that every color class is nonempty, and that every edge joins different color classes, since these restrictions do not affect the hardness of the problem.
We put $n_H:=\abs{V(H)}$, assign distinct codes $\alpha(v)\in[n_H]$ to the vertices of~$H$, and set
\[
  M:=n_H+1,\qquad A:=2M+2,\qquad B:=2,
  \qquad D:=A+B+M=3M+4.
\]
We use~$D$ as the only numerical distance value in each reduction, and give it as part of the input without including its value in the template length.

\paragraph{The common tree construction.}
We start with a root~$r$ and attach one choice gadget for every vertex of~$H$ and one choice gadget for every edge of~$H$.
Each gadget has its own center, which we join to~$r$, and we attach all its remaining vertices to this center along paths or pendant edges.
Thus each gadget forms one component of the tree after we delete~$r$.

For a vertex $v\in V_i$ with code $a=\alpha(v)$, we attach two paths to its center, with no other vertices in common, and call their other endpoints $x_v^+$ and~$x_v^-$.
We choose their lengths so that
\[
  d_T(r,x_v^+)=A+a,
  \qquad
  d_T(r,x_v^-)=A+M-a.
\]
The lengths of the paths from the center are therefore $A+a-1$ and $A+M-a-1$, respectively.
For an edge $e=uv$ with $u\in V_i$, $v\in V_j$, and $i<j$, we similarly attach four paths that meet only at the center, with other endpoints $y_{e,i}^+,y_{e,i}^-,y_{e,j}^+,y_{e,j}^-$.
For $h\in\{i,j\}$, we choose their lengths so that
\[
  d_T(r,y_{e,h}^+)=B+b_h,
  \qquad
  d_T(r,y_{e,h}^-)=B+M-b_h,
\]
where $b_i=\alpha(u)$ and $b_j=\alpha(v)$.
All paths from a center to a terminal have positive length, since every code lies between~$1$ and~$M-1$.

We use fixed unary predicates to distinguish the root, the centers, and the six terminal types.
These are the two terminal types in vertex gadgets and the four terminal types in edge gadgets, where we distinguish the smaller and larger color indices of the endpoints.
We also use a unary predicate~$U$ that contains both terminals of every vertex gadget and no other vertices.

To distinguish the $k+\binom{k}{2}$ roles of the gadgets without changing the unary signature, we give these roles distinct numbers from~$1$ to~$k+\binom{k}{2}$ and attach exactly~$h$ pendant marker vertices to every center whose role has number~$h$.
We mark these vertices with one further unary predicate.
For each role, we can then write an ordinary first-order formula saying that the center has exactly the required number of marked neighbors, and the length of this formula depends only on~$k$.
The markers do not change any distance between terminals, and the resulting graph~$T$ is still a tree.

\paragraph{The side formula.}
For every color~$i$, we introduce two selected vertex-set variables $X_i^+,X_i^-$, and for every pair $i<j$, we introduce four variables $Y_{ij,i}^+,Y_{ij,i}^-,Y_{ij,j}^+,Y_{ij,j}^-$.
We write $\bar X$ and $\bar Y$ for these two tuples of variables.
We use an ordinary $\MSO$ formula~$\psi_k(\bar X,\bar Y)$ to require that each set is a singleton containing a terminal of the appropriate type, that the two terminals selected for color~$i$ belong to the same vertex gadget of that role, and that the four terminals selected for the pair~$i,j$ belong to the same edge gadget of that role.
To express the last two requirements, we existentially choose a center of the required role and require the corresponding terminals to lie in the same component of $T-r$ as that center.
We can express this in ordinary $\MSO$, since two vertices lie in the same component of $T-r$ exactly when some connected vertex set contains both vertices and avoids~$r$.
We can therefore construct~$\psi_k$ with length bounded by a function of~$k$, without referring to~$n_H$, the numerical codes, or the distance value~$D$.

An assignment satisfying~$\psi_k$ chooses a vertex~$v_i$ from each color class and an edge~$e_{ij}$ between each pair of color classes, but the endpoints of~$e_{ij}$ need not yet be the selected vertices.
We write $a_i:=\alpha(v_i)$ and let $b_{ij,h}$ be the code of the endpoint of~$e_{ij}$ in~$V_h$, for $h\in\{i,j\}$.
Every path between a terminal of a vertex gadget and a terminal of an edge gadget passes through~$r$, so the relevant distances are
\begin{align}
  d_T(x_{v_h}^+,y_{e_{ij},h}^-)
    &=D+a_h-b_{ij,h},\label{eq:pc-hardness-plus}\\
  d_T(x_{v_h}^-,y_{e_{ij},h}^+)
    &=D-a_h+b_{ij,h}.\label{eq:pc-hardness-minus}
\end{align}
With these two equalities, we can use either separation or covering conditions to check that the selected vertices agree with the endpoints of the selected edges.

\paragraph{Separation conditions alone.}
We first consider the template
\[
  \Phi_k^{\mathrm{sep}}(\bar X,\bar Y;D)
  :=\psi_k(\bar X,\bar Y)\wedge
  \bigwedge_{1\leq i<j\leq k}\ \bigwedge_{h\in\{i,j\}}
  \Bigl(
    \Sep_D(X_h^+,Y_{ij,h}^-)
    \wedge
    \Sep_D(X_h^-,Y_{ij,h}^+)
  \Bigr).
\]
All selected sets are singletons, and the two terminals in each comparison belong to different gadgets, so the convention that $\Sep$ ignores equal vertices has no effect here.
For each pair $i<j$ and each $h\in\{i,j\}$, equations~\eqref{eq:pc-hardness-plus} and~\eqref{eq:pc-hardness-minus} show that the two separation conditions are equivalent to $a_h\geq b_{ij,h}$ and $a_h\leq b_{ij,h}$, respectively.
They therefore hold together exactly when $a_h=b_{ij,h}$.
Since the codes are distinct, these equalities require every selected edge~$e_{ij}$ to have endpoints~$v_i$ and~$v_j$.
Thus every satisfying assignment gives a multicolored clique, and conversely, the vertices and edges of a multicolored clique give a satisfying assignment.
This proves the required equivalence using only positive separation conditions with the common threshold~$D$.

\paragraph{Covering conditions alone.}
For the second reduction, we use the same tree and the same choices, but replace the separation conditions by covering conditions.
We first observe that any two distinct vertices of the input target set~$U$ have distance greater than~$D$.
If the terminals belong to different vertex gadgets, their distance is at least
\[
  2A+2=4M+6>D.
\]
The two terminals in one vertex gadget have distance
\[
  (A+a-1)+(A+M-a-1)=2A+M-2=5M+2>D,
\]
where the last inequality follows from $M\geq3$.

For every pair $i<j$ and every $h\in\{i,j\}$, we introduce two further selected-set variables~$W_{ij,h}^+$ and~$W_{ij,h}^-$, and write $\bar W$ for the tuple of these variables.
We extend~$\psi_k$ by the first-order conditions
\begin{align*}
  \forall z\,\Bigl(z\in W_{ij,h}^+
    &\leftrightarrow
    \bigl((U(z)\wedge z\notin X_h^+)\vee z\in Y_{ij,h}^-\bigr)\Bigr),\\
  \forall z\,\Bigl(z\in W_{ij,h}^-
    &\leftrightarrow
    \bigl((U(z)\wedge z\notin X_h^-)\vee z\in Y_{ij,h}^+\bigr)\Bigr).
\end{align*}
Thus $W_{ij,h}^+$ contains every target except~$x_{v_h}^+$, together with the selected terminal~$y_{e_{ij},h}^-$.
Similarly, $W_{ij,h}^-$ contains every target except~$x_{v_h}^-$, together with the selected terminal~$y_{e_{ij},h}^+$.
These conditions determine~$\bar W$ uniquely once we have chosen~$\bar X$ and~$\bar Y$.
We call the resulting side formula~$\psi'_k(\bar X,\bar Y,\bar W)$ and use the template
\begin{align*}
  \Phi_k^{\mathrm{cov}}(\bar X,\bar Y,\bar W;D)
  := {}&\psi'_k(\bar X,\bar Y,\bar W)\\
  &{}\wedge\bigwedge_{1\leq i<j\leq k}\ \bigwedge_{h\in\{i,j\}}
  \Bigl(
    \Cover_D(U,W_{ij,h}^+)
    \wedge
    \Cover_D(U,W_{ij,h}^-)
  \Bigr).
\end{align*}

To check the first covering condition, note that every target other than~$x_{v_h}^+$ belongs to~$W_{ij,h}^+$ and is therefore covered at distance zero.
The remaining target~$x_{v_h}^+$ has distance greater than~$D$ from every other target, so the only vertex of~$W_{ij,h}^+$ that can cover it is~$y_{e_{ij},h}^-$.
By~\eqref{eq:pc-hardness-plus}, we therefore have
\[
  \Cover_D(U,W_{ij,h}^+)
  \quad\Longleftrightarrow\quad a_h\leq b_{ij,h}.
\]
The same argument, using~\eqref{eq:pc-hardness-minus}, gives
\[
  \Cover_D(U,W_{ij,h}^-)
  \quad\Longleftrightarrow\quad a_h\geq b_{ij,h}.
\]
Both equivalences hold for assignments satisfying~$\psi'_k$.
Thus the two covering conditions again require $a_h=b_{ij,h}$, and the same argument as in the separation reduction shows that the template is satisfiable exactly when~$H$ contains a multicolored clique.
Every covering atom uses the one input predicate~$U$ as its target and one selected-set variable as its covering set, and all atoms use the same radius~$D$.
The conditions defining the covering sets belong entirely to the ordinary side formula, so we do not use unions as arguments of covering atoms.

\paragraph{The parameter and the size of the construction.}
In both reductions, the template depends only on~$k$, while the numerical value of~$D$ is part of the input and is not included in the template length.
We use $O(n_H+\abs{E(H)})$ gadgets, each with $O(n_H)$ vertices on its paths and~$O(k^2)$ marker vertices.
The tree therefore has
\[
  O\bigl((n_H+k^2)(n_H+\abs{E(H)})\bigr)
\]
vertices, and we can construct it in polynomial time, since $k\leq n_H$.
We can also construct each template in time bounded by a computable function of~$k$.

Moreover, $T$ has bag-connected treewidth~$1$.
To see this, we use the root bag~$\{r\}$ and, for each nonroot vertex~$v$, a bag~$\{\operatorname{parent}(v),v\}$, which we attach to the bag of its parent, or to~$\{r\}$ if its parent is~$r$.
Every bag induces a connected subgraph, every edge is contained in a bag, and the bags containing any fixed vertex form a connected subtree.
We therefore have a bag-connected tree decomposition of width~$1$, and $\abs{\Phi}+\ctw(T)$ is bounded by a function of~$k$ alone.
An algorithm with running time $F(\abs{\Phi}+\ctw(T))\abs{V(T)}^c$, for a computable function~$F$ and an absolute constant~$c$, would consequently give an FPT algorithm for \textsc{Multicolored Clique}.
This proves both hardness statements.
\end{proof}

The theorem concerns the uniform problem, in which the template varies, so it does not affect our FPT algorithms for each fixed template.
It also explains why we can obtain a stronger running-time guarantee for one-cover formulas: although all covering conditions in our reduction have the same radius and the same target set, the number of covering-set variables grows with~$k$.
Thus, unless $\mathrm{FPT}=\mathrm{W}[1]$, we cannot extend the
uniform guarantee to arbitrary packing-covering templates while
keeping the numerical distance values outside the parameter.

\section{More algorithmic applications}\label{sec:applications}

In this section, we highlight some more applications of Theorem~\ref{thm:strong-intro}. We consider the algorithmic task \textsc{Convex Hull} defined below, which is central to \emph{graph convexity}~\cite{pelayo2013geodesic,araujo2025introduction}. 
One of the most well studied convexity notions for graphs is the \emph{geodesic convexity}, where a set $S$ of vertices of a graph $G$ is
considered convex if no vertex outside of $S$ lies on a shortest path between
two vertices inside of $S$. A \emph{convex hull} of a set $S$ of vertices is
the smallest convex set containing $S$. A set $S$ is a hull set if its convex hull is the whole vertex set. 
Given a graph $G$ and an integer $k$, the objective of \textsc{Convex Hull} is to decide if $G$ has a hull set of cardinality at most $k$. Kant\'{e}, Marcilon, and Sampaio~\cite{KanteMarcilonSampaio2019} proved that \textsc{Hull Number} is $\mathrm{W}[1]$-hard parameterized by the solution size together with treewidth.
The problem remains NP-hard on chordal graphs~\cite{BessyDouradoPensoRautenbach2018}.
See \cite{Chandran2022ShadowGraphs,Coelho2022ComplementaryPrisms,BessyDouradoPensoRautenbach2018,AraujoMorelSampaioSoaresWeber2016,Dourado2016PkFree,Coelho2015Inapproximability,Dourado2009HullComputation} for more algorithmic results on \textsc{Convex Hull}. Write $\operatorname{Conv}(X)$ for the formula requiring every shortest path between vertices of $X$ to be contained in $X$. Observe that \textsc{Convex Hull} is the additive optimization version of the following
\begin{align*}
	\operatorname{Hull}(S)
	:=\forall X\,
	\bigl(S\subseteq X\wedge\operatorname{Conv}(X)\rightarrow X=V(G)\bigr).
\end{align*}

By Theorem~\ref{thm:strong-intro}, we have the following corollary.

\begin{corollary}
	\textsc{Convex Hull} is FPT w.r.t.\ $\gcw$.
\end{corollary}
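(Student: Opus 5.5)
The plan is to show that $\operatorname{Hull}(S)$ is a fixed $\MSO[\SP]$ formula and then apply part~(ii) of \Cref{thm:strong-intro} with unit weights on $S$, minimizing $|S|$ and comparing the optimum with $k$. First, I will write out $\operatorname{Conv}(X)$ explicitly as
\[
  \operatorname{Conv}(X):=\forall s\,\forall t\,\forall P\,
  \bigl(s\in X\wedge t\in X\wedge\SP(P,s,t)\rightarrow P\subseteq X\bigr),
\]
where $P\subseteq X$ abbreviates $\forall z\,(z\in P\rightarrow z\in X)$. This says that every vertex lying on a shortest path between two vertices of $X$ belongs to $X$, which is precisely geodesic convexity. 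The only point to check is that every shortest $s$-$t$ path is an induced path, so its vertex set is an $s$-$t$ path set in the sense of \Cref{def:sp-atom}. Hence quantifying over vertex sets $P$ with $\SP(P,s,t)$ ranges over all shortest $s$-$t$ paths. The case $s=t$ gives $P=\set{s}$ and imposes nothing.

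Next, I will verify that $\operatorname{Hull}(S)$ defines hull sets. The convex sets are closed under intersection, so the convex hull of $S$ is the intersection of all convex supersets of $S$. It therefore equals $V(G)$ exactly when every convex superset of $S$ is $V(G)$, which is what $\operatorname{Hull}(S)$ states. Since both $\operatorname{Conv}$ and $\operatorname{Hull}$ have constant length, $\operatorname{Hull}(S)\in\MSO[\SP]\subseteq\CMSOOCTW$ is a fixed formula with one free vertex-set variable.

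Finally, I will apply the optimization part of \Cref{thm:strong-intro}(ii) with weight $1$ on every vertex for $S$. This computes a minimum hull set in time $f(\gcw(G))n^{O(1)}$, and we answer yes exactly when its size is at most $k$. I expect no genuine obstacle, since the work is carried by the meta-theorem. The only subtle step is the equivalence in the first paragraph between the set-based $\SP$ atom and ``lies on some shortest path''. That equivalence uses the fact that shortest paths are induced, together with the convention for $s=t$.
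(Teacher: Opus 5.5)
Your proposal is correct and matches the paper's argument. The paper likewise writes \textsc{Convex Hull} as the additive minimization of the fixed $\MSO[\SP]$ formula $\operatorname{Hull}(S)$, built from $\operatorname{Conv}(X)$, and applies \Cref{thm:strong-intro}(ii). You additionally spell out two justifications that the paper leaves implicit: shortest paths are induced, so the $\SP$ atom ranges over all shortest paths, and convex sets are closed under intersection, so $\operatorname{Hull}(S)$ characterizes hull sets.
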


We next consider halfspace separation in geodesic convexity. A set $H$
of vertices of a graph $G$ is a \emph{halfspace} if both $H$ and
$V(G)\setminus H$ are convex. Given a connected graph $G$ and two
disjoint nonempty sets $A,B\subseteq V(G)$, the objective of
\textsc{Halfspace Separation}~\cite{Chepoi2024Separation} is to decide if $G$ has a halfspace $H$
such that $A\subseteq H$ and $B\subseteq V(G)\setminus H$.
The problem is NP-complete~\cite{SeiffarthHorvathWrobel2023Halfspace}.
Nair~\cite{Nair2026Halfspace} gave polynomial-time algorithms on weakly
bridged graphs, pseudo-modular graphs, and basis graphs of matroids.
Using the formula $\operatorname{Conv}(X)$ defined above, observe that
\textsc{Halfspace Separation} is expressed by the following formula:
\begin{align*}
\operatorname{Half~Space(A,B)}:= \exists H\,\bigl(A\subseteq H\wedge B\cap H=\emptyset
 \wedge\operatorname{Conv}(H)
 \wedge\operatorname{Conv}(V(G)\setminus H)\bigr).
\end{align*}
Here $A$ and $B$ are represented by unary input predicates.
By Theorem~\ref{thm:strong-intro}, we have the following corollary.
\begin{corollary}\label{cor:halfspace-separation}
 On connected graphs, \textsc{Halfspace Separation} is FPT w.r.t.\ $\gcw$.
 The same parameterization permits computing a minimum-weight separating
 halfspace and exactly counting separating halfspaces.
\end{corollary}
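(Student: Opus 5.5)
The statement to prove is Corollary~\ref{cor:halfspace-separation}. The plan is to check that the displayed formula lies in $\CMSOOCTW$ (indeed in $\MSO[\SP]$) with free variables only the unary input predicates $A,B$, and then invoke Theorem~\ref{thm:strong-intro}(ii).

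First I will write $\operatorname{Conv}(X)$ explicitly as
\[
\forall s\,\forall t\,\forall P\,\bigl(s\in X\wedge t\in X\wedge\SP(P,s,t)\rightarrow P\subseteq X\bigr).
\]
By Definition~\ref{def:sp-atom}, this says exactly that every shortest $s$-$t$ path set with $s,t\in X$ lies inside $X$. That is geodesic convexity, because every shortest path is induced and is therefore captured by a path set. For the complement, I will write $\operatorname{Conv}(V(G)\setminus H)$ by relativizing: replace $x\in X$ with $x\notin H$. Then $\operatorname{Half~Space}(A,B)$ is a sentence of length $O(1)$ over the unary signature $\{A,B\}$. Also, $H$ satisfies the inner conjunction exactly when $H$ is a halfspace separating $A$ from $B$. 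Connectivity is not needed for the logical equivalence; it is only part of the problem's promise.

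Next, Theorem~\ref{thm:strong-intro}(ii) gives model checking in time $f(|\varphi|+\gcw(G))N^{O(1)}$, and $|\varphi|$ is a constant. For the optimization and counting claims, I will drop the outer quantifier $\exists H$ and keep $H$ as a free vertex-set variable in
\[
\varphi(H):=A\subseteq H\wedge B\cap H=\emptyset\wedge\operatorname{Conv}(H)\wedge\operatorname{Conv}(V(G)\setminus H).
\]
Its satisfying assignments are exactly the separating halfspaces. Additive minimization of the weight of $H$ and exact counting of assignments then follow directly from the optimization and counting parts of the same theorem.

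I expect no real obstacle here. The only care needed is in verifying that $\SP$ quantified over all path sets captures every shortest path, including the degenerate case $s=t$ with $P=\{s\}$, which is trivially inside $X$. That check is immediate from the definitions.
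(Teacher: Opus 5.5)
Your proposal is correct and follows the paper's route. You express separation by the fixed $\MSO[\SP]$ formula, with $A,B$ as unary input predicates and $\operatorname{Conv}$ defined through $\SP$, keep $H$ free for the optimization and counting parts, and apply Theorem~\ref{thm:strong-intro}(ii); your explicit form of $\operatorname{Conv}$ and the relativization to the complement of $H$ only fill in details the paper leaves implicit.
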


A vertex $s$ strongly resolves $u$ and $v$ if $u$ lies on a shortest $v-s$ path or $v$ lies on a shortest $u-s$ path. Define on connected graphs, metric betweenness by the vertex-set formula $\operatorname{Betw}(x,z,y)
:=\exists P\,(\SP(P,x,y)\wedge z\in P)$.  Thus a strong resolving set is defined by
\begin{align*}
	\forall u\forall v\,
	\bigl(u\neq v\rightarrow\exists s\in S\,
	(\operatorname{Betw}(v,u,s)\vee\operatorname{Betw}(u,v,s))\bigr).
\end{align*}
Given a graph $G$ and an integer $k$, the objective of \textsc{Strong Metric Dimension} is to decide if $G$ has a strong resolving set of cardinality at most $k$. \textsc{Strong Metric Dimension} remains NP-hard even on graphs with constant pathwidth. See \cite{TaleStrongMetric2026,DasGupta2017StrongMetricApproximation,Moravcik2017MetricStrongDimensions,Kratica2012StrongMetricILP,OellermannPetersFransen2007} for more results. By Theorem~\ref{thm:strong-intro}, we have the following corollary.

\begin{corollary}
	\textsc{Strong Metric dimension} is FPT w.r.t.\ $\gcw$.
\end{corollary}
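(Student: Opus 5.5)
The statement to prove is the corollary that \textsc{Strong Metric Dimension} is FPT with respect to $\gcw$. I would derive it directly from \Cref{thm:strong-intro}(ii), by exhibiting a fixed $\CMSOOCTW$ formula $\varphi_{\mathrm{smd}}(S)$ whose satisfying assignments are exactly the strong resolving sets, and then minimizing $|S|$ with unit weights.

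\emph{Step 1: the betweenness formula.} The displayed formula uses $\operatorname{Betw}(x,z,y):=\exists P\,(\SP(P,x,y)\wedge z\in P)$, which is a vertex-set formula of $\CMSO[\SP]$. By \Cref{def:sp-atom}, $\SP(P,x,y)$ holds exactly when $P$ is a shortest $x$-$y$ path set. Since every shortest path is induced, a vertex $z$ lies on some shortest $x$-$y$ path if and only if $\operatorname{Betw}(x,z,y)$ holds. The degenerate case $x=y$ gives $P=\{x\}$, and this is harmless. On a connected graph, ``$u$ lies on a shortest $v$-$s$ path'' is therefore exactly $\operatorname{Betw}(v,u,s)$.

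\emph{Step 2: the full formula.} I take
\[
\varphi_{\mathrm{smd}}(S):=\forall u\forall v\bigl(u\neq v\rightarrow\exists s\,(s\in S\wedge(\operatorname{Betw}(v,u,s)\vee\operatorname{Betw}(u,v,s)))\bigr).
\]
This is a fixed $\CMSOOCTW$ formula with one free vertex-set variable, and it is satisfied by a set $S$ exactly when $S$ is a strong resolving set. The metric dimension literature assumes connected input graphs, so no issue with infinite distances arises. If one wants disconnected graphs covered as well, shortest paths do not exist between components, so the formula still faithfully encodes the natural definition.

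\emph{Step 3: applying the meta-theorem.} By \Cref{thm:strong-intro}(ii), additive optimization of $\varphi_{\mathrm{smd}}$ takes time $f(|\varphi_{\mathrm{smd}}|+\gcw(G))N^{O(1)}$. Since $|\varphi_{\mathrm{smd}}|$ is constant, this is $f'(\gcw(G))n^{O(1)}$. Minimizing $|S|$ with unit weights and comparing the minimum to $k$ decides the problem. The same theorem also gives the weighted and counting versions.

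The only step needing any care is Step 1, namely that quantifying over induced shortest path sets captures all shortest paths. This holds because shortest paths are induced, so the rest is routine.
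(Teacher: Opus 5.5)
Your proposal is correct and follows essentially the same route as the paper. The paper also defines $\operatorname{Betw}$ via $\SP$, writes the same $\CMSOOCTW$ formula for strong resolving sets, and applies part~(ii) of \Cref{thm:strong-intro} with cardinality minimization.
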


We next consider edge-monitoring problems in graphs. A set $S$ of vertices
of a connected graph~$G$ is a \emph{distance-edge-monitoring set} if,
for every edge $e$, there exist vertices $s\in S$ and $t\in V(G)$ such
that $e$ lies on every shortest $s$--$t$ path. A set $S$ is a
\emph{monitoring edge-geodetic set} if both $s$ and $t$ can be chosen
in $S$. In either case, deleting $e$ increases the distance between
$s$ and $t$. Given a connected graph~$G$ and an integer $k$, the
objective of \textsc{Distance-Edge-Monitoring Set} is to decide if~$G$
has a distance-edge-monitoring set of cardinality at most $k$.
The objective of \textsc{Monitoring Edge-Geodetic Set} is to decide
if $G$ has a monitoring edge-geodetic set of cardinality at most $k$.

Foucaud et al.~\cite{FoucaudKaoKlasingMillerRyan2022} proved that
\textsc{Distance-Edge-Monitoring Set} is NP-complete even on apex graphs
and gave a polynomial-time logarithmic-factor approximation algorithm
for its minimization version. For \textsc{Monitoring Edge-Geodetic Set},
Foucaud et al.~\cite{FoucaudMarcilleSandeepSenTaruni2026} proved
NP-hardness even on $2$-apex $3$-degenerate graphs, and gave a
polynomial-time algorithm on interval graphs and FPT algorithms
parameterized by cliquewidth plus diameter, or by treewidth on
chordal graphs.
See \cite{DevDeyFoucaudKrishnaSulochana2025,FoucaudKaoKlasingMillerRyan2022,FoucaudMarcilleSandeepSenTaruni2026,BiloEtAl2025MonitoringApproximation}
for more results.

For an edge $ab$ and a shortest-path set $P$, the condition
$a,b\in P$ says exactly that $ab$ is an edge of the path, since
shortest paths are induced. Observe that
\textsc{Monitoring Edge-Geodetic Set} is the additive optimization
version of the following formula:
\begin{align*}
 \forall a\forall b\,
 \Bigl(E(a,b)\rightarrow
 \exists s\in S\,\exists t\in S\,\forall P\,
 (\SP(P,s,t)\rightarrow a,b\in P)\Bigr).
\end{align*}
For \textsc{Distance-Edge-Monitoring Set}, replace $\exists t\in S$
by $\exists t$. By Theorem~\ref{thm:strong-intro}, we have the
following corollary.
\begin{corollary}\label{cor:distance-edge-monitoring}
 On connected graphs, \textsc{Distance-Edge-Monitoring Set} and
 \textsc{Monitoring Edge-Geodetic Set} are FPT w.r.t.\ $\gcw$.
\end{corollary}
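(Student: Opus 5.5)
The corollary follows once we show that \textsc{Distance-Edge-Monitoring Set} and \textsc{Monitoring Edge-Geodetic Set} are the minimization versions of fixed $\MSO[\SP]$ formulas $\varphi(S)$; the cardinality-optimization part of \Cref{thm:strong-intro}(ii) then applies. We would compute $\min|S|$ over all satisfying assignments, using unit weights, and compare it with the input bound~$k$. The running time is $f(|\varphi|+\gcw(G))n^{O(1)}$ with $|\varphi|$ constant, which gives fixed-parameter tractability in $\gcw(G)$. So the whole argument reduces to proving that the displayed formula, and its variant with $\exists t\in S$ replaced by $\exists t$, defines exactly the monitoring sets. Nothing else is needed, since \Cref{thm:strong-intro} already produces the decompositions and performs the translation to plain $\CMSO$ through $\Cert_M$.

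To verify the formulas, fix an edge $ab$ and vertices $s,t$. Every shortest $s$-$t$ path is a path in $G$ with length $d_G(s,t)$, and it is induced: a chord would give a shorter walk. Hence its vertex set $P$ satisfies $\SP(P,s,t)$, and conversely every such $P$ induces a shortest $s$-$t$ path. Because $G[P]$ is an induced path, $a,b\in P$ together with $E(a,b)$ holds exactly when $ab$ is an edge of that path. Therefore $\forall P\,(\SP(P,s,t)\rightarrow a\in P\wedge b\in P)$ says that $ab$ lies on every shortest $s$-$t$ path. Here we read ``$a,b\in P$'' as this conjunction of membership atoms. Quantifying $s\in S$ and either $t\in S$ or an arbitrary vertex $t$ gives exactly the two definitions.

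The main obstacle is the degenerate cases. When $s=t$, the only shortest path is $\{s\}$, which contains no edge, so the inner condition fails for every edge; this matches the definitions. Shortest paths exist only between vertices of the same component, and the problems are stated for connected graphs, so the universal quantifier over $P$ is never vacuous. We would add connectivity of $G$ as an ordinary $\MSO$ conjunct, or treat it as a promise of the input.

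Finally, the problems remain FPT for every parameterization dominated by $\gcw$. The counting and weighted variants follow from the same theorem, since the formula has the single free set variable $S$.
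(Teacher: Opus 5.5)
Your proposal is correct and follows the paper's own argument. It uses the same $\MSO[\SP]$ formula (with $\exists t\in S$ relaxed to $\exists t$ for \textsc{Distance-Edge-Monitoring Set}), the same observation that shortest paths are induced so that $a\in P\wedge b\in P$ means $ab$ is an edge of the path, and then applies the cardinality-optimization part of \Cref{thm:strong-intro}(ii); your checks of the case $s=t$ and of non-vacuity of $\forall P$ on connected graphs fill in details the paper leaves implicit. One caution on your closing aside: fixed-parameter tractability transfers to parameters that bound $\gcw$ from above, not to parameters bounded by it, though that aside is not needed for the corollary.
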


We next consider visibility problems in graphs. For a set $S$ of vertices
of a graph $G$, two vertices $u,v$ are \emph{$S$-visible} if some shortest
$u$--$v$ path has no internal vertex in $S$. A set $S$ is a
\emph{mutual-visibility set} if every two distinct vertices of $S$ are
$S$-visible. Given a graph $G$ and an integer $k$, the objective of
\textsc{Mutual Visibility} is to decide if $G$ has a mutual-visibility
set of cardinality at least $k$. Di Stefano~\cite{DiStefano2022}
proved that the problem is NP-complete.

Three related notions are obtained by requiring $S$-visibility for
every pair of distinct vertices of $G$, every pair with at least one
endpoint in $S$, or every pair whose endpoints both belong to $S$ or
both belong to $V(G)\setminus S$. These define \emph{total},
\emph{outer}, and \emph{dual mutual-visibility sets}, respectively.
Given a graph $G$ and an integer $k$, the objective of each
corresponding decision problem is to decide if $G$ has such a set
of cardinality at least $k$. All three problems are
NP-complete~\cite{CiceroneDiStefanoDrozekHedzetKlavzarYero2023}.
Cicerone and Di Stefano~\cite{CiceroneDiStefano2025} gave linear-time
algorithms for all four problems on distance-hereditary graphs. See \cite{BiloEtAl2026VisibilityApproximation,CiceroneEtAl2025HypercubeVisibility} for more results. Write $\operatorname{Vis}_S(u,v)$ for the formula
\begin{align*}
	\operatorname{Vis}_S(u,v)
	:=\exists P\,
	\bigl(\SP(P,u,v)\wedge P\cap S\subseteq\{u,v\}\bigr).
\end{align*}
Observe that \textsc{Mutual Visibility} is the additive optimization
version of the following formula:
\begin{align*}
	\forall u\forall v\,
	\bigl(u\in S\wedge v\in S\wedge u\neq v
	\rightarrow\operatorname{Vis}_S(u,v)\bigr).
\end{align*}
The other three variants are expressed by changing the pairs for
which $\operatorname{Vis}_S(u,v)$ is required.
By Theorem~\ref{thm:strong-intro}, we have the following corollary.
\begin{corollary}
	\textsc{Mutual Visibility}, \textsc{Total Mutual Visibility},
	\textsc{Outer Mutual Visibility}, and \textsc{Dual Mutual Visibility}
	are FPT w.r.t.\ $\gcw$.
\end{corollary}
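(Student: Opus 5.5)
The corollary is an application of \Cref{thm:strong-intro}(ii). We need a fixed $\CMSO[\SP]$ formula for each of the four visibility notions, with one free vertex-set variable $S$, and then maximize $|S|$ with unit weights. The building block is the formula $\operatorname{Vis}_S(u,v)$ displayed above. Its only non-standard atom is $\SP(P,u,v)$. The condition $P\cap S\subseteq\{u,v\}$ is plain $\MSO$, namely $\forall z\,(z\in P\wedge z\in S\rightarrow z=u\vee z=v)$.

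First I check that $\operatorname{Vis}_S(u,v)$ expresses $S$-visibility exactly. By \Cref{def:sp-atom}, $\SP(P,u,v)$ says that $P$ is the vertex set of a shortest $u$-$v$ path. Every shortest path is induced, so shortest $u$-$v$ paths correspond bijectively to the sets $P$ satisfying the atom. Hence some shortest $u$-$v$ path has no internal vertex in $S$ if and only if some such $P$ meets $S$ only in $\{u,v\}$.

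Vertices in different components have no path between them, so $\operatorname{Vis}_S$ is false for them. This matches the definition, where visibility also requires a shortest path to exist.

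Next I write the four formulas. Each is a single universal statement over ordered pairs $u\neq v$, guarded by a fixed premise:
\begin{itemize}
\item mutual visibility: $u\in S\wedge v\in S$;
\item total mutual visibility: no premise;
\item outer mutual visibility: $u\in S\vee v\in S$;
\item dual mutual visibility: $(u\in S\leftrightarrow v\in S)$.
\end{itemize}
Each formula has constant length and belongs to $\CMSO[\SP]\subseteq\CMSOOCTW$. It is therefore not necessary to go through \Cref{thm:cert-mso} by hand; \Cref{thm:strong-intro}(ii) already performs that translation. With weight $1$ on every vertex for $S$, its maximization part returns a maximum satisfying $S$ in time $f(\gcw(G))n^{O(1)}$, and we compare the result with $k$. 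We do not need a cliquewidth expression as input, since the theorem constructs one itself.

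I see no substantial obstacle. The one point needing care is the faithfulness of $\operatorname{Vis}_S$, specifically that shortest paths are captured by their vertex sets. This follows because shortest paths are induced, and for $u\neq v$ the path-set definition excludes the degenerate case $P=\{u\}$.
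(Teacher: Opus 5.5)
Your proposal is correct and matches the paper's argument. The paper also expresses each variant by a fixed $\CMSO[\SP]$ formula built from $\operatorname{Vis}_S(u,v)$, changing only which pairs $u\neq v$ must be $S$-visible, and then applies the maximization part of \Cref{thm:strong-intro}(ii). Your check that $\operatorname{Vis}_S$ is faithful, namely that shortest paths are induced and hence determined by their vertex sets, is a detail the paper leaves implicit.
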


An \emph{edge-geodetic set} $S$ requires every edge to lie on some
shortest path between two vertices of $S$. Its vertex-set formulation
is obtained from the monitoring edge-geodetic formula above by replacing
$\forall P\,(\SP(P,s,t)\rightarrow a,b\in P)$ with
$\exists P\,(\SP(P,s,t)\wedge a,b\in P)$.
By Theorem~\ref{thm:strong-intro}, we also obtain the following corollary.
\begin{corollary}\label{cor:edge-geodetic-monitoring}
 On connected graphs, \textsc{Minimum Edge Geodetic Set}, together with
 its vertex-weighted and exact-counting variants, is FPT w.r.t.\ $\gcw$.
\end{corollary}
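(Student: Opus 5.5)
This corollary is a direct application of \Cref{thm:strong-intro}\textnormal{(ii)}. I will write a fixed $\MSO[\SP]$ formula $\varphi_{\mathrm{eg}}(S)$ defining the edge-geodetic sets and apply the optimization and counting parts of the theorem with the parameter $\abs{\varphi_{\mathrm{eg}}}+\gcw(G)$.

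The formula is the one obtained from the monitoring edge-geodetic formula:
\begin{align*}
  \varphi_{\mathrm{eg}}(S):=\forall a\forall b\,
  \Bigl(E(a,b)\rightarrow
  \exists s\exists t\,\bigl(s\in S\wedge t\in S\wedge\exists P\,(\SP(P,s,t)\wedge a\in P\wedge b\in P)\bigr)\Bigr).
\end{align*}
I first check correctness. If $ab$ lies on a shortest $s$-$t$ path, the vertex set $P$ of that path satisfies $\SP(P,s,t)$ and contains $a,b$. Conversely, suppose a set $P$ with $\SP(P,s,t)$ contains the adjacent vertices $a,b$. The graph $G[P]$ is an induced path, so every edge of $G$ with both endpoints in $P$ is an edge of the path, and hence $ab$ lies on the shortest path.

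The formula has constant length and only vertex and vertex-set quantification, so it belongs to $\CMSOOCTW$. \Cref{thm:strong-intro}\textnormal{(ii)} then gives minimum-cardinality and minimum-weight optimization and exact counting of satisfying or optimum sets in time $f(\gcw(G))N^{O(1)}$. For the decision version, I compare the optimum with $k$, using unit weights for cardinality.

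The only point needing care is the restriction to connected graphs. It ensures that the distances used by $\SP$ are finite and that edge-geodetic sets are meaningful; the formula is evaluated unchanged. I see no real obstacle beyond the induced-path observation.
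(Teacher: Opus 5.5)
Your proposal is correct and matches the paper's argument: the paper uses exactly your formula, obtained from the monitoring edge-geodetic formula by replacing $\forall P\,(\SP(P,s,t)\rightarrow a,b\in P)$ with $\exists P\,(\SP(P,s,t)\wedge a,b\in P)$, and then applies \Cref{thm:strong-intro}. Your induced-path check is the same observation the paper relies on, namely that for a shortest-path set $P$ the condition $a,b\in P$ with $ab\in E(G)$ means that $ab$ is an edge of the path.
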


We next consider shortest paths that are required to stay far apart.
Given a connected graph~$G$, two terminal pairs $(s_1,t_1)$ and
$(s_2,t_2)$, and an integer $d\geq0$, the objective of
\textsc{Two Distance-Separated Shortest Paths} is to decide if $G$
has a shortest $s_1$--$t_1$ path $P_1$ and a shortest $s_2$--$t_2$
path $P_2$ such that
\begin{align*}
 d_G(P_1,P_2):=\min\{d_G(u,v):u\in V(P_1),\ v\in V(P_2)\}\geq d.
\end{align*}
All distances are measured in $G$, and each path must be shortest
between its own terminals. For $d=1$, this is precisely
\textsc{Two Disjoint Shortest Paths}; for $d=2$, the paths must also
have no edge between them.
Eilam-Tzoreff~\cite{EilamTzoreff1998} gave a polynomial-time algorithm
for \textsc{Two Disjoint Shortest Paths}.
Bentert et al.~\cite{BentertNichterleinRenkenZschoche2021} gave an
$O(nm)$-time algorithm on unweighted graphs, and Akmal, Vassilevska
Williams, and Wein~\cite{AkmalVassilevskaWilliamsWein2024} gave a
randomized linear-time detection algorithm, also allowing positive
edge weights.

For an arbitrary fixed number $k$ of terminal pairs,
Lochet~\cite{Lochet2021DisjointShortestPaths} proved polynomial-time
solvability of \textsc{Disjoint Shortest Paths} on unweighted graphs.
Bentert et al.~\cite{BentertNichterleinRenkenZschoche2021} improved
the running time and proved $\mathrm{W}[1]$-hardness parameterized by
$k$. On planar graphs, Pilipczuk, Stamoulis, and
W{\l}odarczyk~\cite{PilipczukStamoulisWlodarczyk2026} gave an
FPT algorithm parameterized by $k$, even with positive edge weights.
Menashe and Zehavi~\cite{MenasheZehavi2026} gave polynomial-time
algorithms on split graphs and well-partitioned chordal graphs.
See \cite{ChitnisThomasWirth2026,BentertFominGolovach2025,
GottschauKaiserWaldmann2019,KobayashiSako2019,BercziKobayashi2017}
for more results on hardness, approximation, kernelization, and the
weighted and directed variants of \textsc{Disjoint Shortest Paths}.

For ordinary treewidth, a direct consequence of $\MSOtwo$
optimization~\cite{ArnborgLagergrenSeese1991,CourcelleMosbah1993}
is that \textsc{Two Distance-Separated Shortest Paths} is FPT w.r.t
$\tw(G)+d$. Indeed, separation by at least $d$ is first-order
definable by a formula whose size depends on $d$. We can therefore
minimize the total number of edges in two paths joining the prescribed
terminal pairs and satisfying this separation condition. A pair of
shortest paths satisfying the separation condition exists if and only
if the minimum equals
$d_G(s_1,t_1)+d_G(s_2,t_2)$. In particular, every fixed-$d$ case is
FPT w.r.t.\ $\tw(G)$. The cited hardness parameterized by the number
of paths does not establish hardness parameterized by treewidth for
the two-path problem with an input separation threshold.

Observe that, for $d>0$, \textsc{Two Distance-Separated Shortest Paths}
is expressed by the following packing--covering template, where the
set variables $P_1,P_2$ represent the vertex sets of the paths:
\begin{align*}
 \SP(P_1,s_1,t_1)\wedge\SP(P_2,s_2,t_2)
 \wedge P_1\cap P_2=\varnothing\wedge\Sep_d(P_1,P_2).
\end{align*}
The disjointness conjunct is necessary because $\Sep$ omits pairs of
equal vertices. For $d=0$, both the disjointness and separation
conjuncts are omitted. The terminals can be represented by singleton
unary input predicates. By Theorem~\ref{thm:packing-covering}, we have
the following corollary.

\begin{corollary}\label{cor:distance-separated-shortest-paths}
\textsc{Two Distance-Separated Shortest Paths}
 is FPT w.r.t.\ $\ctw(G)$.
\end{corollary}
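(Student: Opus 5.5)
The corollary follows from \Cref{thm:packing-covering} once we check that the displayed expression is a fixed packing-covering template in the sense of \Cref{def:packing-covering-template}. The plan has four steps: fix the template, apply the theorem, read off the answer, and handle $d=0$.

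\emph{The template.} The free sets are $P_1,P_2$. The terminals $s_1,t_1,s_2,t_2$ enter through four singleton unary input predicates $T_{s_1},T_{t_1},T_{s_2},T_{t_2}$, which form a fixed finite unary signature. The side formula is
\[
  \psi(P_1,P_2):=\bigwedge_{i=1}^{2}\exists s\,\exists t\,\bigl(T_{s_i}(s)\wedge T_{t_i}(t)\wedge\SP(P_i,s,t)\bigr)\wedge\forall x\,\neg(x\in P_1\wedge x\in P_2),
\]
which is a fixed $\CMSOCTW$ formula (in fact a vertex-set formula). The only metric conjunct is $\Sep_\delta(P_1,P_2)$, with the numerical symbol $\delta$ assigned the input value $d$, and there are no covering atoms.

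\emph{Correctness.} By \Cref{def:sp-atom}, an assignment satisfies $\psi$ exactly when $P_i$ is a shortest $s_i$--$t_i$ path set and $P_1\cap P_2=\varnothing$. Because $P_1$ and $P_2$ are disjoint, every pair $x\in P_1$, $y\in P_2$ consists of distinct vertices. Hence $\Sep_d(P_1,P_2)$ says exactly that $d_G(P_1,P_2)\geq d$, and the convention that $\Sep$ ignores equal vertices has no effect. For $d=0$ we drop both the disjointness and the separation conjuncts. What remains is an ordinary $\CMSOCTW$ formula, handled either by \Cref{thm:strong-intro} or by the same template with $A=\varnothing$.

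\emph{Running time.} The template $\Phi$ is one fixed expression, independent of the instance; only $d$ and the interpretations of the four terminal predicates vary. \Cref{thm:packing-covering} therefore decides feasibility in time $f_\Phi(\ctw(G))n^{O_\Phi(1)}$ with $O_\Phi(1)$ a fixed constant. We may assume $d\leq n$, which the dynamic program enforces anyway by replacing each threshold with $\min\{d,n\}$. This is an FPT algorithm parameterized by $\ctw(G)$.

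\emph{Main obstacle.} The only point needing care is the interaction between disjointness and the distinct-pairs convention of $\Sep$: without the disjointness conjunct, shared vertices would slip through the separation test, which the argument above rules out. The singleton terminal predicates fit the allowed fixed unary signature, so no further work is required.
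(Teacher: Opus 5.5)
Your proposal is correct and follows essentially the same route as the paper. For $d>0$, the paper also writes the problem as the fixed template $\SP(P_1,s_1,t_1)\wedge\SP(P_2,s_2,t_2)\wedge P_1\cap P_2=\varnothing\wedge\Sep_d(P_1,P_2)$, represents the terminals by singleton unary predicates, notes that the disjointness conjunct is needed because $\Sep$ ignores equal vertices, drops both conjuncts when $d=0$, and then applies \Cref{thm:packing-covering}.
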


\bibliographystyle{alpha-doi}
\bibliography{references}

\section*{AI methodology}
\label{sec:ai-methodology}
We used ChatGPT 6.0 Astra to improve the language of the paper and to identify errors in the writing.
The covering fragments in \Cref{sec:dist-meta} were suggested by this tool.
The hardness proofs in \Cref{sec:encoding-hardness}, including
\Cref{sec:hardness-packing-covering}, were fully given by this tool.
We then revised and verified the resulting text and suggestions ourselves.
We take full responsibility for the content of the paper.

\end{document}